\newcommand{\snote}[1]{\authnote{ Srini}{#1}{red}}
\newcommand{\anote}[1]{\authnote{ Andras}{#1}{green}}
\newcommand{\argmax}[1]{\underset{#1}{\arg\!\max}}
\newcommand*{\QEDB}{\hfill\ensuremath{\square}}
\newcommand{\eps}{\varepsilon}
\newcommand{\ketbra}[2]{|#1\rangle\! \langle #2|}
\newcommand{\braketbra}[3]{\langle #1|#2| #3 \rangle}
\newcommand{\nrm}[1]{\left\lVert #1 \right\rVert}
\newcommand{\bigO}[1]{\mathcal{O}\left( #1 \right)}
\newcommand{\bigOt}[1]{\widetilde{\mathcal{O}}\left( #1 \right)}
\newcommand{\ctrlA}{\push{\rule{1.5mm}{1.5mm}}}
\newcommand{\pvp}{\vec{p}{\kern 0.45mm}'}
\let\oldnabla\nabla
\renewcommand{\nabla}{\oldnabla\!}
\DeclarePairedDelimiter\bra{\langle}{\rvert}
\DeclarePairedDelimiter\ket{\lvert}{\rangle}
\DeclarePairedDelimiterX\braket[2]{\langle}{\rangle}{#1 \delimsize\vert #2}
\newcommand{\underflow}[2]{\underset{\kern-60mm \overbrace{#1} \kern-60mm}{#2}}
\def\polylog{\mathrm{polylog}}
\def\Pr{\mathrm{Pr}}
\providecommand{\trnorm}[1]{\left\lVert#1\right\rVert_1}
\long\def\ignore#1{}
\newtheorem{theorem}{Theorem}
\newtheorem{corollary}[theorem]{Corollary}
\newtheorem{lemma}[theorem]{Lemma}
\newtheorem{fact}[theorem]{Fact}
\newtheorem{definition}[theorem]{Definition}
\newtheorem{claim}[theorem]{Claim}
\newtheorem{remark}[theorem]{Remark}
\newcommand{\A}{\ensuremath{\mathcal{A}}}
\newcommand{\N}{\ensuremath{\mathbb{N}}}
\newcommand{\F}{\ensuremath{\mathcal{F}}}
\newcommand{\R}{\ensuremath{\mathbb{R}}}
\newcommand{\Z}{\ensuremath{\mathbb{Z}}}
\newcommand{\E}{\mathcal{E}}
\newenvironment{proof}
{\noindent {\bf Proof. }}
{{\hfill $\Box$}\\	\smallskip}
\title{
	Optimizing quantum optimization algorithms\\ via faster quantum gradient computation
}
\author{
	András Gilyén\thanks{QuSoft/CWI, Science Park 123, 1098 XG Amsterdam, Netherlands. Supported by ERC Consolidator Grant QPROGRESS.  \texttt{\{gilyen,arunacha\}@cwi.nl} }
	\and	
	Srinivasan Arunachalam$^*$
	\and	
	Nathan Wiebe\thanks{Station Q QuArC, Microsoft Research, USA.  \texttt{nawiebe@microsoft.com} }
}
\date{}
\begin{document}
	
	\maketitle

	\begin{abstract}
		We consider a generic framework of optimization algorithms based on gradient descent. We develop a quantum algorithm that computes the gradient of a multi-variate real-valued function $f:\mathbb{R}^d\rightarrow \mathbb{R}$ by evaluating it at only a logarithmic number of points in superposition. Our algorithm is an improved version of Jordan's gradient computation algorithm~\cite{QuantGrad}, providing an approximation of the gradient $\nabla f$ with quadratically better dependence on the evaluation accuracy of $f$, for an important class of smooth functions. Furthermore, we show that most objective functions arising from quantum optimization procedures satisfy the necessary smoothness conditions, hence our algorithm provides a quadratic improvement in the complexity of computing their gradient. We also show that in a continuous phase-query model, our gradient computation algorithm has optimal query complexity up to poly-logarithmic factors, for a particular class of smooth functions. Moreover, we show that for low-degree multivariate polynomials our algorithm can provide exponential speedups compared to Jordan's algorithm in terms of the dimension $d$.
		
		One of the technical challenges in applying our gradient computation procedure for quantum optimization problems is the need to convert between a probability oracle (which is common in quantum optimization procedures) and a phase oracle (which is common in quantum algorithms) of the objective function $f$. We provide \emph{efficient} subroutines to perform this delicate interconversion between the two types of oracles incurring only a logarithmic overhead, which might be of independent interest. Finally, using these tools we improve the runtime of prior approaches for training quantum auto-encoders, variational quantum eigensolvers (VQE), and quantum approximate optimization algorithms (QAOA).		
	\end{abstract}

\newpage

\tableofcontents

\newpage
\section{Introduction}\label{sec:Introduction} 
In recent years quantum technology has progressed at a fast pace. As quantum computing enters the realm of engineering, it is important to understand how it can provide a speedup for real-world problems.
On the theoretical side, the last two decades have seen many quantum algorithms for various computational problems in number theory~\cite{shor:factoring}, search problems~\cite{grover:search}, formula evaluation~\cite{acrsz:andor}, solving linear systems~\cite{harrow2009quantum}, Hamiltonian simulation~\cite{BerryChildsSim15} and machine learning tasks~\cite{wiebe2015quantum,wiebe:quantumperceptronmodels}.\footnote{See the ``quantum algorithms zoo'': \url{http://math.nist.gov/quantum/zoo/} for a comprehensive list of quantum algorithms for computational problems.} Less attention has been devoted to developing quantum algorithms for discrete and continuous optimization problems which are possibly intractable by classical computers.  Na\"{\i}vely, since Grover's quantum algorithm~\cite{grover:search} quadratically improves upon the classical algorithm for searching in a database, we can simply use it to speed up all discrete optimization algorithms which involve searching for a solution among a set of possible solutions. However, in real-world applications, many problems have \emph{continuous} parameters, where an alternative quantum optimization approach might fit the problem better.

Optimization is a fundamentally important task that touches on virtually every area of science.
Unlike computational problems, quantum algorithms for optimization have not been very well understood. Recently, a handful of quantum algorithms considering specific continuous optimization problems have been developed for: Monte Carlo methods~\cite{montanaro:montecarlo}, derivative-free optimization~\cite{arunachalam:mastersthesis}, least squares fitting~\cite{wiebe2012quantum}, quantum annealing~\cite{tadashi:quantumannealing}, quantum adiabatic optimization~\cite{farhi:adiatic}, optimization algorithms for satisfiability and travelling salesman problem~\cite{hogg:quantumoptimization,arunachalam:mastersthesis} and quantum approximate optimization~\cite{farhi2014qaoa}. Also, very recently, there has been work on quantum algorithms for solving linear and semi-definite programs~\cite{brandao:quantumsdp,van2017quantum,brandaoSDPsAndLearning}.  
However, applying non-Grover techniques to real-word optimization problems has proven challenging, because generic problems usually fail to satisfy the delicate requirements of these advanced quantum~techniques.

In this paper, we consider \emph{gradient-based optimization}, which is a well-known technique to handle continuous-variable optimization problems. We develop an improved quantum algorithm for gradient computation (using non-Grover techniques), which provides a quadratic reduction in query and gate complexity (under reasonable continuity assumptions). Moreover, we show that our new gradient computation algorithm is essentially optimal for a certain class of functions. Finally, we apply our algorithm to improve quantum optimization protocols used for solving important real-world problems, such as quantum chemistry simulation and quantum neural network training.

\subsection{Prior work on quantum gradient methods} \label{sec:priorwork}
Our gradient computation algorithm is based on Jordan's~\cite{QuantGrad} quantum algorithm, which provides an exponential quantum speedup for gradient computation in a black-box model, and similarly to the classical setting it provides a finite precision binary representation of the gradient.
Bulger~\cite{bulger2005BasinHop} later showed how to combine Jordan's algorithm with quantum minimum finding~\cite{durr&hoyer:minimum} to improve gradient-descent methods.

Recently, Rebentrost et al.~\cite{rebentrost:quantumgradientdescent} and Kerenidis and Prakash~\cite{kerenidis:quantumgraddescent} considered a very different approach, 
where they represent vectors as quantum states, which can lead to exponential improvements in terms of the dimension for \emph{specific} gradient-based algorithms.

Rebentrost et al.~\cite{rebentrost:quantumgradientdescent} obtained speedups for first and second-order iterative methods (i.e., gradient descent and Newton's method) for polynomial optimization problems. The runtime of their quantum algorithm achieves poly-logarithmic dependence on the dimension $d$ but scales exponentially with the number of gradient steps~$T$. Kerenidis and Prakash~\cite{kerenidis:quantumgraddescent} described a gradient descent algorithm for the \emph{special case} of quadratic optimization problems. The algorithm's runtime scales linearly with the number of steps $T$ and in some cases can achieve poly-logarithmic dependence in the dimension $d$ as it essentially implements a version of the HHL algorithm~\cite{harrow2009quantum} for solving linear systems. However, their appealing runtime bound requires a very strong access model for the underlying~data.

\subsection{Classical gradient-based optimization algorithms} \label{sec:naivealg}
In this section, we give a brief description of a simple classical gradient-based algorithm for optimization. Consider the multi-variate function $p:\R^d\rightarrow \R$ and assume for simplicity that $p$ is well-behaved, i.e., it is bounded by some absolute constant and differentiable everywhere. The  problem is, given $p:\R^d\rightarrow \R$,~compute
\begin{align}
\label{eq:basicoptimizationproblem}
\text{OPT}=\min \{p(\pmb{x}) : \pmb{x}\in \R^d\}.
\end{align}
A heuristic solution of the optimization problem~\eqref{eq:basicoptimizationproblem} can be obtained by computing the \emph{gradient} \begin{align}\label{eq:gradientformula}
\nabla p= \left(\frac{\partial p}{\partial x_1},\frac{\partial p}{\partial x_2},\ldots,\frac{\partial p}{\partial x_d} \right)
\end{align}
of $p$.
It is a well-known fact in calculus that $p$ decreases the \emph{fastest} in the direction of $-(\nabla p (\pmb{x}))$. 
This simple observation is the basis of gradient-based optimization algorithms. 

Now we describe a simple heuristic gradient-descent algorithm for computing~\eqref{eq:basicoptimizationproblem}: pick a random point $\pmb{x}^{(0)}\in\R^d$, 
compute $\nabla p (\pmb{x}^{(0)})$ and take a $\delta$-step in the direction of $- \nabla p (\pmb{x}^{(0)})$ leading to $\pmb{x}^{(1)}=\pmb{x}^{(0)}-\delta \nabla p (\pmb{x}^{(0)})$ (for some step size $\delta > 0$). Repeat this gradient update for~$T$ steps, obtaining $\pmb{x}^{(T)}$ which has hopefully approached some local minima of~\eqref{eq:basicoptimizationproblem}. 
Finally repeat the whole procedure for $N$ different starting points $\big\{\pmb{x}^{(0)}_1,\ldots,\pmb{x}^{(0)}_N\big\}$ and take the minimum of  $\big\{p(\pmb{x}^{(T)}_1),\ldots,p(\pmb{x}^{(T)}_N)\big\}$ after $T$ gradient steps.

Given the generality of the optimization problem~\eqref{eq:basicoptimizationproblem} and the simplicity of the algorithm, gradient-based techniques are widely used in mathematics, physics and engineering. In practice, especially for well-behaved functions $p$, gradient-based algorithms are known to converge very quickly to a local optimum and are often used, e.g., in state-of-the-art algorithms for deep learning~\cite{ruder:blog}, which has been one of the recent highlights in classical machine~learning.


\subsection{Complexity measure and quantum sampling}
The starting point of this work was the simple observation that 
most quantum optimization procedures translate the objective function to the probability of some measurement outcome, and therefore evaluate it via sampling. To reflect this fact, we use an oracular model to represent our objective function, that is much weaker than the oracle model considered by Jordan~\cite{QuantGrad}.  To be precise, we work with a coherent version of the classical random sampling procedure, i.e., we assume that the function is given by a \emph{probability oracle}:
\begin{equation}\label{eq:proboracle}
U_p:\ket{\pmb{x}}\ket{0}\to \sqrt{p(\pmb{x})}\ket{\pmb{x}}\ket{1}+\sqrt{1-p(\pmb{x})}\ket{\pmb{x}}\ket{0} \quad \text{ for every } \pmb{x},
\end{equation}
where the continuous input variable $\pmb{x}$ is represented as a finite-precision binary encoding of~$\pmb{x}$. 

We address the question: how many queries to $U_p$ suffice to compute the \emph{gradient}~of~$p$?
It is not hard to see that using empirical estimation it suffices to use $\bigO{1/\eps^2}$ samples (obtained by querying $U_p$) in order to evaluate $p(\pmb{x})$ with additive error $\Theta(\eps)$. Provided that $p$ is smooth we can compute an $\eps$-approximation of $\nabla_i p(\pmb{x})=\frac{\partial p}{\partial x_i}$ by performing $\bigOt{1}$ such function evaluations, using standard classical techniques.
Hence, we can compute an $\eps$-approximation of the gradient $\nabla p(\pmb{x})$ with $\bigOt{d}$ function evaluations of precision $\Theta(\eps)$. The simple gradient-descent algorithm described in the previous section uses $TN$ gradient computations, therefore the overall algorithm can be executed using $\bigOt{TNd/\varepsilon^2}$ samples. 

\paragraph{Quantum speedups for the simple gradient descent algorithm.} 
We briefly describe how to improve the query complexity of the simple gradient-descent algorithm, assuming that we have access to a probability oracle~\eqref{eq:proboracle} of a smooth objective function $p$.
First, we improve the complexity of $\eps$-accurate function evaluations to $\bigO{1/\eps}$ using amplitude estimation~\cite{brassard2002quantum}.
Then, similarly to \cite{bulger2005BasinHop,lara2014HybridOpt}, we improve the parallel search for finding a global minimum using the quantum minimum finding algorithm~\cite{durr&hoyer:minimum,van2017quantum}. 
Additionally, we present a quantum algorithm for gradient computation which quadratically improves the algorithm in terms of the dimension $d$.
In particular, this shows that we can speed up the gradient-based optimization algorithm quadratically in almost all parameters, except the number of iterations $T$. The results are summarized below in~Table~\ref{tab:variousQuantumImprovements}:
\begin{table}[H]
	\centering
	\begin{tabular}{l|cccc}
		Method: 	& Simple algorithm	& +Amp. est. 			& +Grover search 			& +\textbf{This paper} \\ \hline
		Complexity:	& $\bigOt{TNd/\eps^2}$					& $\bigOt{TNd/\textcolor{red}{\eps}}$		& $\bigOt{T\sqrt{\textcolor{red}{N}}d/\eps}$	& $\bigOt{T\sqrt{N\textcolor{red}{d}}/\eps}$
	\end{tabular}
	\caption{Quantum speedups for a simple gradient-descent algorithms}
	\label{tab:variousQuantumImprovements}	
\end{table}	

\textbf{Remark about $T$.} Since gradient descent is ubiquitous in optimization, it has been optimized extensively in the \emph{classical} literature, yielding significant reductions in the number of steps $T$, see for example accelerated gradient methods \cite{NesterovAcceleratedGrad1983,BeckIterative,JainAcceleratingSGD}. We think it should be possible to combine some of these classical results with our quantum speedup, because our algorithm outputs a classical description of the gradient, unlike other recent developments on quantum gradient-descent methods \cite{rebentrost:quantumgradientdescent,kerenidis:quantumgraddescent}. However, there could be some difficulty in applying classical acceleration techniques, because they often require unbiased samples of the approximate gradient, which might be difficult to achieve using quantum~sampling.

\subsection{Conversion between probability and phase oracles}
As mentioned earlier, many quantum optimization procedures access to the objective function via a probability oracle \eqref{eq:proboracle}. However, for most of the quantum techniques that we employ, it is more natural to work with a \emph{phase oracle}, acting as
\begin{equation}\label{eq:phaseoracle}
\mathrm{O}_{\!p}:\ket{\pmb{x}}\to e^{i p(\pmb{x})}\ket{\pmb{x}} \quad \text{ for every } \pmb{x}.
\end{equation}
Using Linear Combination of Unitaries (LCU) techniques~\cite{berry:simHamTaylor}, we show that we can efficiently simulate a phase oracle with $\eps$ precision, using $\bigO{\log(1/\eps)}$ queries to the probability oracle~$U_p$. Similarly, we show that under some reasonable conditions, we can simulate the probability oracle $U_p$ with $\eps$ precision, using $\bigO{\log(1/\eps)}$ queries to the phase oracle $\mathrm{O}_{\!p}$. 

For the purposes of our paper this efficient simulation essentially means that we can interchangeably work with probability or phase oracles, using whichever fits our setting best. We are not aware of any prior result that shows this simulation and  we believe that our oracle conversion techniques could be useful for other applications.

One possible application which is relevant for quantum distribution testing~\cite{LiWuEntropyQueryComp} is the following. 
Suppose we are given access to some probability distribution via a quantum oracle
$$ U:\ket{0}\ket{0} \rightarrow \sum_{x\in X} \sqrt{p(x)}\ket{x}\ket{\psi_x}.$$ 
Let $H= \sum_{x\in X}p(x)\ketbra{x}{x}$ be the Hamiltonian corresponding to the probability distribution $p(x)$, then we can implement Hamiltonian simulation $e^{itH}$ for time $t$ with $\eps$ precision making only $\bigO{|t|\log(|t|/\eps)}$ queries to $U$.

\subsection{Improved gradient computation algorithm}\label{subsec:gradientIntro}
Jordan's algorithm for gradient computation~\cite{QuantGrad} uses yet another \emph{fairly strong} input model, it assumes that $f:\mathbb{R}^d\rightarrow \mathbb{R}$ is given by an $\eta$-accurate binary oracle, which on input $\pmb{x}$, outputs $f(\pmb{x})$ binarily with accuracy $\eta$.
Jordan's quantum algorithm outputs an $\eps$-coordinate-wise approximation of $\nabla f$ using a single evaluation of the binary oracle. The algorithm prepares a uniform superposition of evaluation points over a finite region, then approximately implements the $S=\bigO{\sqrt{d}/\eps^2}$-th power of a phase oracle
$$
\mathrm{O}^S_{\!f}:\ket{\pmb{x}}\rightarrow e^{i S f(\pmb{x})}\ket{\pmb{x}},
$$
using a \emph{single} $\Theta(\eps^2/\sqrt{d})$-accurate evaluation of $f$, and then applies an inverse Fourier transformation to obtain an approximation of the gradient. Although this algorithm only uses a single query, the required precision of the function evaluation can be prohibitive. 
In particular, if we only have access to a probability oracle, it would require $\bigO{\sqrt{d}/\eps^2}$ probability oracle queries to evaluate the function with such precision using quantum amplitude estimation~\cite{brassard2002quantum}. In contrast, our new quantum algorithm requires only $\bigOt{\sqrt{d}/\eps}$ queries to a probability oracle. The precise statement can be found in Theorem~\ref{thm:finalScaling}, and below we give an informal statement.
	\begin{theorem}[Informal]\label{thm:informalFinalScaling}
		There is a gate-efficient quantum algorithm, that given probability oracle $U_p$ access to an analytic function $p:\R^d\rightarrow [0,1]$ having bounded partial derivatives at $\pmb{0}$, computes an approximate gradient $\pmb{g}\in\R^d$ such that $\nrm{\pmb{g}-\nabla p(\pmb{0})}_\infty\leq \eps$ with high probability, using $\bigOt{\sqrt{d}/\eps}$ queries to $U_p$.
		We get similar complexity bounds if we are given phase oracle access to the function.
	\end{theorem}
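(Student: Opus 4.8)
The plan is to first reduce to the phase-oracle setting and then run a high-order variant of Jordan's algorithm. Using the probability-to-phase conversion above---equivalently, the diagonal Hamiltonian simulation $e^{i c\, p(\pmb x)}$ noted for $H=\sum_{\pmb x}p(\pmb x)\ketbra{\pmb x}{\pmb x}$---I can implement $\ket{\pmb x}\to e^{i c\, p(\pmb x)}\ket{\pmb x}$ to precision $\eps'$ using $\bigOt{|c|}$ queries to $U_p$, for any fixed real $c$. The point of Jordan's method is that it reads the gradient out of the \emph{linear} part of the accumulated phase, so its accuracy is limited only by how well the phase function approximates $\pmb x\mapsto \pmb x\cdot\nabla p(\pmb 0)$ over the sampling grid. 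Rather than feeding $p$ itself into the routine---where the leading error is already the quadratic Taylor term---I would feed in a \emph{finite-difference surrogate}
\begin{equation}
g(\pmb x)\;:=\;\sum_{l=-m}^{m} a_l\, p(l\,\pmb x),
\end{equation}
where the central-difference coefficients $a_l$ are chosen so that $g(\pmb x)=p(\pmb 0)\sum_l a_l+\pmb x\cdot\nabla p(\pmb 0)+O(\infnorm{\pmb x}^{2m+1})$; that is, all Taylor contributions of order $2,\dots,2m$ cancel, the order-zero term is a harmless global phase, and the linear term is normalized to $\nabla p(\pmb 0)$. Crucially, analyticity lets me propagate the derivative bounds \emph{at $\pmb 0$} to Taylor-remainder estimates on a whole neighborhood, and lets me take $m=\Theta(\log(1/\eps))$ while keeping $\sum_l|a_l|=\bigOt{1}$, so the surrogate is linear to very high order.

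Next I would balance the parameters. Prepare the uniform superposition over an $N$-points-per-axis grid of side length $\ell$ centered at $\pmb 0$, apply the phase $e^{iS g(\pmb x)}$ for a suitable scale factor $S>0$, and apply the inverse Fourier transform coordinate-wise. The linear part $e^{iS\,\pmb x\cdot\nabla p(\pmb 0)}$ makes the $i$-th Fourier register peak at a bin proportional to $S\ell\,\nabla_i p(\pmb 0)$, so the transform resolves each component to precision $\Theta(1/(S\ell))$; hence I need $S\ell=\Omega(1/\eps)$, and $N=\widetilde{\Theta}(1/\eps)$ bins per axis suffice to host the peak since $\nabla_i p(\pmb 0)=\bigOt{1}$. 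The competing constraint is that the nonlinear remainder must not corrupt the phase: over the grid $\infnorm{\pmb x}\le \ell/2$, and the scaled evaluations reach $\infnorm{l\pmb x}\le m\ell/2$, so the smoothness hypothesis gives a phase error of size $\widetilde O\!\big(S(\sqrt d\,\ell)^{2m+1}\big)$, which I need to keep $o(1)$ in an averaged sense. Eliminating $\ell$ between $S\ell=\Omega(1/\eps)$ and $S(\sqrt d\,\ell)^{2m+1}=o(1)$ and optimizing over $m=\Theta(\log(1/\eps))$ forces $\ell=\widetilde{\Theta}(1/\sqrt d)$ and $S=\bigOt{\sqrt d/\eps}$. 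This is precisely where the quadratic saving over Jordan's $S=\Theta(\sqrt d/\eps^2)$ arises: high-order cancellation permits a larger grid, hence a smaller $S$.

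For the query count, implementing $e^{iS g(\pmb x)}=\prod_{l}\big(\ket{\pmb x}\mapsto e^{iS a_l\, p(l\pmb x)}\ket{\pmb x}\big)$ costs $\sum_l\bigOt{|S a_l|}=\bigOt{S\sum_l|a_l|}=\bigOt{\sqrt d/\eps}$ queries to $U_p$ by the conversion cost above, where each input rescaling $\pmb x\mapsto l\pmb x$ is a query-free reversible multiplication of the register. Finally, each coordinate is returned correctly with constant probability, so $\bigOt{1}$ independent repetitions followed by a coordinate-wise median boost the success probability enough to guarantee $\infnorm{\pmb g-\nabla p(\pmb 0)}\le\eps$ for all $d$ coordinates simultaneously, yielding the claimed $\bigOt{\sqrt d/\eps}$ total. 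The main obstacle I anticipate is the rigorous error analysis of the truncated phase: one must (i) quantify the finite-difference remainder from the smoothness class sharply enough---controlling its dependence on $m$ and on the derivative bounds---to bound it on the bulk of the grid, and (ii) control the amplitude leaking onto boundary points where the scaled arguments $l\pmb x$ exit the region on which the bounds are useful, showing this ``bad'' mass is negligible. Converting these pointwise remainder estimates into a guarantee on the \emph{measurement distribution}, rather than merely on the $L^2$ error of the state, is the delicate step that the formal Theorem~\ref{thm:finalScaling} must pin down.
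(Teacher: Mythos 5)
Your proposal follows essentially the same route as the paper's own argument: convert the probability oracle to a phase oracle with logarithmic overhead, feed a degree-$2m$ central-difference surrogate (with $m=\Theta(\log(1/\eps))$ and $\sum_\ell|a_\ell|=\bigOt{1}$) into Jordan's Fourier-readout routine on a grid of side $\widetilde{\Theta}(1/\sqrt d)$ with phase scale $S=\bigOt{\sqrt d/\eps}$, and boost by coordinate-wise medians. You also correctly isolate the same key technical hurdle the paper defers to Theorem~\ref{thm:finalScaling} and Appendix~\ref{apx:centralErrorBounds}, namely that the $(\sqrt d)^{2m+1}$ growth of the directional-derivative remainder holds only for most grid points (Lemma~\ref{lemma:tensorChebyshev}) and that the resulting pointwise bound must be converted into a bound on the measurement distribution (Lemma~\ref{lemma:genericJordan}).
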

\paragraph{Proof sketch.} The main new ingredient of our algorithm is the use of higher-degree central-difference formulas, a technique borrowed from calculus. We use the fact that for a one-dimensional analytic function $h:\R\rightarrow \R$ having bounded derivatives at $0$, we can use a $\log(1/\eps)$-degree central-difference formula to compute an $\eps\cdot\log(1/\eps)$-approximation of $h'(0)$ using $\eps$-accurate evaluations of $h$ at $\log(1/\eps)$ different points around $0$.
We apply this result to one-dimensional slices of the $d$-dimensional function $f:\R^d\rightarrow \R$. The main technical challenge in our proof is to show that if $f$ is smooth, then for most such one-dimensional slices, the $k$-th order directional derivatives increase by at most an $\bigO{(\sqrt{d})^k}$-factor compared to the partial derivatives of $f$. As we show this implies that it is enough to evaluate the function $f$ with $\bigOt{\eps/\sqrt{d}}$-precision in order to compute the gradient.
After the function evaluations, our algorithm ends by applying a $d$-dimensional quantum Fourier transform providing a classical description of an approximate gradient, similarly to Jordan's algorithm. \QEDB

In the special case when the probability oracle represents a degree-$k$ multivariate polynomial over a finite domain $[-R,R]^d$, for some $R=\Theta(1)$, we obtain an algorithm with query complexity $\bigOt{k\log(d)/\eps}$, see Theorem~\ref{thm:lagrangeAlg}. In the case when $k=\bigO{\log(d)}$ this gives an \emph{exponential speedup} over Jordan's algorithm with respect to the dimension $d$.

\subsection{Smoothness of probability oracles}\label{subsec:SmothProbability}
We show that the seemingly strong requirements of Theorem~\ref{thm:informalFinalScaling} are naturally satisfied by probability oracles arising from typical quantum optimization protocols. In such protocols, probability oracles usually correspond to the measurement outcome probability of some orthogonal projector $\Pi$ on the output state of a parametrized circuit $U(\pmb{x})$ acting on some fixed initial state~$\ket{\psi}$, i.e., 
\begin{equation*}
	p(\pmb{x})=\braketbra{\psi}{U(\pmb{x})^\dagger\Pi U(\pmb{x})}{\psi}.
\end{equation*}
Usually the parametrized circuit can be written as 
\begin{equation*}
	U(\pmb{x})=U_0\prod_{j=1}^{d}\left(e^{i x_j H_j}\right) U_j,
\end{equation*}
where the $U_j$s are fixed unitaries and the $H_j$s are fixed Hermitian operators. We can assume without too much loss of generality that $\nrm{H_j}\leq 1/2$. Under these conditions we can show that  $p(\pmb{x})$ is analytic, and all partial derivatives of $p$ are upper bounded by $1$ in magnitude, i.e., $p$ satisfies the conditions of Theorem~\ref{thm:informalFinalScaling}. For more details, see Lemma~\ref{lemma:unitaryDerivative}-\ref{lemma:derivativeCombination}.

Throughout the paper, when we say that a function is smooth we mean that it satisfies the requirements of our gradient computation algorithm, i.e., it is analytic and has bounded\footnote{To be precise, by this we mean that the function satisfies the requirements of Theorem~\ref{thm:finalScaling}.} partial derivatives.

\subsection{Lower bounds for gradient computation}
An interesting question is whether we can improve the classical $\bigO{d/\varepsilon^2}$-gradient computation algorithm by a super-quadratic factor? At first sight it would very well seem possible considering that our algorithm gains a speedup using the quantum Fourier transform. Indeed, for low-degree multivariate polynomials we can get an exponential speedup. However, we show that in general this soeed-up is \emph{not} possible for smooth non-polynomial functions, and give a query lower bound of $\Omega(\sqrt{d}/\eps)$ for the complexity of a generic quantum gradient computation~algorithm. 

Proving lower bounds for quantum query complexity is a well-studied subject within the area of quantum computing. In general, there are two well-known methods for proving lower bounds in the  quantum  query model, the polynomial method~\cite{Beals:2001} and the adversary method~\cite{Ambainis:AdversaryMethod00,Hoyerleespalek:negative}. Both these methods are also known to give optimal lower bounds for various combinatorial quantum query problems. However, these techniques \emph{crucially rely} on the discrete nature of the problems, i.e., they assume the oracle in the quantum query algorithm is a discrete phase oracle (like in Eq.~\ref{eq:proboracle} with \emph{Boolean} $p$). More generally, most query lower  bound techniques in quantum computing apply to settings where the input unitaries come from a discrete set which might correspond to some discrete computational problem, allowing for the use of the polynomial and adversary method. It is not at all clear how one would use these techniques to prove a lower bound for a family of continuous unitaries.

Moving away from combinatorial problems, in the continuous-input model when one gets arbitrary phase oracles, not many lower bounds are known. In fact the only quantum lower bounds for the continuous-input model that we are aware of, are the so-called ``complexity-theoretic no-cloning theorem" by Aaronson~\cite{aaronson:quantummoney}, and other problems directly related to phase estimation~\cite{bessenlowerbound} \snote{Are there more?}. Apart from these examples we do not have many lower bounds for quantum algorithms in a continuous-input model. More recently the negative-weight adversary method was adapted to the continuous-input setting by Belovs~\cite{belovsGeneralAdv15}. However, this generalized adversary method has not yet been applied to continuous-input settings, as far as the authors are aware. 
It is unclear to us how one would use the generalized adversary method for this problem, without introducing an overly complicated formalism. 
This suggests that proving a lower bound on gradient computation presents additional challenges and calls for introducing using new techniques.

In order to solve this challenge, we revisit the hybrid method\footnote{Actually one can view the adversary method as a generalization of the hybrid method.} which was one of the earliest lower-bound techniques, originally introduced by Bennett et al.~\cite{Bennett:SearchLowerBound97}. In particular we derive an \emph{intuitive} lower bound result, which applies to arbitrary phase oracles. 

\begin{restatable}{theorem}{hybMetArb}\textbf{\emph{(Hybrid method for arbitrary phase oracles)}}\label{thm:arbHybLow}
	Let $G$ be a (finite) set of labels and let $\mathcal{H}:=\mathrm{Span}(\ket{x}\colon x\in G)$ be a Hilbert space. For a function $\tilde{f}:G\rightarrow \R$ let $\mathrm{O}_{\!\tilde{f}}$ be the phase oracle acting on $\mathcal{H}$ such that $$\mathrm{O}_{\!\tilde{f}}:\ket{x}\to e^{i \tilde{f}(x)}\ket{x} \quad \text{ for every } x\in G.$$
	Suppose that $\F$ is a finite set of functions $G\rightarrow \R$, and the function $f_*\colon G\rightarrow \R$ is not in $F$. If a quantum algorithm makes $T$ queries to a (controlled) phase oracle $\mathrm{O}_{\!\tilde{f}}$ (or its inverse) and for all $f\in \F$ can distinguish with probability at least $2/3$ the case $\tilde{f}=f$ from the case $\tilde{f}=f_*$, then
	$$  T \geq \frac{\sqrt{|\F|}}{3}\left/\sqrt{\max_{x\in G}\sum_{f\in \F}\min\left(\left|f(x)-f_*(x)\right|^2,4\right)}\right..$$
\end{restatable}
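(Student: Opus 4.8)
The plan is to run the classic hybrid argument of Bennett et al.~\cite{Bennett:SearchLowerBound97}, but to apply it simultaneously against the whole family $\F$ and then \emph{average over $\F$}, which is exactly what produces the $\sqrt{|\F|}$ amplification in the numerator. First I would separate the algorithm's fixed, oracle-independent unitaries $V_0,V_1,\ldots,V_T$ from the queries and write the state after all $T$ queries, when the oracle is $\mathrm{O}_{\!g}$, as $\ket{\psi_T^g}=V_T\,\mathrm{O}_{\!g}\,V_{T-1}\cdots V_1\,\mathrm{O}_{\!g}\,V_0\ket{\psi_0}$. Let $\ket{\phi_t}$ be the state on which the $t$-th oracle call acts in the \emph{reference} run using $f_*$. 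A telescoping identity together with the fact that the remaining operations are norm-preserving gives the standard hybrid bound
$$\nrm{\ket{\psi_T^f}-\ket{\psi_T^{f_*}}}\leq \sum_{t=1}^T \nrm{\left(\mathrm{O}_{\!f}-\mathrm{O}_{\!f_*}\right)\ket{\phi_t}}.$$

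Next I would exploit that both oracles are diagonal in the $\{\ket{x}\}$ basis. Writing $q_t(x):=|\braket{x}{\phi_t}|^2$ for the query weight the algorithm places on point $x$ at step $t$ (so that $\sum_{x\in G} q_t(x)\leq 1$), and using the elementary estimate $|e^{i\alpha}-e^{i\beta}|^2=2-2\cos(\alpha-\beta)\leq\min(|\alpha-\beta|^2,4)$, the per-step damage becomes
$$\nrm{\left(\mathrm{O}_{\!f}-\mathrm{O}_{\!f_*}\right)\ket{\phi_t}}^2=\sum_{x\in G}\left|e^{if(x)}-e^{if_*(x)}\right|^2 q_t(x)\leq \sum_{x\in G}\min\!\left(|f(x)-f_*(x)|^2,4\right)q_t(x).$$

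Then I would bring in the distinguishability hypothesis. Success probability at least $2/3$ forces trace distance at least $1/3$ between the final states, and since the Euclidean distance of the two pure states is at least their trace distance, $\nrm{\ket{\psi_T^f}-\ket{\psi_T^{f_*}}}\geq 1/3$. Applying Cauchy--Schwarz to the telescoping sum ($\sum_t a_t\leq\sqrt{T}\sqrt{\sum_t a_t^2}$) and substituting the per-step bound yields, for every $f\in\F$,
$$\frac{1}{9}\leq T\sum_{t=1}^T\sum_{x\in G}\min\!\left(|f(x)-f_*(x)|^2,4\right)q_t(x).$$
Summing over all $f\in\F$, swapping the order of summation, and bounding $\sum_{f\in\F}\min(|f(x)-f_*(x)|^2,4)$ by its worst case $M:=\max_{x\in G}\sum_{f\in\F}\min(|f(x)-f_*(x)|^2,4)$ decouples the function sum from the query weights; since $\sum_x q_t(x)\leq 1$ the remaining double sum over $t$ and $x$ is at most $T$, so $|\F|/9\leq T^2 M$, which rearranges to the claimed $T\geq \sqrt{|\F|}/(3\sqrt{M})$.

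Finally I would dispatch two bookkeeping points: for a controlled query the operator $\mathrm{O}_{\!f}-\mathrm{O}_{\!f_*}$ annihilates the control-off subspace, so its action is governed by a sub-normalized version of the same weights $q_t(x)$; and for an inverse query $|e^{-if(x)}-e^{-if_*(x)}|=|e^{if(x)}-e^{if_*(x)}|$, so neither variant weakens the estimate. I expect the only genuinely delicate step to be the averaging over $\F$: the key observation is that every $f\in\F$ is compared against the \emph{same} reference run, so the identical weights $q_t(x)$ appear for all $f$, and it is precisely the single worst-case point $x$ defining $M$ that simultaneously controls the entire family. This shared-weight structure is what converts what would be only a $|\F|$-linear amount of accumulated damage into a $\sqrt{|\F|}$ lower bound on the number of queries.
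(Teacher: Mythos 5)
Your proposal is correct and follows essentially the same route as the paper's proof: the hybrid/telescoping bound anchored to the common reference run with $f_*$, the diagonal-oracle computation giving the pointwise $\min\left(|f(x)-f_*(x)|^2,4\right)$ estimate, Cauchy--Schwarz over the $T$ steps, and the sum over $\F$ exchanged with the sum over $x$ to extract the worst-case point. The only differences are cosmetic (you sum over $\F$ and use $\sum_x q_t(x)\leq 1$ where the paper averages over $\F$ and takes a maximum over the query index), and your closing remarks on controlled and inverse queries match the paper's.
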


Our result can be intuitively applied to prove lower bounds for gradient computation and other \emph{natural problems} as well. In fact our technique has already been successfully applied to prove lower bounds for quantum SDP-solvers by van Apeldoorn and Gilyén~\cite{van2018quantum}.

Now we show how to prove our lower bound on gradient computation and present the result in the following theorem, which is an informal version of Theorem~\ref{thm:queryLowerBound}. This lower bound shows that our gradient-computation algorithm is in fact optimal up to poly-logarithmic factors for a specific class of smooth functions.
    
\begin{theorem}[Informal]\label{thm:informalLowerBoundFamily}
	Let $\eps, d> 0$. There exists a family of smooth functions $\F\subseteq \{f:\R^d\rightarrow \R\}$ such that the following holds. Every 
    quantum algorithm $\A$ that makes $T$ queries to the phase oracle $\mathrm{O}_{\!f}$ and satisfies the following:  
	for every $f\in \F$, $\A$  outputs, with probability $\geq 2/3$, an approximate gradient $\pmb{g}\in\R^d$  satisfying
	$$
    	\nrm{\pmb{g}-\nabla f(\pmb{0})}_\infty< \eps,
    $$ 
	needs to make $T=\Omega( \sqrt{d}/\eps)$ queries.
\end{theorem}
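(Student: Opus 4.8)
**

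The plan is to apply the hybrid-method lower bound of Theorem~\ref{thm:arbHybLow} to a carefully constructed family $\F$ of smooth functions whose gradients at $\pmb{0}$ are well-separated in the $\infty$-norm, while the functions themselves remain pointwise close everywhere on the relevant domain. The guiding intuition is that distinguishing members of such a family is \emph{necessary} for any algorithm that outputs an $\eps$-accurate gradient: if two functions $f,f_*\in\F$ had gradients at $\pmb{0}$ differing by more than $2\eps$ in some coordinate, then a successful gradient-computation algorithm would automatically distinguish them with probability $\geq 2/3$, so the query lower bound from Theorem~\ref{thm:arbHybLow} transfers directly.

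First I would fix a central ``base'' function $f_*$ (the zero function is the natural choice, so that $\nabla f_*(\pmb{0})=\pmb{0}$) and construct a collection of $|\F|=\Theta(d)$ perturbations. Concretely, for each coordinate direction $e_i$ I would take a bump-type smooth function $f_i(\pmb{x})$ that is analytic, has all partial derivatives bounded (so it lies in the smooth class of Theorem~\ref{thm:informalFinalScaling}), has a gradient at $\pmb{0}$ of magnitude $\Theta(\eps)$ pointing along $e_i$, and has magnitude $|f_i(\pmb{x})|=\bigO{\eps/\sqrt{d}}$ uniformly in $\pmb{x}$. The tension here is the standard one for gradient lower bounds: we want the \emph{gradient} at the origin to be large ($\Theta(\eps)$) while the \emph{function value} stays uniformly small ($\bigO{\eps/\sqrt{d}}$); this is achievable by taking a function that oscillates on a length scale of order $1/\sqrt{d}$, e.g.\ something like $f_i(\pmb{x})=\frac{\eps}{\sqrt{d}}\sin(\sqrt{d}\,x_i)\cdot\chi(\pmb{x})$ for a smooth cutoff $\chi$, so that the derivative picks up the $\sqrt{d}$ factor. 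The coordinate-wise separation of the gradients guarantees that no single output $\pmb{g}$ can be $\eps$-accurate for two different $f_i$, so an $\eps$-gradient algorithm distinguishes every pair.

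Next I would plug these quantities into the bound of Theorem~\ref{thm:arbHybLow}. The numerator contributes $\sqrt{|\F|}=\Theta(\sqrt{d})$. For the denominator I must estimate $\max_{x}\sum_{f\in\F}\min(|f(x)-f_*(x)|^2,4)$; using the uniform bound $|f_i(\pmb{x})-f_*(\pmb{x})|=\bigO{\eps/\sqrt{d}}$ and summing over the $\Theta(d)$ functions gives $\Theta(d)\cdot\bigO{\eps^2/d}=\bigO{\eps^2}$, so the denominator is $\bigO{\eps}$. Combining, $T=\Omega(\sqrt{d}/\eps)$, matching the claimed bound and the algorithmic upper bound of Theorem~\ref{thm:informalFinalScaling}. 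I would also verify the formalization details: the hybrid theorem is stated for a \emph{finite} label set $G$, so I need to replace the continuous domain by a sufficiently fine finite grid of evaluation points and check that the discretization does not affect analyticity, the derivative-boundedness, or the gradient and value estimates up to constants.

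The main obstacle I anticipate is the simultaneous control of the two competing scales — forcing a $\Theta(\eps)$ gradient while keeping the function uniformly $\bigO{\eps/\sqrt{d}}$ — \emph{together with} the requirement that each $f_i$ genuinely satisfy the smoothness hypotheses of Theorem~\ref{thm:informalFinalScaling} (analyticity and, crucially, the bound on \emph{all} higher partial derivatives, not just the first). A naive oscillatory construction will have $k$-th derivatives blowing up like $(\sqrt{d})^k$, which may violate the boundedness needed to place $\F$ inside the advertised smooth class; the delicate part is choosing the bump profile and cutoff so that the higher derivatives stay controlled in exactly the normalization that the theorem's smoothness condition demands, so that the lower-bound family is legitimately a subset of the class to which the upper bound applies. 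This is what makes the lower bound \emph{tight} rather than merely a bound for some artificially-chosen functions.
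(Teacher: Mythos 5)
Your overall strategy is the same as the paper's: take $f_*=0$ and $d$ coordinate perturbations whose gradients at $\pmb{0}$ are $\Theta(\eps)\pmb{e}_j$, observe that an $\eps$-accurate gradient algorithm distinguishes them, and feed the family into Theorem~\ref{thm:arbHybLow}. However, your concrete construction has a gap that cannot be patched within your stated plan. You insist on the \emph{uniform pointwise} bound $|f_i(\pmb{x})|=\bigO{\eps/\sqrt{d}}$ so that each of the $d$ terms in the denominator contributes $\bigO{\eps^2/d}$. But a one-dimensional Landau--Kolmogorov inequality ($\nrm{h'}_\infty^2\leq 2\nrm{h}_\infty\nrm{h''}_\infty$ applied to the restriction of $f_i$ to the $x_i$-axis) shows that $\nrm{f_i}_\infty\leq \eps/\sqrt{d}$ together with $|\partial_i f_i(\pmb{0})|=\Theta(\eps)$ forces $\nrm{\partial_i^2 f_i}_\infty=\Omega(\eps\sqrt{d})$. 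Hence \emph{no} choice of bump profile or cutoff keeps such a family inside the smooth class $|\partial_\alpha f|\leq c^k k^{k/2}$ with constant $c$ — the very issue you flag as "the delicate part" is an impossibility, not a matter of careful tuning, and your example $\frac{\eps}{\sqrt{d}}\sin(\sqrt{d}\,x_i)\chi(\pmb{x})$ indeed has $k$-th derivatives of order $\eps\, d^{(k-1)/2}$.

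The paper's resolution is to drop the uniform smallness requirement entirely: the hybrid bound only needs the \emph{sum} $\sum_{j}|f_j(\pmb{x})-f_*(\pmb{x})|^2$ to be $\bigO{\eps^2}$ at every point, not each term to be $\bigO{\eps^2/d}$. With $f_j(\pmb{x}):=2\eps x_j e^{-\nrm{\pmb{x}}^2/2}$ each function individually reaches magnitude $\Theta(\eps)$ (so Landau's inequality poses no obstruction, and Lemma~\ref{lemma:normalDerivatives} verifies $|\partial_\alpha f_j(\pmb{0})|\leq c^k k^{k/2}$ with $c=\bigO{1}$), yet
\begin{equation*}
\sum_{j\in[d]}|f_j(\pmb{x})|^2=4\eps^2\nrm{\pmb{x}}^2 e^{-\nrm{\pmb{x}}^2}\leq \frac{4\eps^2}{e},
\end{equation*}
because $f_j$ is only large where $|x_j|$ is large and the shared Gaussian envelope prevents all $d$ functions from being large at the same point. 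To complete a correct proof you should replace your per-function bound by this sum bound and use a family of this localized type; the rest of your argument (gradient separation, reduction to distinguishing, and the $\sqrt{|\F|}/\sqrt{\bigO{\eps^2}}=\Omega(\sqrt{d}/\eps)$ computation) then goes through as in the paper.
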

\begin{proof}(sketch)
	We exhibit a family of functions $\F$ for which the corresponding phase oracles $\{\mathrm{O}_f:f\in \F\}$ require $\Omega(\sqrt{d}/\eps)$ queries to distinguish them from the constant $0$ function (as shown by Theorem~\ref{thm:arbHybLow}), but the functions in $\F$ can be uniquely identified by calculating their gradient at $\pmb{0}$ with accuracy $\eps$. In particular, this implies that calculating an approximation of the gradient vector for these functions must be at least as hard as distinguishing the phase oracles corresponding to functions in $\F$. 
	We use the following $\R^d\rightarrow \R$ functions: 
	$f_*(\pmb{x}):=0$ and $f_j(\pmb{x}):=2\eps x_j e^{-\nrm{\pmb{x}}^2/2}$ for all $j\in[d]$, and consider the family of functions $\F:=\bigcup_{j\in[d]}\{f_j(\pmb{x})\}$. As we show in Lemma~\ref{lemma:expFunFamily}, for all $\pmb{x}\in \R^d$ we have that 
   	\begin{equation*}
	\sum_{j\in [d]} \left|f_j(\pmb{x})-f_*(\pmb{x})\right|^2\leq\frac{4\eps^2}{e}.
	\end{equation*}
\end{proof}

Using our \emph{efficient} oracle-conversion technique between probability oracles and phase oracles, which incurs an $\bigOt{1}$ overhead, the above lower bound implies an $\widetilde{\Omega}(\sqrt{d}/\eps)$ query lower bound on $\eps$-accurate gradient computation for the probability oracle input model as well.

More recently, Cornelissen~\cite{CornelissenThesis} \anote{(also building on Theorem~\ref{thm:arbHybLow})} managed to show an $\Omega\left(d^{\frac{1}{2}+\frac{1}{p}}/\eps\right)$ lower bound for $\eps$-precise gradient computation in $p$-norm \emph{for every} $p\in [1,\infty]$. More precisely he showed that an algorithm as in Theorem~\ref{thm:informalLowerBoundFamily} that needs to output a gradient $\pmb{g}$ satisfying  $\nrm{\pmb{g}-\nabla f(\pmb{0})}_p< \eps$ for a given $p\in [1,\infty]$, must make $\Omega\left(d^{\frac{1}{2}+\frac{1}{p}}/\eps\right)$ queries. Furthermore, the family of functions in his lower bound satisfies stronger smoothness criteria, making his lower bound even stronger. 
Note that this results shows that our algorithm is essentially optimal for a large class of gradient computation problems. Indeed, if one wants to compute an $\eps$-precise approximation of the gradient in $p$-norm, it suffices to compute an $\eps d^{-1/p}$-precise gradient in $\infty$-norm, which can be achieved with our algorithm making $\bigOt{d^{\frac{1}{2}+\frac{1}{p}}/\eps}$ queries.

\subsection{Significance of our improvement for applications}
Objective functions in continuous optimization problems are typically evaluated either via some arithmetic calculations or via some sampling procedure. 
In the former case the complexity of function evaluation usually has poly-logarithmic dependence on the precision, whereas in the latter case the complexity of the sampling procedure usually has polynomial dependence on the (reciprocal of the) precision. 

It is known that for functions that are evaluated arithmetically, the complexity of function evaluation and the complexity of gradient computation is typically the same up to a constant factor. This is called the ``cheap gradient principle''\footnote{This can be shown by a ``reverse derivation'' argument, which is basically a generalization of the back-propagation idea that is widely used in neural network training.} in the theory of algorithmic differentiation~\cite{evaluatingDerivatives}. Therefore in the arithmetic case Jordan's algorithm typically gives only a constant-factor quantum speedup. 

In the (quantum) sampling case, as we explained in Section~\ref{subsec:gradientIntro}, Jordan's algorithm can have a prohibitive overhead in the runtime due to the need for increased accuracy in function evaluation. This may explain why there have not been many applications of Jordan's original gradient computation algorithm~\cite{QuantGrad} despite its obvious potential for machine learning and other optimization tasks. Note that the dependence on the precision can be crucial, since in applications such as VQE or QAOA it can be natural to aim for precision $\propto 1/d$, where $d$ is the number of independent parameters.

As our lower bound shows, in the sampling case, it is impossible to obtain a super-quadratic speedup for computing an $\eps$-approximation of the gradient of a generic smooth function\footnote{In this statement we ignore polylog factors and consider the case when the gradient is approximated in the norm $\nrm{.}_\infty$. By generic smooth function we mean the class of functions satisfying the requirements of Theorem~\ref{thm:finalScaling}.}. Although the speedup is limited to quadratic in the sampling case, we emphasize that for real-word applications this case is the more relevant one; for arithmetically-evaluated functions classical gradient computation is already quite efficient, and there is typically little room for speedups! In fact, it seems extremely difficult to find an actual (non-query complexity) application, where Jordan's algorithm could result in an exponential speedup.  On the other hand, if one uses a quantum simulation algorithm or a heuristic quantum optimizer as a black-box unitary, then we are limited by quantum sampling, and our methods for gradient computation can result in a significant improvements over both  classical methods and Jordan's original algorithm. Therefore, our work is relevant for quantum optimization algorithms, some of which could potentially give exponential speedups over their classical counterpart.

\paragraph{Approximations in different norms.} Finally, let us compare the complexity of various gradient-computation methods for approximating the gradient in different norms. Note that if one wants to calculate an $\eps$-approximation of the gradient in standard $\ell_2$ norm $\nrm{\cdot}$ with an algorithm that has approximation guarantees with respect to the norm $\nrm{\cdot}_\infty$, then it suffices to take $\eps \rightarrow \eps/\sqrt{d}$ because $\nrm{\cdot}\le \sqrt{d} \nrm{\cdot}_\infty$.
Table~\ref{tab:complexities} shows the number of queries to the probability oracles of the different gradient-computation methods. 

\begin{table}[!ht]
	\centering
	\begin{tabular}{l|ccccc}
		& Classical & Semi-classical & Jordan's & \textbf{Our (smooth)} & \textbf{Our (degree-$k$)} \\ \hline
		$\eps\text{-apx. in } \nrm{\cdot}_\infty$	& $\bigOt{\frac{d}{\eps^2}}$	& $\bigOt{\frac{d}{\eps}}$	& $\bigOt{\frac{\sqrt{d}}{\eps^2}}$	& $\bigOt{\frac{\sqrt{d}}{\eps}}$	& $\bigOt{\frac{k}{\eps}}$	\\
		$\eps$-apx. in $\nrm{\cdot}$		& $\bigOt{\frac{d^2}{\eps^2}}$		& $\bigOt{\frac{d\sqrt{d}}{\eps}}$	& $\bigOt{\frac{d\sqrt{d}}{\eps^2}}$	& $\bigOt{\frac{d}{\eps}}$	& $\bigOt{\frac{k\sqrt{d}}{\eps}}$	
	\end{tabular}
	\caption{Quantum and classical query complexity bounds for gradient computation algorithms achieving $\eps$-precision in the $\nrm{\cdot}_{\infty}$ and $\nrm{\cdot}$ norms, given access to a probability oracle of a function, that is either smooth or a degree-$k$ multivariate polynomial. Our algorithm can take advantage of the polynomial structure, whereas it is not apparent for the other methods listed in the table.
		Note that our algorithm has the best scaling regarding both norms. In case of requiring $\eps$-approximation in the Euclidean norm, even the optimized semi-classical method has better scaling than Jordan's original algorithm.}
	\label{tab:complexities}	
\end{table}	

We also included in the table the complexity of an optimized semi-classical method to make a fair comparison to the quantum gradient algorithm. The semi-classical method uses quantum amplitude estimation to evaluate the function but then calculates the gradient classically coordinate-wise. It uses a high-order central difference formula to compute each partial derivative with $\bigO{\log(1/\eps)}$ function evaluations with roughly $\eps$-precision, similarly to the proof sketch of Theorem~\ref{thm:informalFinalScaling}. 

\subsection{Applications}
We consider three problems to which we apply our quantum gradient descent algorithm. We briefly describe below the problem of quantum variational eigensolvers (VQE)~\cite{peruzzo2014variational,wecker2015progress}, quantum approximate optimization algorithms (QAOA)~\cite{farhi2014qaoa}, and the quantum auto-encoding problem~\cite{wankwok:quantumneuralnetworks,romero:autoencoding}. In each case we show how our gradient computation algorithm can provide a quadratic speedup in terms of the dimension $d$ of the associated problem.

VQE is widely used to estimate the eigenvalue corresponding to some eigenstate of a Hamiltonian.  The main idea behind VQE is to begin with an efficiently parameterizable ansatz to the eigenstate.  For the example of ground state energy estimation, the ansatz state is often taken to be a unitary coupled cluster expansion.  The terms in that unitary coupled cluster expansion are then varied to provide the lowest energy for the groundstate, and the expected energy of the quantum state is mapped to the probability of some measurement outcome, making it accessible to our methods.

QAOA has a similar approach, the core idea of the algorithm is to consider a parametrized family of states such as $\ket{\psi(\pmb{x})} = \prod_{j=1}^d e^{-i x_j H_j} \ket{0}$.  The aim is to tune the parameters of the state in order to minimize some objective function, which can, e.g., represent some combinatorial optimization problem. In particular, if we let $O$ be a Hermitian operator corresponding to the objective function then we wish to find $\pmb{x}$ such that $\bra{\psi(\pmb{x})}H\ket{\psi(\pmb{x})}$ is minimized. For example, if we want to minimize the number of violated constraints of a constraint satisfaction problem, we can choose $O=\sum_{m=1}^M C_m$ to represent the number of violations:  $C_m$ is $1$ if and only if the $m^{\rm th}$ constraint is violated and $0$ otherwise~\cite{farhi2014qaoa}. After proper normalization and using some standard techniques we can map this expectation value to some measurement probability. Thus, from the perspective of our algorithm, QAOA looks exactly like VQE.

The classical auto-encoder paradigm~\cite{azoff:workhorsesneuralnetworks} is an important technique in machine learning, which is widely used for data compression. An auto-encoder is essentially a neural network architecture which is tuned for the following task: given a set of high-dimensional vectors, we would like to learn a low-dimensional representation of the vectors, so that computations on the original data set can be ``approximately'' carried out by working only with the low-dimensional representations. What makes auto-encoding powerful is that it does not assume any prior knowledge about the data set. This makes it a viable technique in machine learning, with various applications in natural language processing, training neural networks, object classification, prediction or extrapolation of information, etc. In this paper, we consider a natural quantum analogue (which was also considered before in the works of~\cite{wankwok:quantumneuralnetworks,romero:autoencoding}) of the auto-encoder paradigm, and show how to use our quantum gradient computation algorithm to quadratically speed up the training of quantum~autoencoders.

\section{Organization of the paper and preliminaries}
In Section~\ref{sec:genericoptimizationalgo}, we give a generic model of quantum optimization algorithms and a detailed description of the classical gradient descent algorithm. In Section~\ref{section:intercovertabilitybetweenoracles}, we describe how to convert a probability oracle to a phase oracle. 
In Section~\ref{section:quantumGradientCalculation} we present our quantum gradient computation algorithm and prove our main Theorem~\ref{thm:finalScaling} regarding its complexity
(we defer some of the technical calculations to Appendix~\ref{apx:centralErrorBounds}).
In Section~\ref{sec:lowerBound}, we present query lower bounds for algorithms that (approximately) compute the gradient of a function. In Section~\ref{sec:applications} we describe some applications. 
We finally conclude with some directions for future research in Section~\ref{section:openproblems}.

\paragraph{Notation.} Let $\pmb{e}_1, \pmb{e}_2, \ldots, \pmb{e}_d \in \R^d$ denote the standard basis vectors. We use bold letters for vectors $\pmb{x}\in\R^d$, in particular we use the notation $\pmb{0}$ for the $0$ vector, and $\pmb{1}$ for the all-$1$ vector $(\pmb{e}_1+ \pmb{e}_2+ \cdots+ \pmb{e}_d)$. By writing $\pmb{y}+r S$ we mean $\{\pmb{y}+r\pmb{v}:\pmb{v}\in S\}$ for a set of vectors $S\subseteq\R^d$, and use the same notation for sets of numbers. For $\pmb{x}\in \R^d$, let $\|\pmb{x}\|_\infty=\max_{i\in [d]} |x_i|$ and $\|\pmb{x}\|=(\sum_{i=1}^d x_i^2)^{1/2}$.  For $M\in \R^{d\times d}$ we use $\nrm{M}$ for denoting the operator norm of $M$.

For the set of numbers $\{1,2,\ldots,d\}$ we use the notation $[d]$. We use the convention $0^0=1$ throughout the paper, and use the notation $\N_0=\N\cup\{0\}$.

When we state the complexity of an algorithm, we use the notation $\bigOt{C}$ to hide poly-log factors in the complexity $C$. In general, we will use $\mathcal{H}$ to denote a finite dimensional Hilbert space. For the $n$-qubit all-$0$ basis state we use the notation $\ket{0}^{\!n}$, or when the value of $n$ is not important to explicitly indicate we simply write $\ket{\vec{0}}$.

\paragraph{Higher-order calculus.} Many technical lemmas in this paper will revolve around the use of higher-order calculus. We briefly introduce some notation here and give some basic definitions.         

	\begin{definition}[Index-sequences]
		For $k\in\N_0$ we call $\alpha\in[d]^k$ a $d$-dimensional length-$k$ index-sequence. For a vector $\pmb{r}\in\mathbb{R}^d$ we define $\pmb{r}^\alpha:=\prod_{j\in [k]} r_{\alpha_j}$.
		Also, for a $k$-times differentiable function, we define $\partial_\alpha f:=\partial_{\alpha_1}\partial_{\alpha_2}\cdots\partial_{\alpha_k}f$. Finally we denote define $|\alpha|=k$.
	\end{definition}
	\begin{definition}[Analytic function]
		We say that the function $f:\mathbb{R}^d\rightarrow \mathbb{R}$ is analytic if	
		\begin{equation}\label{eq:multidimensionalAnalytic}
			f(\pmb{x})=\sum_{k=0}^{\infty}\sum_{\alpha\in[d]^k}\pmb{x}^{\alpha}  \frac{\partial_{\alpha}f(\pmb{0})}{k!}.
		\end{equation}
	\end{definition}
\begin{definition}[Directional derivative]\label{def:dirDer}
		Suppose $f:\mathbb{R}^d\rightarrow \mathbb{R}$ is $k$-times differentiable at~$\pmb{x}\in\mathbb{R}^d$. We define the $k$-th order directional derivative in the direction $\pmb{r}\in\mathbb{R}^d$ using the derivative of a one-parameter function parametrized by $\tau\in\R$ along the ray in the direction of $\pmb{r}$:
		$$
		\partial_{\pmb{r}}^k f(\pmb{x})= \frac{d^k}{(d\tau)^k}f\left(\pmb{x}+\tau \pmb{r}\right).
		$$
\end{definition}
	
Observe that, using the definitions above, one has
\begin{align}\label{eq:rthderivateoff}
		\partial_{\pmb{r}}^k f= \sum_{\alpha\in[d]^k} \pmb{r}^\alpha \cdot \partial_\alpha f.
\end{align}
		In particular for every $i\in[d]$, we have that $\partial_{\pmb{e}_i}^k f=\partial_{i}^k f$.	

Central difference formulas (see, e.g.~\cite{LiGeneralNumDiff}) are often used to give precise approximations of derivatives of a function $h:\R\rightarrow \R$. These formulas are coming from polynomial interpolation, and yield precise approximations of directional derivatives too. Thus, we can use them to approximate the gradient of a high-dimensional function as shown in the following definition. 
	\begin{definition}\label{def:centralDiffernce}
		The degree-$2m$ \emph{central difference approximation} of a function $f:\R^d \rightarrow \R$ is:
		\begin{equation}\label{eq:centalDiffFormula}
		f_{(2m)}\!(\pmb{x}):=\sum_{\underset{\ell\neq 0}{\ell=-m}}^{m} \frac{ (-1)^{\ell-1}}{\ell}\frac{\binom{m}{|\ell|}}{\binom{m+|\ell|}{|\ell|}}f\left(\ell \pmb{x} \right) \approx\nabla f(\pmb{0})\cdot \pmb{x}.
		\end{equation}
		We denote the corresponding \emph{central difference coefficients} for $\ell\in \{-m,\ldots,m\}\backslash \{0\}$ by 
		\begin{equation*}
		a_{\ell}^{(2m)}:= \frac{ (-1)^{\ell-1}}{\ell}\frac{\binom{m}{|\ell|}}{\binom{m+|\ell|}{|\ell|}} \quad \text{ and } a_{0}^{(2m)}:=0
		\end{equation*}
	\end{definition}
In Appendix~\ref{apx:centralErrorBounds} we prove some bounds on the approximation error of the above formulas\footnote{One can read out the coefficients described in Definition~\ref{def:centralDiffernce} from the second row of the inverse of the Vandermonde matrix, as Arjan Cornelissen pointed out to us.} for generic $m$. 
Usually error bounds are only derived for some finite values of $m$, because that is sufficient in practice, but in order to prove our asymptotic results we need to derive more general results.

\section{A generic model of quantum optimization algorithms}\label{sec:genericoptimizationalgo}
	Most quantum algorithms designed for quantum optimization and machine learning procedures have the following core idea: 
	they approximate an optimal solution to a problem by tuning some parameters in a quantum circuit. The circuit usually consists of several simple gates, some of which have tunable real parameters, e.g., the angle of single qubit (controlled) rotation gates.
	Often, if there are enough tunable gates arranged in a nice topology, then there exists a set of parameters that induce a unitary capable of achieving a close to optimal solution. 
	
	In most optimization problems, one can decompose the circuit into three parts each having a different role (see Figure~\ref{fig:tunableCircuits}).
	The circuit starts with a state preparation part which prepares the initial quantum state
	relevant for the problem. We call this part `Prep.' in Figure~\ref{fig:tunableCircuits}. The middle part consist of both tunable parameters $x$ and fixed gates. The tunable and fixed gates are together referred to as `Tuned' in Figure~\ref{fig:tunableCircuits}. Finally, there is a verification circuit that evaluates the output state,
	and marks success if the auxiliary qubit is $\ket{1}$. We call the verification process $V$ in Figure~\ref{fig:tunableCircuits}. The quality of the circuit (for parameter $\pmb{x}$) is assessed by the
	probability of measuring the auxiliary qubit and obtaining~$1 $.
	
	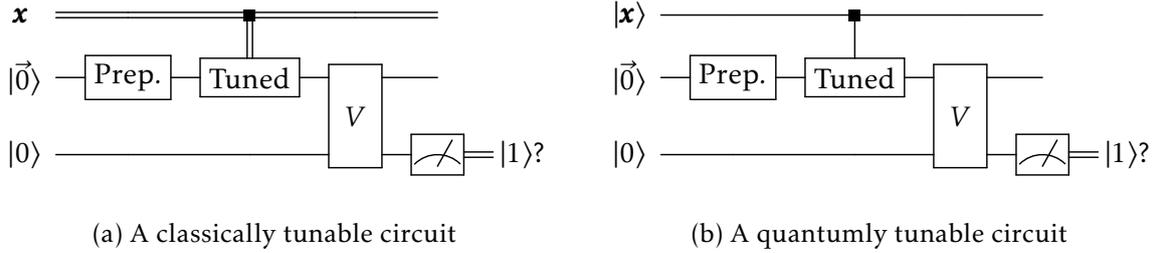
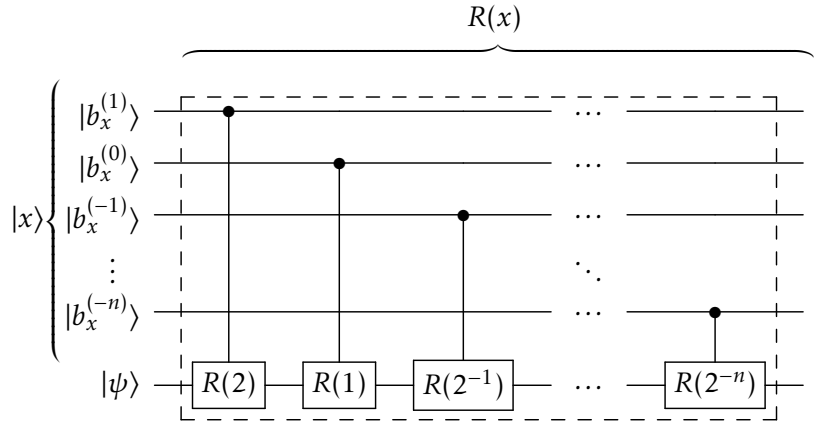
\begin{figure}[!ht]
		\centering
		\begin{subfigure}{.5\textwidth}
			\centering
			\[
			\Qcircuit @C=1.0em @R=1.2em {
				\lstick{\phantom{\ket{}}\pmb{x}\kern1.7mm} & \cw& \ctrlA \cw					& \cw				&\cw\\
				\lstick{\ket{\vec{0}}}	& \gate{\mathrm{Prep.}}	& \gate{\mathrm{Tuned}} \cwx	& \multigate{1}{V}  &\qw \\
				\lstick{\ket{0}}		& \qw 					& \qw							& \ghost{V}			&\meter	&\rstick{\kern-1mm\ket{1}?}\cw
			}
			\]
			\caption{A classically tunable circuit}
			\label{subfig:classicallyTuned}
		\end{subfigure}%
		\begin{subfigure}{.5\textwidth}
			\centering
			\[
			\Qcircuit @C=1.0em @R=1.2em {
				\lstick{\ket{\pmb{x}}}	& \qw					& \ctrlA \qw				& \qw 				&\qw 	& \\
				\lstick{\ket{\vec{0}}}	& \gate{\mathrm{Prep.}}	& \gate{\mathrm{Tuned}}\qwx	& \multigate{1}{V}  &\qw 	& \\
				\lstick{\ket{0}}		& \qw 					& \qw						& \ghost{V}			&\meter	&\rstick{\kern-1mm\ket{1}?}\cw
			}
			\label{Quantum optimization circuit}
			\]
			\caption{A quantumly tunable circuit}
			\label{subfig:quantumlyTuned}
		\end{subfigure}
		\begin{subfigure}{1\textwidth}
			\centering
			\vskip-10mm
			\begin{displaymath}
			\begin{matrix}
			\kern 55mm R(x)\kern -55mm\\ 
			\kern 21mm \overbrace{\kern 83mm}\kern -90mm\\ 
			\ket{x}\left\{\kern10mm\vphantom{\underbrace{\begin{matrix} a \\ a \\ a \\ a \\ a \\ a \end{matrix}}}\right.
			\end{matrix}%
			\begin{array}{l}
			\vphantom{\sum} \\
			\overset{\phantom{\sum_{\sum_{\sum_{\sum}^{\sum}}^{\sum_{\sum}^{\sum}}}^{\sum_{\sum}^{\sum^{\sum}}}}}{
			\Qcircuit @C=5mm @R=6mm {
				\lstick{\ket{b_x^{(1)}}}	& \ctrl{5}		& \qw			& \qw				&\qw&\hdots& &\qw				&\qw\\ 
				\lstick{\ket{b_x^{(0)}}}	& \qw			& \ctrl{4}		& \qw				&\qw&\hdots& &\qw				&\qw\\ 
				\lstick{\ket{b_x^{(-1)}}}	& \qw			& \qw			& \ctrl{3}			&\qw&\hdots& &\qw				&\qw\\ 
				\lstick{\vdots\kern 3mm}	& 		 		& 				& 					&	&\ddots& &					&	\\ 
				\lstick{\ket{b_x^{(-n)}}}	& \qw			& \qw			& \qw				&\qw&\hdots& &\ctrl{1}			&\qw\\
				\lstick{\ket{\psi}}			& \gate{R(2)}	& \gate{R(1)}	& \gate{R(2^{-1})}	&\qw&\hdots& &\gate{R(2^{-n})}	&\qw
				\gategroup{1}{2}{6}{8}{3mm}{--}
			}}\\
			\end{array}		
			\end{displaymath}			
			\caption{A $2^{-n}$ precisely tunable rotation gate $R(x)$ for the fixed point binary parameter $x=b_1b_0.b_{-1}\cdots b_{-n}$.}
			\label{subfig:tunableGate}
		\end{subfigure}%
		\caption{Two different approaches to tunable quantum optimization. The circuit on the top left has classically set parameters $\ket{\pmb{x}}$ (represented as a vector of fixed point binary numbers), 
		whereas the circuit on the top right has parameters $\pmb{x}$ described by an array of qubits $\ket{\pmb{x}}$.
		The black squares connected to the `Tuned' circuit indicate non-trivial control structure for which an example is presented on 
		the bottom figure, showing how to implement a quantumly tunable rotation gate built from simple controlled rotation gates.}
		\label{fig:tunableCircuits}
	\end{figure}
	
    One can think of the tunable circuit as being tuned in a classical way as shown in Figure~\ref{subfig:classicallyTuned} or a quantum way as in Figure~\ref{subfig:quantumlyTuned}. 
	In the classical case, the parameters can be thought of as being manually set. Alternatively, the parameters can be quantum variables represented by qubits. The advantage of the latter is that it allows us to use quantum techniques to speedup optimization algorithms. However the drawback is that it requires more qubits to represent the parameters and requires implementation of additional controlled-gates, see for e.g., Fig.~\ref{subfig:tunableGate}. 
	
	Let us denote by $U(\pmb{x})$ the circuit in Figure~\ref{subfig:classicallyTuned} and the corresponding circuit in Figure~\ref{subfig:quantumlyTuned} as  $U:=\sum_{\pmb{x}}\ketbra{\pmb{x}}{\pmb{x}}\otimes U(\pmb{x})$.
	The goal in these optimization problems is to find the optimal parameters (i.e., $\pmb{x}$) which maximizes the probability of obtaining $1$ after the final measurement, thereby solving the problem
	\begin{align} \label{mainoptimizationproblem}
		\argmax{\pmb{x}} \,\,p(\pmb{x}), \text{ where } p(\pmb{x})=\nrm{\left(I\otimes\ketbra{1}{1}\right)U(\pmb{x})\ket{\vec{0}}\ket{0}}^2.
	\end{align}
    
    A well-known technique to solve continuous-variable optimization problems like the one above is gradient-descent method. In practice, gradient-based techniques is one of the most commonly used algorithms for optimization.
    
    \subsection{Classical gradient ascent algorithm} \label{sec:introtoquantumgradientascent}
    As we discussed earlier, finding globally optimal parameters for optimization problems~\eqref{mainoptimizationproblem} is often hard. 
    Therefore in practice one usually relies on heuristics to find an approximate solution ${\pmb{x}_a}$ such that $p({\pmb{x}_a})$ is close to optimal.
	There are several heuristic optimization techniques that are often applied to handle such problems.
	One of the most common techniques is gradient ascent, which follows a greedy strategy to obtain the optimal solution. It simply follows the path of steepest ascent on the landscape of the objective function to find a solution that is, at least up to small local perturbations, optimal.  Such solutions are called locally optimal.
	In general, gradient-ascent-based algorithms start with a fixed or random setting of initial parameters,
	repeatedly calculate the gradient of the objective function, and takes a step in the direction of maximum ascent until the objective function has converged to a local maximum. If globally optimal parameters are required, a number of random restarts is usually taken and the maxima of all such restarts is reported by the algorithm.  We sketch such an algorithm below.

	\begin{algorithm}[!ht]
		\caption{Na\"{\i}ve gradient ascent using classical techniques} \label{alg:naiiveGradientAscent}
		\begin{algorithmic}[1] 
			\STATEx	{\bf Input:} A tunable circuit $U(\pmb{x})$ with $d$ real parameters. 
			\STATEx	~~~~~~~~~~ Parameters of the algorithm: $N,T,\eps,\delta,\eta$. 
			\STATEx	{\bf Output:} A vector of parameters $\pmb{x}\in\mathbb{R}^d$ such that $\|\nabla p(\pmb{x})\|\approx 0.$
			\STATEx	{\bf Init} $p_{\max}\leftarrow 0$; $\pmb{x}_{\max}\leftarrow \pmb{0}$
			\STATE	{\bf Repeat} $N$ times
			\STATE 	~~~ Choose a random vector of initial parameters $\pmb{x}_0\in\mathbb{R}^d$
			\STATE 	~~~ {\bf For} $t=1$ to $T$
			\STATE 	~~~~~~ Calculate the gradient $\nabla p(\pmb{x})$ as follows:
			\STATE 	~~~~~~ Estimate $p(\pmb{x})$ by the taking $\approx 1/\eps^2$ copies of $U(\pmb{x})\ket{\vec{0}}\ket{0}$ and measuring them.
            \STATE 	~~~~~~ {\bf For every} $i\in[d]$
			\STATE 	~~~~~~~~~ Estimate $p(\pmb{x}+\delta\cdot\pmb{e}_i)$ by the statistics of $\approx 1/\eps^2$ measurements 
			\STATE 	~~~~~~~~~ Estimate $\nabla_i p(\pmb{x})$ by the approximate formula $\left(p(\pmb{x}+\delta\cdot\pmb{e}_i)-p(\pmb{x})\right)/\delta$
			\STATE 	~~~~~~ {\bf Update} $\pmb{x}_t\leftarrow \pmb{x}_{t-1}+\eta \nabla p(\pmb{x})$
			\STATE 	~~~ {\bf If} $p_{\max} \leq p(\pmb{x}_T)$ {\bf then} $p_{\max}\leftarrow p(\pmb{x}_T)$; $\pmb{x}_{\max}\leftarrow \pmb{x}_T$			
			\STATE	{\bf Return} $\left(\pmb{x}_{\max}, p_{\max}\right)$ 
		\end{algorithmic}
	\end{algorithm}	

	Algorithm~\ref{alg:naiiveGradientAscent} gives a high-level overview of how this gradient ascent procedure works. Note that the algorithm above performed a fixed number of gradient steps, however, in general the gradient ascent algorithm continues until it has obtained a solution of ``good'' quality.
	 It is clear that Algorithm~\ref{alg:naiiveGradientAscent} uses the quantum circuit $U(\pmb{x})$ at most $\bigOt{NTd/\eps^2}$ times,
	where $N$ is the number of random initial configurations $\pmb{x}_0$ probed, $T$ is the number of gradient steps, $d$ is the number of parameters or dimension of $\pmb{x}$ and $\eps$ is the evaluation precision of the probability $p(\pmb{x})$ used in the algorithm.

\subsection{Quantum speedups to the classical algorithm}
	Now, let us consider the possible quantum speedups to this na\"{\i}ve gradient ascent algorithm discussed in the previous section. The most basic improvement which works even for classically controlled circuits (Figure~\ref{subfig:classicallyTuned}) is to estimate the probability $p(\pmb{x})$ in Step~5 using quantum amplitude estimation rather than doing repeated measurements and taking the average.
	If one wants to determine the value $p(\pmb{x})$ up to error~$\eps$ for some fixed $\pmb{x}$, the quantum approach uses the circuit $\bigO{1/\eps}$ times,
	whereas the classical statistical method would require $\Omega(1/\eps^2)$ repetitions, due to the additive property of variances of uncorrelated random variables. 
	Although this is a natural improvement, which does not require much additional quantum resources many papers that describe a similar procedure do not mention it. 
		
	Another quantum improvement can be achieved \cite{bulger2005BasinHop,lara2014HybridOpt} by using Grover search, which requires a quantumly controlled circuit like in Figure~\ref{subfig:quantumlyTuned}. 
	Let $P(\pmb{z})$ denote the probability that for a randomly chosen starting point $\pmb{x}_0$ we get $\pmb{x}_T=\pmb{z}$, i.e., we end up with $\pmb{z}$ after performing $T$ gradient steps. Let $\tilde{p}$ be a value such that $P(p(\pmb{z})\geq \tilde{p})\geq 1/N$. If we use $N$ randomly chosen initial points then with high probability at least one initial point will yield a point $\pmb{x}_T$ with $p(\pmb{x}_T)\geq \tilde{p}$.\footnote{I.e., with high probability, we will find a point from the top $1/N$ percentile of the points regarding the objective function $p(\pmb{z})$.}
	If we use the quantum maximum finding algorithm~\cite{durr&hoyer:minimum} or more precisely one if its generalizations \cite{NayakWu,van2017quantum}, 
	we can reduce the number of repetitions to $O(\sqrt{N})$ and still find an a point $\pmb{x}_T$ having $p(\pmb{x}_T)\geq \tilde{p}$ with high probability.
	Due to reversability, we need to maintain all points visited during the gradient ascent algorithm, thereby possibly introducing a significant overhead in the number of qubits used.
	
	However, there is a drawback using Grover search-based techniques. The disadvantage of quantum maximum finding approach over classical methods is the amount of time it takes to reach a local maximum using the gradient descent varies a lot.
	The reason is that classically, once we reached a local maximum we can start examining the next starting point, whereas if we use Grover search we do the gradient updates in superposition
	so we need to run the procedure for the largest possible number of gradient steps. To reduce this disadvantage one could use variable time amplitude amplification techniques introduced by Ambainis~\cite{varAmb}, however, we leave such investigations for future work.
	\anote{Question: Is variable time amplitude amplification compatibile with maximum finding?}
	
 \paragraph{Our contribution}  
	We show a quadratic speedup in $d$ -- the number of control parameters.
	For this we also need to use a quantumly controlled circuit, but the overhead in the number of qubits is much smaller than in the previous Grover type speedup. The underlying quantum technique crucially relies on the quantum Fourier transform as it is based on an improved version of Jordan's gradient computation~\cite{QuantGrad} algorithm. We can optionally combine this speedup with the above mentioned maximum finding improvement, which then gives a quantum algorithm that uses the quantumly controlled circuit (Figure~\ref{subfig:quantumlyTuned})
	$\bigOt{T\sqrt{Nd}/\eps}$ times, and achieves essentially the same guarantees as the classical Algorithm~\ref{alg:naiiveGradientAscent}.
	Therefore we can achieve a quadratic speedup in terms of all parameters except in $T$ and obtain an overall complexity of $\bigOt{T\sqrt{Nd/\eps}}$. For a summary of the speedups see Table~\ref{tab:variousQuantumImprovements}
	
\section{Interconvertibility of oracles}   \label{section:intercovertabilitybetweenoracles}

As we discussed in the previous section, the optimization problem associated with Figure~\ref{fig:tunableCircuits} was to maximize 
$$\max_{\pmb{x}} p(\pmb{x})
=\max_{\pmb{x}}\,\nrm{\left(I\otimes\ketbra{1}{1}\right)U(\pmb{x})\ket{\vec{0}}\ket{0}}^2.
$$
Typically, one should think of $U$ as the unitary corresponding to some parametrized quantum algorithm. Alternatively, we can view $U$ as a \emph{probability oracle} that maps $\ket{\pmb{x}}\ket{\vec{0}}\ket{0}$ to $\ket{\pmb{x}}\left(\sqrt{1-p(\pmb{x})}\ket{\Phi_0}\ket{0}+\sqrt{p(\pmb{x})}\ket{\Phi_1}\ket{1}\right)$, such that the probability of obtaining $1$ on measuring the last qubit is $p(x)$. This measurement probability serves as a ``benchmark score'' for the corresponding unitary $U$ with respect to the vector of parameters $\pmb{x}$.

This oracle model is not commonly used in quantum algorithms, therefore we need to convert it to a different format, so that we can other quantum techniques with this input. In the next subsection we describe different quantum input oracles, and later show how to efficiently convert between these oracles.

\subsection{Oracle access to the objective function}\label{sec:oracleaccessdesc} 
	
	In this subsection we describe our different input oracle models. In every case we will assume the function can be queried at a discrete set of points $X$. Later we will consider functions that act on $\R^d$, in which case we will usually choose $X$ to be finite a $d$-dimensional hypergrid around some point $\pmb{x}_0\in\R^d$. In some of our statements we will only say that we assume oracle access to a function $f:\R^d\rightarrow \R$ and do not specify $X$ for avoiding lengthy statements.

	\begin{definition}[Probability oracle]
		We say that $U_{p}:\mathcal{H}\otimes\mathcal{H}_{\text{aux.}}\rightarrow\mathcal{H}\otimes\mathcal{H}_{\text{aux.}}$ is a probability oracle for the function $p: X \rightarrow [0,1]$, if $\{\ket{x}:x\in X\}$ is an orthonormal basis of the Hilbert space $\mathcal{H}$, and for all $x\in X$ it acts as
		$$
		U_{p}:\ket{x}\ket{\vec{0}}\rightarrow \ket{x}\otimes\left(\sqrt{p(x)}\ket{\psi_{\rm good}(x)}\ket{1}  +\sqrt{1-p(x)}\ket{\psi_{\rm bad}(x)}\ket{0}\right),
		$$
		where $\ket{\psi_{\rm good}(x)}$ and $\ket{\psi_{\rm bad}(x)}$ are arbitrary (normalized) quantum states.
	\end{definition}
	Probability oracles are not commonly used in quantum algorithms; instead most algorithms use amplitude estimation\footnote{In some cases people use sampling and classical statistics to learn this probability, however amplitude estimation is quadratically more efficient. Typically one can improve sampling procedures quadratically using quantum techniques~\cite{montanaro:montecarlo}.} to turn these oracles into a \emph{binary oracle} that output a finite precision binary representation of the probability~\cite{wiebe2015quantum}.
	
	\begin{definition}[Binary oracle]\label{def:bin}
		For $\eta\in \R_+$, we say $B_{f}^\eta:\mathcal{H}\otimes\mathcal{H}_{\text{aux.}}\rightarrow\mathcal{H}\otimes\mathcal{H}_{\text{aux.}}$ is an $\eta$-accurate binary oracle for $f: X \rightarrow \R$, if $\{\ket{x}:x\in X\}$ is an orthonormal basis of the Hilbert space $\mathcal{H}$, and for all $x\in X$ it acts as
		$$B^{\eta}_p:\ket{x}\ket{\vec{0}}\rightarrow \ket{x}\ket{p'(x)},$$
		where $\ket{p'(x)}$ is a fixed-point binary number satisfying $|p'(x)-p(x)|\leq \eta$. We define the cost of one query to $B^{\eta}_f$ as $C(\eta)$.\footnote{The cost function would typically be $\polylog(1/\eta)$ for functions that can be calculated using a classical circuit, however, when the binary oracle is obtained via quantum phase estimation this cost is typically $1/\eta$.}
	\end{definition}
	
	But the conversion (from a probability oracle to binary oracle) has an exponential cost in the number of bits of precision, i.e., in order to obtain $\log(1/\varepsilon)$ bits of precision for the binary oracle, one needs to invoke the probability oracle $\bigO{1/\varepsilon}$ times.
	
	In practice, such binary oracles are often used in quantum algorithms strictly to provide the data needed to do a phase rotation.  This means that it is often simpler to consider our fundamental oracle access model to give the phase directly, i.e., \emph{phase oracle}, which encodes the probability in the phase (instead of explicitly outputing the probability like in the binary oracle).  Such an access model can be implemented using a binary access model, but we focus on phase oracles because they are a weaker oracle access model and can be much less expensive to implement.
	
	\begin{definition}[Phase oracle]
		\label{def:phaseoracle}
		We say that  $\mathrm{O}_{\!f}:\mathcal{H}\otimes\mathcal{H}_{\text{aux.}}\rightarrow\mathcal{H}\otimes\mathcal{H}_{\text{aux.}}$ is a phase oracle for $f: X \rightarrow [-1,1]$, if $\{\ket{x}:x\in X\}$ is an orthonormal basis of the Hilbert space $\mathcal{H}$, and for all $x\in X$ it acts as
		$$
		\mathrm{O}_{\!f}:\ket{x}\ket{\vec{0}}\rightarrow e^{if(x)}\ket{x}\ket{\vec{0}}.
		$$
	\end{definition}
	
	This model of phase oracle  is commonly used  in quantum information theory in particular in the field of quantum query complexity (see e.g. Grover's search~\cite{brassard2002quantum}).  Indeed, any oracle that implements a Boolean function can be cast as a phase oracle by diagonalizing it using a Hadamard transform on the output bit.  
	Hamiltonian simulation can also be thought of as a generalization of phase oracles since dynamical simulation can be thought of as a process that invokes a phase oracle in the eigenbasis of the Hamiltonian.
	Consequently, phase oracles naturally appear in algorithms that utilize quantum simulation (or continuous time quantum walks~\cite{childs2003exponential}) such as the quantum linear systems algorithm~\cite{harrow2009quantum,childs2015quantum} and related algorithms~\cite{wiebe2012quantum}.  

	For technical reasons we assume that we can perform fractional queries as well. In our case a fractional query will be almost as easy to implement as a full query. This is based on the observation that a fractional query for a probability oracle is trivial to implement, and since we simulate our phase queries by probability queries, we naturally get a fractional query implementation almost for free.
	We define the fractional query phase oracle below. 
	\begin{definition}[Fractional query oracle]\label{defn:fractionaloracle}
		Let $r\in[-1,1]$, we say that  $\mathrm{O}_{rf}:\mathcal{H}\otimes\mathcal{H}_{\text{aux.}}\rightarrow\mathcal{H}\otimes\mathcal{H}_{\text{aux.}}$ is a 
		\emph{fractional query}\footnote{
			Note that this fractional query is more general than the fractional query introduced by Cleve et al. \cite{cleveEfficientDiscContQuery}, because we have a continuous phase rather than discrete.
			Thus, the results of \cite{cleveEfficientDiscContQuery} do not give a way to implement a generic fractional query using a simple phase oracle $\mathrm{O}_{\!f}$,
			however one can use some techniques similar to our oracle conversion techniques in order to implement fractional queries~\cite{gilyenBlockMatrices}. } 
		phase oracle for $f: X \rightarrow [-1,1]$, if $\{\ket{x}:x\in X\}$ is an orthonormal basis of the Hilbert space $\mathcal{H}$, and for all $x\in X$ it acts as
		$$
		\mathrm{O}_{rf}:\ket{x}\ket{\vec{0}}\rightarrow e^{i r f(x)}\ket{x}\ket{\vec{0}}.
		$$		
	\end{definition}

\subsection{Conversion between probability and phase oracles.}
    One na\"{\i}ve way to implement a phase oracle is to convert the probability oracle to a binary oracle and then a phase oracle, but this feels morally wrong as this procedure is essentially an analogue$\rightarrow$digital$\rightarrow$analogue conversion. Instead, in this section we show how to convert a probability oracle to a phase oracle \emph{directly} using the so-called {\em Linear Combination of Unitaries} (LCU) technique~\cite{BerryChildsSim15}. To our knowledge this procedure is new. 
    
    Skipping the analogue$\rightarrow$digital$\rightarrow$analogue conversion makes the method conceptually simpler and can make the algorithm  more resource efficient in practice. 
    Our conversion introduces only logarithmic overhead in terms of the precision, which is probably the best we can hope for.
    As shown by several works \cite{BerryChildsSim15,chowdhury2016quantum,van2017quantum}, avoiding phase (amplitude) estimation by using LCU-based techniques can lead to significant speedups. 
    However, it turns out that for our application we do not gain too much compared to the na\"{\i}ve approach in terms of the asymptotic complexity.

	Now we start describing how to convert a probability oracle to a phase oracle.
	First, in order to implement this phase oracle, we need the following observation. Let
	\begin{equation}\label{eq:probOracleImage}
		\ket{\psi(x)}:= U_p\ket{x}\ket{0}^{\!\otimes n} =  \ket{x}\left(\sqrt{p(x)}\ket{\psi_{\rm good}(x)}\ket{1}  +\sqrt{1-p(x)}\ket{\psi_{\rm bad}(x)}\ket{0}\right).
	\end{equation}
	Let us define $\Pi_1:=\left(I\otimes\left(\ketbra{0}{0}^{\otimes n}\right)\right)$, $\Pi_2:=\left(I\otimes\left(I_{n-1}\otimes\ketbra{1}{1}\right)\right)$
	and
	\begin{equation}\label{eq:GUDef}
		G_U:=\left(2\Pi_1-I\right) U_p^\dagger\left(2\Pi_2-I\right) U_p,
	\end{equation}	
	which is a slightly modified version of the Grover operator.
	By the ``2D subspace lemma'' (Lemma~\ref{lemma:2D}, Appendix~\ref{apx:oblivious}) we know\footnote{
		\label{foot:2D}
		Using the notation of Lemma~\ref{lemma:2D} we have $\ket{\psi}:=\ket{x}\ket{0}^{\!\otimes n}$, $\ket{\phi}:=\ket{x}\ket{\psi_{\rm good}(x)}\ket{1}$ and $\ket{\phi^\perp}:=\ket{x}\ket{\psi_{\rm bad}(x)}\ket{0}$.
		The non-trivial assumption of Lemma~\ref{lemma:2D} that we need to satisfy is that $\Pi_1 U^\dagger_p \left(\sqrt{1-p(x)}\ket{\phi}-\sqrt{p(x)}\ket{\phi^\perp}\right)=0.$ 
		To show this, observe that $\mathrm{Im}(\Pi_1)=\mathrm{Span}\{\ket{y}\ket{0}^{\!\otimes n}:y\in X\}$, thus it is enough to show that $\forall y\in X$ we have 
		$\bra{y}\bra{0}^{\otimes n} U^\dagger_p \left(\sqrt{1-p(x)}\ket{\phi}-\sqrt{p(x)}\ket{\phi^\perp}\right)=0$. This holds as can be seen by writing out the state $U_p\ket{y}\ket{0}^{\otimes n}$ using~\eqref{eq:probOracleImage}.
	}
	that $\ket{x}\ket{0}^{\!\otimes n}$ lies in a two-dimensional invariant subspace, on which $G_U$ acts as a rotation operator\footnote{
		Lemma~\ref{lemma:2D} essentially states that $U_p\left(2\Pi_1-I\right) U_p^\dagger\left(2\Pi_2-I\right)$ acts a rotation operator on $U_p\ket{x}\ket{0}^{\otimes n}$, which is equivalent to saying that $G_U=\left(2\Pi_1-I\right) U_p^\dagger\left(2\Pi_2-I\right)U_p$ acts as a rotation operator on $\ket{x}\ket{0}^{\otimes n}$.
	} 
	with rotation angle $2\theta(x)$ (see Fig.~\ref{fig:GroverOperator}), where
	$$\theta(x)
	=\arcsin\left(\nrm{\left(I\otimes(I_{n-1}\otimes\ketbra{1}{1})\right)\ket{\psi(x)}}\right)
	=\arcsin\left(\sqrt{p(x)}\right).
	$$
	Therefore, $\ket{x}\ket{0}^{\!\otimes n}$ is a superposition of two eigenstates of $G_U$, with eigenvalues $e^{\pm2i\theta(x)}$. 
	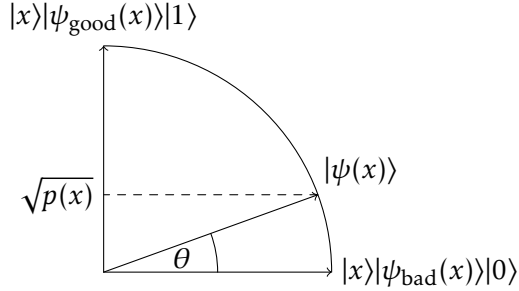
\begin{figure}[H]
		\centering
		\begin{tikzpicture}[scale=3,cap=round]
		\def\axisl{1.0}
		\def\angle{70}
		
		\colorlet{cstart}{magenta}
		\colorlet{chappy}{orange!80!black}
		\colorlet{cunhappy}{cyan}				
		\colorlet{anglecolor}{black!70}	
		
		\begin{scope}[]
		\draw (0,0) ++ (0:10mm) arc (0:90:10mm);			
		\node (psi) at ({\axisl*sin(\angle)},{\axisl*cos(\angle)}) {};	
		\draw[->] (0,0) -- (\axisl,0) node[right] {$\ket{x}\ket{\psi_{\rm bad}(x)}\ket{0}$};
		\draw[->] (0,0) -- (0,\axisl) node[above] {$\ket{x}\ket{\psi_{\rm good}(x)}\ket{1}$};
		\draw[->] (0,0) -- (psi.center) node[label={[label distance=-2mm]{90-\angle}:$\ket{\psi(x)}$}] {} ;	
		\draw (0,0) ++ ({90-\angle}:5mm) arc ({90-\angle}:0:5mm);	
		\draw (0,0) ++ ({(90-\angle)/2}:3.5mm) node {$\theta$};
		\end{scope}
		
		\node[left] (psis) at (0,{cos(\angle)}) {$\sqrt{p(x)}$};
		\node (psi0) at (1,1) {};
		\node (psi1) at (0,1) {};
		\node (psi0h) at (0,1) {};
		
		\draw[dashed]	(psi.center) --  (psis);			
		
		\end{tikzpicture}
		\caption{Geometric illustration of the parameters of the Grover operator $G_U$.}
		\label{fig:GroverOperator}
	\end{figure}

	We will also need the following special case of the LCU lemma before we describe a procedure to convert a probability oracle to a phase oracle.   
    
	\begin{lemma}[{\cite[Lemma 8]{childs2015quantum}}]\label{lemma:LCU}
		Let $M>0$ and $a=\lceil\log_2(2M+1)\rceil$. Let $\beta\in\mathbb{C}^{2M+1}$ satisfy $\nrm{\beta}_1=1$ and $T:\mathcal{H}\rightarrow\mathcal{H}$ be a unitary. Suppose we want to ``implement'' $L=\!\sum_{m=-M}^{M}\beta_{m}T^{m}$.
		Then, we can implement a circuit $C$ such that for all~$\ket{\psi}\in\mathcal{H}$:
		$$
			C: \ket{\psi}\ket{0}^{\!\otimes a}\rightarrow L\ket{\psi}\ket{0}^{\!\otimes a}+\ket{\Phi^\perp}, \text{ where } (I\otimes \ketbra{0}{0}^{\!\otimes a})\ket{\Phi^\perp}=0,
		$$
		using $M$ applications of (controlled) $T$ and $T^\dagger$ and $\bigO{M\log(M)}$ other two-qubit gates.
	\end{lemma}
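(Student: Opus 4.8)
The plan is to use the standard PREPARE--SELECT--unPREPARE sandwich underlying every LCU construction, routing the complex phases of the $\beta_m$ carefully, and then to argue that the SELECT operator can be built from only $\bigO{M}$ applications of $T$ via a binary decomposition of the exponent. I would begin by adjoining an $a$-qubit ancilla whose computational basis states $\{\ket{m}:m\in\{-M,\ldots,M\}\}$ index the $2M+1$ coefficients; this fits because $2^a\geq 2M+1$. Writing $\beta_m=|\beta_m|e^{i\phi_m}$, the hypothesis $\nrm{\beta}_1=\sum_m|\beta_m|=1$ guarantees that $\sum_m\sqrt{|\beta_m|}\ket{m}$ is a unit vector, so there exist state-preparation unitaries
$$V_R:\ket{\vec{0}}\mapsto \sum_{m=-M}^{M}\sqrt{|\beta_m|}\,e^{i\phi_m}\ket{m}, \qquad V_L:\ket{\vec{0}}\mapsto \sum_{m=-M}^{M}\sqrt{|\beta_m|}\,\ket{m},$$
acting on the ancilla (pushing the phases instead into SELECT is an equivalent, purely cosmetic choice). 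I would then set $\mathrm{SEL}:=\sum_m\ketbra{m}{m}\otimes T^{m}$ and define $C:=(V_L^\dagger\otimes I)\,\mathrm{SEL}\,(V_R\otimes I)$.

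The correctness check is then a one-line computation. Applying $V_R$, then $\mathrm{SEL}$, then $V_L^\dagger$ to $\ket{\psi}\ket{\vec{0}}$ and projecting the ancilla back onto $\ket{\vec{0}}$ uses the identity $\bra{\vec{0}}V_L^\dagger\ket{m}=\sqrt{|\beta_m|}$, which produces $\sum_m|\beta_m|e^{i\phi_m}T^m\ket{\psi}=\sum_m\beta_m T^m\ket{\psi}=L\ket{\psi}$ in the system register. Everything orthogonal to the $\ket{\vec{0}}$-flagged ancilla component is collected into $\ket{\Phi^\perp}$, yielding exactly the claimed action $C:\ket{\psi}\ket{\vec{0}}\mapsto L\ket{\psi}\ket{\vec{0}}+\ket{\Phi^\perp}$ with $(I\otimes\ketbra{\vec{0}}{\vec{0}})\ket{\Phi^\perp}=0$.

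The step I expect to be the main obstacle, and the only part requiring genuine care, is implementing $\mathrm{SEL}$ with $\bigO{M}$ rather than the naive $\bigO{M^2}$ applications of $T$. The idea is to shift the index to $j=m+M\in\{0,\ldots,2M\}$ and factor $\mathrm{SEL}=(T^{-M}\otimes I)\cdot\sum_j\ketbra{j}{j}\otimes T^{j}$: the unconditional factor $T^{-M}$ costs $M$ uses of $T^\dagger$, while the multiplexed power is realized through the binary expansion $j=\sum_k b_k 2^k$, applying a controlled-$T^{2^k}$ for each of the $a=\bigO{\log M}$ bits. Since a controlled-$T^{2^k}$ is $2^k$ controlled-$T$ gates, the geometric sum $\sum_k 2^k=\bigO{M}$ bounds the total number of (controlled) $T$-applications, as required.

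Finally I would account for the remaining gate budget. The two prepare unitaries $V_R,V_L$ act on $a=\bigO{\log M}$ qubits and compile with $\bigO{2^a}=\bigO{M}$ elementary two-qubit gates by standard state-preparation circuits, and the control bookkeeping for the multiplexed powers contributes the stated $\bigO{M\log M}$ further two-qubit gates. One small loose end is the gap when $2^a>2M+1$: I would handle the unused basis states $\ket{m}$ by assigning them zero amplitude in $V_R,V_L$ and acting as identity on them in $\mathrm{SEL}$, so they contribute nothing to the $\ket{\vec{0}}$-flagged component and leave the analysis above unchanged.
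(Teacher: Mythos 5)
The paper does not prove Lemma~\ref{lemma:LCU} itself but imports it verbatim from \cite{childs2015quantum}; your argument is the standard PREPARE--SELECT--unPREPARE construction underlying that source, the correctness computation (including the asymmetric treatment of the phases $e^{i\phi_m}$ and the padding of the unused basis states when $2^a>2M+1$) is right, and the resulting garbage state is consistent with Remark~\ref{rem:garbageSpan}. The only substantive deviation is how you compile $\sum_m\ketbra{m}{m}\otimes T^m$: your binary-exponent decomposition costs about $2^a-1+M\approx 5M$ (controlled) applications of $T$ and $T^\dagger$, versus the $M$ of each obtained by the sequential threshold decomposition (controlled-$T$ conditioned on $m\geq j$ and controlled-$T^\dagger$ conditioned on $m\leq -j$ for $j=1,\dots,M$, each comparator costing $\bigO{\log M}$ extra gates), but this affects only the constant and every use of the lemma in the paper needs only the $\bigO{M}$ scaling.
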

	\begin{remark}\label{rem:garbageSpan}
		Note that from the proof of \cite[Lemma 8]{childs2015quantum} one can see that $$\ket{\Phi^\perp}\in \mathrm{Span}(T^m\ket{\psi}:m\in -M, \ldots, M) \otimes \mathbb{C}^{2^a}.$$
	\end{remark}
    We now prove the main theorem in this section. 
	\begin{theorem}\label{thm:phaseConv}
		Let $p: X \rightarrow [0,1]$, and suppose $U_p:\mathcal{H}\otimes\mathcal{H}_{\text{aux.}}\rightarrow\mathcal{H}\otimes\mathcal{H}_{\text{aux.}}$ is a probability oracle with an $n$-qubit auxiliary Hilbert space $\mathcal{H}_{\text{aux.}}=\mathbb{C}^{2^n}$.
		Let $\eps\in(0,1/3)$, then we can implement an $\eps$-approximate phase oracle $O$ such that for any phase oracle $\mathrm{O}_p$ and for all $\ket{\psi}\in \mathcal{H}$ 
		$$\nrm{O\ket{\psi}\ket{0}^{\!\otimes (n+a)}-\mathrm{O}_p\ket{\psi}\ket{0}^{\!\otimes (n+a)}}\leq \eps,$$ 
		using $\bigO{\log(1/\varepsilon)}$ applications of $U_p$ and~$U_p^\dagger$, with $a=\bigO{\log\log(1/\eps)}$.
	\end{theorem}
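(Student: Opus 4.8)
The plan is to exploit the two-dimensional invariant-subspace structure of the modified Grover operator $G_U$ from \eqref{eq:GUDef}: on the plane containing $\ket{x}\ket{0}^{\otimes n}$, the operator $G_U$ acts as a rotation by angle $2\theta(x)$ with $\theta(x)=\arcsin(\sqrt{p(x)})$, so $\ket{x}\ket{0}^{\otimes n}$ splits into the two eigenvectors of $G_U$ with eigenvalues $e^{\pm 2i\theta(x)}$. Hence for any Laurent polynomial $L=\sum_{m=-M}^{M}c_m G_U^m$ whose coefficients are \emph{symmetric} ($c_m=c_{-m}$), the operator $L$ maps $\ket{x}\ket{0}^{\otimes n}$ to $g(2\theta(x))\ket{x}\ket{0}^{\otimes n}$, where $g(\phi):=\sum_m c_m e^{im\phi}$ is the associated Fourier sum: the two eigen-components pick up the \emph{same} scalar precisely because $g$ is even, so there is no residual rotation inside the plane. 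Since the target phase is $e^{ip(x)}=e^{i\sin^2\theta(x)}$, I would choose $g$ to approximate the even function $\phi\mapsto e^{i\sin^2(\phi/2)}=e^{i/2}e^{-i\cos(\phi)/2}$.

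First I would read off the coefficients from the Jacobi--Anger expansion $e^{-i\cos(\phi)/2}=\sum_{m\in\Z}(-i)^m J_m(1/2)e^{im\phi}$, giving $c_m=e^{i/2}(-i)^m J_m(1/2)$, which is exactly symmetric (using $J_{-m}=(-1)^m J_m$, so $(-i)^{-m}(-1)^m=(-i)^m$) and satisfies $|c_m|=|J_m(1/2)|\le (1/4)^{|m|}/|m|!$. Because these Bessel coefficients decay super-exponentially, truncating at $M=\bigO{\log(1/\eps)}$ terms makes the tail $\sum_{|m|>M}|c_m|\le\eps/4$, so the truncated sum $\tilde g$ obeys $\nrm{\tilde g-g}_\infty\le\eps/4$ and in particular $|\tilde g|=1\pm\eps/4$ uniformly in $\phi$. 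I would then apply the LCU Lemma~\ref{lemma:LCU} with $T=G_U$ (each use of $G_U$ costs one query to each of $U_p$ and $U_p^\dagger$), rescaling the truncated coefficients by $s:=\nrm{\tilde c}_1$ so that $\nrm{\beta}_1=1$; by Remark~\ref{rem:garbageSpan} the resulting circuit $C$ sends $\ket{x}\ket{0}^{\otimes n}\ket{0}^{\otimes a}$ to $\tfrac1s\tilde g(2\theta(x))\,\ket{x}\ket{0}^{\otimes n}\ket{0}^{\otimes a}+\ket{\Phi^\perp}$, with the garbage $\ket{\Phi^\perp}$ confined to the same $G_U$-invariant plane tensored with the flag register. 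So far this uses $\bigO{M}=\bigO{\log(1/\eps)}$ queries and a flag register of $a=\lceil\log_2(2M+1)\rceil=\bigO{\log\log(1/\eps)}$ qubits.

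The main obstacle is that the LCU circuit is inherently \emph{sub-normalised}: since $|\tilde g|\approx 1$ but $s=\sum_{|m|\le M}|J_m(1/2)|>1$ (numerically $s\approx 1.5$), the amplitude on the correctly flagged branch is $\approx 1/s<1$, whereas a genuine phase oracle must return the ancillas to $\ket{0}^{\otimes(n+a)}$ with amplitude $1-O(\eps)$. This cannot be patched by a better choice with $\nrm{\beta}_1=1$, because $\nrm{\beta}_1=1$ together with $|g|\equiv 1$ would force $g$ to be a single monomial. I would therefore close the gap using (oblivious) amplitude amplification. What makes this work \emph{obliviously} is that the success amplitude $\tfrac1s|\tilde g(2\theta(x))|=\tfrac1s(1\pm\eps/4)$ is independent of $x$ up to $O(\eps)$, so a fixed reflection scheme amplifies every branch simultaneously while preserving the per-$x$ phase $\tilde g(2\theta(x))/|\tilde g(2\theta(x))|=e^{ip(x)}\pm O(\eps)$. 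To reach amplitude exactly $1$ in a constant number of rounds, I would pad with one extra ancilla qubit and a tunable single-qubit rotation to lower the success amplitude to exactly $\sin\!\big(\tfrac{\pi}{2(2k+1)}\big)$ for a small integer $k$ (for $k=1$ this needs $\cos\omega=s\sin(\tfrac{\pi}{6})\approx 0.75\le 1$, which is feasible); then $k$ rounds map it to $\sin\!\big(\tfrac{\pi}{2}\big)=1$, with the magnitude deviation $O(\eps)$ suppressed quadratically near the peak and the phase preserved to $O(\eps)$. This costs only a further constant factor, so the total remains $\bigO{\log(1/\eps)}$ applications of $U_p,U_p^\dagger$ and the flag register stays at $a=\bigO{\log\log(1/\eps)}$. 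A concluding triangle-inequality bookkeeping of the truncation error, the amplification error, and the phase error (rescaling $\eps$ by a constant) yields the claimed bound $\nrm{O\ket{\psi}\ket{0}^{\otimes(n+a)}-\mathrm{O}_p\ket{\psi}\ket{0}^{\otimes(n+a)}}\le\eps$ for all $\ket{\psi}\in\mathcal{H}$.
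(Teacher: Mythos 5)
Your proposal is correct and follows essentially the same architecture as the paper's proof: the modified Grover operator $G_U$ acting as a rotation by $2\theta(x)$ on the two-dimensional invariant subspace, an LCU combination of powers of $G_U$ approximating the scalar $e^{ip(x)}=e^{i\sin^2\theta(x)}$, and robust oblivious amplitude amplification (with an ancilla rotation to tune the success amplitude down to $\sin\bigl(\tfrac{\pi}{2(2k+1)}\bigr)$) to repair the sub-normalisation. The one genuine difference is how you obtain the Laurent coefficients: the paper Taylor-expands $e^{i\sin^2(H)}$ in powers of $\sin^{2k}(H)$ and then expands each power binomially into $e^{2imH}$ terms, arriving at $\beta_m$ with the bound $\nrm{\beta}_1\leq e$ and hence $k=2$ amplification rounds, whereas you read the same Fourier coefficients directly off the Jacobi--Anger expansion of $e^{i/2}e^{-i\cos(2\theta)/2}$, which gives the explicit values $c_m=e^{i/2}(-i)^m J_m(1/2)$, the sharper norm $\nrm{c}_1\approx 1.49$, and consequently only $k=1$ round. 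Your observation that evenness of the Fourier sum ($c_m=c_{-m}$) is what makes the operator act as a scalar rather than a residual rotation on the invariant plane, and that $\nrm{\beta}_1=1$ with $|g|\equiv 1$ would force a single monomial (so amplification is unavoidable), are both correct and arguably make the structure of the argument more transparent than the paper's presentation; the only step you gloss over is the orthogonality of the $V'\ket{x}\ket{0}^{\otimes(n+a)}$ branches across distinct $x$, needed to pass from basis states to arbitrary $\ket{\psi}\in\mathcal{H}$, which the paper also only treats in a footnote and which follows from your own remark that the garbage stays in the $G_U$-invariant plane of each $\ket{x}\ket{0}^{\otimes n}$.
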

	\begin{proof}
		Our implementation will be based on using the Grover operator $G_U$ discussed before, see Figure~\ref{fig:GroverOperator}. First let us consider the image of $\ket{x}\ket{0}^{\!\otimes n}$, and focus on
		the two dimensional invariant subspace of $G_U$ spanned by $\ket{x}\ket{0}^{\!\otimes n}$ and its image under $G_U$.
		For simplicity initially we fix the value of $x$, and simply denote $p(x),\theta(x)$ by $p, \theta$ for conciseness.
		Similarly let us denote by $G$ the operator induced by $G_U$ on the aforementioned invariant subspace:
		$$G=\left(\begin{array}{cc}
			e^{2i\theta} & 0 \\
			0 & e^{-2i\theta}
			\end{array}\right)
			=e^{2iH}\text{ where } 
			H=\left(\begin{array}{cc}
			\theta & 0 \\
			0 & -\theta
			\end{array}\right)	.
		$$ 
		Recall that $p=\sin^2(\theta)$, therefore using the Taylor expansion of $e^{i\theta}$, we have
		\begin{align*}
			e^{ip}\cdot I &=e^{i\sin^2(\theta)}\cdot I
			=e^{i\sin^2(H)}
			=\sum_{k=0}^{\infty}\frac{i^k}{k!}\sin^{2k}(H).
		\end{align*}
		Using triangle inequality and a some simple calculations we can see that: for all $M\in\mathbb{N}_+$,
		\begin{equation}\label{eq:TaylorClose}
			\nrm{e^{ip}\cdot I-\sum_{k=0}^{M}\frac{i^k}{k!}\sin^{2k}(H)} 
			\leq  \sum_{k=M+1}^{\infty}\frac{1}{k!}
			< \frac{1}{M!}\sum_{\ell=1}^{\infty}2^{-\ell}
			= \frac{1}{M!}.
		\end{equation}
		We can use Stirling's approximation: 
		\begin{equation}\label{eq:StirlingBounds}
		\forall \ell\in\mathbb{N}_+: \sqrt{2\pi \ell}\left(\frac{\ell}{e}\right)^{\!\ell}\leq \ell!\leq e\sqrt{\ell}\left(\frac{\ell}{e}\right)^{\!\ell},
		\end{equation}
		to show that for all $\eps'\in(0,1/3)$ and $M\ge 2\ln(1/\eps')/\ln\ln(1/\eps')$ we have
		\begin{equation}
			1/M!\leq (e/M)^M\leq \eps'. \label{eq:Mbd}
		\end{equation}
		Finally we define $\beta\in\mathbb{C}^{2M+1}$ using the following calculation:
		\begin{align}
			\sum_{k=0}^{M}\frac{i^k}{k!}\sin^{2k}(H)
			&=\sum_{k=0}^{M}\frac{i^k}{k!}\left(\frac{e^{iH}-e^{-iH}}{2i}\right)^{2k} \tag{since $\sin(\theta)=(e^{i\theta}-e^{-i\theta})/(2i)$}\\
			&=\sum_{k=0}^{M}\frac{(-i)^k}{k!}\sum_{\ell=0}^{2k}\binom{2k}{\ell}\frac{(-1)^{\ell}}{2^{2k}}e^{2i(k-\ell)H}\label{eq:trianglePlace}\\
			&=\sum_{k=0}^{M}\frac{(-i)^k}{k!}\sum_{m=-k}^{k}\binom{2k}{k-m}\frac{(-1)^{k-m}}{2^{2k}}e^{2imH}\tag{let $\ell\leftarrow (k-m)$}\\	
			&=\sum_{m=-M}^{M}e^{2imH}\sum_{k=|m|}^{M}\binom{2k}{k-m}\frac{(-1)^m i^k}{k!2^{2k}}\nonumber\\
			&=\sum_{m=-M}^{M}G^m\sum_{k=|m|}^{M}\underset{\beta_m:=}{\underbrace{\binom{2k}{k-m}\frac{(-1)^m i^k}{k!2^{2k}}}}	\tag{since $G=e^{2iH}$ by definition}\\	
			&=:\sum_{m=-M}^{M}G^m\beta_{m}.	\label{eq:betaDef}		
		\end{align}
		Equations \eqref{eq:TaylorClose}, \eqref{eq:betaDef} and \eqref{eq:Mbd} give
		\begin{equation}\label{eq:LCUClose}
			\nrm{e^{ip}\cdot I-\sum_{m=-M}^{M}\beta_m G^m} \leq  \eps',
		\end{equation}
        and by \eqref{eq:trianglePlace}-\eqref{eq:betaDef} we can see (by following how scalar factors in \eqref{eq:trianglePlace} propagate to $\beta$) that
        $$
        \nrm{\beta}_1\leq \sum_{k=0}^{M}\frac{1}{k!}\sum_{\ell=0}^{2k}\binom{2k}{\ell}\frac{1}{2^{2k}}=  \sum_{k=0}^{M}\frac{1}{k!}\leq e,
        $$
		
		This representation makes it possible	to use\footnote{
			Note that in order to use the LCU Lemma~\ref{lemma:LCU} we actually would need to implement a controlled version of $G_U$. Fortunately this is easy to do: as can be seen from equation~\eqref{eq:GUDef}, it is enough to make the two reflection operators controlled.
		}
		the LCU Lemma~\ref{lemma:LCU}. By setting $a := \lceil\log_2(2M+1)\rceil+1$ (i.e., the number of auxiliary qubits), we can implement the unitary $V$ satisfying
		\begin{equation}\label{eq:Vdef}
		V\ket{x}\ket{0}^{\!\otimes n}\ket{0}^{\!\otimes a-1}=\sum_{m=-M}^M \frac{\beta_m}{\nrm{\beta}_1}G_U^m\ket{x}\ket{0}^{\!\otimes n}\ket{0}^{\!\otimes a-1} + \ket{\Phi^\perp}, 
		\text{ and } \left(I\otimes I_n\otimes \ketbra{0}{0}^{\!\otimes a-1}\right)\ket{\Phi^\perp}=0,
		\end{equation}
		where $\ket{\Phi^{\perp}}\in \mathrm{Span}(G_U^m\ket{x}\ket{0}^{\!\otimes n}:m\in -M, \ldots, M) \otimes \mathbb{C}^{2^{(a-1)}}$ is an unnormalized state vector.
		Moreover the implementation of $V$
		uses $\bigO{M}$ invocations of $G_U$ and $G_U^\dagger$ and $\bigO{M\log(M)}$ other two-qubit~gates.
        
        To conclude the proof, we first decrease the amplitude of success from $1/\|\beta_1\|\geq 1/e$ to $\sin(\pi/10)\le 1/e$, which we consequently amplify using $k=2$ oblivious amplitude amplification steps, see Corollary~\ref{cor:oblivious} in Appendix~\ref{apx:oblivious}. For this, let $R$ be a single-qubit unitary $$R:\ket{0}\rightarrow \sin(\pi/10)\nrm{\beta}_1 \ket{0} + \sqrt{1-\sin(\pi/10)^2\nrm{\beta}_1^2} \ket{1},$$ and let $V':=V\otimes R$.	
		Then, using Eq.~\eqref{eq:LCUClose}, it is easy to see that for all $x\in X$
		\begin{equation}\label{eq:2DPhase}
			\nrm{\left(I\otimes(I_n\otimes \ketbra{0}{0}^{\!\otimes a})\right)V'\ket{x}\ket{0}^{\!\otimes (n+a)}-\sin(\pi/10)e^{ip(x)}\ket{x}\ket{0}^{\!\otimes (n+a)}}\leq \sin(\pi/10)\eps'.
		\end{equation}
		Since for $x\neq x'\in X$ the vectors on the left had side of $\eqref{eq:2DPhase}$ are orthogonal to each other\footnote{
			By the definition of $V$ \eqref{eq:Vdef} we can see that $V'\ket{x}\ket{0}^{\!\otimes (n+a)}\in\mathrm{Span}(G_U^m\ket{x}\ket{0}^{\!\otimes n}:m\in -M, \ldots, M) \otimes \mathbb{C}^{2^a}$, so it is enough to show that for $x\neq x'$ and for all $k,k'\in\mathbb{Z}$ we have that $G_U^k\ket{x}\ket{0}^{\!\otimes n}$ is orthogonal to $G_U^{k'}\ket{x'}\ket{0}^{\!\otimes n}$. 
			This is equivalent to saying that $G_U^{k-k'}\ket{x}\ket{0}^{\!\otimes n}$ is orthogonal to $\ket{x'}$.
			As we already observed $\ket{x}\ket{0}^{\!\otimes n}$ lies in a two dimensional invariant subspace of $G_U$, therefore it is enough to show that $\bra{x'}\bra{0}^{\otimes n}G_U\ket{x}\ket{0}^{\!\otimes n}=0$, which follows from the observations of footnote~\ref{foot:2D}.
		}, we get that $V'$ satisfies for all $\ket{\psi}=\sum_{x\in X}c_k \ket{x}\in\mathcal{H}$ that
		\begin{equation}\label{eq:2DPhases}
		\nrm{\left(I\otimes(I_n\otimes \ketbra{0}{0}^{\!\otimes a})\right)V'\ket{\psi}\ket{0}^{\!\otimes (n+a)}-\sin(\pi/10)\mathrm{O}_p\ket{\psi}\ket{0}^{\!\otimes (n+a)}}\leq \sin(\pi/10)\eps'.
		\end{equation}	
		Let $\mathrm{O}$ denote the circuit that uses $k=2$ oblivious amplitude amplification steps on $V'$ using the projectors $\Pi_1:=I\otimes\ketbra{0}{0}^{\!\otimes (n+a)}, \Pi_2:=I\otimes I_n\otimes\ketbra{0}{0}^{\!\otimes a}$. By Corollary~\ref{cor:oblivious} we get that for all $\ket{\psi}\in\mathcal{H}$ $$\nrm{\mathrm{O}\ket{\psi}\ket{0}^{\!\otimes (n+a)}-\mathrm{O}_p\ket{\psi}\ket{0}^{\!\otimes (n+a)}}\leq 10\eps',$$ 
		therefore we can choose $\eps':=\eps/10$ to conclude the proof.
	\end{proof}		

	Now we show a corollary of the above result, which can be relevant for quantum distribution testing~\cite{LiWuEntropyQueryComp}.
	\begin{corollary}\label{cor:probHamSim}
		Suppose we are given access to some probability distribution via a quantum oracle: 
		$$ U:\ket{0}\ket{0} \rightarrow \sum_{x\in X} \sqrt{p(x)}\ket{x}\ket{\psi_x},$$ 
		then we can simulate a Hamiltonian corresponding to the probability distribution
		$$ 
		e^{it\sum_{x\in X}p(x)\ketbra{x}{x}}
		$$
		for time $t\in \R$ with precision $\eps$ making $\bigO{(|t|+1)\log(|t|/\eps)}$ queries to $U$.
	\end{corollary}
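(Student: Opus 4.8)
The plan is to reduce the task to implementing a phase oracle and then invoke Theorem~\ref{thm:phaseConv}. Observe that $H:=\sum_{x\in X}p(x)\ketbra{x}{x}$ is diagonal in the $\{\ket{x}\}$ basis, so $e^{itH}\ket{x}=e^{itp(x)}\ket{x}$; that is, simulating $H$ for time $t$ is \emph{exactly} the same as implementing the phase oracle $\ket{x}\to e^{itp(x)}\ket{x}$ for the function $tp$. Since Theorem~\ref{thm:phaseConv} turns a probability oracle for $p$ into an $\eps$-approximate phase oracle $\ket{x}\to e^{ip(x)}\ket{x}$ using only $\bigO{\log(1/\eps)}$ oracle calls, it suffices to (i) manufacture a probability oracle for $p$ out of the state-preparation oracle $U$, and (ii) handle the time parameter $t$, which may exceed $1$, by a short product decomposition.

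First I would build the probability oracle. Apply $U$ to a fresh pair of registers to obtain $\ket{x}\ket{0}\ket{0}\mapsto\ket{x}\sum_{y\in X}\sqrt{p(y)}\ket{y}\ket{\psi_y}$, and then coherently compare the input label register with the freshly prepared label register, writing a $1$ into an extra flag qubit exactly when the two labels agree. This yields
\[
\ket{x}\ket{0}\ket{0}\ket{0}\ \longmapsto\ \ket{x}\Big(\sqrt{p(x)}\ket{x}\ket{\psi_x}\ket{1}+\sum_{y\neq x}\sqrt{p(y)}\ket{y}\ket{\psi_y}\ket{0}\Big),
\]
which is a valid probability oracle $U_p$ for $p$: the flag-$=1$ branch carries weight exactly $\sum_{y}p(y)[y=x]=p(x)$, and the remaining label and $\ket{\psi_y}$ registers play the role of the arbitrary auxiliary states $\ket{\psi_{\rm good}(x)}$ and $\ket{\psi_{\rm bad}(x)}$, since $\sum_{y\neq x}p(y)=1-p(x)$. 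This step costs a single query to $U$.

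To deal with $t$, set $m:=\lceil|t|\rceil$ and write $e^{itH}=\big(e^{i(t/m)H}\big)^{m}$, where each factor obeys $|t/m|\le 1$. Each factor is the fractional phase oracle $\ket{x}\to e^{i(t/m)p(x)}\ket{x}$, which I would implement by the same LCU construction as in Theorem~\ref{thm:phaseConv}: replacing $e^{ip}=e^{i\sin^2\theta}$ by $e^{i(t/m)\sin^2\theta}=\sum_{k}\frac{(it/m)^k}{k!}\sin^{2k}(H)$ leaves the truncation bound \eqref{eq:TaylorClose} and the coefficient bound $\nrm{\beta}_1\le e$ unchanged, precisely because $|t/m|\le 1$. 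Implementing each of the $m$ factors to precision $\eps/m$ costs $\bigO{\log(m/\eps)}$ queries to $U_p$, and composing them incurs total error at most $m\cdot(\eps/m)=\eps$ by subadditivity of unitary approximation errors. As each $U_p$ query uses $\bigO{1}$ queries to $U$, the overall cost is $\bigO{m\log(m/\eps)}=\bigO{(|t|+1)\log((|t|+1)/\eps)}$ queries to $U$, matching the claimed bound up to the harmless replacement of $|t|$ by $|t|+1$ inside the logarithm; for negative $t$ one simply uses $U_p^\dagger$, i.e.\ the inverse circuit.

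The \textbf{main obstacle} is getting the probability-oracle construction right in the second step: one must verify that the comparison-based flag produces precisely amplitude $\sqrt{p(x)}$ on the good branch, and that the leftover label and garbage registers are admissible auxiliary states, so that Theorem~\ref{thm:phaseConv} applies verbatim with $n$ equal to the number of these auxiliary qubits. The remaining work — checking that the fractional exponent $t/m$ does not degrade the LCU truncation or the oblivious-amplitude-amplification bounds, and accounting for the accumulated error across the $m$ segments — is routine.
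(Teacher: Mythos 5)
Your proposal is correct and follows essentially the same route as the paper: build a probability oracle for $p$ from one query to $U$ by coherently flagging agreement between the input label and the freshly prepared label (the paper implements this comparison via a Hadamard/phase-kickback/SWAP circuit, but the resulting oracle is identical), then split $e^{itH}$ into $\lceil|t|\rceil$ fractional phase queries $\mathrm{O}^{t/\lceil|t|\rceil}_p$ each implemented via Theorem~\ref{thm:phaseConv} to precision $\eps/\lceil|t|\rceil$. Your explicit check that the LCU truncation and $\nrm{\beta}_1\le e$ bounds survive the replacement $e^{ip}\to e^{i(t/m)p}$ for $|t/m|\le 1$ is a detail the paper leaves implicit, and is a welcome addition.
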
 
	\begin{proof}
		First we show that we can use the oracle $U$ to construct a probability oracle $U_p$: 
		\begin{align*}
		U_p:\ket{x}\ket{0}\ket{0}\ket{0} \kern5mm
		&\overset{H}{\rightarrow}\kern12mm 
		\ket{x}\ket{0}\ket{+}\ket{0} \\
		&\overset{U}{\rightarrow}\kern12mm 
		\ket{x}\sum_{y\in X} \sqrt{p(y)}\ket{y}\ket{+}\ket{\psi_y}\\
		&\overset{\kern-12mm e^{\pi i \sum_{y\in X}\ket{y}\ket{y}\ketbra{1}{1}\bra{y}\bra{y}}\otimes I\kern-12mm}{\rightarrow} \kern12mm
		\ket{x}\left(\sqrt{p(x)}\ket{x}\ket{-}\ket{\psi_x} + \sum_{y\neq x} \sqrt{p(y)}\ket{y}\ket{+}\ket{\psi_y}\right)\\
		&\overset{H}{\rightarrow} \kern12mm 
		\ket{x}\left(\sqrt{p(x)}\ket{x}\ket{1}\ket{\psi_x} + \sum_{y\neq x} \sqrt{p(y)}\ket{y}\ket{0}\ket{\psi_y}\right)\\
		&\overset{\kern-12mm SWAP\kern-12mm}{\rightarrow} \kern12mm 
		\ket{x}\left(\sqrt{p(x)}\ket{x}\ket{\psi_x}\ket{1} + \sum_{y\neq x} \sqrt{p(y)}\ket{y}\ket{\psi_y}\ket{0}\right).	
		\end{align*}
		As Theorem~\ref{thm:phaseConv} shows we can simulate a fractional phase query $\mathrm{O}^r_p$ where $r:=t/\lceil|t|\rceil$ with precision $\eps/\lceil|t|\rceil$ making $\bigO{\log(t/\eps)}$ queries to $U_p$. Observe that $\lceil|t|\rceil$ consecutive applications of $\mathrm{O}^r_p$ give $\mathrm{O}^t_p$, which is exactly the Hamiltonian simulation unitary that we wanted to implement.
	\end{proof}

	Given the result of Theorem~\ref{thm:phaseConv}, one may wonder if we can go in the other direction, i.e., can we show how to convert a phase oracle to a probability oracle. Indeed, there exists an efficient procedure to implement this. We state the theorem below and provide a proof in Appendix~\ref{apx:oracleConversions}.
	
	\begin{restatable}{lemma}{phaseToProbability}\textbf{\emph{(Converting phase oracles to probability oracles)}}\label{lemma:phaseToProbability}
		Let $\eps,\delta\in(0,1/2)$, and suppose $p:X\rightarrow \left[\delta ,1-\delta\right]$. Suppose we have access to a phase oracle
		$\mathrm{O}_{p}$
		then using $\bigO{\log(1/\eps)/\delta}$ invocations of the (controlled) $\mathrm{O}_{p}$ and $\mathrm{O}_{p}^\dagger$~oracle, we can implement a probability oracle
		$$U_{p}:\ket{x}\ket{0}^{\!\otimes k}\ket{0}\rightarrow \ket{x}\otimes\left(\sqrt{p'(x)}\ket{0}^{\!\otimes k}\ket{0}+\sqrt{1-p'(x)}\ket{\Phi^\perp}\ket{1}\right),$$
		where $\left|\sqrt{p'(x)}-\sqrt{p(x)}\right|\leq \eps$ for every $x\in \R^n$.
	\end{restatable}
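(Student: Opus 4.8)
The plan is to build the probability oracle as a block encoding of the diagonal operator $\sum_{x\in X}\sqrt{p(x)}\ketbra{x}{x}$, obtained by applying a suitable function to the phase oracle and then attaching a flag qubit. The starting observation is that $\mathrm{O}_{\!p}$ is exactly $e^{iP}$ for the diagonal operator $P:=\sum_{x\in X}p(x)\ketbra{x}{x}$, so its (controlled) integer powers act as $\mathrm{O}_{\!p}^{m}\colon\ket{x}\to e^{imp(x)}\ket{x}$, using $\mathrm{O}_{\!p}^{\dagger}$ for $m<0$ (Definitions~\ref{def:phaseoracle} and~\ref{defn:fractionaloracle}). A convenient sanity check is that a single controlled query followed by a Hadamard on the control maps $\ket{x}\ket{0}\to\ket{x}\big(\tfrac{1+e^{ip(x)}}{2}\ket{0}+\tfrac{1-e^{ip(x)}}{2}\ket{1}\big)$, which is already a genuine probability oracle for $\sin^{2}(p(x)/2)$ with no LCU subnormalization; this shows that amplitudes which are a \emph{fixed} trigonometric function of $p(x)$ can be produced cleanly, and reduces the remaining task to reshaping such an amplitude into $\sqrt{p(x)}$.

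I would carry out the reshaping directly through Lemma~\ref{lemma:LCU} with $T=\mathrm{O}_{\!p}$. Choosing coefficients $\beta\in\mathbb{C}^{2M+1}$ with $\nrm{\beta}_1=1$, the lemma implements on the flagged branch the diagonal operator $\sum_x\phi(p(x))\ketbra{x}{x}$, where $\phi(t)=\sum_{m=-M}^{M}\beta_m e^{imt}$, at a cost of $\bigO{M}$ queries. The whole statement therefore reduces to an approximation-theory claim: there is a trigonometric polynomial $\phi(t)=\sum_{m=-M}^{M}\beta_m e^{imt}$ with $M=\bigO{\tfrac1\delta\log(1/\eps)}$ and controlled coefficient norm such that $|\phi(t)-\sqrt{t}|\le\eps$ for all $t\in[\delta,1-\delta]$. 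The key analytic input is that $\sqrt{t}$ is analytic on $[\delta,1-\delta]$ with its only singularity — the branch point at $t=0$ — sitting at distance $\delta$ from the interval; hence a smooth periodic extension has Fourier coefficients decaying like $e^{-\Omega(\delta m)}$, and truncating at frequency $M=\bigO{\tfrac1\delta\log(1/\eps)}$ suffices. This is exactly where the $1/\delta$ factor comes from: the frequencies are integers and the domain $[0,1]$ is not rescaled, so resolving the feature of $\sqrt{t}$ that lives on scale $\delta$ near the left endpoint forces $M\gtrsim1/\delta$. I would prove the decay estimate by a contour-shift argument on the analytic continuation and relegate the constants to the appendix.

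The step I expect to be the main obstacle is matching the exact normalization required by the probability-oracle form, namely producing flagged amplitude $\sqrt{p'(x)}$ with $|\sqrt{p'(x)}-\sqrt{p(x)}|\le\eps$ rather than a uniformly scaled-down version of it. This is precisely where the present direction is harder than Theorem~\ref{thm:phaseConv}: there the amplitude to be boosted, $\sin(\pi/10)e^{ip(x)}$, had magnitude independent of $x$, so a fixed number of oblivious-amplitude-amplification steps (Corollary~\ref{cor:oblivious}) rotated every branch to magnitude $1$ simultaneously; here the target $\sqrt{p(x)}$ genuinely depends on $x$, so no single amplification angle rescales all branches at once. The saving grace is that $p\in[\delta,1-\delta]$ forces the target amplitudes into $[\sqrt{\delta},\sqrt{1-\delta}]\subset(0,1)$, bounded away from both $0$ and $1$; this is exactly the regime in which a \emph{bounded} reshaping exists. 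Concretely, I would either exhibit the approximation above with $\nrm{\beta}_1\le1$ directly, so that no rescaling is needed, or apply an odd polynomial $q$ of the block-encoded amplitude, implemented as a linear combination of Chebyshev polynomials of the reflection operators, with $q(s)\approx\alpha s$ on the admissible range and $|q(s)|\le1$ on $[-1,1]$; the uniform lower bound $\sqrt{\delta}$ is what makes such a bounded polynomial well-conditioned and keeps its degree within the claimed budget.

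Finally, I would flip a fresh flag qubit conditioned on the all-zero ancilla to cast the output in the exact form $\ket{x}\big(\sqrt{p'(x)}\ket{0}^{\otimes k}\ket{0}+\sqrt{1-p'(x)}\ket{\Phi^\perp}\ket{1}\big)$, identifying the $a=\bigO{\log M}$ LCU workspace qubits with the register $\ket{0}^{\otimes k}$, and collect the overall query count as $\bigO{\log(1/\eps)/\delta}$, invoking that controlled and fractional phase queries are available by the oracle definitions.
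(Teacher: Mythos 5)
Your overall strategy coincides with the paper's: both implement a degree-$M$ trigonometric polynomial $\phi(t)=\sum_{m=-M}^{M}\beta_m e^{imt}$ approximating $\sqrt{t}$ on $[\delta,1-\delta]$ via an LCU over powers of $\mathrm{O}_{\!p}$, and both rest on the point you correctly single out as the crux: since the target amplitude $\sqrt{p(x)}$ varies with $x$, oblivious amplitude amplification (the tool used in the reverse direction, Theorem~\ref{thm:phaseConv}) is unavailable, so the construction only closes if $\nrm{\beta}_1\leq 1$, in which case the residual subnormalization is removed by one fixed ancilla rotation rather than by amplification. Where you differ is in how the approximation is produced. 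The paper Taylor-expands $\sqrt{1/2+p'}$ (whose binomial coefficients have the special property that the relevant weighted $1$-norm is at most $1$), converts the polynomial to a trigonometric one via Lemma~\ref{lemma:LowWeightAPX} — which guarantees $\nrm{c}_1\leq\nrm{a}_1$ — and for general $\delta$ patches together $\bigO{1/\delta}$ local expansions selected coherently by phase estimation. You instead propose a single global Fourier approximation justified by a strip-of-analyticity/contour-shift argument; this would be cleaner (it avoids the phase-estimation patching entirely), and the degree bound $M=\bigO{\log(1/\eps)/\delta}$ you extract from the distance-$\delta$ branch point is the right one.

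The genuine gap is that the contour-shift argument, as sketched, does not deliver the coefficient bound you need. Analyticity of a periodic extension in a strip of width $\Theta(\delta)$ gives $|\beta_m|\leq C e^{-\Omega(\delta|m|)}$ and hence only $\nrm{\beta}_1=\bigO{1/\delta}$; with that bound the LCU's flagged amplitude is $\phi(p(x))/\nrm{\beta}_1=\Omega(\delta)\cdot\sqrt{p(x)}$, and since no single amplification angle restores all branches simultaneously, the construction stalls exactly at the obstacle you identified. (A secondary issue: analytic compactly supported bump functions do not exist, so a periodic extension agreeing with $\sqrt{t}$ on $[\delta,1-\delta]$ while remaining analytic in a $\Theta(\delta)$-strip must be built with some care, e.g.\ by mollification or by passing through a polynomial first — which is what Lemma~\ref{lemma:LowWeightAPX} does.) The paper's detour through the series $\sqrt{1+y}=\sum_k\binom{1/2}{k}y^k$ is precisely what secures $\nrm{\beta}_1\leq 1$, using $\sum_{k\geq1}\bigl|\binom{1/2}{k}\bigr|=1$ and the norm-preservation of the polynomial-to-trigonometric conversion. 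Your fallback — a bounded odd polynomial of the block-encoded amplitude built from Chebyshev polynomials of the reflection operators — is a legitimate alternative way to close the same gap, but it is a different (QSP-style) machine from the one the paper uses and would itself require the boundedness-on-$[-1,1]$ certificate to be established.
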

	
	This result together with Theorem~\ref{thm:phaseConv} shows how to convert phase oracles and probability oracles back and forth	with logarithmic overhead in the precision, assuming  that the probabilities are bounded away from $0$ and $1$, hence proving that these oracles are essentially equivalent.

	One might wonder why $p$ needs to be bounded away from $0$. The reason is that we actually lose some information when we convert from a probability oracle to phase oracle, 
	since the probability is the amplitude squared. One could also try to convert the amplitude to phase preventing this loss, but then one needs to be careful because the amplitude can be complex.
	Thus one needs to put the absolute value of the amplitude to the phase, but unfortunately the absolute value function is not smooth, and therefore it is not clear if one can apply LCU techniques. 
	However, in some cases one can implement an inner product oracle using the so-called Hadamard test (which we discuss in Section~\ref{sec:variational}), which is almost an amplitude oracle.

\subsection{Conversion between Phase oracle, Binary oracle and Fractional Phase oracle}

	In the previous section we have shown how to convert between probability and phase oracles with a logarithmic overhead, when the probability is bounded away from 0,1. In \cite{gilyenBlockMatrices} we show how to implement an $\eps$-precise fractional phase query with logarithmic number of phase queries, when the phase is bounded away from $-\pi,\pi$. Now we argue, that a phase oracle can be converted back and forth to a binary oracle with logarithmic overhead if we define the cost of precision to be reciprocal.

	One can implement the phase oracle $\mathrm{O}_{\!f}$ (in Eq.~\ref{eq:apxLin}) using a constant number of invocations of $B^{\eta}_f$, by using the phase-kickback trick as explained in~\cite{QuantGrad} or using standard phase preparation techniques. In order to implement the phase $2\pi Sf$ within error $\delta$, we need to set $\eta=\Theta(\delta/|S|)$, so the query complexity of simulating a call to $\mathrm{O}_{\!f}$ using calls to $B^{\eta}_f$ is $C(\bigO{|S|})$ for constant $\delta$, which is acceptable if the cost function $C$ is $\polylog$.     
	
	Finally note for completeness that a phase oracle can simulate a Binary oracle, with $m$ digits of precision and error probability at most $\eta$ using $\bigOt{2^m\log(1/\eta)}$ queries with the use of quantum phase estimation. However one needs to be careful about approximation errors; the difficulty is that the outcome of phase estimation is probabilistic, which can complicate the algorithm when we want to use the binary oracle in a coherent fashion.  

\section{Improved quantum gradient computation algorithm}\label{section:quantumGradientCalculation}

\subsection{Overview of Jordan's algorithm}

	Stephen Jordan constructed a surprisingly simple quantum algorithm~\cite{QuantGrad,bulger2005Gradient} that can approximately calculate the $d$-dimensional gradient of a function $f:\mathbb{R}^d\rightarrow \mathbb{R}$ with a \emph{single} evaluation of $f$.
	In contrast, using standard classical techniques, one would use~$d+1$ function evaluations to calculate the gradient at a point $\pmb{x}\in\mathbb{R}^d$: one can first evaluate~$f(\pmb{x})$ and then, for every $i\in[d]$, evaluate $f(\pmb{x}+\delta \pmb{e}_i)$ (for some $\delta>0$) to get an approximation of the gradient in direction $i$ using the standard formula
    $$
    \nabla_i f(\pmb{x})\approx \frac{f(\pmb{x}+\delta \pmb{e}_i)-f(\pmb{x})}{\delta}.
    $$
	
	The basic idea of Jordan's quantum algorithm~\cite{QuantGrad} is simple. First make two observations. Observe that if $f$ is twice differentiable at~$\pmb{x}$, then  $f(\pmb{x}+\pmb{\delta})=f(\pmb{x})+\nabla f \cdot \pmb{\delta} + \bigO{\|\pmb{\delta}\|^2}$,
	which in particular implies that for small~$\|\pmb{\delta}\|$, the function~$f$ is very close to being affine linear.
	The second observation is that, using the value of $f(\pmb{x}+\pmb{\delta})$, one can implement a phase oracle:
	\begin{equation}\label{eq:apxLin}
		\mathrm{O}_{2\pi S f}:\ket{\pmb{\delta}}\rightarrow e^{2\pi i S f(\pmb{x}+\pmb{\delta})}\ket{\pmb{\delta}}\approx e^{2\pi i S f(\pmb{x})}e^{2\pi i S \nabla f \cdot \pmb{\delta}}\ket{\pmb{\delta}}
	\end{equation}
	for a scaling factor $S>0$, where the approximation uses $f(\pmb{x}+\pmb{\delta}) \approx f(\pmb{x})+\nabla f \cdot \pmb{\delta}$ for small $\|\pmb{\delta}\|$. The role of  $S$ is to make make the phases appropriate for the final quantum Fourier transform. 
    
    \paragraph{Sketch of the algorithm.} Assume that all real vectors are expressed upto some finite amount of precision. In order to compute the gradient at $\pmb{x}$, the algorithm starts with a uniform superposition $\ket{\psi}=\frac{1}{\sqrt{|G^d_{\!\pmb{x}}|}}\sum_{\pmb{\delta}\in G^d_{\!\pmb{x}}}\ket{\pmb{\delta}}$ 
	over the points of a sufficiently small discretized $d$-dimensional grid $G^d_{\!\pmb{x}}$ around $\pmb{x}$, and applies the phase oracle $\mathrm{O}_{2\pi S f}$ (in Eq.~\ref{eq:apxLin}) to $\ket{\psi}$. Next, the inverse quantum Fourier transform is applied to the resulting state and each register is measured to obtain the gradient of $f$ at $\pmb{x}$ approximately. Due to approximate linearity of the phase, as described in Eq.~\eqref{eq:apxLin}, applying the Fourier transform will approximately give us the gradient. This algorithm uses $\mathrm{O}_{2\pi S f}$ once and Jordan showed how to implement $\mathrm{O}_{2\pi S f}$ using one sufficiently precise function evaluation.
	
	In order to improve the accuracy of the simple algorithm above, one could use some natural tricks. If $f$ is twice continuously differentiable, it is easy to see that the smaller the grid $G^d_{\!\pmb{x}}$ becomes, the closer the function gets to being linear. 
	This gives control over the precision of the algorithm, however if we ``zoom-in'' to the function using a smaller grid, the difference between nearby function values becomes smaller, making it harder the distinguish them and thus increasing the complexity of the algorithm proportionally.

	Also, it is well known that if the derivative is calculated based on the differences between the points $(f(x\!-\!\delta/2)$, $f(x\!+\!\delta/2))$ rather than $(f(x)$, $f(x+\delta))$,  
	then one gets a better approximation since the quadratic correction term cancels. To mimic this trick, Jordan chose a symmetric grid $G^d_{\!\pmb{x}}$ around~$\pmb{0}$. 
	
	\paragraph{Complexity of the algorithm}For Jordan's algorithm, it remains to pick the parameters of the grid and the constant $S$ in Eq.~\ref{eq:apxLin}. For simplicity,  assume that $\nrm{\nabla f(\pmb{x})}_\infty\leq 1$, and suppose we want to approximate $\nabla f(\pmb{x})$ coordinate-wise up to $\eps$ accuracy, with high success probability. 
	Under the assumption that ``the $2^{\text{nd}}$ partial derivatives of $f$ have a magnitude of approximately $D_2$'', Jordan argues\footnote{We specifically refer to equation (4) in \cite{QuantGrad} (equation (3) in the arXiv version), and the discussion afterwards. Note that our precision parameter $\eps$ corresponds to the uncertainty parameter $\sigma$ in \cite{QuantGrad}.} that choosing $G^d_{\!\pmb{x}}$ to be a $d$-dimensional hypercube 
	with edge length $\ell \approx \frac{\eps}{D_2\sqrt{d}}$ and with $N \approx \frac{1}{\eps}$ equally spaced grid points in each dimension, the quantum algorithm yields an $\eps$-approximate gradient by setting $S=\frac{N}{\ell}\approx \frac{D_2\sqrt{d}}{\eps^2}$.
	Moreover, since the Fourier transform is relatively insensitive to local phase errors it is sufficient to implement the phase $S f(\pmb{x}+\pmb{\delta})$ upto some constant, say $1\%$~accuracy.
	
	During the derivation of the above parameters Jordan makes the assumption, that the third and higher-order terms of the Taylor expansion of $f$ around $\pmb{x}$ are negligible, however it is not clear from his work~\cite{QuantGrad}, how to actually handle the case when they are non-negligible.
	This could be a cause of concern for the runtime analysis, since these higher-order terms potentially introduce a dependence on the dimension $d$.

	Finally, in order to assess the complexity of his algorithm, Jordan considers the Binary oracle input model of Definiton~\ref{def:bin}. This input model captures functions that are evaluated numerically using, say, an arithmetic circuit.
	Typically, the number of one and two-qubit gates needed to evaluate such functions up to $n$ digits precision is polynomial in $n$ and $d$. However, this input model does not fit the quantum optimization framework that we introduced in Section~\ref{sec:genericoptimizationalgo}.

    \paragraph{Our improvements} We  improve on the results of Jordan~\cite{QuantGrad} in a number of ways.
	Jordan~\cite{QuantGrad} argued that evaluating the function on a superposition of grid-points symmetrically arranged around $\pmb{0}$ is analogous to using a simple central difference formula. 
	We also place the grid symmetrically, but we realized that it is possible to directly use central difference formulas, which is the main idea behind our modified algorithm. 
    
    As discussed in Section~\ref{sec:oracleaccessdesc}, we realized that in applications of the gradient descent algorithm for optimization problems, it is  natural to assume access to a phase oracle $\mathrm{O}_{\!f}:\ket{\pmb{x}}\rightarrow e^{i f(\pmb{x})}\ket{\pmb{x}}$ (allowing fractional queries as well -- see Definition~\ref{defn:fractionaloracle}) instead of the Binary access oracle $B^\eta_f$. If we wanted to use Jordan's original algorithm in order to obtain the gradient with accuracy $\eps$, we need to implement the query oracle $\mathrm{O}_{\!f}^S$ by setting $S\approx D_2\sqrt{d}/\eps^2$, which can be achieved using $\lceil S\rceil$ consecutive (fractional) queries. Although it gives a square-root dependence on $d$ it scales as $\bigO{1/\varepsilon^2}$ with the precision. 
	In this work, we employ the phase oracle model and improve the quadratic dependence on $1/\eps$ to essentially linear.
	Additionally, we rigorously prove the square-root scaling with $d$ under reasonable assumptions on the derivatives of $f$. We also show that, for a class of smooth functions, the $\bigOt{\sqrt{d}/\eps}$-query complexity is optimal up to poly-logarithmic factors.
	We describe the algorithm in the next section, but first present our main result, whose proof is deferred to the end of this section.
    
\subsection{Analysis of Jordan's algorithm}\label{subsec:mainAlg}  
	In this section we describe Jordan's algorithm and provide a generic analysis of its behaviour. In the next subsection we combine these results with our finite difference methods. Before describing the algorithm, we introduce appropriate representation of our qubit strings suitable for fixed-point arithmetics.
	\begin{definition}\label{def:labelDefiniton}
		For every $b\in\{0,1\}^n$, let $j^{(b)}\in \{0,\ldots,2^n-1\}$ be the integer corresponding to the binary string $b=(b_1,\ldots,b_n)$.
		We label the $n$-qubit basis state $\ket{b_1}\ket{b_{2}}\cdots\ket{b_n}$ by $\ket{x^{(b)}}$, where 
		\begin{equation*}
			x^{(b)}=\frac{j^{(b)}}{2^n}-\frac{1}{2}+2^{-n-1}.
		\end{equation*}
		We denote the set of corresponding labels as $G_n:=\left\{\frac{j^{(b)}}{2^n}-\frac{1}{2}+2^{-n-1} : j^{(b)}\in \{0,\ldots,2^n-1\}  \right\}$. Note that there is a bijection between $\{j^{(b)}\}_{b\in \{0,1\}^n}$ and $\{x^{(b)}\}_{b\in \{0,1\}^n}$, so we will use $\ket{x^{(b)}}$ and $\ket{j^{(b)}}$ interchangeably in Remark~\ref{rem:FourierTranslation}. In the rest of this section we always label $n$-qubit basis states by elements of~$G_n$.
	\end{definition}
	\begin{definition}
		For $x\in G_n$ we define the Fourier transform of a state $\ket{x}$ as
		\begin{equation*}
			QFT_{G_n}: \ket{x}\rightarrow \frac{1}{\sqrt{2^n}}\sum_{k\in G_n}e^{2\pi i 2^n x k}\ket{k}.
		\end{equation*}		
	\end{definition}
    
    \begin{claim}\label{rem:FourierTranslation}
		This unitary is the same as the usual quantum Fourier transform up to conjugation with a tensor product of $n$ single-qubit unitaries.
	\end{claim}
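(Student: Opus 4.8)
The plan is to expand the bilinear phase kernel $e^{2\pi i 2^n x k}$ appearing in the definition of $QFT_{G_n}$ and show that it factorizes into the standard quantum Fourier transform kernel times phases that depend separately on the input label and on the output label. First I would record that, by Definition~\ref{def:labelDefiniton}, every label $x^{(b)}\in G_n$ can be written as $x^{(b)} = 2^{-n} j^{(b)} + c$, where $c := -\tfrac12 + 2^{-n-1}$ is a fixed shift independent of $b$; likewise the output label is $k = 2^{-n}\ell + c$ for the integer $\ell$ indexing the output basis state. The point is that this shift is common to every grid point, so it should separate out as input/output corrections.

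Substituting $x = 2^{-n}j + c$ and $k = 2^{-n}\ell + c$ into $2^n x k$ and expanding the product gives four terms,
\[
2^n x k = \frac{j\ell}{2^n} + jc + \ell c + 2^n c^2,
\]
so that $e^{2\pi i 2^n x k} = e^{2\pi i j\ell/2^n}\cdot e^{2\pi i jc}\cdot e^{2\pi i \ell c}\cdot e^{2\pi i 2^n c^2}$. The first factor is exactly the kernel of the ordinary $n$-qubit quantum Fourier transform, the last factor is a global phase, and the middle two factors depend only on the input integer $j$ and on the output integer $\ell$ respectively. Collecting these factors I would conclude
\[
QFT_{G_n} = e^{2\pi i 2^n c^2}\, D\, QFT\, D,
\]
where $D$ is the diagonal operator acting as $D\ket{m} = e^{2\pi i m c}\ket{m}$ on every integer label $m$ (one copy applied before $QFT$, producing the $e^{2\pi i jc}$ phase, and one after, producing the $e^{2\pi i \ell c}$ phase).

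It then remains to check that $D$ is a tensor product of single-qubit unitaries. Writing $j = \sum_{t=1}^n b_t 2^{n-t}$ in binary, the phase splits as $e^{2\pi i jc} = \prod_{t=1}^n e^{2\pi i b_t 2^{n-t} c}$, that is,
\[
D = \bigotimes_{t=1}^n \mathrm{diag}\!\left(1, e^{2\pi i 2^{n-t} c}\right),
\]
a layer of single-qubit phase gates. This establishes that $QFT_{G_n}$ equals the ordinary quantum Fourier transform pre- and post-composed with the single-qubit layer $D$ (together with a harmless global phase), which is exactly the claimed equivalence up to single-qubit unitaries; in particular $QFT_{G_n}$ can be realized by the standard QFT circuit plus one extra layer of single-qubit gates, so the two have the same gate complexity up to this layer.

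The computation is routine, so the only delicate points are bookkeeping. Because the shift $c = -(2^n-1)/2^{n+1}$ is a half-integer-type multiple of $2^{-n}$ rather than an integer multiple, the input and output phases $e^{2\pi i jc}$ and $e^{2\pi i \ell c}$ do \emph{not} collapse into a mere permutation of basis states; hence the correction is a genuine single-qubit phase layer $D$ applied on both sides (note it is $D\,QFT\,D$, not a strict conjugation $D\,QFT\,D^\dagger$), and one must carry the global phase $e^{2\pi i 2^n c^2}$ through the argument even though it is physically irrelevant. Verifying the tensor-product factorization of $D$ from the binary expansion of the label is the last small step; a sanity check on the single-qubit case $n=1$ (where $QFT = H$) confirms the identity.
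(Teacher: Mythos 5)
Your proof is correct and follows essentially the same route as the paper: expand the kernel $e^{2\pi i 2^n x k}$ using $x=2^{-n}j+c$, separate it into the standard QFT kernel, an input-only phase, an output-only phase, and a global phase, and then factor the resulting diagonal operator into single-qubit phase gates via the binary expansion of the label. Your observation that the relation is $D\,QFT\,D$ rather than a strict conjugation $D\,QFT\,D^{\dagger}$ matches the paper's own decomposition $U\cdot QFT_n\cdot U$.
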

	\begin{proof}
		For bitstrings $b,c\in\{0,1\}^n$, let $x^{(b)}\in G_n$ and $j^{(b)}\in \{0,\ldots,2^n-1\}$, be as defined in Definition~\ref{def:labelDefiniton}. 
		Then $QFT_{G_n}$ acts on $\ket{j^{(b)}}\equiv \ket{x^{(b)}}$ as
		\begin{align*}
		QFT_{G_n} :  \ket{x^{(b)}} &\rightarrow \frac{1}{\sqrt{2^n}}\sum_{x^{(c)}\in G_n}e^{2\pi i 2^n x^{(b)} x^{(c)}}\ket{x^{(c)}}\\
		&\equiv \frac{1}{\sqrt{2^n}}\sum_{j^{(c)}\in \{0,\ldots,2^n-1\}}e^{2\pi i 2^n \left(\frac{j^{(b)}}{2^n}-\frac{1}{2}+2^{-n-1}\right) \left(\frac{j^{(c)}}{2^n}-\frac{1}{2}+2^{-n-1}\right)}\ket{j^{(c)}}\\
		&\equiv \frac{1}{\sqrt{2^n}}\sum_{j^{(c)}\in \{0,\ldots,2^n-1\}}e^{2\pi i \left(\frac{j^{(b)} j^{(c)}}{2^n}-\left(j^{(b)}+j^{(c)}\right)\left(\frac{1}{2}+2^{-n-1}\right)+\left(2^{n-2}-\frac{1}{2}+2^{-n-2}\right)\right)}\ket{j^{(c)}}.
		\end{align*}	
		Using the usual quantum Fourier transform 
		$$
		QFT_n:\ket{j^{(b)}}\rightarrow  \frac{1}{\sqrt{2^n}}\sum_{j^{(c)}\in \{0,\ldots,2^n-1\}}e^{2\pi i 2^{-n} j^{(b)} j^{(c)}}\ket{j^{(c)}}
		$$ and the phase 
		unitary 
		$$
		U: \ket{j^{(b)}}\rightarrow e^{2\pi i \left(-j^{(b)}\left(\frac{1}{2}+2^{-n-1}\right)+\left(2^{n-2}-\frac{1}{2}+2^{-n-2}\right)/2\right)}\ket{j^{(b)}} \quad \text{ for every } j^{(b)}\in \{0,\ldots,2^n-1\},
		$$ it is easy to see that
		$$
		QFT_{G_n}=U \cdot QFT_n \cdot U.
		$$
		By writing $j^{(b)}$ in binary it is easy to see that $U$ is a tensor product of $n$ phase gates.
	\end{proof}		

	Now we are ready to precisely describe Jordan's quantum gradient computation algorithm.
	\begin{algorithm}[H]
	\caption{Jordan's quantum gradient computation algorithm}\label{alg:generic}
	\begin{algorithmic}[1]
		\STATEx {\bf Registers:} Use $n$-qubit input registers $\ket{x_1}\ket{x_2}\cdots\ket{x_d}$ with each qubit set to $\ket{0}$.
		\STATEx {\bf Labels:} Label the $n$-qubit states of each register with elements of $G_n$ as in Definition~\ref{def:labelDefiniton}.
		\STATEx {\bf Input:} A  function $f:G_n^d\rightarrow \mathbb{R}$ with phase-oracle $\mathrm{O}_{\!f}$ access such that 
		$$ 
			\mathrm{O}_{\!f}^{\pi2^{n+1}} \ket{x_1}\ket{x_2}\!\cdots\!\ket{x_d} = e^{2\pi i 2^n f\!\left(x_1,x_2,\ldots,x_d\right)} \ket{x_1}\ket{x_2}\!\cdots\!\ket{x_d}.
		$$
		\STATE {\bf Init} Apply a Hadamard transform to each qubit of the input registers.\label{line:initHigh}
		\STATE{\bf Oracle call} Apply the modified phase oracle $\mathrm{O}_{\!f}^{\pi2^{n+1}}$ on the input registers. \label{line:oracleCallHigh}
		\STATE{\bf QFT$_{G_n}^{-1}$} Fourier transform each register individually: 
        $$\ket{x}\rightarrow \frac{1}{\sqrt{2^n}}\sum_{k\in G_n}e^{\!-2\pi i 2^n x k}\ket{k}.
        $$\label{line:qftinverse}
		\STATE{\bf Measure} each input register $j$ and denote the measurement outcome by $k_j$.
		\STATE{\bf Output} $\left(k_1,k_2,\ldots,k_d\right)$ as the estimation for the gradient.
	\end{algorithmic}
	\end{algorithm}

	\begin{lemma}\label{lemma:genericJordan}
	Let $N=2^n$, $c\in \R$ and $\pmb{g}\in \R^d$ such that $\nrm{\pmb{g}}_\infty\leq 1/3$. If $f:G_n^d\to \R$ is such that 
	\begin{equation}\label{eq:closeFunctionApx}
		\left|f(\pmb{x})-\pmb{g}\cdot \pmb{x} -c \right|\leq \frac{1}{42\pi N},
	\end{equation}
	for all but a $1/1000$ fraction of the points $\pmb{x}\in G_n^d$,
	then the output of Algorithm~\ref{alg:generic} satisfies: 
    $$\Pr\left[|k_i-g_i|>\! 4/N\right]\leq 1/3 \quad \text{ for every  } i\in[d].$$
	\end{lemma}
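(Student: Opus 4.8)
The plan is to analyze Algorithm~\ref{alg:generic} in two stages: first understand the \emph{idealized} behaviour when the phase is \emph{exactly} affine-linear, and then control the deviation caused by the fact that \eqref{eq:closeFunctionApx} only holds approximately and only on a large fraction of grid points. For the idealized analysis, suppose momentarily that $f(\pmb{x})=\pmb{g}\cdot\pmb{x}+c$ exactly on all of $G_n^d$. After the Hadamard layer (line~\ref{line:initHigh}) we hold the uniform superposition $\frac{1}{\sqrt{N^d}}\sum_{\pmb{x}\in G_n^d}\ket{\pmb{x}}$, and applying $\mathrm{O}_{\!f}^{\pi 2^{n+1}}$ (line~\ref{line:oracleCallHigh}) multiplies each $\ket{\pmb{x}}$ by $e^{2\pi i N f(\pmb{x})}=e^{2\pi i N c}\prod_{i=1}^d e^{2\pi i N g_i x_i}$. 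The global phase $e^{2\pi i N c}$ is irrelevant, and the state factorizes across the $d$ registers into $\bigotimes_{i=1}^d\bigl(\frac{1}{\sqrt N}\sum_{x\in G_n}e^{2\pi i N g_i x}\ket{x}\bigr)$. First I would apply $QFT_{G_n}^{-1}$ (line~\ref{line:qftinverse}) register by register and observe that each register is exactly the state one feeds into an inverse-QFT phase-estimation argument. Using the definition of $QFT_{G_n}$ and the labelling of Definition~\ref{def:labelDefiniton}, measuring register $i$ yields outcome $k\in G_n$ with amplitude proportional to $\sum_{x\in G_n}e^{2\pi i N(g_i-k)x}$, which is a Dirichlet kernel peaked at the grid point $k$ closest to $g_i$.

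The key quantitative step is the standard inverse-QFT tail bound: for a single register, the probability that the measured $k$ differs from the target $g_i$ by more than $4/N$ is at most a fixed constant below $1/3$. Concretely, writing the measurement probability as $\frac{1}{N^2}\bigl|\sum_{x\in G_n}e^{2\pi i N(g_i-k)x}\bigr|^2$ and using the closed form of the geometric sum together with the bound $|\sin(\pi\theta)|\geq 2|\theta|$ for $|\theta|\le 1/2$, the probability mass lying outside the $4/N$-window around $g_i$ is controlled by a convergent sum of the form $\sum_{m\ge 4}\frac{1}{(2m)^2}$, which is comfortably smaller than $1/3$. The hypothesis $\nrm{\pmb{g}}_\infty\le 1/3$ guarantees that $g_i$ lies safely inside the representable range $G_n\subseteq(-1/2,1/2)$ so that no wraparound spoils the peak. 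Since in the idealized case the registers are in tensor product, this per-register bound holds independently for every $i\in[d]$, giving exactly the claimed conclusion $\Pr[|k_i-g_i|>4/N]\le 1/3$.

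It remains to remove the idealization, and this is where the real work lies. The state actually produced differs from the idealized affine-linear state in two ways: on the good fraction of points the phase is $2\pi N f(\pmb{x})$ rather than $2\pi N(\pmb{g}\cdot\pmb{x}+c)$, with pointwise phase discrepancy at most $2\pi N\cdot\frac{1}{42\pi N}=\frac{1}{21}$ by \eqref{eq:closeFunctionApx}; and on the bad $1/1000$-fraction the phase is completely uncontrolled. I would bound the Euclidean distance between the true post-oracle state $\ket{\phi}$ and the idealized post-oracle state $\ket{\phi_{\mathrm{ideal}}}$. For the good points, $|e^{i\alpha}-e^{i\beta}|\le|\alpha-\beta|$ gives a per-point amplitude error of at most $\frac{1}{21}$, contributing at most $\frac{1}{21}$ to the $\ell_2$ norm of the difference; for the bad points, each of the $\le N^d/1000$ terms has amplitude $1/\sqrt{N^d}$ and the difference amplitude is at most $2/\sqrt{N^d}$, contributing at most $2/\sqrt{1000}$ to the norm. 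Combining, $\nrm{\ket{\phi}-\ket{\phi_{\mathrm{ideal}}}}$ is bounded by a small absolute constant (roughly $\tfrac{1}{21}+\tfrac{2}{\sqrt{1000}}$). Since the inverse-QFT and measurement are a fixed quantum channel, this state-distance bound transfers to a bound on the total-variation distance of the measurement outcomes, so for each fixed $i$ the true failure probability exceeds the idealized one by at most $2\nrm{\ket{\phi}-\ket{\phi_{\mathrm{ideal}}}}$.

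\textbf{Main obstacle.} The delicate part is choosing the numerical constants ($42\pi$ in \eqref{eq:closeFunctionApx}, the $1/1000$ fraction, the $4/N$ window, and the target failure $1/3$) so that the idealized per-register tail bound plus the state-perturbation slack still sums to at most $1/3$. The idealized tail must be bounded strictly below $1/3$ with enough margin to absorb the $O(1)$ perturbation from the two error sources, and I expect the tail estimate for the Dirichlet kernel to require the careful $|\sin(\pi\theta)|\ge 2|\theta|$ lower bound over the right range rather than a crude bound. One subtlety to watch is that the perturbation argument must be applied per-coordinate after marginalizing to register $i$; because the measurement is separable, the single global state-distance bound legitimately controls each marginal simultaneously, so no union bound over the $d$ coordinates is needed and no hidden $d$-dependence creeps into the failure probability.
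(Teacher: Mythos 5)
Your proposal is correct and follows essentially the same route as the paper: an idealized product-state analysis with the standard phase-estimation tail bound (the paper cites the $\Pr[|k_i-g_i|>\kappa/N]\le\frac{1}{2(\kappa-1)}$ bound with $\kappa=4$, giving $1/6$), followed by a Euclidean-distance perturbation bound split over the good and bad grid points and transferred to measurement statistics via trace distance. The only refinement needed is in your final numerics: since the good and bad points have disjoint supports, the two contributions to $\nrm{\ket{\phi}-\ket{\phi_{\mathrm{ideal}}}}$ should be combined in quadrature, $\sqrt{(1/21)^2+4/1000}\le 1/12$, rather than added, which is exactly the margin the paper uses to land at $1/6+2\cdot\frac{1}{12}=1/3$.
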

	\begin{proof} First, note that $|G_n|=N$ from Definition~\ref{def:labelDefiniton}. Consider the following quantum  states
		$$   \qquad
		\ket{\phi}:=\frac{1}{\sqrt{N^d}}\sum_{\pmb{x}\in G_n^d}e^{2\pi i N f(\pmb{x})}\ket{\pmb{x}} \qquad \text{and} \qquad
		\ket{\psi}:=\frac{1}{\sqrt{N^d}}\sum_{\pmb{x}\in G_n^d}e^{2\pi i N  \left(\pmb{g} \cdot  \pmb{x}+ c\right)} \ket{\pmb{x}}.
		$$
		 Note that $\ket{\phi}$ is the state we obtain in Algorithm~\ref{alg:generic} after line~\ref{line:oracleCallHigh} 
		and $\ket{\psi}$ is its ``ideal version'' that we try to approximate with $\ket{\phi}$.
		Observe that the ``ideal'' $\ket{\psi}$ is actually a product state:
		\begin{align*}
		\ket{\psi}=
		\Big(\frac{1}{\sqrt{N}}\sum_{x_1\in G_n}e^{2\pi i N g_1 \cdot x_1}\ket{x_1}\Big) \otimes\cdots \otimes   \Big(\frac{1}{\sqrt{N}}\sum_{x_d\in G_n}e^{2\pi i N g_d \cdot  x_d}\ket{x_d}\Big).
		\end{align*}
       	
        It is easy to see that after applying the inverse Fourier transform to each register separately (as in line~\ref{line:qftinverse}) to $\ket{\psi}$, we obtain the state
		$$
		\Big(\frac{1}{N}\sum _{x_1,k_1\in G_n^2}e^{2\pi i Nx_1(g_1-k_1)}\ket{k_1}  \Big)\otimes \cdots \otimes \Big(\frac{1}{N}\sum _{x_d,k_d\in G_n^2}e^{2\pi i Nx_d(g_d-k_d)}\ket{k_d}  \Big).
		$$
       
		Suppose we make a measurement and observe $(k_1,\ldots,k_d)$. As shown in the analysis of phase estimation \cite{nielsen2002quantum}, we have the following\footnote{Note that our Fourier transform is slightly altered, but the same proof applies as in \cite[(5.34)]{nielsen2002quantum}. In fact this result can be directly translated to our case by considering the unitary conjugations proven in Remark~\ref{rem:FourierTranslation}.}: for every $i\in [d]$ (for a fixed accuracy parameter $\kappa>1$), the following holds:
		$$
		\Pr\left[|k_i-g_i|> \frac{\kappa}{N}\right]\leq \frac{1}{2(\kappa-1)} \quad \text{ for every } i\in [d].
		$$
		By fixing $\kappa=4$, we obtain the desired conclusion of the theorem, i.e., if we had access to $\ket{\psi}$ (instead of $\ket{\phi}$), then we would get a $4/N$-approximation of each coordinate of the gradient with probability at least $5/6$. It remains to show that this probability does not change more than $1/3-1/6=1/6$ if we apply the Fourier transform to $\ket{\phi}$ instead of $\ket{\psi}$.
		Observe that the difference in the probability of any measurement outcome on these states is bounded by twice the trace distance between $\ket{\psi}$ and $\ket{\phi}$ which is 
		\begin{equation}\label{eq:trl2}
			\trnorm{\ketbra{\psi}{\psi}-\ketbra{\phi}{\phi}}=2\sqrt{1-\left|\braket{\psi}{\phi}\right|^2}\leq 2 \nrm{\ket{\psi}-\ket{\phi}}.
		\end{equation}
		Since the Fourier transform is unitary and does not change the Euclidean distance, it is sufficient to show that $\nrm{\ket{\psi}-\ket{\phi}}\leq 1/12$ in order to conclude the theorem. Let $S\subseteq G_n^d$ denote the set of points satisfying  Eq.~\eqref{eq:closeFunctionApx}. We conclude the proof of the theorem by showing $\nrm{\ket{\psi}-\ket{\phi}}^2\leq \left(1/12\right)^2$:
		\begin{align*}
			\nrm{\ket{\phi}\!-\!\ket{\psi}}^2\!
			&=\!\frac{1}{N^d}\sum_{\pmb{x}\in G_n^d}\left|e^{2\pi i N f(\pmb{x})}-e^{2\pi i N (\pmb{g}\cdot\pmb{x}+c)}\right|^2\\
			&=\!\frac{1}{N^d}\!\sum_{\pmb{x}\in S}\left|e^{2\pi i N f(\pmb{x})}-e^{2\pi i N (\pmb{g}\cdot\pmb{x}+c)}\right|^2
			\!\!+\!\frac{1}{N^d}\!\!\sum_{\pmb{x}\in G_n^d\setminus S}\!\left|e^{2\pi i N f(\pmb{x})}-e^{2\pi i N (\pmb{g}\cdot\pmb{x}+c)}\right|^2\\
			&\leq \!\frac{1}{N^d}\!\sum_{\pmb{x}\in S}\left|2\pi N f(\pmb{x})-2\pi N (\pmb{g}\cdot\pmb{x}+c)\right|^2
			\!\!+\!\frac{1}{N^d}\!\!\sum_{\pmb{x}\in G_n^d\setminus S}\!4 \tag{$|e^{iz}-e^{iy}|\leq |z-y|$}\\
			&=\!\frac{1}{N^d}\!\sum_{\pmb{x}\in S}(2\pi N )^2\left|f(\pmb{x})-(\pmb{g}\cdot\pmb{x}+c)\right|^2
			\!\!+4\frac{|G_n^d\setminus S|}{N^d} \\
			&\leq\!\frac{1}{N^d}\!\sum_{\pmb{x}\in S}\left(\frac{1}{21}\right)^{\!\!2}+\frac{4}{1000} \tag{by the assumptions of the theorem}\\	
			&\leq\frac{1}{441}+\frac{1}{250}<\frac{1}{144}=\left(\frac{1}{12}\right)^{\!\!2}.
		\end{align*}
		\vskip-4mm
	\end{proof}
	   
	In the following theorem we assume that we have access to (a high power of) a phase oracle of a function $f$ that is very well approximated by an affine linear function $\pmb{g}\cdot \pmb{z}+ c$ on a hypergrid with edge-length $r\in\R$ around some $\pmb{y}\in\R^d$. We show that if the relative precision of the approximation is precise enough, then Algorithm~\ref{alg:generic} can compute an approximation of $\pmb{g}$ (the ``gradient'') with small query and gate complexity.    
    
	\begin{theorem}\label{thm:genericJordan}
		Let $c\in\R$, $r,\rho,\eps<M\in\R_+$, and $\pmb{y},\pmb{g}\in\R^d$ such that $\nrm{g}_\infty\leq M$.
		Let $n_\eps:=\left\lceil\log_2(4 /(r\eps)))\right\rceil$, $n_M:=\left\lceil\log_2(3rM)\right\rceil$ and $n:=n_\eps+n_M$.
		Suppose $f:\left(\pmb{y}+r G_n^d\right)\to \R$ is such that 
		\begin{equation*}
			\left|f(\pmb{y}+r \pmb{x})-\pmb{g}\cdot r\pmb{x}-c\right|\leq \frac{\eps r}{8\cdot 42\pi}
		\end{equation*}
		for all but a $1/1000$ fraction of the points $\pmb{x}\in G_n^d$.
		If we have access to a phase oracle $\mathrm{O}:\ket{\pmb{x}}\to e^{2\pi i 2^{n_\eps} f(\pmb{y}+r \pmb{x})}\ket{\pmb{x}}$ acting on $\mathcal{H}=\mathrm{Span}\{\ket{\pmb{x}}:\pmb{x}\in G_n^d\}$, then we can calculate a vector $\tilde{\pmb{g}}\in\R^d$ such that 
		$$\Pr\left[\,\nrm{\tilde{\pmb{g}}-\pmb{g}}_\infty>\! \eps\right]\leq \rho,$$
		with $\bigO{\log\big(\frac{d}{\rho}\big)}$ queries to $\mathrm{O}$ and with gate complexity $\bigO{d\log\big(\frac{d}{\rho}\big)\log\big(\!\frac{M}{\eps}\!\big)\log\log\big(\frac{d}{\rho}\big)\log\log\big(\!\frac{M}{\eps}\!\big)}$.
	\end{theorem}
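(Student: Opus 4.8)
The plan is to reduce the statement to the already-established Lemma~\ref{lemma:genericJordan} by rescaling and recentering, and then to boost the success probability by repetition. First I would pass to the rescaled function $\hat{f}:G_n^d\to\R$ defined by $\hat{f}(\pmb{x}):=2^{-n_M}f(\pmb{y}+r\pmb{x})$, together with the rescaled gradient $\hat{\pmb{g}}:=2^{-n_M}r\,\pmb{g}$ and constant $\hat{c}:=2^{-n_M}c$. The point of this choice is that the given oracle satisfies
\[
\mathrm{O}:\ket{\pmb{x}}\to e^{2\pi i 2^{n_\eps}f(\pmb{y}+r\pmb{x})}\ket{\pmb{x}}=e^{2\pi i 2^{n}\hat{f}(\pmb{x})}\ket{\pmb{x}},
\]
since $n=n_\eps+n_M$, so $\mathrm{O}$ is exactly the modified phase oracle $\mathrm{O}_{\hat f}^{\pi2^{n+1}}$ that Algorithm~\ref{alg:generic} takes as input. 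Thus one run of Algorithm~\ref{alg:generic} on $\mathrm{O}$ uses a single query to $\mathrm{O}$.

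Next I would check that $\hat f,\hat{\pmb g},\hat c$ satisfy the hypotheses of Lemma~\ref{lemma:genericJordan} with $N=2^n$; this is where the particular constants hidden in the definitions of $n_\eps$ and $n_M$ are used. From $n_M=\lceil\log_2(3rM)\rceil$ we get $2^{n_M}\ge 3rM$, hence $\nrm{\hat{\pmb g}}_\infty=2^{-n_M}r\nrm{\pmb g}_\infty\le 2^{-n_M}rM\le 1/3$, as required. For the approximation bound, on the good fraction of points we have $|\hat f(\pmb x)-\hat{\pmb g}\cdot\pmb x-\hat c|=2^{-n_M}|f(\pmb y+r\pmb x)-r\pmb g\cdot\pmb x-c|\le 2^{-n_M}\frac{\eps r}{8\cdot 42\pi}$, and since $n_\eps=\lceil\log_2(4/(r\eps))\rceil$ gives $2^{n_\eps}\le 8/(r\eps)$, this is at most $\frac{2^{-n_M}}{42\pi 2^{n_\eps}}=\frac{1}{42\pi N}$, matching Eq.~\eqref{eq:closeFunctionApx}. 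Lemma~\ref{lemma:genericJordan} then yields $\Pr[|k_i-\hat g_i|>4/N]\le 1/3$ for each $i\in[d]$. Rescaling the measurement outcomes back by setting $\tilde g_i:=(2^{n_M}/r)k_i$ and using $2^{n_\eps}\ge 4/(r\eps)$, an error $|k_i-\hat g_i|\le 4/N=4\cdot 2^{-n}$ translates into $|\tilde g_i-g_i|\le \frac{4}{r}2^{-n_\eps}\le\eps$; hence a single run produces, for each coordinate separately, an $\eps$-accurate estimate that fails with probability at most $1/3$.

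To obtain the claimed $\infty$-norm guarantee I would run $K=\bigO{\log(d/\rho)}$ independent copies and take the coordinate-wise median. A standard Chernoff bound shows that the median of $K$ estimates each failing with probability $\le 1/3$ deviates from $g_i$ by more than $\eps$ with probability at most $\rho/d$; a union bound over the $d$ coordinates then gives $\Pr[\nrm{\tilde{\pmb g}-\pmb g}_\infty>\eps]\le\rho$. Since each copy makes one query to $\mathrm{O}$, the total query complexity is $\bigO{\log(d/\rho)}$.

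For the gate complexity, each run consists of a layer of Hadamards on the $nd$ input qubits and $d$ individual inverse Fourier transforms $QFT_{G_n}^{-1}$, which by Claim~\ref{rem:FourierTranslation} differ from the ordinary inverse QFT only by single-qubit phase gates. Here the registers have $n=n_\eps+n_M=\bigO{\log(M/\eps)}$ qubits, the dependence on $r$ cancelling between $n_\eps$ and $n_M$. Implementing each inverse QFT by an approximate (banded) QFT with $\bigO{n\log n}$ gates and summing over the $d$ registers and the $\bigO{\log(d/\rho)}$ repetitions yields the stated bound, once the approximate-QFT precision is fixed large enough that the error accumulated over all registers and repetitions does not spoil the per-coordinate $1/3$ failure bound; this is what contributes the remaining poly-loglog factors. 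I expect the main obstacle to be precisely this bookkeeping — chasing the constants so that the calibrated values of $n_\eps$ and $n_M$ exactly meet the $\nrm{\cdot}_\infty\le 1/3$ and $1/(42\pi N)$ thresholds of Lemma~\ref{lemma:genericJordan}, and controlling the approximate-QFT error — rather than any conceptual difficulty, since the reduction itself is merely a rescaling.
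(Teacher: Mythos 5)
Your proposal is correct and follows essentially the same route as the paper: rescale by $2^{-n_M}$ so that the given oracle becomes the $\mathrm{O}_{\hat f}^{\pi 2^{n+1}}$ input of Algorithm~\ref{alg:generic}, verify the $\nrm{\cdot}_\infty\le 1/3$ and $1/(42\pi N)$ hypotheses of Lemma~\ref{lemma:genericJordan} from the definitions of $n_\eps$ and $n_M$, rescale the outcomes back, and boost with a coordinate-wise median over $\bigO{\log(d/\rho)}$ repetitions. The only (minor) discrepancy is in the gate-complexity bookkeeping: the paper obtains the $\log\log(d/\rho)$ factor from the sorting circuit used to compute the coordinate-wise medians, not from controlling the approximate-QFT precision as you suggest, so you should also account for the cost of the median computation itself.
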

	\begin{proof}
		Let $N_M:=2^{n_M}$, $N:=2^n$, and $h(\pmb{x}):=\frac{f(\pmb{y}+r \pmb{x})}{N_M}$,  then 
		$\big|h(x)-\pmb{g}\frac{r}{N_M}\pmb{x}-\frac{c}{N_M}\big|\leq \frac{\eps r}{8\cdot 42\pi N_M}\leq \frac{1}{42\pi N}$.
		Note that $\mathrm{O}=\mathrm{O}_h^{2\pi N}$, therefore Algorithm~\ref{alg:generic} yields and output $\tilde{\pmb{g}}$,
	 	which, as shown by Lemma~\ref{lemma:genericJordan}, is such that that for each $i\in [d]$ with probability at least $2/3$ we have $\big|\tilde{g}_i-\frac{r}{N_M} g_i\big|\leq \frac{4}{N}$. Thus also $\big|\frac{N_M}{r}\tilde{g}_i- g_i\big|\leq \frac{4N_M}{rN}\leq \eps$. By repeating the procedure $\bigO{\log(d/\rho)}$ times and taking the median coordinate-wise we get a vector $\tilde{\pmb{g}}_{\text{med}}$, such that $\nrm{\tilde{\pmb{g}}_{\text{med}}-\pmb{g}}_\infty\leq \eps$ with probability at least $(1-\rho)$.
	 	
	 	The gate complexity statement follows from the fact that the complexity of Algorithm~\ref{alg:generic} is dominated by that of the $d$ independent quantum Fourier transforms, each of which can be approximately implemented using $\bigO{n\log n}$ gates. We repeat the procedure $\bigO{\log(d/\rho)}$ times, which amounts to $\bigO{d\log(d/\rho)n\log n}$ gates. At the end we get $d$ groups of numbers each containing $\bigO{\log(d/\rho)}$ numbers with $n$ bits of precision. We can sort each group with a circuit having $\bigO{\log(d/\rho)\log\log(d/\rho)n\log n}$ gates.\footnote{Note that using the median of medians algorithm~\cite{BFPRT73} we could do this step with $\bigO{\log(d/\rho)n}$ time complexity, but this result probably does not apply to the circuit model, which is somewhat weaker than e.g. a Turing machine.} So the final gate complexity is $\bigO{d\log(d/\rho)\log\log(d/\rho)n\log n}$,
		which gives the stated gate complexity by observing that $n=\bigO{\log(M/\eps)}$.
	\end{proof}

\subsection{Improved quantum gradient algorithm using higher-degree methods}\label{subsec:finiteDiff}
	As Theorem~\ref{thm:genericJordan} shows Jordan's algorithm works well if the function is very close to linear function over a large hyprecube. However, in general even highly regular functions tend to quickly diverge from their linear approximations. To tackle this problem we borrow ideas from numerical analysis and use higher-degree finite-difference formulas to extend the range of approximate linearity. 
	
	We will apply Jordan's algorithm to the finite difference approximation of the gradient rather than the function itself. 
	We illustrate the main idea on a simple example: 
	suppose we want to calculate the gradient at $\pmb{0}$, then we could use the $2$-point approximation $\left(f(\pmb{x})-f(-\pmb{x})\right)/2$ instead of $f$, which has the advantage that it cancels out even order contributions.
	The corresponding phase oracle $\ket{\pmb{x}}\to e^{2^n\pi i (f(\pmb{x})-f(-\pmb{x})) }\ket{\pmb{x}}$ is also easy to implement as the product of the oracles:
	$$+ \text{ phase oracle: }\ket{\pmb{x}}=e^{2^n\pi i f\left(\pmb{x}\right) }\ket{\pmb{x}}\kern5mm\text{ and }\kern4mm
	  - \text{ phase oracle: }\ket{\pmb{x}}=e^{\!-2^n\pi i f\left(-\pmb{x}\right) }\ket{\pmb{x}}.$$

	Now we describe the general central difference approximation formula. There are a variety of other related formulas \cite{LiGeneralNumDiff}, but we stick to the central difference because the absolute values of the coefficients using this formula scale favorably with the approximation degree. 
	Since we only consider central differences, all our approximations have even degree, which is sufficient for our purposes as we are interested in the asymptotic scaling. 
	Nevertheless, it is not difficult to generalize our approach using other formulas \cite{LiGeneralNumDiff} that can provide odd-degree approximations as well.
    
    In the following lemma, we show that if $f:\R\rightarrow \R$ is $(2m+1)$-times continuously differentiable, then the central-difference formula in Eq.~\eqref{eq:centalDiffFormula} is a good approximation to $f'(0)$. Eventually we will generalize this to the setting where $f:\R^d\rightarrow \R$.
    
	\begin{restatable}{lemma}{oneDCentralDiff}\label{lemma:lagrangeBound}
		Let $\delta\in\R_+$, $m\in\N$ and suppose $f:[-m\delta,m\delta]\rightarrow\R$ is $(2m+1)$-times differentiable. Then
		\begin{equation} \label{eq:nthFiniteDiff}
		\left|f'(0)\delta -f_{(2m)}\!(\delta) \right|
		=\left|f'(0)\delta - \sum_{\ell=-m}^{m} a_{\ell}^{(2m)} f(\ell\delta)\right|
		\leq e^{-\frac{m}{2}} \nrm{f^{(2m+1)}}_\infty |\delta|^{2m+1},
		\end{equation}		
		where $\nrm{f^{(2m+1)}}_\infty:=\sup_{\xi\in[-m\delta,m\delta]}|f^{(2m+1)}(\xi)|$ and $a_{\ell}^{(2m)}$ is defined in Definition~\ref{def:centralDiffernce}. Moreover
		\begin{equation} \label{eq:diffCoeffs}
		\sum_{\ell=0}^{m} \left|a_{\ell}^{(2m)}\right| < \sum_{\ell=1}^{m} \frac{1}{\ell}\leq \ln(m)+1. 
		\end{equation}	
	\end{restatable}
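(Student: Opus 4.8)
\bigskip

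The plan is to establish both claims by exploiting the fact that the central-difference coefficients $a_\ell^{(2m)}$ are exactly the coefficients that make the finite-difference formula $f_{(2m)}(\delta)=\sum_{\ell=-m}^m a_\ell^{(2m)} f(\ell\delta)$ reproduce the derivatives of polynomials up to a high order. Concretely, the degree-$2m$ central difference is built from Lagrange interpolation through the $2m+1$ nodes $\{-m,\ldots,m\}$, so it is exact (i.e.\ $f'(0)\delta = f_{(2m)}(\delta)$) whenever $f$ is a polynomial of degree at most $2m$. This is the key structural fact I would prove or invoke first: the formula annihilates $1, x, \ldots, x^{2m}$ in the sense that $\sum_\ell a_\ell^{(2m)} (\ell\delta)^j$ equals $\delta\cdot\frac{d}{dx}x^j|_0$ for all $0\le j\le 2m$. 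Since the interpolation nodes are symmetric about $0$ and the coefficients $a_\ell^{(2m)}$ are odd in $\ell$, this exactness in fact extends through degree $2m$ automatically.

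\bigskip

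For the error bound \eqref{eq:nthFiniteDiff}, I would Taylor-expand $f$ around $0$ with integral (Lagrange) remainder to order $2m$. By the exactness just described, all terms of the expansion up to degree $2m$ cancel against $f'(0)\delta$, leaving only the remainder term. This reduces the estimate to controlling $\bigl|\sum_{\ell=-m}^m a_\ell^{(2m)} R_{2m}(\ell\delta)\bigr|$, where $R_{2m}$ is the Taylor remainder, which is bounded by $\nrm{f^{(2m+1)}}_\infty |\ell\delta|^{2m+1}/(2m+1)!$. Plugging in the explicit coefficients $a_\ell^{(2m)} = \frac{(-1)^{\ell-1}}{\ell}\binom{m}{|\ell|}\big/\binom{m+|\ell|}{|\ell|}$ and summing, the task becomes bounding $\frac{|\delta|^{2m+1}}{(2m+1)!}\nrm{f^{(2m+1)}}_\infty\sum_{\ell=1}^m |a_\ell^{(2m)}|\,\ell^{2m+1}$. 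Here the crucial simplification is that $\frac{1}{\ell}\binom{m}{\ell}\big/\binom{m+\ell}{\ell}\cdot \ell^{2m+1} = \ell^{2m}\binom{m}{\ell}\big/\binom{m+\ell}{\ell}$; I would bound the ratio of binomials $\binom{m}{\ell}/\binom{m+\ell}{\ell} = \frac{m!\,m!}{(m-\ell)!\,(m+\ell)!}$ and then show the whole sum divided by $(2m+1)!$ contracts to the stated factor $e^{-m/2}$.

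\bigskip

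The main obstacle will be obtaining the clean $e^{-m/2}$ factor from the combinatorial sum $\frac{1}{(2m+1)!}\sum_{\ell=1}^m \frac{m!\,m!}{(m-\ell)!\,(m+\ell)!}\,\ell^{2m}$. The dominant contribution comes from $\ell=m$, which gives $\frac{m!\,m!}{(2m)!}m^{2m} = m^{2m}\big/\binom{2m}{m}$, and after dividing by $(2m+1)!$ one must verify that Stirling-type estimates push this below $e^{-m/2}$; the subdominant terms $\ell<m$ must then be shown to be geometrically smaller so they do not spoil the bound. I expect this to require careful use of the Stirling bounds \eqref{eq:StirlingBounds} already recorded in the paper, together with a convexity or ratio argument showing the summands decay rapidly as $\ell$ decreases from $m$. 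I would defer the full grind of this estimate to Appendix~\ref{apx:centralErrorBounds}, as the text indicates.

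\bigskip

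For the coefficient bound \eqref{eq:diffCoeffs}, the argument is much more direct: since $\binom{m}{|\ell|}\big/\binom{m+|\ell|}{|\ell|}\le 1$ for all $\ell$ (as $\binom{m}{\ell}\le\binom{m+\ell}{\ell}$ termwise), we immediately get $|a_\ell^{(2m)}| = \frac{1}{|\ell|}\cdot\binom{m}{|\ell|}\big/\binom{m+|\ell|}{|\ell|}\le \frac{1}{|\ell|}$, and summing $\sum_{\ell=1}^m \frac{1}{\ell}$ against the standard harmonic-number bound $\sum_{\ell=1}^m \frac{1}{\ell}\le \ln(m)+1$ closes the estimate. This favorable $\ell_1$-scaling of the coefficients is precisely what makes the central-difference formula suitable for the phase-oracle construction, since $\nrm{a^{(2m)}}_1$ controls the number of oracle calls.
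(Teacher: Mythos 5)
Your proposal follows the paper's proof essentially step for step: exactness of the central-difference formula on polynomials of degree at most $2m$ via Lagrange interpolation through the nodes $\{-m,\dots,m\}$, Taylor expansion with Lagrange remainder so that everything except the remainder cancels, reduction to the combinatorial sum $\frac{1}{(2m+1)!}\sum_{\ell}|a_\ell^{(2m)}|\,|\ell|^{2m+1}$, and Stirling estimates to extract $e^{-m/2}$; the argument for \eqref{eq:diffCoeffs} via the element-wise bound $|a_\ell^{(2m)}|\le 1/|\ell|$ and the harmonic sum is identical to the paper's.

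The one genuine problem is in your sketch of the key combinatorial estimate, which is exactly the step you defer. You claim the sum $\sum_{\ell=1}^m \frac{(m!)^2}{(m-\ell)!(m+\ell)!}\ell^{2m}$ is dominated by the $\ell=m$ term, with the remaining terms ``geometrically smaller'' as $\ell$ decreases. This is false: writing $T_\ell:=\frac{(m!)^2}{(m-\ell)!(m+\ell)!}\ell^{2m}$, one has $T_{m-1}/T_m = 2m\left(1-\tfrac1m\right)^{2m}\approx 2m e^{-2}$, which exceeds $1$ for all $m\ge 5$, so the terms \emph{grow} as $\ell$ decreases from $m$; in terms of $y=\ell/m$ the summand behaves like $\left(\frac{y^2}{(1+y)^{1+y}(1-y)^{1-y}}\right)^{m}m^{2m}$, whose maximizer sits near $y\approx 0.83$ rather than at $y=1$. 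A ratio/convexity argument anchored at $\ell=m$ would therefore fail. The paper's Lemma~\ref{lemma:coeffBoundSimple} avoids this by simply bounding the sum by $2m$ times its largest term and then showing $\max_{y\in(0,1]}\frac{y^2}{(1+y)^{1+y}(1-y)^{1-y}}\le e^{-7/6}$, which yields $\sum_\ell|a_\ell^{(2m)}\ell^{2m+1}|\le 6e^{-7m/6}m^{2m+3/2}$; the polynomial prefactor $m^{2m+3/2}$ and the extra factor of $m$ are then absorbed against $(2m+1)!\ge 2m\sqrt{4\pi m}(2m/e)^{2m}$, leaving $e^{-7m/6}(e/2)^{2m}\le e^{-m/2}$. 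So your architecture is sound and identical to the paper's, but the specific plan for closing the estimate needs to be replaced by a bound on the maximum over $\ell$ (or a direct summation) rather than an appeal to decay away from $\ell=m$.
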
	
 	This lemma shows that for small enough $\delta$ the approximation error in \eqref{eq:centalDiffFormula} is upper bounded by a factor proportional to $\delta^{2m+1}$. If $\nrm{f^{(2m+1)}}_\infty\leq c^m$ for all $m$ and we choose $\delta\leq 1/c$, then the approximation error becomes exponentially small in $m$, motivating the use of higher-degree methods in our modified gradient computation algorithm. 
 	We generalize this statement to higher dimensions in Appendix~\ref{apx:centralErrorBounds}, 
 	which leads to our first result regarding our improved algorithm: (for the definition of the directional partial derivative $\partial_{\pmb{r}}^{2m+1} f$ see Definition~\ref{def:dirDer} )
 	
	\begin{restatable}{theorem}{lagrangeAlg}\label{thm:lagrangeAlg}
	Let $m\in\Z_+$, $R\in\R_+$ and $B\geq 0$. Suppose $f:[-R,R]^d\rightarrow \mathbb{R}$ is given with phase oracle access. If $f$ is $(2m+1)$-times differentiable and for all $\pmb{x}\in[-R,R]^d$ we have that
	$$
	|\partial_{\pmb{r}}^{2m+1} f(\pmb{x})|\leq B \quad \text{ for }\pmb{r}=\pmb{x}/\nrm{\pmb{x}},
	$$
	then we can compute an approximate gradient $\pmb{g}$ such that $\nrm{\pmb{g}-\nabla f(\pmb{0})}_\infty\leq \eps$ with probability at least $(1-\rho)$, using 
	$
	\bigO{\left(\max\left(\frac{\sqrt{d}}{\eps}\sqrt[2m]{\frac{B\sqrt{d}}{\eps}}, \frac{m}{\eps R}\right)\log(2m)+m\right)\log\left(\frac{d}{\rho}\right)}
	$
	 phase queries.
	\end{restatable}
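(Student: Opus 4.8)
The plan is to run Jordan's generic gradient routine (Theorem~\ref{thm:genericJordan}) not on $f$ itself but on its degree-$2m$ central-difference surrogate $f_{(2m)}$ from Definition~\ref{def:centralDiffernce}, which approximates the linear function $\nabla f(\pmb{0})\cdot\pmb{x}$ far more accurately than $f$ does. Concretely, I would invoke Theorem~\ref{thm:genericJordan} with $\pmb{y}=\pmb{0}$, $\pmb{g}=\nabla f(\pmb{0})$ and $c=0$, and build the phase oracle $\ket{\pmb{x}}\mapsto e^{2\pi i 2^{n_\eps} f_{(2m)}(r\pmb{x})}\ket{\pmb{x}}$ it requires out of the given oracle $\mathrm{O}_{\!f}$. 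Since $f_{(2m)}(r\pmb{x})=\sum_{\ell\neq 0} a_\ell^{(2m)} f(\ell r\pmb{x})$, this surrogate oracle factors into a product over $\ell\in\{-m,\ldots,m\}\setminus\{0\}$ of (fractional) phase queries that evaluate $f$ at the rescaled point $\ell r\pmb{x}$ with multiplier $2\pi 2^{n_\eps} a_\ell^{(2m)}$; the integer rescaling of the register is handled exactly as in the $\pm$-oracle example of Section~\ref{subsec:finiteDiff} and contributes only gate, not query, overhead.

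For the error analysis I would restrict $f$ to one-dimensional slices through the origin. For a fixed grid point $\pmb{x}$ set $\phi(\tau):=f(\tau r\pmb{x})$; then $\phi'(0)=\nabla f(\pmb{0})\cdot r\pmb{x}$ and $f_{(2m)}(r\pmb{x})=\sum_{\ell} a_\ell^{(2m)}\phi(\ell)$, so Lemma~\ref{lemma:lagrangeBound} with $\delta=1$ gives $\left|\nabla f(\pmb{0})\cdot r\pmb{x}-f_{(2m)}(r\pmb{x})\right|\leq e^{-m/2}\nrm{\phi^{(2m+1)}}_\infty$. By Definition~\ref{def:dirDer} and Eq.~\eqref{eq:rthderivateoff} one has $\phi^{(2m+1)}(\tau)=\nrm{r\pmb{x}}^{2m+1}\partial^{2m+1}_{\pmb{r}} f(\tau r\pmb{x})$ with $\pmb{r}=\pmb{x}/\nrm{\pmb{x}}$. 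The precise reason the hypothesis fixes the direction to $\pmb{r}=\pmb{x}/\nrm{\pmb{x}}$ is that along a ray from $\pmb{0}$ the radial direction is constant up to sign, which is immaterial for an odd-order derivative, so the assumed bound $B$ applies at \emph{every} point $\tau r\pmb{x}$, as long as $\tau r\pmb{x}\in[-R,R]^d$. Using $\nrm{\pmb{x}}\leq\sqrt{d}/2$ on the grid and $|\ell|\leq m$, all evaluation points stay in the domain once $mr/2\leq R$, and the slice error is then at most $e^{-m/2}(r\sqrt{d}/2)^{2m+1}B$ for every grid point.

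It then remains to choose $r$ and count queries. Demanding that this error meet the relative-precision threshold $\eps r/(8\cdot 42\pi)$ of Theorem~\ref{thm:genericJordan} bounds $r$ from above; dividing through by $r$ and taking a $2m$-th root forces $1/(r\eps)=\bigO{\frac{\sqrt{d}}{\eps}\sqrt[2m]{B\sqrt{d}/\eps}}$, while the domain constraint $r\leq 2R/m$ forces $1/(r\eps)=\bigO{m/(\eps R)}$. Taking $r$ as large as both allow makes $1/(r\eps)$ equal to the maximum of these two quantities, which is exactly the bracketed factor in the claim. For the query count, one surrogate-oracle call costs $\sum_{\ell}\lceil 2\pi 2^{n_\eps}|a_\ell^{(2m)}|\rceil\leq 2\pi 2^{n_\eps}\sum_{\ell}|a_\ell^{(2m)}|+(2m+1)$ base queries; since $2^{n_\eps}=\Theta(1/(r\eps))$ and $\sum_{\ell}|a_\ell^{(2m)}|=\bigO{\log(2m)}$ by Lemma~\ref{lemma:lagrangeBound}, this is $\bigO{\max(\cdots)\log(2m)+m}$. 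Multiplying by the $\bigO{\log(d/\rho)}$ repetitions from Theorem~\ref{thm:genericJordan} yields the stated complexity. Only the query count is at stake here; the bound $\nrm{\nabla f(\pmb{0})}_\infty\leq M$ needed to fix $n_M$ affects only the number of qubits, hence the gate count, not the number of queries.

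The hardest part will be the directional-derivative bookkeeping of the second paragraph: correctly converting the one-dimensional $(2m+1)$-st derivative of the slice into $\nrm{r\pmb{x}}^{2m+1}$ times a radial directional derivative, and verifying that the hypothesis's choice $\pmb{r}=\pmb{x}/\nrm{\pmb{x}}$ together with $mr/2\leq R$ genuinely makes $B$ applicable \emph{uniformly} along each ray. A secondary subtlety is the query accounting for the rescaled evaluations $f(\ell r\pmb{x})$: one must confirm that realizing the multipliers $2\pi 2^{n_\eps}a_\ell^{(2m)}$ by full-plus-fractional queries produces exactly the $\log(2m)$ factor (from $\sum_{\ell}|a_\ell^{(2m)}|$) and the additive $m$ (from the $2m+1$ distinct shifts), rather than smuggling an unwanted factor of $m$ into the first term of the maximum.
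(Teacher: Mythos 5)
Your proposal is correct and follows essentially the same route as the paper: it runs Theorem~\ref{thm:genericJordan} on the central-difference surrogate $f_{(2m)}$, bounds the deviation from linearity by slicing along rays and applying Lemma~\ref{lemma:lagrangeBound} (your second paragraph is exactly the content of the paper's Corollary~\ref{cor:lagrangeBound}, including the correct conversion $\phi^{(2m+1)}(\tau)=\nrm{r\pmb{x}}^{2m+1}\partial_{\pmb{r}}^{2m+1}f(\tau r\pmb{x})$ and the sign-irrelevance of odd-order directional derivatives), chooses $r=\min(r_{\mathrm{opt}},2R/m)$ to balance the error threshold against the domain constraint, and counts fractional queries via $\sum_\ell\lceil|a_\ell^{(2m)}|\cdot 2\pi/(\eps r)\rceil\leq 2m+\bigO{\log(2m)/(\eps r)}$.
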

	Suppose that $R=\Theta(1)$, and $f$ is a multi-variate polynomial of degree $k$. Then for $m=\lceil k/2\rceil$ we get that $B=0$, as can be seen by using \eqref{eq:rthderivateoff}, therefore the above result gives an $\bigOt{\frac{k}{\eps}\log\left(\frac{d}{\rho}\right)}$ query algorithm.
	If $2\leq k =\bigO{\log(d)}$, then this result gives an exponential speedup in terms of the dimension $d$ compared to Jordan's algorithm.
	For comparison note that other recent results concerning quantum gradient descent also work under the assumption that the objective function is a polynomial, for more details see Section~\ref{sec:priorwork}.
	
	However, as we argue in Appendix~\ref{apx:centralErrorBounds} for non-polynomial functions we can have $B\approx d^{m}$ even under strong regularity conditions. This then results in an $\bigOt{\frac{d}{\eps}}$ query algorithm, achieving the desired scaling in $\eps$ but failing to capture the sought $\sqrt{d}$ scaling. In order to tackle the non-polynomial case we need to introduce some smoothness conditions.

\subsection{Smoothness conditions and approximation error bounds}\label{subsec:smoothBounds}

	In this section we show how to improve the result of Theorem~\ref{thm:lagrangeAlg}, assuming some smoothness condition. The calculus gets a bit involved, because we need to handle higher-dimensional analysis. In order to focus on the main results, we keep this section concise and move the proof of some technical results to Appendix~\ref{apx:centralErrorBounds}. 
	We show that under reasonable smoothness assumptions, the complexity of our quantum algorithm is $\widetilde{O}(\sqrt{d}/\varepsilon)$ and in the next section show that for a specific class of smooth functions this is in fact optimal up to polylog factors.

   	We recommend the reader to take a look at the statements of lemmas presented in Appendix~\ref{apx:centralErrorBounds} to get a little more familiar with the main ideas behind the proofs. In Appendix~\ref{apx:centralErrorBounds} we prove the following result about analytic\footnote{The functions examined in the following two theorems are essentially Gevrey class $G^{\frac12}$ functions~\cite{GevreyClass}.} functions:
	\begin{restatable}{theorem}{analyticBound}\label{thm:analyticBound}
		If $R\in\R_+$, $f:\mathbb{R}^d\!\rightarrow \mathbb{R}$ is analytic and for all $k\in \mathbb{N}, \alpha\in[d]^k$ we have 
		$$
		|\partial_\alpha f(\pmb{0})|\leq c^k k^{\frac{k}{2}},
		$$
		then 
		$$
		|\nabla f(\pmb{0})\pmb{y}- f_{(2m)}(\pmb{y})|\leq \sum_{k=2m+1}^{\infty}\left(8 R c m \sqrt{d}\right)^{\!\!k},
		$$
		for all but a $1/1000$ fraction of points $\pmb{y}\in R\cdot G_n^d$.		
	\end{restatable}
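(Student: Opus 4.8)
The plan is to expand everything into Taylor series around $\pmb 0$, exploit the defining vanishing-moment property of the central-difference coefficients to cancel all low-order terms exactly, and then control the resulting high-order tail by a second-moment argument over a uniformly random grid point $\pmb y\in R\cdot G_n^d$. The role of the ``all but a $1/1000$ fraction'' clause is precisely to replace the worst-case $\nrm{\pmb y}_1\sim d$ growth of directional derivatives by the typical $\sqrt d$ growth.

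First I would write out the approximation term by term. Since $f$ is analytic, for every $\ell$ we have $f(\ell\pmb y)=\sum_{k=0}^\infty\ell^k\,\partial_{\pmb y}^k f(\pmb 0)/k!$, where I used Eq.~\eqref{eq:rthderivateoff} to recognise $\sum_{\alpha\in[d]^k}\pmb y^\alpha\partial_\alpha f(\pmb 0)=\partial_{\pmb y}^k f(\pmb 0)$. The hypothesis $|\partial_\alpha f(\pmb 0)|\le c^k k^{k/2}$ makes these series absolutely convergent on the whole grid, so I may interchange the sum over $\ell$ with the sum over $k$ in $f_{(2m)}(\pmb y)=\sum_\ell a_\ell^{(2m)}f(\ell\pmb y)$, obtaining
$$f_{(2m)}(\pmb y)=\sum_{k=0}^\infty\frac{\partial_{\pmb y}^k f(\pmb 0)}{k!}\left(\sum_{\ell=-m}^m a_\ell^{(2m)}\ell^k\right).$$
Applying Lemma~\ref{lemma:lagrangeBound} to the monomials $\tau\mapsto\tau^k$ (whose $(2m+1)$-st derivative vanishes identically for $k\le 2m$, so the stated error bound is $0$) shows that $\sum_\ell a_\ell^{(2m)}\ell^k$ equals $0$ for $k=0$ and for $2\le k\le 2m$, and equals $1$ for $k=1$. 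Hence the $k=1$ term reproduces $\nabla f(\pmb 0)\cdot\pmb y$ exactly and every other low-order term vanishes, leaving
$$f_{(2m)}(\pmb y)-\nabla f(\pmb 0)\cdot\pmb y=\sum_{k=2m+1}^\infty\frac{\partial_{\pmb y}^k f(\pmb 0)}{k!}\sum_{\ell=-m}^m a_\ell^{(2m)}\ell^k.$$

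Next I would bound the two scalar factors in each summand. For the coefficient sum, $|\ell|\le m$ together with Eq.~\eqref{eq:diffCoeffs} gives $|\sum_\ell a_\ell^{(2m)}\ell^k|\le m^k\sum_{\ell\ne0}|a_\ell^{(2m)}|\le 2(\ln m+1)m^k$. The crucial step is bounding the directional derivative $\partial_{\pmb y}^k f(\pmb 0)=\sum_{\alpha\in[d]^k}\pmb y^\alpha\partial_\alpha f(\pmb 0)$ \emph{for most} $\pmb y$. Here I would treat $\pmb y$ as uniformly random over $R\cdot G_n^d$, so its coordinates are i.i.d., symmetric and bounded by $R/2$, and estimate the second moment. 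Since all odd moments vanish, $\mathbb{E}[\pmb y^\alpha\pmb y^\beta]\ge 0$, so
$$\mathbb{E}\big[(\partial_{\pmb y}^k f(\pmb 0))^2\big]\le\Big(\max_{\alpha\in[d]^k}|\partial_\alpha f(\pmb 0)|\Big)^2\sum_{\alpha,\beta\in[d]^k}\mathbb{E}[\pmb y^\alpha\pmb y^\beta]=\Big(\max_\alpha|\partial_\alpha f(\pmb 0)|\Big)^2\,\mathbb{E}\big[(\pmb 1\cdot\pmb y)^{2k}\big],$$
where the identity $\sum_{\alpha\in[d]^k}\pmb y^\alpha=(\pmb 1\cdot\pmb y)^k$ collapses the double sum. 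As $\pmb 1\cdot\pmb y$ is a sum of $d$ independent, symmetric, $(R/2)$-bounded variables, it is sub-Gaussian with parameter $\bigO{R\sqrt d}$, whence $\mathbb{E}[(\pmb 1\cdot\pmb y)^{2k}]\le(C k d R^2)^k$ for an absolute constant $C$. Combined with $\max_\alpha|\partial_\alpha f(\pmb 0)|\le c^k k^{k/2}$ this gives $\sqrt{\mathbb{E}[(\partial_{\pmb y}^k f(\pmb 0))^2]}\le(\sqrt C\,c\,k\,R\sqrt d)^k$; the essential gain is the $\sqrt d$ here in place of the worst-case $\nrm{\pmb y}_1\sim d$.

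Finally I would assemble the pieces. Writing $S(\pmb y):=\sum_{k\ge 2m+1}\frac1{k!}\big|\sum_\ell a_\ell^{(2m)}\ell^k\big|\,|\partial_{\pmb y}^k f(\pmb 0)|$, which dominates $|f_{(2m)}(\pmb y)-\nabla f(\pmb 0)\cdot\pmb y|$ pointwise, Cauchy--Schwarz and the moment bound yield
$$\mathbb{E}[S(\pmb y)]\le 2(\ln m+1)\sum_{k=2m+1}^\infty\frac{m^k}{k!}\,(\sqrt C\,c\,k\,R\sqrt d)^k\le 2(\ln m+1)\sum_{k=2m+1}^\infty(e\sqrt C\,c\,m\,R\sqrt d)^k,$$
where I used $k!\ge(k/e)^k$ to cancel the factor $k^k$. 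Since $e\sqrt C$ is a small absolute constant (well below $8$) and the summation starts at $k\ge 2m+1$, the constant $8$ in the statement is generous enough to absorb $e\sqrt C$ together with the logarithmic prefactor and the $1/1000$ loss incurred by a single application of Markov's inequality. That application then gives $S(\pmb y)\le\sum_{k\ge 2m+1}(8Rcm\sqrt d)^k$ for all but a $1/1000$ fraction of $\pmb y\in R\cdot G_n^d$, which is the claim (implicitly in the regime $8Rcm\sqrt d<1$ where the series converges). I expect the \emph{main obstacle} to be the second-moment estimate establishing the $\sqrt d$ scaling for typical slices, and in particular arranging the argument so that one bad event of measure $1/1000$ controls the entire tail $k\ge 2m+1$ simultaneously, rather than paying a union bound over infinitely many $k$.
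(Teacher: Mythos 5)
Your proposal follows the same route as the paper's proof: Taylor-expand $f$ around $\pmb{0}$, apply Lemma~\ref{lemma:lagrangeBound} to the monomials $\tau\mapsto\tau^k$ with $\delta=1$ to conclude that $\sum_{\ell}a_\ell^{(2m)}\ell^k$ vanishes for $0\le k\le 2m$, $k\neq1$, and equals $1$ at $k=1$, and then control the tail $k\ge 2m+1$ by exploiting that a uniformly random $\pmb{y}\in R\cdot G_n^d$ has i.i.d.\ symmetric bounded coordinates, so that all mixed moments $\mathbb{E}[\pmb{y}^{(\alpha,\beta)}]$ are nonnegative and $\sum_{\alpha,\beta\in[d]^k}\mathbb{E}[\pmb{y}^{(\alpha,\beta)}]=\mathbb{E}[(\pmb{1}\cdot\pmb{y})^{2k}]\le\bigO{kdR^2}^k$ by Hoeffding. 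This is precisely the content of the paper's Lemma~\ref{lemma:tensorChebyshev}, including the trick of replacing the tensor of derivatives by the all-ones tensor. The one genuine divergence is the final concentration step: the paper applies Chebyshev separately for each $k$ (Lemma~\ref{lemma:tensorChebyshev} with $r=4$), so the $k$-th bad event has probability $4^{-2k}$ and the union over $k\ge 2m+1\ge3$ sums to below $1/1000$ with no multiplicative loss in the deviation, whereas you bound $\mathbb{E}[S(\pmb{y})]$ for the whole tail and apply Markov once. Your stated worry about ``paying a union bound over infinitely many $k$'' is therefore misplaced — the per-$k$ union bound is the cheaper option here, because the failure probabilities decay geometrically in $k$ while Markov at level $1/1000$ costs a flat factor of $1000$.

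That flat factor is where your argument falls short of the stated constant. The paper's per-term bound for good $\pmb{y}$ works out to $\left(2\sqrt{2}\,eRcm\sqrt{d}\right)^k$ with $2\sqrt{2}e\approx 7.7$, so the constant $8$ in the statement has essentially no slack: the margin is only about $(1.04)^k$. Your route requires $1000\cdot\bigO{\ln m+1}\cdot\left(e\sqrt{C}\right)^{k}\lesssim 8^{k}$ at $k=2m+1$ with $e\sqrt{C}\approx e/\sqrt{2}\approx 1.9$, i.e.\ roughly $3000(\ln m+1)\le 4.2^{2m+1}$, which fails for $m\in\{1,2\}$. So as written your proof establishes the theorem only for $m\ge3$ (or with $8$ replaced by a somewhat larger constant). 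For the downstream application in Theorem~\ref{thm:finalScaling}, where $m=\log(c\sqrt{d}/\eps)$, this is immaterial, but to prove the statement as stated for all $m$ you should switch the last step to the per-$k$ Chebyshev bound, which also lets you use the sharper coefficient bound of Lemma~\ref{lemma:coeffBoundSimple} (carrying the $e^{-7m/6}$ decay) in place of the cruder $2(\ln m+1)m^k$ estimate.
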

	
	We can use this result to analyze the complexity of Algorithm~\ref{alg:generic} when applied to functions evaluated with using a central difference formula. In particular it makes it easy to prove the following theorem, which is one of our main results.
	\anote{State it more as a constructive result, rather than existential result.}
	\begin{theorem}\label{thm:finalScaling}
		Let $\pmb{x}\in \R^d$, $\eps\leq c\in\R_+$ be fixed constants and suppose $f\!:\mathbb{R}^d\!\rightarrow \mathbb{R}$ is analytic\footnote{
		For convenience we assume in the statement that $f$ can be evaluated at any point of~$\mathbb{R}^d$, but in fact we only evaluate it inside a finite ball around $\pmb{x}$. It is straightforward to translate the result when the function is only accessible on an open ball around $\pmb{x}$. However, a finite domain imposes restrictions to the evaluation points of the function. If $\pmb{x}$ lies too close to the boundary, this might impose additional scaling requirements and thus potentially increases the complexity of the derived algorithm. Fortunately in our applications it is natural to assume that $f$ can be evaluated at distant points too, so we don't need worry about this detail.
		} and satisfies the following: for every $ k\in\!\mathbb{N}$ and $\alpha\in [d]^k$
		$$
		|\partial_\alpha f(\pmb{x})|\leq c^k k^{\frac{k}{2}}.
		$$
		There is a quantum algorithm that works for all such functions, and outputs an $\eps$-approximate gradient $\tilde{\nabla}f(\pmb{x})\in\R^d$ such that 
		$$
		\nrm{\nabla f(\pmb{x})-\tilde{\nabla} f(\pmb{x})}_{\infty}\leq \eps,
		$$    
		with probability at least $1-\delta$, using $\bigOt{\frac{c\sqrt{d}}{\eps}\log\left(\frac{d}{\delta}\right)}$ queries to the phase oracle $\mathrm{O}_f$.
	\end{theorem}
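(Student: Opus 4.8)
The plan is to run Jordan's algorithm (in the form of Theorem~\ref{thm:genericJordan}) not on $f$ itself, but on its degree-$2m$ central-difference approximation $f_{(2m)}$ from Definition~\ref{def:centralDiffernce}, viewed as an ``effective function'' whose gradient at the origin we extract. By translation invariance I may assume $\pmb{x}=\pmb{0}$ (replace $f$ by $\pmb{z}\mapsto f(\pmb{x}+\pmb{z})$, which has the same partial-derivative bounds at $\pmb{0}$). The key point is that $f_{(2m)}(\pmb{z})=\sum_{\ell}a_\ell^{(2m)}f(\ell\pmb{z})$ is, by construction, an excellent approximation of the exactly-linear function $\nabla f(\pmb{0})\cdot\pmb{z}$ on a small hypergrid, so Theorem~\ref{thm:genericJordan} with target gradient $\pmb{g}=\nabla f(\pmb{0})$ and $c=0$ will return an $\eps$-accurate estimate, provided I choose the grid scale $r$ and the degree $m$ correctly.

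First I would fix the parameters using the appendix bound. Theorem~\ref{thm:analyticBound} (applied at scale $r$) gives
$$|\nabla f(\pmb{0})\cdot\pmb{y}-f_{(2m)}(\pmb{y})|\leq \sum_{k=2m+1}^{\infty}\left(8rcm\sqrt{d}\right)^{k}$$
for all but a $1/1000$ fraction of $\pmb{y}\in r\cdot G_n^d$, which is exactly the ``all-but-$1/1000$'' hypothesis that Theorem~\ref{thm:genericJordan} requires. Choosing $r=\Theta\!\left(\frac{1}{cm\sqrt{d}}\right)$ so that $8rcm\sqrt{d}\leq\frac12$ collapses the geometric tail to at most $4^{-m}$, and then taking $m=\Theta\!\left(\log\frac{c\sqrt{d}}{\eps}\right)$ makes this smaller than the required threshold $\frac{\eps r}{8\cdot 42\pi}$ (the extra logarithmic factor of $cm\sqrt{d}/\eps$ hidden in the threshold is absorbed by the exponential gain in $m$). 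Since $|\partial_i f(\pmb{0})|\leq c$ I would set $M=c$, so that $n_M=\bigO{1}$ and the total precision $n=n_\eps+n_M=\bigO{\log(c\sqrt{d}/\eps)}$ only enters the gate count.

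Next I would implement the phase oracle $\mathrm{O}:\ket{\pmb{x}}\to e^{2\pi i 2^{n_\eps}f_{(2m)}(r\pmb{x})}\ket{\pmb{x}}$ that Theorem~\ref{thm:genericJordan} consumes, using the given oracle $\mathrm{O}_f$. Because $f_{(2m)}$ is a fixed linear combination of the values $f(\ell r\pmb{x})$, this phase factorizes as $\prod_{\ell}e^{2\pi i 2^{n_\eps}a_\ell^{(2m)}f(\ell r\pmb{x})}$, and each factor is a fractional query to $\mathrm{O}_f$ (Definition~\ref{defn:fractionaloracle}) at the rescaled input $\ell r\pmb{x}$, realized by $\lceil 2\pi 2^{n_\eps}|a_\ell^{(2m)}|\rceil$ (inverse) applications of $\mathrm{O}_f$. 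Summing over $\ell$ and using the coefficient bound $\sum_{\ell}|a_\ell^{(2m)}|=\bigO{\log m}$ from Lemma~\ref{lemma:lagrangeBound}, one application of $\mathrm{O}$ costs $\bigO{2^{n_\eps}\log m+m}=\bigOt{c\sqrt{d}/\eps}$ queries to $\mathrm{O}_f$, since $2^{n_\eps}=\Theta(1/(r\eps))=\Theta(cm\sqrt{d}/\eps)$ and $m$ is polylogarithmic. Theorem~\ref{thm:genericJordan} makes $\bigO{\log(d/\delta)}$ calls to $\mathrm{O}$, yielding the claimed total of $\bigOt{\frac{c\sqrt{d}}{\eps}\log\frac{d}{\delta}}$ queries.

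The main obstacle is not the reduction itself but the quantitative heart of Theorem~\ref{thm:analyticBound}: showing that for analytic $f$ with $|\partial_\alpha f(\pmb{0})|\leq c^k k^{\frac{k}{2}}$, the central-difference error along generic one-dimensional slices grows only like $\left(\mathrm{const}\cdot rcm\sqrt{d}\right)^{2m+1}$, i.e.\ that the relevant high-order \emph{directional} derivatives blow up by at most an $\bigO{(\sqrt{d})^k}$ factor rather than the naive $\bigO{d^k}$. This $\sqrt{d}$ versus $d$ gap is precisely what separates the $\bigOt{\sqrt{d}/\eps}$ complexity from the weaker $\bigOt{d/\eps}$ bound of Theorem~\ref{thm:lagrangeAlg}, and it is the one step I would need to establish carefully (it is deferred to Appendix~\ref{apx:centralErrorBounds}). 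The remaining delicacy is the tension between taking $r$ small enough for central-difference accuracy and keeping $2^{n_\eps}\approx 1/(r\eps)$, and hence the query cost, at the $\sqrt{d}/\eps$ scale, which the choice $r=\Theta(1/(cm\sqrt{d}))$ with logarithmic $m$ just barely balances.
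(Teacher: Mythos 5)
Your proposal is correct and follows essentially the same route as the paper's proof: translate to the origin, apply Theorem~\ref{thm:analyticBound} to the central-difference approximation $f_{(2m)}$, choose $r=\Theta(1/(cm\sqrt{d}))$ and $m=\Theta(\log(c\sqrt{d}/\eps))$ so the tail falls below the $\frac{\eps r}{8\cdot 42\pi}$ threshold of Theorem~\ref{thm:genericJordan}, and implement the required phase oracle as a product of fractional queries costed via the $\sum_\ell|a_\ell^{(2m)}|=\bigO{\log m}$ bound. The only difference is a cosmetic one in the exact choice of $r$ (the paper includes an extra $(\Theta(cm\sqrt{d}/\eps))^{1/(2m)}$ factor, which is $\Theta(1)$ for your logarithmic $m$), so both yield the same $\bigOt{\frac{c\sqrt{d}}{\eps}\log(d/\delta)}$ complexity.
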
   
	\begin{proof}
		Let $g(\pmb{y}):=f(\pmb{x}+\pmb{y})$.
		By Theorem~\ref{thm:analyticBound} we know that for a uniformly random $\pmb{y}\in r\cdot G_n^d$ we have 
		$|\nabla g(\pmb{0})\pmb{y}- g_{(2m)}(\pmb{y})|\leq \sum_{k=2m+1}^{\infty}\left(8 r c m \sqrt{d}\right)^{\!\!k}$
		with probability at least $999/1000$. Now we choose $r$ such that this becomes smaller that $\frac{\eps r}{8\cdot 42 \pi}$.
		Let $r^{-1}:=9cm\sqrt{d}\Big( 81 \cdot 8 \cdot  42\pi cm\sqrt{d}/\eps\Big)^{1/(2m)}$, then we get $8 r c m \sqrt{d}=\frac{8}{9}\Big( 81 \cdot 8 \cdot  42\pi cm\sqrt{d}/\eps\Big)^{-1/(2m)}$ and so
		
        \begin{align*}
			\sum_{k=2m+1}^{\infty}\left(8 r c m \sqrt{d}\right)^{\!\!k}
			&=\left(8 r c m \sqrt{d}\right)^{\!\!2m+1}\sum_{k=0}^{\infty}\left(8 r c m \sqrt{d}\right)^{\!\!k}\\
			&\leq \frac{\eps}{81 \cdot 8 \cdot  42\pi cm\sqrt{d}}\Big( 81 \cdot 8 \cdot  42\pi cm\sqrt{d}/\eps\Big)^{\!\!\frac{-1}{2m}}\sum_{k=0}^{\infty}\left(\frac{8}{9}\right)^{\!\!k} \tag{by our choice of $r$}\\	
			&= \frac{\eps}{9cm\sqrt{d}\cdot 8 \cdot 42\pi }\Big( 81\cdot 8 \cdot 42\pi cm\sqrt{d}/\eps\Big)^{\!\!\frac{-1}{2m}} \tag{since $\sum_{k=0}^{\infty}\left(\frac{8}{9}\right)^{\!\!k}= 9$}\\ 
			&= \frac{\eps r}{8\cdot 42 \pi}.
		\end{align*}
		By Theorem~\ref{thm:genericJordan} we know that we can compute an approximate gradient with $\bigO{\log(d/\delta)}$ queries to $\mathrm{O}^S_{g_{(2m)}}$, where $S=\bigO{\frac{1}{\eps r}}$. Observe that 
		
		$$
		\mathrm{O}^S_{g_{(2m)}}\ket{\pmb{y}}=e^{i S g_{\!(2m)}\!\left(\pmb{y}\right)} \ket{\pmb{y}}=e^{i S \sum_{\ell=-m}^{m} a^{\!(2m)}_\ell g(\ell \pmb{y})} \ket{\pmb{y}}.
		$$
		Using the relation between $f$ and $g$, it is easy to see that the number of (fractional) phase queries to $\mathrm{O}_f$ we need in order to implement a modified oracle call $\mathrm{O}^S_{g_{(2m)}}$ is
		\begin{align}\label{eq:analyticCost}
		\sum_{\ell=-m}^{m}\left\lceil \left|a^{\!(2m)}_\ell\right| S\right\rceil
		&\leq 2m+ S\sum_{\ell=-m}^{m} a^{\!(2m)}_\ell 
		\overset{\eqref{eq:diffCoeffs}}{\leq} 2m+ S \left(2\log(m)+2\right) .
		\end{align}
		Thus $\mathrm{O}^S_{g_{(2m)}}$ can be implemented using $\bigO{\frac{\log(m)}{\eps r} + m}$ (fractional) queries to $\mathrm{O}_{\!f}$.
		By choosing $m=\log(c\sqrt{d}/\eps)$ the query complexity becomes\footnote{If we strengthen the $c^k k^{\frac{k}{2}}$ upper bound assumption on the derivatives to $c^k$, then we could improve the bound of Theorem~\ref{thm:analyticBound} by a factor of $k^{-k/2}$. Therefore in the definition of $R^{-1}$ we could replace $m$ by $\sqrt{m}$ which would quadratically improve the log factor in \eqref{eq:analyticCaseScaling}.} 
		\begin{equation}\label{eq:analyticCaseScaling}
			\bigO{\frac{c\sqrt{d}}{\eps}m\log(m)}=\bigO{\frac{c\sqrt{d}}{\eps}\log\left(\frac{c\sqrt{d}}{\eps}\right)\log\log\left(\frac{c\sqrt{d}}{\eps}\right)}.
		\end{equation}
	\end{proof}
	\indent The above achieves, up to logarithmic factors, the desired $1/\eps$ scaling in the precision parameter and also the $\sqrt{d}$ scaling with the dimension.
	This improves the results of \cite{QuantGrad} both quantitatively and qualitatively.
	
	We also show that the query complexity for this problem is almost optimal, by proving a lower bound in Section~\ref{sec:lowerBound} which matches the above upper bound up to log factors.
	
    \subsection{Most quantum optimization problems are ``smooth''}
	We now show that the condition on the derivatives in Theorem~\ref{thm:finalScaling} is fairly reasonable, i.e., a wide range of probability oracles that arise from quantum optimization problems satisfy this condition. In particular, consider the function $p:\R^d\rightarrow \R$ that we looked at (see Eq.~\eqref{mainoptimizationproblem}) during the discussion of a generic model of quantum optimization algorithms:
	\begin{equation*}
		p(\pmb{x})=\braketbra{\pmb{0}}{U(\pmb{x})^\dagger (\ketbra{1}{1}\otimes I) U(\pmb{x})}{\pmb{0}}.
	\end{equation*}
    We will now show that for every $k\in\mathbb{N}$ and index-sequence $\alpha\in[d]^k$, we have\footnote{This essentially means that the function $p(\pmb{x})$ in the probability oracle is in the Gevrey class $G^{0}$.} $|\partial_{\alpha} p(\pmb{x})|\leq 2^k$ when $U(\pmb{x})$ is a product of $d$ (controlled) rotations
	\begin{equation*}
		\mathrm{Rot}(x_j)= 
		 \left( {\begin{array}{cc}
		   \cos(x_j) & \sin(x_j) \\
		   -\sin(x_j) & \cos(x_j)
		  \end{array} } \right)
		  =\exp\left[ix_j \left( {\begin{array}{cc} 0 & -i \\ i & 0 \end{array} } \right)\right]
		  =e^{ix_j \sigma_y}
	\end{equation*}
and other fixed unitaries. In order to prove this, we first use Lemma~\ref{lemma:unitaryDerivative} to show that $\nrm{\partial_{\alpha} U(\pmb{x})}\leq 1$,
	which by Lemma~\ref{lemma:derivativeCombination} implies that $\nrm{\partial_{\alpha} \left(U(\pmb{x})^\dagger (\ketbra{1}{1}\otimes I) U(\pmb{x})\right)}\leq 2^k$, hence proving the claim.
	In fact, we prove slightly stronger statements, so that these lemmas can be used later in greater generality.
		
	\begin{lemma}\label{lemma:unitaryDerivative}
		Suppose $\gamma\geq0$ and
		\begin{equation*}
			U(\pmb{x})=U_0\prod_{j=1}^{d}\left(P_j \otimes e^{i x_j H_j}+(I-P_j)\otimes I\right) U_j,
		\end{equation*}
		where $\nrm{U_0}\leq 1$ and for every $j\in[d]$ we a have that $\nrm{U_j}\leq 1$, $P_j$ is an orthogonal projection and $H_j$ is Hermitian with $\nrm{H_j}\leq \gamma$ .
		Then for every $k\in\mathbb{N}$ and $\alpha\in[d]^k$, we have that $\nrm{\partial_\alpha U(\pmb{x})}\leq \gamma^k$.
	\end{lemma}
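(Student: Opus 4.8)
The plan is to exploit the fact that each coordinate $x_j$ appears in exactly one factor of the product; this separation of variables reduces the whole computation to differentiating single controlled-rotation factors and then invoking submultiplicativity of the operator norm.

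First I would introduce the shorthand $V_j(x_j) := P_j \otimes e^{i x_j H_j} + (I - P_j) \otimes I$, so that $U(\pmb{x}) = U_0 \prod_{j=1}^{d} V_j(x_j) U_j$. Each $V_j$ is a controlled unitary, hence $\nrm{V_j(x_j)} = 1$, and crucially it depends on the single variable $x_j$ only. The next step is the single-factor computation: differentiating $r$ times with respect to $x_j$ pulls down a factor of $i H_j$ each time, giving $\partial_j^r V_j(x_j) = P_j \otimes (i H_j)^r e^{i x_j H_j}$, whose norm is at most $\nrm{P_j}\cdot\nrm{H_j}^r\cdot\nrm{e^{i x_j H_j}} \leq \gamma^r$, using that $P_j$ is an orthogonal projection, $\nrm{H_j} \leq \gamma$, and $e^{i x_j H_j}$ is unitary.

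The heart of the argument is that the mixed partial derivative $\partial_\alpha = \partial_{\alpha_1}\cdots\partial_{\alpha_k}$ factorizes across the product. Since $V_{j'} U_{j'}$ does not depend on $x_j$ for $j' \neq j$, each $\partial_j$ passes through all other factors and acts only on $V_j$ in place, leaving the ordering of the (non-commuting) product intact. Letting $k_j := |\{i \in [k] : \alpha_i = j\}|$ denote the multiplicity of index $j$ in $\alpha$ (so that $\sum_{j=1}^{d} k_j = k$), I would conclude, formally by an induction on $k$ that peels off one differentiation at a time, that
$$
\partial_\alpha U(\pmb{x}) = U_0 \prod_{j=1}^{d} \left( \partial_j^{k_j} V_j(x_j) \right) U_j.
$$
Taking operator norms, submultiplicativity together with $\nrm{U_0},\nrm{U_j} \leq 1$ and the single-factor bound $\nrm{\partial_j^{k_j} V_j(x_j)} \leq \gamma^{k_j}$ yields $\nrm{\partial_\alpha U(\pmb{x})} \leq \prod_{j=1}^{d} \gamma^{k_j} = \gamma^{k}$, as claimed.

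The only place demanding care, the main obstacle (though it is more of a bookkeeping subtlety than a genuine difficulty), is justifying the factorized formula for $\partial_\alpha U$ despite the operators not commuting. The point to emphasize is that we never permute the factors: because distinct variables live in distinct factors, each differentiation $\partial_{\alpha_i}$ simply replaces one factor by its derivative while fixing all others, so the order of the factors is preserved and, by equality of mixed partials for these smooth (indeed entire-in-each-variable) operator-valued functions, the order in which the $\partial_{\alpha_i}$ are applied is irrelevant. One also verifies the boundary conventions: when $k_j = 0$ the factor is the unitary $V_j(x_j)$ of norm $1 = \gamma^0$, and when $\gamma = 0$ the bound degenerates consistently with the paper's $0^0 = 1$ convention.
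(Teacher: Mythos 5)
Your proposal is correct and follows essentially the same route as the paper: the paper likewise observes that each variable lives in exactly one controlled-rotation factor, writes $\partial_\alpha U(\pmb{x})=U_0\prod_{j=1}^{d}\bigl(P_j\otimes(iH_j)^{k_j}e^{ix_jH_j}+0^{k_j}(I-P_j)\otimes I\bigr)U_j$ with $k_j$ the multiplicity of $j$ in $\alpha$, and concludes by submultiplicativity of the operator norm. The only cosmetic difference is that the paper absorbs your $k_j=0$ boundary case into the single formula via the $0^0=1$ convention rather than treating it separately.
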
	
\begin{proof}
		We have that 
		\begin{align*}
			\partial_\alpha U(\pmb{x})=U_0\prod_{j=1}^{d}\left(P_j \otimes \left(iH_j\right)^{|\{\ell\in[k]:\alpha_\ell=j\}|}e^{i x_j H_j}+0^{|\{\ell\in[k]:\alpha_\ell=j\}|}(I-P_j)\otimes I\right) U_j,
		\end{align*}		 
		therefore
		\begin{align*}
			\nrm{\partial_\alpha U(\pmb{x})}
			&=\nrm{U_0\prod_{j=1}^{d}\left(P_j \otimes \left(iH_j\right)^{|\{\ell\in[k]:\alpha_\ell=j\}|}e^{i x_j H_j}+0^{|\{\ell\in[k]:\alpha_\ell=j\}|}(I-P_j)\otimes I\right) U_j}\\
			&\leq\prod_{j=1}^{d}\nrm{\left(P_j \otimes \left(iH_j\right)^{|\{\ell\in[k]:\alpha_\ell=j\}|}e^{i x_j H_j}+0^{|\{\ell\in[k]:\alpha_\ell=j\}|}(I-P_j)\otimes I\right) }\\
			&\leq\prod_{j=1}^{d}\gamma^{|\{\ell\in[k]:\alpha_\ell=j\}|} = \gamma^k.
		\end{align*}
	\end{proof}
	\begin{lemma}\label{lemma:derivativeCombination}
		Suppose that $A(\pmb{x}), B(\pmb{x})$ are linear operators parametrized by $\pmb{x}\in\mathbb{R}^d$.
		If for all $k\in\N_0$ and $\alpha\in[d]^k$ we have that $\nrm{\partial_\alpha A}\leq \gamma^k $ and $\nrm{\partial_\alpha B}\leq \gamma^k $, then
		for all $k\in\N_0$ and $\alpha\in[d]^k$ we get that $\nrm{\partial_\alpha (AB)}\leq (2\gamma)^k $.
	\end{lemma}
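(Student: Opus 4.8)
The plan is to reduce everything to the higher-order Leibniz (product) rule together with submultiplicativity of the operator norm. First I would establish the generalized product rule: for any index-sequence $\alpha\in[d]^k$,
$$\partial_\alpha (AB) = \sum_{S\subseteq [k]} \left(\partial_{\alpha|_S} A\right)\left(\partial_{\alpha|_{\bar S}} B\right),$$
where for $S\subseteq[k]$ the symbol $\alpha|_S$ denotes the subsequence of $\alpha$ indexed by $S$ and $\bar S := [k]\setminus S$. This I would prove by a straightforward induction on $k=|\alpha|$: peeling off one derivative $\partial_{\alpha_1}$ and applying the ordinary product rule $\partial_{\alpha_1}(CD)=(\partial_{\alpha_1}C)D + C(\partial_{\alpha_1}D)$ exactly doubles the number of terms at each step, each term recording, for every one of the $k$ derivatives, whether it acted on the $A$-factor or on the $B$-factor. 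The $2^k$ resulting terms are precisely indexed by the subsets $S\subseteq[k]$ of derivatives assigned to $A$.

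Next I would bound each summand. For a fixed $S$ with $|S|=s$ we have $|\alpha|_S|=s$ and $|\alpha|_{\bar S}|=k-s$, so the hypotheses give $\nrm{\partial_{\alpha|_S}A}\leq \gamma^{s}$ and $\nrm{\partial_{\alpha|_{\bar S}}B}\leq \gamma^{k-s}$. Submultiplicativity of the operator norm then yields
$$\nrm{\left(\partial_{\alpha|_S} A\right)\left(\partial_{\alpha|_{\bar S}} B\right)}\leq \gamma^s\cdot\gamma^{k-s}=\gamma^k.$$
Finally, applying the triangle inequality to the Leibniz expansion and using that there are exactly $2^k$ subsets of $[k]$ gives
$$\nrm{\partial_\alpha(AB)}\leq \sum_{S\subseteq[k]}\gamma^k = 2^k\gamma^k = (2\gamma)^k,$$
as claimed.

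There is no serious obstacle here; the only step requiring a little care is the bookkeeping in the induction for the Leibniz rule, in particular verifying that applying one further derivative to the $2^{k-1}$-term expansion splits each term into the correct two children, so that the final index set is exactly the power set of $[k]$. Commutativity of the partial derivatives, which holds under the implicit smoothness assumption, ensures that the formula is independent of the order in which the derivatives are peeled off, so no ambiguity arises from the product structure of $\partial_\alpha$.
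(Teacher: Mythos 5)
Your proof is correct and follows essentially the same route as the paper: the subset-indexed higher-order Leibniz expansion $\partial_\alpha(AB)=\sum_{S\subseteq[k]}(\partial_{\alpha_S}A)(\partial_{\alpha_{\overline{S}}}B)$, followed by the triangle inequality, submultiplicativity, and counting the $2^k$ subsets. The only difference is that you spell out the induction behind the Leibniz formula, which the paper simply asserts.
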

	\begin{proof}
		For an index-sequence $\alpha=(\alpha_1,\alpha_2,\ldots,\alpha_k)\in[d]^k$ and a set $S=\{i_1<i_2< \ldots < i_\ell\}\subseteq [k]$ consisting of positions of the index-sequence, we define $\alpha_S:=\left(\alpha_{i_1},\alpha_{i_2},\ldots,\alpha_{i_\ell}\right)\in [d]^{|S|}$ 
		to be the index-sequence where we only keep indexes corresponding to positions in $S$; also let $\overline{S}:=[k]\setminus S$.
		It can be seen that 
		\begin{align*}
			\partial_{\alpha}(AB)=\sum_{S\subseteq [k]} \partial_{\alpha_S}A \partial_{\alpha_{\overline{S}}}B,
		\end{align*}
		therefore 
		\begin{align*}
			\nrm{\partial_{\alpha}(AB)}\leq\sum_{S\subseteq [k]} \nrm{\partial_{\alpha_S}A \partial_{\alpha_{\overline{S}}}B}\leq \sum_{S\subseteq [k]} \gamma^{|S|}\gamma^{k-|S|}=(2\gamma)^k.
		\end{align*}
	\end{proof}	
	\begin{remark}
		Finally note that the unitary in Lemma~\ref{lemma:unitaryDerivative} is an analytic function of its parameters, therefore the probability that we get by taking products of such unitaries and some fixed matrices/vectors is also an analytic function.
	\end{remark}
	
    \section{Lower bounds on gradient computation}\label{sec:lowerBound}
    
	In this section we prove that the number of phase oracle queries required to compute the gradient for some of the smooth functions satisfying the requirement of Theorem~\ref{thm:finalScaling} is $\Omega\left(\sqrt{d}/\eps\right)$, showing that Theorem~\ref{thm:finalScaling} is optimal up to log factors. As Lemma~\ref{lemma:phaseToProbability} shows, a probability oracle can be simulated using a logarithmic number of phase oracle queries, therefore this lower bound also translates to probability oracles.\footnote{A probability oracle can only represent functions which map to $[0,1]$, whereas the range of the function $f$ we use in the lower bound proof is an subinterval of $[-1,1]$. However, by using the transformed function $g:=(2+f)/4$ we get a function which has a range contained in $[1/2,3/4]$ so it can in principle be represented by a probability oracle. Moreover for a function with a range contained in $[1/2,3/4]$ we can efficiently convert between phase and probability oracles as shown by Theorem~\ref{thm:phaseConv}~and~Lemma~\ref{lemma:phaseToProbability}.}
	
	We first prove a useful Theorem providing a lower bound on the number of queries needed in order to distinguish a particular phase oracle from a family of other phase oracles. This result is of independent interest, and has already found an application for proving lower bounds on quantum SDP-solving~\cite{van2018quantum}.

	Then we prove our lower on gradient computation by constructing a family of functions which can be distinguished from the trivial phase oracle $I$ by $\eps$-precise gradient computation, but for which our hybrid-method based lower bound shows that distinguishing the phase oracles from $I$ requires $\Omega\left(\sqrt{d}/\eps\right)$ queries.
    
    \subsection{Hybrid method for arbitrary phase orcales}
    
    Now we turn to proving our general lower bound result based on the hybrid method, which was originally introduced for proving a lower bound for quantum search\footnote{One might wonder why do not we make a reduction to a search problem, e.g., by considering a function which has non-zero gradient only at some marked coordinates. We expect that this approach is not going to lead to a good lower bound, because the phase oracle is too strong, and by calculating the gradient one can actually solve exact counting, which problem has a linear lower bound for usual search oracles.} by Bennett et al.~\cite{Bennett:SearchLowerBound97}, and can be viewed as a simplified version of the adversary method \cite{Ambainis:AdversaryMethod00,Hoyer:AdversaryReview05,LMRSSz:conversion,belovsGeneralAdv15}.
    Our proof closely follows the presentation of the hybrid method in Montanaro's lecture notes \cite[Section 1]{Montanaro:LectureNotes11}.
    
    \hybMetArb*
   	\begin{proof}
		Suppose that $\F=\{f_j\colon j\in \{1,\ldots,d\}\}$, and let $f_0:=f_*$. Let $\A_j$ denote the algorithm which uses phase oracle $\mathrm{O}_{\!f_j}$ and let $\ket{\psi_j}:=\A_j\ket{\vec{0}}$ denote the state of the algorithm before the final measurement. 
		Since we can distinguish the states $\ket{\psi_0}$ and $\ket{\psi_j}$ with probability at least $2/3$, by the Holevo-Helstrom theorem~\cite[Chapter 3.1.1]{WatrousTQI}, it follows that
		$$
		\frac{2}{3}\leq \frac{1}{2} + \frac{1}{4}\nrm{\ketbra{\psi_0}{\psi_0}-\ketbra{\psi_j}{\psi_j}}_1.
		$$
		Since $\nrm{\ketbra{\psi_0}{\psi_0}-\ketbra{\psi_j}{\psi_j}}_1\leq 2 \nrm{\ket{\psi_0}-\ket{\psi_j}}$, we have that $1/3\leq \nrm{\ket{\psi_0}-\ket{\psi_j}}$.
		
		In general the quantum algorithm might use some workspace $\mathcal{W}=\underset{w\in W}{\mathrm{Span}}\{\ket{w}: w\in W\}$ along with the Hilbert space $\mathcal{H}$.
		To emphasize this we introduce the notation $\mathrm{O}'_{\!f}:=\mathrm{O}_{\!f}\otimes I_\mathcal{W}$, and $G':=G\times W$, so that the elements of $\mathcal{H}\otimes \mathcal{W}$ can be labeled by the elements of $G'$. 
		It is well known that in a quantum algorithm all measurements can be postponed to the end of the quantum circuit,
		so we can assume without loss of generality that between the queries the algorithm acts in a unitary fashion.
		Thus we can write $\A=U_T\mathrm{O}'_{\!f}U_{T-1}\mathrm{O}'_{\!f}U_{T-1}\cdots U_1\mathrm{O}'_{\!f}U_0$.
		
		Let us define for $t\in \{0,1,\ldots, T\}$
		$$\ket{\psi_j^{(t)}}:=\left(\prod_{\tau=1}^{t}U_\tau\mathrm{O}'_{\!f_j}\right)U_0\ket{\vec{0}},$$
		the state of algorithm $\A_j$ after making $t$ queries. We now prove by induction that for all $t\in \{0,1,\ldots, T\}$
		\begin{align}
		\label{eq:inductioninequality}
		\nrm{\ket{\psi_j^{(t)}}-\ket{\psi_0^{(t)}}}\leq \sum_{\tau=0}^{t-1} \nrm{(\mathrm{O}'_{\!f_j}-\mathrm{O}'_{\!f_0})\ket{\psi_0^{(\tau)}}}.
		\end{align}
		For $t=0$ the left-hand side is $0$, so the base case holds. Let us assume that \eqref{eq:inductioninequality} holds for $t-1$, we prove the inductive step as follows:
		\begin{align*}
		\nrm{\ket{\psi_j^{(t)}}-\ket{\psi_0^{(t)}}}
		&=\nrm{U_t\mathrm{O}'_{\!f_j} \ket{\psi_j^{(t-1)}}-U_t\mathrm{O}'_{\!f_0}\ket{\psi_0^{(t-1)}}}	\\
		&=\nrm{\mathrm{O}'_{\!f_j} \ket{\psi_j^{(t-1)}}-\mathrm{O}'_{\!f_0}\ket{\psi_0^{(t-1)}}}	 \tag{since norms are unitarily invariant}  	\\		
		&=\nrm{\mathrm{O}'_{\!f_j} \left(\ket{\psi_j^{(t-1)}}-\ket{\psi_0^{(t-1)}}+\ket{\psi_0^{(t-1)}}\right)-\mathrm{O}'_{\!f_0}\ket{\psi_0^{(t-1)}}}	   	\\	
		&\leq\nrm{\mathrm{O}'_{\!f_j} \left(\ket{\psi_j^{(t-1)}}-\ket{\psi_0^{(t-1)}}\right)} + \nrm{(\mathrm{O}'_{\!f_j} -\mathrm{O}'_{\!f_0})\ket{\psi_0^{(t-1)}}}	   	\tag{triangle inequality}\\	
		&=\nrm{\ket{\psi_j^{(t-1)}}-\ket{\psi_0^{(t-1)}}} + \nrm{(\mathrm{O}'_{\!f_j} -\mathrm{O}'_{\!f_0})\ket{\psi_0^{(t-1)}}}	   	\\	
		&\leq \sum_{\tau=0}^{t-1} \nrm{(\mathrm{O}'_{\!f_j}-\mathrm{O}'_{\!f_0})\ket{\psi_0^{(\tau)}}}. \tag{by the induction hypothesis}
		\end{align*}

		Since $\ket{\psi_j}=\ket{\psi_j^{(T)}}$, we additionally have that
		\begin{equation*}
		1/9\leq \nrm{\ket{\psi_0}-\ket{\psi_j}}^2 
		\leq \left(\sum_{\tau=0}^{T-1} \nrm{(\mathrm{O}'_{\!f_j}-\mathrm{O}'_{\!f_0})\ket{\psi_0^{(\tau)}}}\right)^{\!\!2}
		\leq T\sum_{\tau=0}^{T-1} \nrm{(\mathrm{O}'_{\!f_j}-\mathrm{O}'_{\!f_0})\ket{\psi_0^{(\tau)}}}^2,
		\end{equation*}
		where the last inequality uses the Cauchy-Schwarz inequality.
		Averaging the above inequality over the different oracles, we have 
		\begin{equation}\label{eq:TSquareGen}
		1/9\leq \frac{T}{d}\sum_{\tau=0}^{T-1}\sum_{j\in [d]} \nrm{(\mathrm{O}'_{\!f_j}-\mathrm{O}'_{\!f_0})\ket{\psi_0^{(\tau)}}}^2
		\leq \frac{T^2}{d}\max_{\tau}\sum_{j\in [d]} \nrm{(\mathrm{O}'_{\!f_j}-\mathrm{O}'_{\!f_0})\ket{\psi_0^{(\tau)}}}^2.
		\end{equation}
		We now upper bound the right-hand side of Eq.~\eqref{eq:TSquareGen} for an arbitrary pure state $\ket{\psi}$ to conclude the proof of the theorem.
		\begin{align*}
		\sum_{j\in [d]} \nrm{(\mathrm{O}'_{\!f_j}-\mathrm{O}'_{\!f_0})\ket{\psi}}^2
		&=\sum_{j\in [d]} \nrm{\left(\sum_{x\in G'}\ketbra{x}{x}\right)(\mathrm{O}'_{\!f_j}-\mathrm{O}'_{\!f_0})\left(\sum_{x'\in G'}\ketbra{x'}{x'}\right)\ket{\psi}}^2\\
		&=\sum_{j\in [d]} \nrm{\sum_{x\in G'}\ketbra{x}{x}(\mathrm{O}'_{\!f_j}-\mathrm{O}'_{\!f_0})\ketbra{x}{x}\ket{\psi}}^2 \tag{since $\braketbra{x}{\mathrm{O}'_{\!f_j}}{x'}=0$ for $x\neq x'$}\\			
		&=\sum_{x\in G'} \left|\braket{x}{\psi}\right|^2 \sum_{j\in [d]} \left|\bra{x}(\mathrm{O}'_{\!f_j}-\mathrm{O}'_{\!f_0})\ket{x}\right|^2\\					
		&\leq \max_{x\in G'} \sum_{j\in [d]} \left|\bra{x}(\mathrm{O}'_{\!f_j}-\mathrm{O}'_{\!f_0})\ket{x}\right|^2\\					
		&= \max_{x\in G} \sum_{j\in [d]} \left|e^{if_j(x)}-e^{if_0(x)}\right|^2 \tag{note the $G'\rightarrow G$ change}\\				
		&\leq \max_{x\in G} \sum_{j\in [d]} \min\left(\left|f_j(x)-f_0(x)\right|^2,4\right) \tag{since $|e^{iz}-e^{iy}|\leq \min\left(|z-y|,2\right)$}
		\end{align*}
		Combining this upper bound with Eq.~\eqref{eq:TSquareGen}, we have
		\begin{equation*}
		\frac{1}{9}\leq \frac{T^2}{d}\max_{x\in G} \sum_{j\in [d]} \min\left(\left|f_j(x)-f_0(x)\right|^2,4\right) ,
		\end{equation*}
		which in turn gives us the desired lower bound 
		\begin{equation*}
			T\geq \frac{\sqrt{d}}{3}\left/\sqrt{\max_{x\in G} \sum_{j\in [d]} \min\left(\left|f_j(x)-f_0(x)\right|^2,4\right)}\right..
		\end{equation*}
		Finally note that a controlled phase oracle is also a phase oracle, and the inverse oracles have the same operator distance as the non-inverted versions, therefore the above lower bound holds even if we allow controlled phase oracles or inverse oracle calls.
	\end{proof}

    \subsection{A family of functions for proving the gradient computation lower bound}
    
	Now we prove our lower on gradient computation by constructing a family of functions $\F$ for which the corresponding phase oracles $\{\mathrm{O}_f:f\in \F\}$ require $\Omega(\sqrt{d}/\eps)$ queries to distinguish them from the constant $0$ function (as shown by Theorem~\ref{thm:arbHybLow}), but the functions in $\F$ can be uniquely identified by calculating their gradient at $\pmb{0}$ with accuracy $\eps$. In particular, this implies that calculating an approximation of the gradient vector for these functions must be at least as hard as distinguishing the phase oracles corresponding to functions in $\F$.
    
    \begin{lemma} \label{lemma:expFunFamily}
    	Let $d\in \N$, $\eps,c\in\R_+$ and let us define the following $\R^d\rightarrow \R$ functions: 
    	$f_*(\pmb{x}):=0$ and $f_j(\pmb{x}):=2\eps x_j e^{-c^2\nrm{\pmb{x}}^2/2}$ for all $j\in[d]$.  Consider the family of functions $\F:=\bigcup_{j\in[d]}\{f_j(\pmb{x})\}$, then for all $\pmb{x}\in \R^d$ we have that
    	\begin{equation*}
    		\sum_{j\in [d]} \left|f_j(\pmb{x})-f_*(\pmb{x})\right|^2\leq\frac{4\eps^2}{ec^2}.
    	\end{equation*}
    \end{lemma}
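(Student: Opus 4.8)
The plan is to reduce the $d$-dimensional sum to a single-variable maximization. Since $f_*\equiv 0$, for each $j\in[d]$ we have $\left|f_j(\pmb{x})-f_*(\pmb{x})\right|^2 = 4\eps^2 x_j^2 e^{-c^2\nrm{\pmb{x}}^2}$, and crucially the exponential factor $e^{-c^2\nrm{\pmb{x}}^2}$ does not depend on the index $j$. Hence I would pull it out of the sum and use $\sum_{j\in[d]} x_j^2 = \nrm{\pmb{x}}^2$ to obtain
$$\sum_{j\in[d]} \left|f_j(\pmb{x})-f_*(\pmb{x})\right|^2 = 4\eps^2 \nrm{\pmb{x}}^2 e^{-c^2\nrm{\pmb{x}}^2}.$$

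The right-hand side depends on $\pmb{x}$ only through the scalar $t:=\nrm{\pmb{x}}^2\geq 0$, so it suffices to bound $g(t):=t\,e^{-c^2 t}$ over $t\in[0,\infty)$. A routine calculation gives $g'(t)=(1-c^2 t)e^{-c^2 t}$, which vanishes only at $t=1/c^2$; since $g(0)=0$ and $g(t)\to 0$ as $t\to\infty$, this critical point is the global maximum, with value $g(1/c^2)=1/(ec^2)$. Substituting back yields
$$\sum_{j\in[d]} \left|f_j(\pmb{x})-f_*(\pmb{x})\right|^2 = 4\eps^2\, g\!\left(\nrm{\pmb{x}}^2\right) \leq \frac{4\eps^2}{ec^2},$$
which is exactly the claimed bound.

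There is no substantial obstacle here; the only point worth verifying carefully is that the unique critical point of $g$ is indeed a maximum rather than, say, an inflection or a minimum, which the boundary behaviour $g(0)=0=\lim_{t\to\infty}g(t)$ settles immediately. It is also worth noting that the bound is \emph{attained}, at any $\pmb{x}$ with $\nrm{\pmb{x}}=1/c$, so the constant $1/e$ in the statement is tight and cannot be improved. This sharpness is what makes the lemma directly usable in the intended application of Theorem~\ref{thm:arbHybLow}, where the maximum over $\pmb{x}\in G$ of $\sum_{j}\min(|f_j-f_*|^2,4)$ appears in the denominator of the query lower bound.
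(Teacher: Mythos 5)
Your proof is correct and follows essentially the same route as the paper: both reduce the sum to $4\eps^2\nrm{\pmb{x}}^2 e^{-c^2\nrm{\pmb{x}}^2}$ and then bound the scalar function, the paper invoking $ze^{-z}\leq 1/e$ with $z:=c^2\nrm{\pmb{x}}^2$ where you maximize $t\,e^{-c^2 t}$ directly. Your added remark that the bound is attained at $\nrm{\pmb{x}}=1/c$ is a correct and harmless observation.
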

	\begin{proof}
   		\begin{align*}
		\sum_{j\in [d]} \left|f_j(\pmb{x})-f_*(\pmb{x})\right|^2
		&= \sum_{j\in [d]} \left|2\eps x_j e^{-c^2 \nrm{\pmb{x}}^2/2}\right|^2 \\			
		&= 4\eps^2 \nrm{\pmb{x}}^2 e^{-c^2\nrm{\pmb{x}}^2}\\			
		&\leq \frac{4\eps^2}{ec^2}. \tag{using $ze^{-z}\leq 1/e$ with $z:=c^2\nrm{\pmb{x}}^2$}
		\end{align*}
	\end{proof}
    
    Now we prove bounds on the partial derivatives of the above functions to determine their smoothness.
    
    \begin{lemma}\label{lemma:normalDerivatives}
    	Let $d,k$ be positive integers, $c\in\mathbb{R}_+$ and $\pmb{x}\in\mathbb{R}^d$. Then, the function $f_j(\pmb{x}):=cx_j e^{-\frac{c^2\nrm{\pmb{x}}^2}{2}}$ 
    	satisfies the following: for every index-sequence $\alpha\in[d]^k$, the derivative of $f$ is bounded by $\left|\partial_{\alpha}f(\pmb{0})\right|\leq c^k k^{\frac{k}{2}}$. Moreover $\nabla f_j(\pmb{0})=c\pmb{e}_j$.
    \end{lemma}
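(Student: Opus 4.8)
The plan is to exploit the fact that the Gaussian factor separates across coordinates, so that $f_j$ is a product of one-dimensional functions and every mixed partial derivative factorizes. Writing $\nrm{\pmb{x}}^2=\sum_i x_i^2$, we have
$$ f_j(\pmb{x}) = \prod_{i\in[d]} \phi_i(x_i), \qquad \phi_i(t) := e^{-c^2 t^2/2}\ (i\neq j), \quad \phi_j(t) := c\,t\,e^{-c^2 t^2/2}. $$
For an index-sequence $\alpha\in[d]^k$, let $m_i:=|\{\ell\in[k]:\alpha_\ell=i\}|$ denote the multiplicity of coordinate $i$, so that $\sum_i m_i=k$ and $\partial_\alpha=\partial_1^{m_1}\cdots\partial_d^{m_d}$. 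Because the $\phi_i$ depend on disjoint variables, $\partial_\alpha f_j(\pmb{0})=\prod_i \phi_i^{(m_i)}(0)$, which reduces the whole problem to bounding one-dimensional derivatives of $g(t):=e^{-c^2t^2/2}$ and of $t\mapsto c\,t\,g(t)$ at the origin.

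Next I would compute these scalar derivatives. Using the Hermite-polynomial identity for derivatives of a Gaussian (or a direct induction), $g^{(m)}(0)=0$ when $m$ is odd, while $|g^{(m)}(0)|=c^m(m-1)!!$ when $m$ is even; and by the product rule $\phi_j^{(m)}(0)=c\,m\,g^{(m-1)}(0)$, so $\phi_j^{(m)}(0)=0$ for $m$ even and $|\phi_j^{(m)}(0)|=c^m\,m!!$ for $m$ odd. The key elementary estimate is then $|\phi_i^{(m)}(0)|\leq c^m m^{m/2}$ in all cases: each of the double factorials $(m-1)!!$ and $m!!$ ends in a trailing factor $1$, so the number of factors exceeding $1$ is at most $m/2$ and each such factor is at most $m$; for $m=0$ this holds with the convention $0^0=1$.

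Finally I would multiply the per-coordinate bounds to get
$$ |\partial_\alpha f_j(\pmb{0})| = \prod_i |\phi_i^{(m_i)}(0)| \leq \prod_i c^{m_i} m_i^{m_i/2} = c^k \prod_i m_i^{m_i/2}, $$
and close the argument with the inequality $\prod_i m_i^{m_i/2}\leq k^{k/2}$. Taking logarithms, this is $\sum_i m_i\ln m_i\leq k\ln k$, which follows at once from $m_i\leq k$ (hence $\ln m_i\leq \ln k$) together with $\sum_i m_i=k$, using $0\ln 0=0$. The claim $\nabla f_j(\pmb{0})=c\pmb{e}_j$ is just the special case $k=1$: for $\alpha=(i)$ we get $\phi_j^{(1)}(0)=c$ when $i=j$ and $g^{(1)}(0)=0$ when $i\neq j$. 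The only genuinely delicate point is getting the exponent exactly right — obtaining $m^{m/2}$ rather than the naive $m^{(m+1)/2}$ that a crude count of double-factorial factors would give, which is precisely where exploiting the trailing unit factor matters — together with verifying the final bound $\sum_i m_i\ln m_i\le k\ln k$; everything else is routine bookkeeping.
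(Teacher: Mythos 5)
Your proof is correct and follows essentially the same route as the paper's: factor $f_j$ into a product of one-dimensional functions, read off the derivatives at $0$ from the Gaussian Taylor series (your double factorials $(m-1)!!$ and $m!!$ are exactly the paper's $(\tfrac12)^\ell\frac{(2\ell)!}{\ell!}$ and $(\tfrac12)^\ell\frac{(2\ell+1)!}{\ell!}$), bound each by $c^m m^{m/2}$, and multiply. You additionally make explicit the final combination step $\prod_i m_i^{m_i/2}\le k^{k/2}$, which the paper leaves implicit in "combining the results".
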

    \begin{proof}
		Observe that
		\begin{equation}\label{eq:normalProduct}
			f(\pmb{x})= cx_j e^{-\frac{c^2 x_j^2}{2}}\prod_{i\neq j}^d e^{-\frac{c^2 x_i^2}{2}} ,
		\end{equation}
		and as one can see from the Taylor series $e^{-\frac{(cx)^2}{2}}=\sum_{\ell=0}^{\infty}\left(-\frac{1}{2}\right)^{\!\ell}\frac{(cx)^{2\ell}}{\ell!}$ we have for $k\geq 0$
		\begin{equation}\!\! 
		\left.\partial^k_{i}e^{-\frac{c^2 x_i^2}{2}}\right|_{x_i=0}=\left\{
			\begin{array}{ll}
				(-\frac{1}{2})^{\ell}c^{2\ell} \frac{(2\ell)!}{\ell!}	& k \!=\! 2\ell \\
							0											& k \!=\! 2\ell + 1
			\end{array}\right. , 
		\quad
		\left.\partial^k_{j} cx_j e^{-\frac{c^2 x_j^2}{2}}\right|_{x_j=0}=\left\{
			\begin{array}{ll}
				0															& k \!=\! 2\ell \\
				(-\frac{1}{2})^{\ell}c^{2\ell+1} \frac{(2\ell+1)!}{\ell!}	& k \!=\! 2\ell + 1
			\end{array}\right.\!\! . 
		\end{equation}
        Also observe that, for $\ell\geq 0$ 
		\begin{equation}\label{eq:normaloddDerivatives}
			\frac{(2\ell)!}{\ell!}\leq(2\ell)^{\ell} 
			\qquad\text{ and }\qquad 
			\left(\frac{1}{2}\right)^{\!\ell} \frac{(2\ell+1)!}{\ell!}\leq (2\ell+1)^{\ell+1/2}.
		\end{equation}
		The statements of the lemma follow by combining the results \eqref{eq:normalProduct}-\eqref{eq:normaloddDerivatives}.
    \end{proof}  
    
    Now use the above lemmas combined with the hybrid method Theorem~\ref{thm:arbHybLow} to prove our general lower bound result which implies the earlier stated informal Theorem~\ref{thm:informalLowerBoundFamily}.
    
    \begin{restatable}{theorem}{querylowerbound}\label{thm:queryLowerBound}
   		Let $\eps, c,d> 0$ such that $2\eps\leq c$ and for an arbitrary finite set $G\subseteq\R^d$ let
        $$\mathcal{H}=\underset{\pmb{x}\in G}{\mathrm{Span}}\{\ket{\pmb{x}}: \pmb{x}\in G\}.
        $$
   		Suppose $\A$ is a $T$-query quantum algorithm (assuming query access to phase oracle $\mathrm{O}_{\!f}\!:\ket{\pmb{x}}\!\rightarrow\! e^{if(\pmb{x})}$, acting on $\mathcal{H}$) for analytic functions $f:\mathbb{R}^d\rightarrow\mathbb{R}$ satisfying
        $$|\partial_{\alpha}f(\pmb{0})|\leq c^k k^{\frac{k}{2}} \quad \text{ for all } k\in\mathbb{N}, \alpha\in[d]^k,$$ 
        such that $\A$ computes an $\eps$-approximation $\tilde{\nabla} f(\pmb{0})$ of the gradient at $\pmb{0}$ 
        (i.e., $\nrm{\tilde{\nabla} f(\pmb{0})-\nabla f(\pmb{0})}_\infty\!< \eps$), 
        succeeding with probability at least~$2/3$. Then $T> \frac{c\sqrt{d}}{4\eps}$.
   	\end{restatable}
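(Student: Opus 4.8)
The plan is to reduce the problem of \emph{distinguishing} phase oracles to the problem of \emph{gradient computation}, and then invoke the hybrid-method bound of Theorem~\ref{thm:arbHybLow}. Concretely, I would take the family $\F := \{f_j : j\in[d]\}$ together with the reference function $f_* := 0$, where $f_j(\pmb{x}) := 2\eps x_j e^{-c^2\nrm{\pmb{x}}^2/2}$, exactly as in Lemma~\ref{lemma:expFunFamily}. First I would verify that every $f_j$ is a legal input for $\A$: each $f_j$ is analytic, and writing $f_j = \tfrac{2\eps}{c}\,h_j$ with $h_j(\pmb{x}) := c x_j e^{-c^2\nrm{\pmb{x}}^2/2}$ and using $\tfrac{2\eps}{c}\leq 1$ (the hypothesis $2\eps\leq c$), Lemma~\ref{lemma:normalDerivatives} applied to $h_j$ gives
$$
\left|\partial_\alpha f_j(\pmb{0})\right| = \tfrac{2\eps}{c}\left|\partial_\alpha h_j(\pmb{0})\right| \leq \tfrac{2\eps}{c}\, c^k k^{\frac{k}{2}} \leq c^k k^{\frac{k}{2}} \quad \text{ for all } \alpha\in[d]^k,
$$
so $f_j$ satisfies the required derivative bound. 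The same lemma yields $\nabla f_j(\pmb{0}) = \tfrac{2\eps}{c}\,\nabla h_j(\pmb{0}) = 2\eps\,\pmb{e}_j$, whereas $\nabla f_*(\pmb{0}) = \pmb{0}$.

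The core of the argument is the reduction itself. Since $\nrm{\nabla f_j(\pmb{0}) - \nabla f_*(\pmb{0})}_\infty = 2\eps$, any algorithm that outputs an $\eps$-accurate gradient in $\nrm{\cdot}_\infty$ automatically separates these two functions. I would run $\A$ on $\mathrm{O}_{\tilde f}$ and threshold the $j$-th output coordinate at $\eps$: if $\tilde f = f_j$ then with probability at least $2/3$ we have $|\tilde{\nabla}_j - 2\eps| < \eps$, hence $\tilde{\nabla}_j > \eps$; and if $\tilde f = f_*$ then with probability at least $2/3$ we have $|\tilde{\nabla}_j| < \eps$, hence $\tilde{\nabla}_j < \eps$. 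Thus for every $j\in[d]$ this procedure distinguishes $f_j$ from $f_*$ with success probability at least $2/3$, using only the $T$ queries of $\A$. Note that $f_*=0\notin\F$, as required by Theorem~\ref{thm:arbHybLow}.

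Finally I would feed this into Theorem~\ref{thm:arbHybLow} with $|\F| = d$. The denominator is controlled by Lemma~\ref{lemma:expFunFamily}, which bounds
$$
\sum_{j\in[d]} \min\!\left(\left|f_j(\pmb{x}) - f_*(\pmb{x})\right|^2, 4\right) \leq \sum_{j\in[d]} \left|f_j(\pmb{x})\right|^2 \leq \frac{4\eps^2}{ec^2}
$$
for every $\pmb{x}\in G$, so the maximum over $\pmb{x}\in G$ is at most $\tfrac{4\eps^2}{ec^2}$. The hybrid bound then yields
$$
T \geq \frac{\sqrt{d}}{3}\Big/\sqrt{\frac{4\eps^2}{ec^2}} = \frac{c\sqrt{e}}{6}\cdot\frac{\sqrt{d}}{\eps},
$$
and since $\sqrt{e}/6 > 1/4$ (because $e > 9/4$), this gives the claimed bound $T > \tfrac{c\sqrt{d}}{4\eps}$. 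The proof is short because the real work is carried by the supporting lemmas; within this argument the only genuine subtlety is tracking the constants carefully enough that the hybrid bound strictly beats the target $\tfrac{c\sqrt{d}}{4\eps}$ rather than merely matching it up to a constant factor, which is precisely the slack that $\sqrt{e}/6 > 1/4$ provides.
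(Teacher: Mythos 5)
Your proposal is correct and follows essentially the same route as the paper: the same hard family $f_j(\pmb{x})=2\eps x_j e^{-c^2\nrm{\pmb{x}}^2/2}$ versus $f_*=0$, smoothness via Lemma~\ref{lemma:normalDerivatives}, the reduction from distinguishing to $\eps$-accurate gradient estimation, and the final numerical bound $T\geq \frac{c\sqrt{e}}{6}\cdot\frac{\sqrt{d}}{\eps}>\frac{c\sqrt{d}}{4\eps}$ from Theorem~\ref{thm:arbHybLow} and Lemma~\ref{lemma:expFunFamily}. Your write-up is in fact slightly more explicit than the paper's in two harmless places: you spell out the rescaling $f_j=\frac{2\eps}{c}h_j$ that makes the hypothesis $2\eps\leq c$ do its work, and you phrase the reduction as a per-$j$ threshold test, which is exactly the form of distinguishability that Theorem~\ref{thm:arbHybLow} requires.
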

   	\begin{proof}
		Inspired by Lemma~\ref{lemma:expFunFamily}, we first define a set of ``hard'' functions, which we will use to prove our  lower bound
        Let $f_*:=f_0:=0$ and $f_j(\pmb{x}):=2\eps x_j e^{-c^2\nrm{\pmb{x}}^2/2}$ for all $j\in[d]$.  Consider the family of functions $\F:=\bigcup_{j\in[d]}\{f_j(\pmb{x})\}$.
   		By Lemma~\ref{lemma:normalDerivatives}, every $f\in \F$ satisfies $|\partial_{\alpha} f(\pmb{0})|\leq c^k k^{\frac{k}{2}}$ for all $k\in\mathbb{N}$ and $\alpha\in[d]^k$.
   		
   		Suppose we are given a phase oracle $\mathrm{O}_{\!f}$ acting on $\mathcal{H}$,
   		such that $\mathrm{O}_{\!f}=\mathrm{O}_{\!f_j}:\ket{\pmb{x}}\rightarrow e^{if_j(\pmb{x})}\ket{\pmb{x}}$ for some unknown $j\in \{0,\ldots,d\}$.
   		Since $\nabla f_0(\pmb{0})=\pmb{0}$ and $\nabla f_j(\pmb{0})=2\eps \pmb{e}_j$, using the $T$-query algorithm $\A$ in the theorem statement, one can determine the $j\in \{0,\ldots,d\}$ for which $f_j=f$ with success probability at least $2/3$. In particular we can distinguish the case $f=f_*$ from $f\in\F$, and thus by Theorem~\ref{thm:arbHybLow} and Lemma~\ref{lemma:expFunFamily} we get that
   		\begin{equation*}
	   		T\geq\sqrt{d}\frac{c}{\eps}\sqrt{\frac{e}{36}}>\frac{c\sqrt{d}}{4\eps}.
   		\end{equation*}
	\end{proof}
		
	\subsection{Lower bound for more regular functions}	\label{sec:regularFunctionsDiscussion}	
	Note that the functions for which we apply our results in this paper tend to satisfy a stronger condition than our lower bound example in Theorem~\ref{thm:queryLowerBound}.
	They usually satisfy\footnote{Without the $k^{\frac{k}{2}}$ factor -- i.e., they are of Gevrey class $G^0$ instead of $G^{\frac12}$.} $\left|\partial_{\alpha}f(\pmb{x}_0)\right|\leq c^k$. 
	We conjectured that the same lower bound holds for this subclass of functions as well.
	
	Very recently, Cornelissen~\cite{CornelissenThesis} managed to prove this conjecture\anote{(also building on Theorem~\ref{thm:arbHybLow})}.
	Moreover, he showed an $\Omega\left(d^{\frac{1}{2}+\frac{1}{p}}/\eps\right)$ lower bound for $\eps$-precise gradient computation in $p$-norm \emph{for every} $p\in [1,\infty]$. Note that these results shows that our algorithm is essentially optimal for a large class of gradient computation problems.

	Now we argue heuristically as to why Jordan's algorithm should not be able to calculate the gradient with significantly fewer queries for the above mentioned class of functions.	
	Algorithm~\ref{alg:generic} performs a Fourier transform, after applying a phase oracle that puts a phase $\sim \tilde{f}(\pmb{x})/\eps$ to the state $\pmb{x}\in G_n^d$.
	We can prove that for $n\geq \log(3c/\eps)$ the Fourier transform will provide the coordinates of $\nabla f(\pmb{0})$ up to error $\bigO{\eps}$ with high probability, 
	given that for most of the points $\pmb{x}\in G_n^d$ we have $|\tilde{f}(\pmb{x})/\eps-\nabla f(\pmb{0})\cdot \pmb{x}/\eps|\ll 1$. 
	Using a fractional phase oracle we can prepare phases of the form $\tilde{f}(\pmb{x})/\eps=\sum_{\pmb{y}\in S}\lambda_{\pmb{y}}f(\pmb{x})$ for some $S\subseteq G_n^d$,
	and $(\lambda_{\pmb{y}})\in[-1,1]^{|S|}$, where $S$ (and possibly $(\lambda_{\pmb{y}})$) might depend on $\pmb{x}$.
	The query complexity is thus driven by $|S|$. 
	
	Let us assume that $\nabla f(\pmb{0})= c\cdot\pmb{1}$, and observe that since $G_n^d$ is symmetric around $\pmb{0}$ we get that the typical value of $c\cdot\pmb{1}\cdot \pmb{x}=c\sum_{i=1}^{d}x_i$ is $\Theta(c\sqrt{d})$, 
	as can be seen for example by the central limit theorem. If we also assume that $|f|\leq 1$, then by the triangle inequality we see that $|S|=\Omega(\frac{c\sqrt{d}}{\eps})$.
	
	To conclude we still need to show that there exists an analytic function $f:\mathbb{R}^d\rightarrow \mathbb{R}$, which has $|f|\leq 1$ and $\nabla f(\pmb{0})=c\cdot \pmb{1}$ while it also satisfies for all $k\in\mathbb{N}$ and $\alpha\in[d]^k$ that $|\partial_\alpha f|\leq c^{k}$. At first sight this set of requirements seem slightly contradicting, but we found a very simple example of such a function:
	$$
		f(\pmb{x}):=\sin(cx_1+cx_2+\ldots+cx_d) \,\text{ for which }\, \partial_\alpha f(\pmb{x})=c^{|\alpha|}\sin^{(|\alpha|)}(cx_1+cx_2+\ldots+cx_d).
	$$
	The above argument also shows that placing the grid $G_n^d$ symmetrically around $\pmb{0}$ is of crucial importance. For example if the midpoint would be shifted by say $\delta\pmb{1}$, 
	then the typical magnitude of $\pmb{1}\cdot \pmb{x}=\pmb{1}\cdot \left(\delta\pmb{1}+\pmb{x}^{(\text{symm.})}\right)$ would be $\Theta(\delta d + \sqrt{d})$, 
	which would give rise to an elevated lower bound when $\delta\gg 1/\sqrt{d}$.

\section{Applications}\label{sec:applications}
In this section, we first consider variational quantum eigensolvers and QAOA algortihms, which can be treated essentially identically using our techniques. Then we consider the training of quantum autoencoders, which requires a slightly different formalism. We show that our gradient descent algorithms can be applied to these problems by reducing such problems to a probability maximization problem. For each application our quantum gradient computation algorithm yields a quadratic speedup in terms of the dimension. 

\subsection{Variational quantum eigensolvers}\label{sec:variational}
In recent years, variational quantum eigensolvers and QAOA~\cite{peruzzo2014variational,wecker2015progress,farhi2014qaoa} are favoured methods for providing low-depth quantum algorithms for solving important problems in quantum simulation and optimization. Current quantum computers are limited by decoherence, hence the option to solve optimization problems using  very short circuits can be enticing even if such algorithms are polynomially more expensive than alternative strategies that  could possibly require long gate sequences.  Since these methods are typically envisioned as being appropriate only for low-depth applications, comparably less attention is paid to the question of what their complexity would be, if they were executed on a fault-tolerant quantum computer. In this paper, we consider the case that these algorithms are in fact implemented on a fault-tolerant quantum computer and show that the gradient computation step in these algorithms can be performed quadratically faster compared to the earlier approaches that were tailored for pre-fault-tolerant applications. 

Variational quanutm eigensolvers (VQEs) are widely used to estimate the eigenvalue corresponding to some eigenstate of a Hamiltonian.  The idea behind these approaches is to begin with an efficiently parameterizable ansatz to the eigenstate.  For the example of ground state energy estimation, the ansatz state is often taken to be a unitary coupled cluster expansion.  The terms in that unitary coupled cluster expansion are then varied to provide the lowest energy for the groundstate.  For excited states a similar argument can be applied, but minimizing a free-energy rather than ground state energy is the most natural approach.

For simplicity, let us focus on the simplest (and most common) example of groundstate estimation.  Consider a Hamiltonian of the form $H=\sum_j a_j U_j$ where $U_j$ is a unitary matrix, $a_j>0$ and $\sum_{j}a_j=1$.  This assumption can be made without loss of generality by renormalizing the Hamiltonian and absorbing signs into the unitary matrix.  Let the state $\ket{\psi(\pmb{x})}$ for $\pmb{x}\in \mathbb{R}^d$ be the variational state prepared by the Prep. and Tuned circuits in Fig.~\ref{subfig:quantumlyTuned}.  Our objective function is then to estimate
\begin{equation}
\pmb{x}_{\rm opt}= \underset{\pmb{x}}{{\rm argmin}}\left(\bra{\psi(\pmb{x})} \sum_j a_j U_j\ket{\psi(\pmb{x})}\right),
\end{equation}
which is real valued because $H$ is Hermitian.

In order to translate this problem to one that we can handle using our gradient descent algorithm, we construct a verifier circuit that given $\ket{\psi(\pmb{x})}$ sets an ancilla qubit to $1$ with probability $p=(1+\bra{\psi(\pmb{x})}H\ket{\psi(\pmb{x})})/2$. This is possible since $\nrm{H}\leq 1$ due to the assumption that $\sum_j a_j =1$.  This requires us to define new unitary oracles that are used to implement the Hamiltonian.

\begin{align}
{\rm prepareW}\ket{0}&= \sum_{j}\sqrt{a_j}\ket{j},\label{eq:prepareW}\\
{\rm selectH}&= \sum_j \ketbra{j}{j}\otimes U_j.\label{eq:selectH}
\end{align}

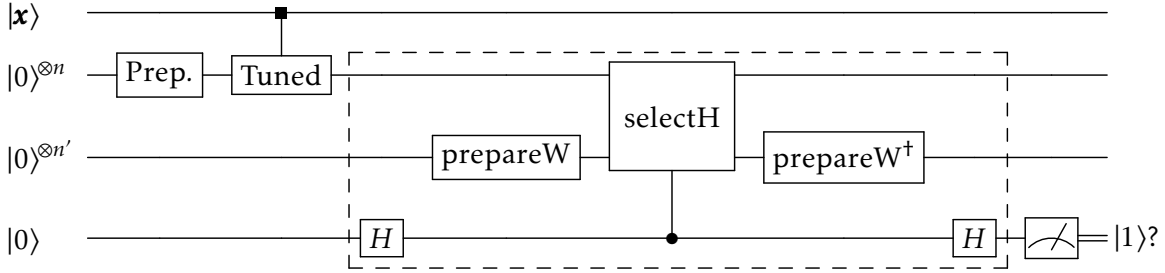
\begin{figure}[ht]
\[
\Qcircuit @C=1.0em @R=1.2em {
\lstick{\ket{\pmb{x}}^{\phantom{\!\otimes n'}}} 			& \qw				&\ctrlA \qw				&\qw							&\qw					&\qw							&\qw							&\qw	&\qw&\qw\\
\lstick{\ket{0}^{\!\otimes n\phantom{'}}}  &\gate{{\rm Prep.}}	&\gate{{\rm Tuned}}\qwx	&\push{\rule{0mm}{3mm}}\qw	&\qw					&\multigate{1}{{\rm selectH}}	&\qw							&\qw	&\qw&\qw\\
\lstick{\ket{0}^{\!\otimes n'}} &\qw				&\qw					&\qw							&\gate{{\rm prepareW}}	&\ghost{{\rm selectH}}			&\gate{{\rm prepareW}^\dagger}	&\qw	&\qw&\qw&\\
\lstick{\ket{0}^{\phantom{\!\otimes n'}}} 				&\qw				&\qw					&\gate{H}						&\qw					&\ctrl{-1}						&\qw							&\gate{H}&\meter&\rstick{\kern-1mm\ket{1}?}\cw
\gategroup{2}{4}{4}{8}{3mm}{--}	
}
\]
\caption{Circuit for converting groundstate energy to a probability for VQE.  The dashed box denotes the verifier circuit, $V$, in Fig.~\ref{subfig:quantumlyTuned} which corresponds here to the Hadamard test circuit.  Probability of measuring $1$ is $1/2 - \bra{\psi(\pmb{x})} H \ket{\psi(\pmb{x})}/2$. Note here Prep. circuit is the identity gate, which is kept in the circuit only for clarity.  Note that in this contexts, the final ${\rm prepareW}^{\dagger}$ can be omitted. \label{fig:vqecirc}}
\end{figure}
We can then use the unitaries \eqref{eq:prepareW}-\eqref{eq:selectH} to compute the query complexity of performing a single variational step in a VQE algorithm.

\begin{corollary}
Let ${\rm Tuned} = \prod_{j=1}^d e^{-i H_j x_j}$ for $\|H_j\|=1$ and $\pmb{x}\in \mathbb{R}^d$ and let ${\rm prep} =I$.  If $H=\sum_{j=1}^M a_j H_j$ for unitary $H_j$ with $a_j\ge 0$ and $\sum_j a_j=1$ then the number of queries to prepareW, selectH and Tuned needed to output a qubit string $\ket{\pmb{y}}$ such that $\left|\nabla \bra{\psi(\pmb{x})}H \ket{\psi(\pmb{x})} - \pmb{y}\right|\le \eps$ with probability at least $2/3$ is $\bigOt{\sqrt{d}/\eps}$.
\end{corollary}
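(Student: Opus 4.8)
The plan is to read this corollary as a direct instantiation of Theorem~\ref{thm:finalScaling}. The work splits into three parts: manufacturing a probability oracle from the Hadamard-test circuit in Figure~\ref{fig:vqecirc}, verifying that the induced objective meets the derivative bound required by Theorem~\ref{thm:finalScaling}, and then routing everything through the probability-to-phase conversion of Theorem~\ref{thm:phaseConv}. Throughout I interpret the error $|\nabla\bra{\psi(\pmb{x})}H\ket{\psi(\pmb{x})}-\pmb{y}|\le\eps$ in the $\nrm{\cdot}_\infty$ norm, consistent with the $\sqrt d$ scaling of Theorem~\ref{thm:finalScaling}.

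First I would set $p(\pmb{x}):=\tfrac12-\tfrac12\bra{\psi(\pmb{x})}H\ket{\psi(\pmb{x})}$, the probability that the verifier in Figure~\ref{fig:vqecirc} outputs $1$. That circuit is, by construction, a probability oracle $U_p$ for $p$, and one invocation of $U_p$ (or $U_p^\dagger$) costs a single use of each of prepareW, selectH and Tuned, i.e.\ $\bigO{1}$ queries. Since $\nabla\bra{\psi(\pmb{x})}H\ket{\psi(\pmb{x})}=-2\nabla p(\pmb{x})$, an $\eps/2$-accurate estimate of $\nabla p$ in $\nrm{\cdot}_\infty$ yields an $\eps$-accurate estimate of the sought gradient after rescaling by $-2$, so it suffices to compute $\nabla p(\pmb{x})$.

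Next I would check the smoothness hypothesis. The full parametrized unitary $U(\pmb{x})$ acting on $\ket{\vec 0}$ has the form of Lemma~\ref{lemma:unitaryDerivative} with $\gamma=1$: the Tuned block is $\prod_{j=1}^d e^{-iH_jx_j}$ with $\nrm{H_j}=1$, while Prep.\ ($=I$), prepareW, selectH and the two Hadamards are fixed, $\pmb{x}$-independent unitaries absorbed into the $U_0,U_j$ factors. Lemma~\ref{lemma:unitaryDerivative} then gives $\nrm{\partial_\alpha U(\pmb{x})}\le 1$ for every index-sequence $\alpha$. Writing $p(\pmb{x})=\braketbra{\vec 0}{\big(U(\pmb{x})^\dagger\Pi\big)\,U(\pmb{x})}{\vec 0}$ for the success projector $\Pi$, the factor $U^\dagger\Pi$ still has all derivatives bounded by $1$ (as $\Pi$ is a fixed projector with $\nrm{\Pi}\le 1$), so Lemma~\ref{lemma:derivativeCombination} applied to the product $(U^\dagger\Pi)\cdot U$ gives $|\partial_\alpha p(\pmb{x})|\le 2^k$ for all $\alpha\in[d]^k$. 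In particular $p$ is analytic and satisfies $|\partial_\alpha p(\pmb{x})|\le c^k k^{k/2}$ with $c=2=\Theta(1)$, so it fits the hypotheses of Theorem~\ref{thm:finalScaling}.

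Finally I would assemble the pieces. Theorem~\ref{thm:phaseConv} simulates the phase oracle $\mathrm{O}_p$ (and the scaled/fractional variants consumed inside Algorithm~\ref{alg:generic}) to precision $\eps'$ using $\bigO{\log(1/\eps')}$ queries to $U_p$; crucially this direction needs no lower bound on $p$, so values of $\langle H\rangle$ near $\pm 1$ cause no difficulty. Running the gradient algorithm of Theorem~\ref{thm:finalScaling} on $p$ with $c=\Theta(1)$ and failure probability $\delta=1/3$ uses $\bigOt{\sqrt d/\eps}$ phase queries; pushing each through Theorem~\ref{thm:phaseConv}, with the per-query precision budgeted down by the $\bigOt{\sqrt d/\eps}$ total query count (a $\polylog$ overhead folded into $\bigOt{\cdot}$), turns these into $\bigOt{\sqrt d/\eps}$ probability-oracle queries, hence $\bigOt{\sqrt d/\eps}$ queries to prepareW, selectH and Tuned. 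The only genuinely new verification is the derivative bound of the previous paragraph, which the preceding subsection essentially already supplies; everything else is bookkeeping of constants and precisions, so I do not anticipate a real obstacle beyond keeping the $\eps/2$-versus-$\eps$ rescaling and the $\eps'$ budgeting mutually consistent.
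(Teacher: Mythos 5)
Your proposal is correct and follows essentially the same route as the paper's own proof: read off the probability $p(\pmb{x})=\tfrac12-\tfrac12\bra{\psi(\pmb{x})}H\ket{\psi(\pmb{x})}$ from the Hadamard-test circuit, convert the resulting probability oracle to a phase oracle via Theorem~\ref{thm:phaseConv}, invoke Theorem~\ref{thm:finalScaling}, and certify $c=\bigO{1}$ through Lemmas~\ref{lemma:unitaryDerivative} and~\ref{lemma:derivativeCombination}. The only cosmetic difference is that you bound $|\partial_\alpha p|$ directly via the product $(U^\dagger\Pi)\cdot U$, whereas the paper expands $H=\sum_p a_p H_p$ and bounds each term separately; both yield the same $2^k$ bound.
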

\begin{proof}
First we argue that the circuit in Fig.~\ref{fig:vqecirc} outputs the claimed probability.  We then convert this into a phase oracle, use our results from Jordan's algorithm and demonstrate that $c\in \bigO{1}$ for this problem to show the claimed complexity.

First, if we examine the gates in Fig.~\ref{fig:vqecirc} we note that the prep and Tuned oracles by definition prepare the state $\ket{\psi(\pmb{x})}$.  In this context the prep circuit is the identity.  While this could be trivially removed from the circuit, we retain it to match the formal description of the model that we give earlier.  Under the assumption that $\sum_j a_j =1$ note that
\begin{align}
\bra{0}\bra{\psi(\pmb{x})}{\rm prepareW}^\dagger ({\rm selectH}){\rm prepareW}\ket{0}\ket{\psi(\pmb{x})} &= \sum_j \sum_k \sqrt{a_ja_k}\braket{k}{j}\otimes \bra{\psi(\pmb{x})}U_j \ket{\psi(\pmb{x})}.\nonumber\\
&=\bra{\psi(\pmb{x})}\sum_j a_j U_j \ket{\psi(\pmb{x})}=\bra{\psi(\pmb{x})}H \ket{\psi(\pmb{x})}.\label{eq:Hamexp}
\end{align}
 Then because prepareW is unitary it follows that controlling the selectH operation enacts the controlled ${\rm prepareW}^\dagger ({\rm selectH}){\rm prepareW}$ operation.  

The claim regarding the probability then follows directly from the Hadamard test, which we prove below for completeness.  Let $\Lambda(U)$ be a controlled unitary operation.  Then
\begin{align}
H\Lambda(U)H\ket{0}\ket{\psi(\pmb{x})}&=H(\ket{0}\ket{\psi(\pmb{x})}+\ket{1}U\ket{\psi(\pmb{x})})/\sqrt{2}.\nonumber\\
&=\ket{0}\left(\frac{(1+U)\ket{\psi(\pmb{x})}}{2} \right)+\ket{1}\left(\frac{(1-U)\ket{\psi(\pmb{x})}}{2} \right)
\end{align}
Thus it follows from Born's rule that the probability of measuring the first register to be $1$ is $(1-{\rm Re}(\bra{\psi}U\ket{\psi}))/2$.  We then have from combining this result with~\eqref{eq:Hamexp} and recalling that $H$ is Hermitian gives us that the probability of measuring $1$ in the output of the circuit in Fig.~\ref{fig:vqecirc} is $1/2 - \bra{\psi}H\ket{\psi}/2$ as claimed.  Thus we have an appropriate probability oracle for the approximate groundstate energy expectation.

Each query to the circuit of Fig.~\ref{fig:vqecirc} requires $\bigO{1}$ queries to prepareW and selectH.  Thus the probability oracle can be simulated at cost $\bigO{1}$ fundamental queries.  Now if we remove the measurement from the circuit we see that we can view the transformation as a circuit of the form 
\begin{equation}
U\ket{0}= \sqrt{1/2 - \bra{\psi(\pmb{x})}H\ket{\psi(\pmb{x})}/2}\ket{\psi_{\rm good}}\ket{1}+\sqrt{1/2 + \bra{\psi(\pmb{x})}H\ket{\psi(\pmb{x})}/2}\ket{\psi_{\rm bad}}\ket{0}.\label{eq:Uvqe}
\end{equation}
We then see that the unitary in~\eqref{eq:Uvqe} is exactly of the form required by Theorem~\ref{thm:phaseConv}.  We then have that for any $\delta\in (0,1/3)$ we can simulate a $\delta$--approximate query to the phase oracle analogue of $U$ using $\bigO{\log(1/\delta)}$ applications of $U$.  Since $U$ requires $\bigO{1}$ fundamental queries, the phase oracle can be implemented using $\bigO{1}$ fundamental queries to selectH and prepareW.

From Theorem~\ref{thm:finalScaling} it then follows that we can compute
\begin{equation}
\nabla( 1/2 - \bra{\psi(\pmb{x})}H\ket{\psi(\pmb{x})}/2) = -\nabla\bra{\psi(\pmb{x})}H\ket{\psi(\pmb{x})}/2,
\end{equation}
within error $\eps/2$ and error probability bounded above by $1/3$ using $\bigOt{c\sqrt{d}/\eps}$ applications of selectH and prepareW.  Our result then immediately follows if $c\in \bigOt{1}$.  This is equivalent to proving that for some $c\in \bigOt{1}$ we have that $|\partial_{\alpha_1}\cdots\partial_{\alpha_k} \bra{\psi}H\ket{\psi}|\le c^k$ holds for all $\alpha \in [d]^k$.

We prove that for this application $c\le 2$.  
To see this note that for any index sequence $\alpha \in [d]^k$
\begin{align}
|\partial_{\alpha}\bra{\psi(\pmb{x})} H\ket{\psi(\pmb{x})}| &\le \left\|\partial_{\alpha}\left(\prod_{j=d}^1 e^{iH_j x_j} H \prod_{j=1}^d e^{-iH_j x_j}  \right)\right\|\nonumber\\
&\le \sum_{p=1}^M |a_p| \left\|\partial_{\alpha}\left(\prod_{j=d}^1 e^{iH_j x_j} H_k \prod_{j=1}^d e^{-iH_j x_j}  \right)\right\|.
\end{align}
Lemma~\ref{lemma:unitaryDerivative} directly implies that
\begin{equation}
\left\|\partial_\alpha \prod_{j=d}^1 e^{iH_j x_j}\right\| =1,\label{eq:derivP1}
\end{equation}
and similarly because $H_k$ is unitary and Hermitian for each $k$ Lemma~\ref{lemma:unitaryDerivative} also implies,
\begin{equation}
\left\|H_k\partial_\alpha \prod_{j=1}^d e^{-iH_j x_j}\right\| = 1.\label{eq:derivP2}
\end{equation}
Finally, Lemma~\ref{lemma:derivativeCombination} in concert with Eq.~\eqref{eq:derivP1} and~\eqref{eq:derivP2} then imply
\begin{equation}
\sum_{p=1}^M |a_p| \left\|\partial_{\alpha}\left(\prod_{j=d}^1 e^{iH_j x_j} H_k \prod_{j=1}^d e^{-iH_j x_j}  \right)\right\| \le \sum_{p=1}^M |a_p| 2^k = 2^k.
\end{equation}
\end{proof}

To compare the complexity for this procedure in an unambiguous way to that of existing methods, we need to consider a concrete alternative model for the cost.  For classical methods, we typically assume that the $a_p$ are known classically as are the $U_j$ that appear from queries to selectH.  For this reason, the most relevant aspect to compare is the number of queries needed to the Tuned oracle.  The number of state preparations needed to estimate the gradient is clearly $\bigOt{d/\eps^2}$ using high-degree gradient methods on the empirically estimated gradients~\cite{wecker2015progress} if we assume that the gradient needs to be computed with constant error in $\|\cdot\|_\infty$.  In this sense, we provide a quadratic improvement over such methods when the selectH and prepareW oracles are sub-dominant to the cost of the state preparation algorithm.

The application of this method to QAOA directly follows from the analysis given above.  There are many flavours of the quantum approximate optimization algorithm (QAOA)~\cite{farhi2014qaoa}.  The core idea of the algorithm is to consider a parametrized family of states such as $\ket{\psi(\pmb{x})} = \prod_{j=1}^d e^{-i x_j H_j} \ket{0}$.  The aim is to modify the state in such a way as to maximize an objective function.  In particular, if we let $O$ be a Hermitian operator corresponding to the objective function then we wish to find $\pmb{x}$ such that $\bra{\psi(\pmb{x})}H\ket{\psi(\pmb{x})}$ is maximized.  For example, in the case of combinatorial optimization problems the objective function is usually expressed as the number of satisfied clauses: $O=\sum_{\alpha=1}^m C_{\alpha}$ where $C_{\alpha}$ is $1$ if and only if the $\alpha^{\rm th}$ clause is satisfied and $0$ otherwise~\cite{farhi2014qaoa}.  Such clauses can be expressed as sums of tensor products of Pauli operators, which allows us to express them as Hermitian operators.  Thus, from the perspective of our algorithm, QAOA looks exactly like variational quantum eigensolvers except in that the parameterization chosen for the state may be significantly different from that chosen for variational quantum eigensolvers.

\subsection{Quantum auto-encoders}	\label{section:trainingautoencoding}
Classically, one application of neural networks is \emph{auto-encoders}, which are networks that encode information about a data set into a low-dimensional representation. Auto-encoding was first introduced by Rumelhart et al.~\cite{rumelhart:autoencodingstarter}. Informally, the goal of an auto-encoding circuit is the following: suppose we are given a set of high-dimensional vectors, we would like to learn a representation of the vectors hopefully of low dimenension, so that computations on the original data set can be ``approximately'' carried out by working only with the low-dimensional vectors. More precisely the problem in auto-encoding is: Given $K < N$ and $m$ data vectors $\{v_1,\ldots,v_m\}\subseteq \R^N$, find an encoding map $\E:\R^N\rightarrow \R^K$ and decoding map $\mathcal D:\R^K\rightarrow \R^N$ such that the average squared distortion $\nrm{v_i-(\mathcal D\circ \E)(v_i)}^2$ is minimized:\footnote{There are other natural choices of dissimilarity functions that one might want to minimize, for a comprehensive overview of the classical literature see~\cite{baldi:autoencoding}.} 
\begin{align}\label{eq:distortionmeasure}
	\min_{\E,\mathcal D}\sum_{i\in [m]}\frac{\nrm{v_i-(\mathcal D\circ \E)(v_i)}^2}{m}.
\end{align}

What makes auto-encoding interesting is that it does not assume any prior knowledge about the data set. This makes it a viable technique in machine learning, with various applications in natural language processing, training neural networks, object classification, prediction or extrapolation of information, etc.

Given that classical auto-encoders are `work-horses' of classical machine learning~\cite{azoff:workhorsesneuralnetworks}, it is also natural to consider a quantum variant of this paradigm.
Very recently such quantum auto-encoding schemes have been proposed by Wan Kwak et al.~\cite{wankwok:quantumneuralnetworks} and independently by Romero et al.~\cite{romero:autoencoding}.
Inspired by their work we provide a slightly generalized description of quantum auto-encoders by 'quantizing' auto-encoders the following way: 
we replace the data vectors $v_i$ by quantum states $\rho_i$ and define the maps $\E,\mathcal D$ as quantum channels transforming states back and forth between the Hilbert spaces $\mathcal{H}$ and $\mathcal{H}'$. 
A natural generalization of squared distortion for quantum states $\rho,\sigma$ that we consider is $1-F^2(\rho, \sigma)$,\footnote{Note that some authors (including \cite{romero:autoencoding}) call $F'=F^2$ the fidelity. The distortion measure we use here is $P(\rho,\sigma)=\sqrt{1-F^2(\rho, \sigma)}$, which is called the purified (trace) distance \cite{tomamichelSomoothEntropies}.} giving us the following minimization problem
\begin{equation}
\min_{\E,\mathcal D}\sum_{i\in [m]}\frac{1-F^2\left(\rho_i,(\mathcal D\circ \E)(\rho_i)\right)}{m}.
\end{equation}
Since $F^2\left(\ketbra{\psi}{\psi},\sigma\right)=\braketbra{\psi}{\sigma}{\psi}$
in the special case when the input states are pure states  $\rho_i=\ketbra{\psi_i}{\psi_i}$, the above minimization
problem is equivalent to the maximization problem
\begin{equation}\label{eq:probMaximization}
\max_{\E,\mathcal D}\sum_{i\in [N]}\frac{\braketbra{\psi_i}{\left[(\mathcal D\circ \E)(\ketbra{\psi_i}{\psi_i})\right]}{\psi_i}}{m}.
\end{equation}
Observe that $\braketbra{\psi}{\left[(\mathcal D\circ \E)(\ketbra{\psi}{\psi})\right]}{\psi}$ is the probability of finding the output state $(\mathcal D\circ \E)(\ketbra{\psi}{\psi})$ in state $\ket{\psi}$ after performing the projective measurement $\{\ketbra{\psi}{\psi},I-\ketbra{\psi}{\psi}\}$.
Thus we can think about this as maximizing the probability of recovering the initial pure state after encoding and decoding, which is a natural measure of the quality of the quantum auto-encoding procedure.

\subsubsection{Training quantum auto-encoders}
Similarly to \cite{wankwok:quantumneuralnetworks,romero:autoencoding} we describe a way to perform this optimization problem in the special case when the input states are $n$-qubit pure states and they are mapped to $k$-qubit states, i.e., $\mathcal{H}$ is the Hilbert space of $n$ qubits and $\mathcal{H}'$ is the Hilbert space of $k<n$ qubits. We also show how our gradient computation algorithm can speedup solving the described optimization problem.

We observe that by adding a linear amount of ancilla qubits we can represent the encoding and decoding channels by unitaries, which makes the minimization conceptually simpler.
Indeed by Stinespring's dilation theorem \cite[Corollary 2.27]{WatrousTQI}, \cite{keylFundamentalsQIT} we know that any quantum channel $\E$ that maps $n$ qubit states to $k$ qubit states can be constructed by adding $2k$ qubits initialized in $\ket{\vec{0}}$ state, then acting with a unitary $U_\E$ on the extended space and then tracing out $k+n$ qubits. Applying this result to both $\E$ and $\mathcal D$ results in a unitary circuit representing the generic encoding/decoding procedure, see Figure~\ref{fig:autoencoderChannels}. (This upper bound on the required number of ancilla qubits for $\mathcal D$ becomes $2n$.) 

\begin{figure}[ht]
\[
\Qcircuit @C=5mm @R=6mm {
\lstick{\makebox[12mm][s]{$\ket{0}^{\!\otimes 2k}\!$}} 	& \qw & \qw 							& \multigate{2}{U_{\E}}	& \qw					& \qw										& \qw	& & \text{Ancillae}\\
\lstick{\makebox[12mm][s]{$\ket{0}^{\!\otimes n-k}\!$}}	& \qw & \multigate{1}{\mathrm{Prep}_{\psi}} & \ghost{U_{\E}}	& \qw					& \qw										& \qw	& & \text{for }\E\\
\lstick{\makebox[12mm][s]{$\ket{0}^{\!\otimes k}\!$}} 	& \qw & \ghost{\mathrm{Prep}_{\psi}} 		& \ghost{U_{\E}}	& \multigate{2}{U_{\mathcal D}}	& \multigate{1}{\mathrm{Prep}^{-1}_{\psi}} 	& \meter& & \kern7mm\text{Result }\ket{0}^{\!\otimes n}\\
\lstick{\makebox[12mm][s]{$\ket{0}^{\!\otimes n-k}\!$}}	& \qw & \qw									& \qw				& \ghost{U_{\mathcal D}}		& \ghost{\mathrm{Prep}^{-1}_{\psi}}  		& \meter& & \kern16mm\text{indicates success}\\
\lstick{\makebox[12mm][s]{$\ket{0}^{\!\otimes n+k}\!$}}	& \qw & \qw									& \qw				& \ghost{U_{\mathcal D}}		& \qw										& \qw	& \kern-6mm\}\kern8mm & \kern10mm\text{Ancillae for }\mathcal D
\gategroup{1}{7}{2}{7}{3mm}{\}}
\gategroup{3}{7}{4}{7}{3mm}{\}}
}
\]
\caption{A unitary quantum auto-encoding circuit: For the input $\ket{\psi}$, the circuit prepares $\ket{\psi}$, applies a purified version of the channels $\E,\mathcal D$ and finally checks by a measurement whether the decoded state is $\ket{\psi}$.}
\label{fig:autoencoderChannels}
\end{figure}
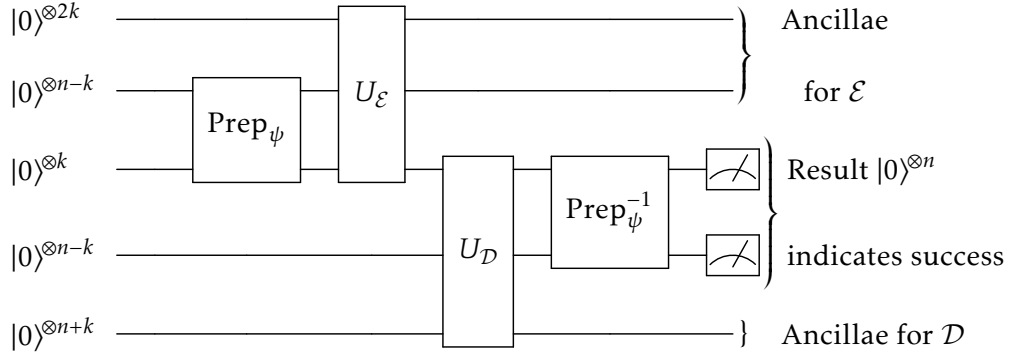

In order to solve the maximization problem \eqref{eq:probMaximization} we could just introduce a parametrization of the unitaries $U_\E,U_\mathcal D$ and search for the optimal parameters using gradient descent. 
Unfortunately a complete parametrization of the unitaries requires exponentially many parameters, which is prohibitive.
However, analogously to, e.g., classical machine learning practices, one could hope that a well-structured circuit can achieve close to optimal performance using only a polynomial number of parameters.
If the circuits $U_\E,U_\mathcal D$ are parametrized nicely, so that Lemma~\ref{lemma:unitaryDerivative} can be applied, then we can use our gradient computation algorithm to speedup optimization.

\begin{figure}[ht] 
	\[
	\kern13mm\Qcircuit @C=1.0em @R=1.2em {
		\lstick{\ket{\pmb{x}}\kern7.4mm} 							& \qw								& \ctrlA \qw						&\qw &\ctrlA \qw						&\qw									&\qw		&\qw\\
		\lstick{\frac{1}{\sqrt{m}}\sum_{i=1}^{m}\ket{m}\kern-1.2mm}	& \ctrlA	\qw						& \qw \qwx							&\qw &\qw \qwx							&\ctrlA	\qw								&\qw		&\qw\\	
		\lstick{\ket{\vec{0}}\kern6.9mm}							& \qw	\qwx						& \multigate{2}{U_\E(\pmb{x})}\qwx	&\qw &\qw \qwx							&\qw	\qwx							&\qw		&\qw\\			
		\lstick{\ket{0}^{\!\otimes n-k}} 							& \multigate{1}{\mathrm{prepS}}\qwx	& \ghost{U_\E(\pmb{x})}				&\qw &\qw \qwx							&\qw\qwx								&\qw		&\qw\\
		\lstick{\ket{0}^{\!\otimes k\phantom{-n}}}					& \ghost{\mathrm{prepS}} 			& \ghost{U_\E(\pmb{x})}				&\qw &\multigate{2}{U_\mathcal D(\pmb{x})} \qwx	&\multigate{1}{\mathrm{prepS}^{-1}}\qwx	&\ctrlo{1}	&\qw\\
		\lstick{\ket{0}^{\!\otimes n-k}}							& \qw								& \qw								&\qw &\ghost{U_\mathcal D(\pmb{x})}				&\ghost{\mathrm{prepS}^{-1}}			&\ctrlo{2}	&\qw\\
		\lstick{\ket{\vec{0}}\kern6.9mm}							& \qw								& \push{\rule{13.7mm}{0.16mm}}\qw	&\qw &\ghost{U_\mathcal D(\pmb{x})}				&\qw									&\qw		&\qw\\					
		\lstick{\ket{0}\kern6.9mm}									& \qw								& \qw								&\qw &\qw								&\push{\rule{17.5mm}{0.16mm}} \qw		&\targ		& \meter	&\rstick{\kern-1mm\ket{1}?}\cw  \\		
	  						 \kern20mm\mbox{Prep.}\gategroup{2}{2}{5}{2}{2mm}{--} &	& \kern20mm\mbox{Tuned}\gategroup{1}{3}{7}{5}{2mm}{--}	& &	& \kern7.5mm\mbox{V}\gategroup{2}{6}{8}{7}{2mm}{--}
	}
	\]		     
	\caption{Quantum circuit which outputs $1$ with probability equal to the objective function \eqref{eq:probMaximization}.
	The structure of the circuit fits the generic model of quantum optimization circuits (Figure~\ref{fig:tunableCircuits}), therefore we can use our gradient computation methods to speedup its optimization.}
	\label{fig:quantumAutoEncoderTraining}
\end{figure}
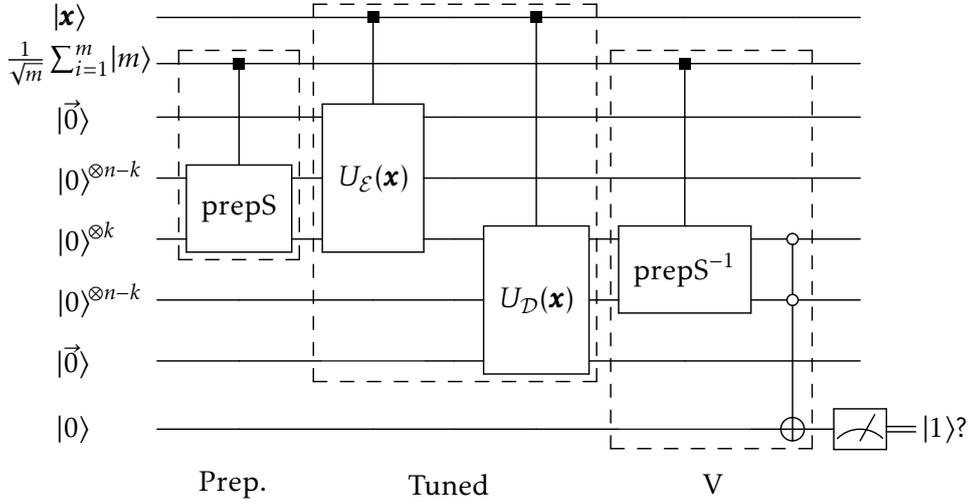

We can do the whole optimization using stochastic gradient descent \cite{JainAcceleratingSGD}, so that in each step we only need to consider the effect of the circuit on a single pure state.
Or if we have more quantum resources available we can directly evaluate the full gradient by preparing a uniform superposition over all input vectors.
In this case the state preparation unitary $\mathrm{Prep}=\sum_{i=1}^{m}\ketbra{i}{i}\otimes \mathrm{Prep}_{\psi_i}$ is a controlled unitary, which controlled on index $i$ would prepare $\ket{\psi_i}$.
Graphically we represent this type of control by a small black square in contrast to the small black circle used for denoting simple controlled unitaries. See the full quantum circuit in Figure~\ref{fig:quantumAutoEncoderTraining}.

Finally, note that in some application it might be desirable to ask for a coherent encoding/decoding procedure, where all the ancilla qubits are returned to the $\ket{\vec{0}}$ state. 
In this case similarly to \cite{wankwok:quantumneuralnetworks,romero:autoencoding} one could define $U_\mathcal D=U_\E^{-1}$ and optimize the probability of measuring $\ket{\vec{0}}$ on the ancilla qubits after applying $U_\E$.

\section{Conclusion and future research}\label{section:openproblems}

We gave a new approach to quantum gradient computation that is asymptotically optimal (up to logarithmic factors) for a class of smooth functions, in terms of the number of queries needed to estimate the gradient within fixed error with respect to the max-norm.  This is based on several new ideas including the use of differentiation formul\ae\kern1mm originating from high-degree interpolatory polynomials.  These high-degree methods quadratically improve the scaling of the query complexity with respect to the approximation quality compared to what one would see if the results from Jordan's work were used. In the case of low-degree multivariate polynomials we showed that our algorithm can yield an exponential speedup compared to Jordan's algorithm or classical algorithms.
We also provided lower bounds on the query complexity of the problem for certain smooth functions revealing that our algorithm is essentially optimal for a class of functions.

While it has proven difficult to find natural applications for Jordan's original algorithm, we provide in this paper several applications of our gradient descent algorithm to areas ranging from machine learning to quantum chemistry simulation.  These applications are built upon a method we provide for interconverting between phase and probability oracles.  The polynomial speedups that we see for these applications is made possible by our improved quantum gradient algorithm via the use of this interconversion process.  
It would be interesting to find applications where we can apply the results for low-degree multivariate polynomials providing an exponential speedup.

More work remains to be done in generalizing the lower bounds for functions that have stronger promises made about the high-order derivatives.  
It would be interesting to see how quantum techniques can speedup more sophisticated higher-level, e.g., stochastic gradient descent methods.
Another interesting question is whether quantum techniques can provide further speedups for calculating higher-order derivatives, 
such as the Hessian, using ideas related to Jordan's algorithm, see e.g. \cite[Appendix D]{jordanPhDThesis}.  Such improvements might open the door for improved quantum analogues of Newton's method and in turn substantially improve the scaling of the number of epochs needed to converge to a local optima in quantum methods.

	\anote{\noindent \underline{Things to keep in mind for possible extensions/applications:}\\
	Quantum neural networks application?\\
	SDP interior point methods?\\}
    
\section*{Acknowledgements}
	The authors thank Ronald de Wolf, Arjan Cornelissen, Yimin Ge and Robin Kothari for helpful suggestions and discussions.
	A.G. and N.W. thank Sebastien Bubeck, Michael Cohen and Yuanzhi Li for insightful discussions about accelerated gradient methods.
	A.G. thanks Joran van Apeldoorn for drawing his attention to the ``cheap gradient principle''. 

    \bibliographystyle{alphaUrlePrint}
	\bibliography{Bibliography}

\newcommand{\etalchar}[1]{$^{#1}$}
\newcommand{\focs}[1]{Proceedings of the #1 {IEEE} Symposium on Foundations of
  Computer Science ({FOCS})}\newcommand{\stoc}[1]{Proceedings of the #1 {ACM}
  Symposium on Theory of Computing ({STOC})}\newcommand{\ccc}[1]{Proceedings of
  the #1 {IEEE} Conference on Computational Complexity
  ({CCC})}\newcommand{\nips}[1]{Advances in Neural Information Processing
  Systems #1 ({NIPS})}\newcommand{\aistats}[1]{Proceedings of the #1
  International Conference on Artificial Intelligence and Statistics
  ({AISTATS})}
\begin{thebibliography}{WKGK16}

\bibitem[Aar09]{aaronson:quantummoney}
Scott Aaronson.
\newblock \href{http://dx.doi.org/10.1109/CCC.2009.42}{Quantum copy-protection
  and quantum money}.
\newblock In {\em \ccc{24th}}, pages 229--242, 2009,  arXiv:
  \href{https://arxiv.org/abs/1110.5353}{{\ttfamily 1110.5353}}.

\bibitem[ACR{\etalchar{+}}10]{acrsz:andor}
Andris Ambainis, Andrew~M. Childs, Ben~W. Reichardt, Robert {\v{S}}palek, and
  Shengyu Zhang.
\newblock \href{http://dx.doi.org/10.1137/080712167}{Any {AND-OR} formula of
  size {$N$} can be evaluated in time {$N^{1/2+o(1)}$} on a quantum computer}.
\newblock {\em SIAM Journal on Computing}, 39(6):2513--2530, 2010.
\newblock Earlier version in FOCS'07 and arxiv:
  \href{https://arxiv.org/abs/quant-ph/0703015}{quant-ph/0703015}.

\bibitem[AG18]{van2018quantum}
Joran~{van} Apeldoorn and Andr{\'a}s Gily{\'e}n.
\newblock Improvements in quantum {SDP}-solving with applications.
\newblock 2018,  arXiv: \href{https://arxiv.org/abs/1804.05058}{{\ttfamily
  1804.05058}}.

\bibitem[AGGW17]{van2017quantum}
Joran~{van} Apeldoorn, Andr{\'a}s Gily{\'e}n, Sander Gribling, and Ronald~{de}
  Wolf.
\newblock \href{http://ieee-focs.org/FOCS-2017-Papers/3464a403.pdf}{Quantum
  {SDP}-solvers: Better upper and lower bounds}.
\newblock In {\em \focs{58th}}, pages 403--414, 2017,  arXiv:
  \href{https://arxiv.org/abs/1705.01843}{{\ttfamily 1705.01843}}.

\bibitem[Amb00]{Ambainis:AdversaryMethod00}
Andris Ambainis.
\newblock \href{http://dx.doi.org/10.1145/335305.335394}{Quantum lower bounds
  by quantum arguments}.
\newblock In {\em \stoc{32nd}}, pages 636--643, 2000,  arXiv:
  \href{https://arxiv.org/abs/quant-ph/0002066}{{\ttfamily quant-ph/0002066}}.

\bibitem[Amb12]{varAmb}
Andris Ambainis.
\newblock \href{http://dx.doi.org/10.4230/LIPIcs.STACS.2012.636}{Variable time
  amplitude amplification and quantum algorithms for linear algebra problems}.
\newblock In {\em Symposium on Theoretical Aspects of Computer Science
  {STACS}}, pages 636--647, 2012,  arXiv:
  \href{https://arxiv.org/abs/1010.4458}{{\ttfamily 1010.4458}}.

\bibitem[Aru14]{arunachalam:mastersthesis}
Srinivasan Arunachalam.
\newblock \href{http://hdl.handle.net/10012/8330}{Quantum speed-ups for
  {B}oolean satisfiability and derivative-free optimization}.
\newblock Master's thesis, University of Waterloo, 2014.

\bibitem[Azo94]{azoff:workhorsesneuralnetworks}
E.~Michael Azoff.
\newblock \href{http://dl.acm.org/citation.cfm?id=561452}{{\em Neural Network
  Time Series Forecasting of Financial Markets}}.
\newblock John Wiley \& Sons, New York, NY, USA, 1st edition, 1994.

\bibitem[Bal12]{baldi:autoencoding}
Pierre Baldi.
\newblock
  \href{http://proceedings.mlr.press/v27/baldi12a/baldi12a.pdf}{Autoencoders,
  unsupervised learning, and deep architectures}.
\newblock In {\em Unsupervised and Transfer Learning - Workshop held at {ICML}
  2011}, pages 37--50, 2012.

\bibitem[BBBV97]{Bennett:SearchLowerBound97}
Charles~H. Bennett, Ethan Bernstein, Gilles Brassard, and Umesh Vazirani.
\newblock \href{http://dx.doi.org/10.1137/S0097539796300933}{Strengths and
  weaknesses of quantum computing}.
\newblock {\em SIAM Journal on Computing}, 26(5):1510--1523, 1997,  arXiv:
  \href{https://arxiv.org/abs/quant-ph/9701001}{{\ttfamily quant-ph/9701001}}.

\bibitem[BBC{\etalchar{+}}01]{Beals:2001}
Robert Beals, Harry Buhrman, Richard Cleve, Michele Mosca, and Ronald~{de}
  Wolf.
\newblock \href{http://dx.doi.org/10.1145/502090.502097}{Quantum lower bounds
  by polynomials}.
\newblock {\em Journal of the ACM}, 48(4):778--797, 2001,  arXiv:
  \href{https://arxiv.org/abs/quant-ph/9802049}{{\ttfamily quant-ph/9802049}}.
\newblock Earlier version in FOCS'98.

\bibitem[BCC{\etalchar{+}}15]{berry:simHamTaylor}
Dominic~W. Berry, Andrew~M. Childs, Richard Cleve, Robin Kothari, and
  Rolando~D. Somma.
\newblock \href{http://dx.doi.org/10.1103/PhysRevLett.114.090502}{Simulating
  hamiltonian dynamics with a truncated taylor series}.
\newblock {\em Physical Review Letters}, 114:090502, 2015,  arXiv:
  \href{https://arxiv.org/abs/1412.4687}{{\ttfamily 1412.4687}}.

\bibitem[BCK15]{BerryChildsSim15}
Dominic~W. Berry, Andrew~M. Childs, and Robin Kothari.
\newblock \href{http://dx.doi.org/10.1109/FOCS.2015.54}{Hamiltonian simulation
  with nearly optimal dependence on all parameters}.
\newblock In {\em \focs{56th}}, pages 792--809, 2015,  arXiv:
  \href{https://arxiv.org/abs/1501.01715}{{\ttfamily 1501.01715}}.

\bibitem[Bel15]{belovsGeneralAdv15}
Aleksandrs Belovs.
\newblock Variations on quantum adversary.
\newblock 2015,  arXiv: \href{https://arxiv.org/abs/1504.06943}{{\ttfamily
  1504.06943}}.

\bibitem[Bes05]{bessenlowerbound}
Arvid~J. Bessen.
\newblock \href{http://dx.doi.org/10.1103/PhysRevA.71.042313}{Lower bound for
  quantum phase estimation}.
\newblock {\em Physical Review A}, 71(4):042313, 2005,  arXiv:
  \href{https://arxiv.org/abs/quant-ph/0412008}{{\ttfamily quant-ph/0412008}}.

\bibitem[BFP{\etalchar{+}}73]{BFPRT73}
Manuel Blum, Robert~W. Floyd, Vaughan~R. Pratt, Ronald~L. Rivest, and
  Robert~Endre Tarjan.
\newblock \href{http://dx.doi.org/10.1016/S0022-0000(73)80033-9}{Time bounds
  for selection}.
\newblock {\em Journal of Computer and System Sciences}, 7(4):448--461, 1973.

\bibitem[BHMT02]{brassard2002quantum}
Gilles Brassard, Peter H{\o}yer, Michele Mosca, and Alain Tapp.
\newblock \href{http://dx.doi.org/10.1090/conm/305}{Quantum amplitude
  amplification and estimation}.
\newblock In {\em Quantum Computation and Quantum Information: A Millennium
  Volume}, volume 305 of {\em Contemporary Mathematics Series}, pages 53--74.
  AMS, 2002,  arXiv: \href{https://arxiv.org/abs/quant-ph/0005055}{{\ttfamily
  quant-ph/0005055}}.

\bibitem[BKL{\etalchar{+}}17]{brandaoSDPsAndLearning}
Fernando G. S.~L. Brand{\~a}o, Amir Kalev, Tongyang Li, Cedric Yen-Yu Lin,
  Krysta~M. Svore, and Xiaodi Wu.
\newblock Exponential quantum speed-ups for semidefinite programming with
  applications to quantum learning.
\newblock 2017,  arXiv: \href{https://arxiv.org/abs/1710.02581}{{\ttfamily
  1710.02581}}.

\bibitem[BS17]{brandao:quantumsdp}
Fernando G. S.~L. Brand{\~a}o and Krysta~M. Svore.
\newblock \href{http://ieee-focs.org/FOCS-2017-Papers/3464a415.pdf}{Quantum
  speed-ups for semidefinite programming}.
\newblock In {\em \focs{58th}}, pages 415--426, 2017,  arXiv:
  \href{https://arxiv.org/abs/1609.05537}{{\ttfamily 1609.05537}}.

\bibitem[BT09]{BeckIterative}
Amir Beck and Marc Teboulle.
\newblock \href{http://dx.doi.org/10.1137/080716542}{A fast iterative
  shrinkage-thresholding algorithm for linear inverse problems}.
\newblock {\em SIAM Journal on Imaging Sciences}, 2(1):183--202, 2009.

\bibitem[Bul05a]{bulger2005BasinHop}
David Bulger.
\newblock Quantum basin hopping with gradient-based local optimisation.
\newblock Unpublished, 2005,  arXiv:
  \href{https://arxiv.org/abs/quant-ph/0507193}{{\ttfamily quant-ph/0507193}}.

\bibitem[Bul05b]{bulger2005Gradient}
David Bulger.
\newblock Quantum computational gradient estimation.
\newblock Unpublished, 2005,  arXiv:
  \href{https://arxiv.org/abs/quant-ph/0507109}{{\ttfamily quant-ph/0507109}}.

\bibitem[CCD{\etalchar{+}}03]{childs2003exponential}
Andrew~M. Childs, Richard Cleve, Enrico Deotto, Edward Farhi, Sam Gutmann, and
  Daniel~A. Spielman.
\newblock \href{http://dx.doi.org/10.1145/780542.780552}{Exponential
  algorithmic speedup by a quantum walk}.
\newblock In {\em \stoc{35th}}, pages 59--68. ACM, 2003,  arXiv:
  \href{https://arxiv.org/abs/quant-ph/0209131}{{\ttfamily quant-ph/0209131}}.

\bibitem[CGM{\etalchar{+}}09]{cleveEfficientDiscContQuery}
Richard Cleve, Daniel Gottesman, Michele Mosca, Rolando~D. Somma, and David~L.
  Yonge{-}Mallo.
\newblock \href{http://dx.doi.org/10.1145/1536414.1536471}{Efficient
  discrete-time simulations of continuous-time quantum query algorithms}.
\newblock In {\em \stoc{41st}}, pages 409--416, 2009,  arXiv:
  \href{https://arxiv.org/abs/0811.4428}{{\ttfamily 0811.4428}}.

\bibitem[CKS17]{childs2015quantum}
Andrew~M. Childs, Robin Kothari, and Rolando~D. Somma.
\newblock \href{http://dx.doi.org/10.1137/16M1087072}{Quantum algorithm for
  systems of linear equations with exponentially improved dependence on
  precision}.
\newblock {\em SIAM Journal on Computing}, 46(6):1920--1950, 2017,  arXiv:
  \href{https://arxiv.org/abs/1511.02306}{{\ttfamily 1511.02306}}.

\bibitem[Cor18]{CornelissenThesis}
Arjan Cornelissen.
\newblock Master's thesis. Work in progress -- to appear in summer, 2018.

\bibitem[CS17]{chowdhury2016quantum}
Anirban~Narayan Chowdhury and Rolando~D. Somma.
\newblock \href{http://dx.doi.org/10.26421/QIC17.1-2}{Quantum algorithms for
  {G}ibbs sampling and hitting-time estimation}.
\newblock {\em Quantum Information and Computation}, 17(1\&2):41--64, 2017,
  arXiv: \href{https://arxiv.org/abs/1603.02940}{{\ttfamily 1603.02940}}.

\bibitem[DH96]{durr&hoyer:minimum}
Christoph D{\"u}rr and Peter H{\o}yer.
\newblock A quantum algorithm for finding the minimum.
\newblock 1996,  arXiv:
  \href{https://arxiv.org/abs/quant-ph/9607014}{{\ttfamily quant-ph/9607014}}.

\bibitem[FGG14]{farhi2014qaoa}
Edward Farhi, Jeffrey Goldstone, and Sam Gutmann.
\newblock A quantum approximate optimization algorithm.
\newblock 2014,  arXiv: \href{https://arxiv.org/abs/1411.4028}{{\ttfamily
  1411.4028}}.

\bibitem[FGGS00]{farhi:adiatic}
Edward Farhi, Jeffrey Goldstone, Sam Gutmann, and Michael Sipser.
\newblock Quantum computation by adiabatic evolution.
\newblock 2000,  arXiv:
  \href{https://arxiv.org/abs/quant-ph/0001106}{{\ttfamily quant-ph/0001106}}.

\bibitem[Gev18]{GevreyClass}
Maurice Gevrey.
\newblock \href{http://dx.doi.org/10.24033/asens.706}{Sur la nature analytique
  des solutions des équations aux dérivées partielles. {P}remier mémoire.}
\newblock {\em Annales Scientifiques de l'École Normale Supérieure},
  3(35):129--190, 1918.

\bibitem[Gro96]{grover:search}
Lov~K. Grover.
\newblock \href{http://dx.doi.org/10.1145/237814.237866}{A fast quantum
  mechanical algorithm for database search}.
\newblock In {\em \stoc{28th}}, pages 212--219, 1996,  arXiv:
  \href{https://arxiv.org/abs/quant-ph/9605043}{{\ttfamily quant-ph/9605043}}.

\bibitem[GW08]{evaluatingDerivatives}
Andreas Griewank and Andrea Walther.
\newblock \href{http://dx.doi.org/10.1137/1.9780898717761}{{\em Evaluating
  derivatives - principles and techniques of algorithmic differentiation}}.
\newblock {SIAM}, 2. edition, 2008.

\bibitem[GW18]{gilyenBlockMatrices}
András Gilyén and Nathan Wiebe.
\newblock Unpublished manuscript.
\newblock Being prepared for publication, 2018.

\bibitem[HHL09]{harrow2009quantum}
Aram~W. Harrow, Avinatan Hassidim, and Seth Lloyd.
\newblock \href{http://dx.doi.org/10.1103/PhysRevLett.103.150502}{Quantum
  algorithm for linear systems of equations}.
\newblock {\em Physical Review Letters}, 103(15):150502, 2009,  arXiv:
  \href{https://arxiv.org/abs/0811.3171}{{\ttfamily 0811.3171}}.

\bibitem[HL{\v{S}}07]{Hoyerleespalek:negative}
Peter H{\o}yer, Troy Lee, and Robert {\v{S}}palek.
\newblock \href{http://dx.doi.org/10.1145/1250790.1250867}{Negative weights
  make adversaries stronger}.
\newblock In {\em \stoc{39th}}, pages 526--535, 2007,  arXiv:
  \href{https://arxiv.org/abs/quant-ph/0611054}{{\ttfamily quant-ph/0611054}}.

\bibitem[HP00]{hogg:quantumoptimization}
Tad Hogg and Dmitriy Portnov.
\newblock \href{http://dx.doi.org/10.1016/S0020-0255(00)00052-9}{Quantum
  optimization}.
\newblock {\em Information Sciences}, 128(3-4):181--197, 2000,  arXiv:
  \href{https://arxiv.org/abs/0006090}{{\ttfamily 0006090}}.

\bibitem[H{\v{S}}05]{Hoyer:AdversaryReview05}
Peter H{\o}yer and Robert {\v{S}}palek.
\newblock \href{http://eatcs.org/images/bulletin/beatcs87.pdf}{Lower bounds on
  quantum query complexity}.
\newblock {\em Bulletin of the {EATCS}}, 87:78--103, 2005,  arXiv:
  \href{https://arxiv.org/abs/quant-ph/0509153}{{\ttfamily quant-ph/0509153}}.

\bibitem[JKK{\etalchar{+}}17]{JainAcceleratingSGD}
Prateek Jain, Sham~M. Kakade, Rahul Kidambi, Praneeth Netrapalli, and Aaron
  Sidford.
\newblock Accelerating stochastic gradient descent.
\newblock 2017,  arXiv: \href{https://arxiv.org/abs/1704.08227}{{\ttfamily
  1704.08227}}.

\bibitem[Jor05]{QuantGrad}
Stephen~P. Jordan.
\newblock \href{http://dx.doi.org/10.1103/PhysRevLett.95.050501}{Fast quantum
  algorithm for numerical gradient estimation}.
\newblock {\em Physical Review Letters}, 95(5):050501, 2005,  arXiv:
  \href{https://arxiv.org/abs/quant-ph/0405146}{{\ttfamily quant-ph/0405146}}.

\bibitem[Jor08]{jordanPhDThesis}
Stephen~P. Jordan.
\newblock
  \href{http://web.mit.edu/2.111/www/2010/MIT-stephen-jordan-phd-thesis-may08.pdf}{{\em
  Quantum Computation Beyond the Circuit Model}}.
\newblock PhD thesis, Massachusetts Institute of Technology, 2008,  arXiv:
  \href{https://arxiv.org/abs/0809.2307}{{\ttfamily 0809.2307}}.

\bibitem[Key02]{keylFundamentalsQIT}
Michael Keyl.
\newblock \href{http://dx.doi.org/10.1016/S0370-1573(02)00266-1}{Fundamentals
  of quantum information theory}.
\newblock {\em Physics Reports}, 369(5):431 -- 548, 2002,  arXiv:
  \href{https://arxiv.org/abs/quant-ph/0202122}{{\ttfamily quant-ph/0202122}}.

\bibitem[KN98]{tadashi:quantumannealing}
Tadashi Kadowaki and Hidetoshi Nishimori.
\newblock \href{http://dx.doi.org/10.1103/PhysRevE.58.5355}{Quantum annealing
  in the transverse ising model}.
\newblock {\em Physical Review E}, 58(5), 1998,  arXiv:
  \href{https://arxiv.org/abs/cond-mat/9804280}{{\ttfamily cond-mat/9804280}}.

\bibitem[Kot14]{kothariPhDThesis}
Robin Kothari.
\newblock \href{http://hdl.handle.net/10012/8625}{{\em Efficient algorithms in
  quantum query complexity}}.
\newblock PhD thesis, University of Waterloo, 2014.

\bibitem[KP17]{kerenidis:quantumgraddescent}
Iordanis Kerenidis and Anupam Prakash.
\newblock Quantum gradient descent for linear systems and least squares.
\newblock 2017,  arXiv: \href{https://arxiv.org/abs/1704.04992}{{\ttfamily
  1704.04992}}.

\bibitem[Li05]{LiGeneralNumDiff}
Jianping Li.
\newblock \href{http://dx.doi.org/10.1016/j.cam.2004.12.026}{General explicit
  difference formulas for numerical differentiation}.
\newblock {\em Journal of Computational and Applied Mathematics}, 183(1):29 --
  52, 2005.

\bibitem[LMR{\etalchar{+}}11]{LMRSSz:conversion}
Troy Lee, Rajat Mittal, Ben~W. Reichardt, Robert Spalek, and Mario Szegedy.
\newblock \href{http://dx.doi.org/10.1109/FOCS.2011.75}{Quantum query
  complexity of state conversion}.
\newblock In {\em \focs{52nd}}, pages 344--353, 2011,  arXiv:
  \href{https://arxiv.org/abs/1011.3020}{{\ttfamily 1011.3020}}.

\bibitem[LPL14]{lara2014HybridOpt}
Pedro C.~S. Lara, Renato Portugal, and Carlile Lavor.
\newblock \href{http://dx.doi.org/10.1007/s10898-013-0112-8}{A new hybrid
  classical-quantum algorithm for continuous global optimization problems}.
\newblock {\em J. Global Optimization}, 60(2):317--331, 2014,  arXiv:
  \href{https://arxiv.org/abs/1301.4667}{{\ttfamily 1301.4667}}.

\bibitem[LW17]{LiWuEntropyQueryComp}
Tongyang Li and Xiaodi Wu.
\newblock Quantum query complexity of entropy estimation.
\newblock 2017,  arXiv: \href{https://arxiv.org/abs/1710.06025}{{\ttfamily
  1710.06025}}.

\bibitem[Mon11]{Montanaro:LectureNotes11}
Ashley Montanaro.
\newblock
  \href{https://people.maths.bris.ac.uk/~csxam/partiii/qc-partiii.pdf}{Quantum
  computation lecture notes}.
\newblock 2011.

\bibitem[Mon15]{montanaro:montecarlo}
Ashley Montanaro.
\newblock \href{http://dx.doi.org/10.1098/rspa.2015.0301}{Quantum speedup of
  {M}onte {C}arlo methods}.
\newblock {\em Proceedings of the Royal Society A}, 471(2181), 2015,  arXiv:
  \href{https://arxiv.org/abs/1504.06987}{{\ttfamily 1504.06987}}.

\bibitem[NC02]{nielsen2002quantum}
Michael~A Nielsen and Isaac Chuang.
\newblock Quantum computation and quantum information, 2002.

\bibitem[Nes83]{NesterovAcceleratedGrad1983}
Yurii Nesterov.
\newblock A method for solving a convex programming problem with convergence
  rate $\mathcal{O}(1/k^2)$.
\newblock {\em Doklady Akademii Nauk SSSR}, 269:543--547, 1983.

\bibitem[NW99]{NayakWu}
Ashwin Nayak and Felix Wu.
\newblock \href{http://dx.doi.org/10.1145/301250.301349}{The quantum query
  complexity of approximating the median and related statistics}.
\newblock In {\em \stoc{31th}}, pages 384--393, 1999,  arXiv:
  \href{https://arxiv.org/abs/quant-ph/9804066}{{\ttfamily quant-ph/9804066}}.

\bibitem[PMS{\etalchar{+}}14]{peruzzo2014variational}
Alberto Peruzzo, Jarrod McClean, Peter Shadbolt, Man-Hong Yung, Xiao-Qi Zhou,
  Peter~J Love, Al{\'a}n Aspuru-Guzik, and Jeremy~L. O’brien.
\newblock \href{http://dx.doi.org/10.1038/ncomms5213}{A variational eigenvalue
  solver on a photonic quantum processor}.
\newblock {\em Nature Communications}, 5, 2014,  arXiv:
  \href{https://arxiv.org/abs/1304.3061}{{\ttfamily 1304.3061}}.

\bibitem[RHW86]{rumelhart:autoencodingstarter}
David~E. Rumelhart, Geoffrey~E. Hinton, and Ronald~J. Williams.
\newblock \href{http://dl.acm.org/citation.cfm?id=104279.104293}{Learning
  internal representations by error propagation}.
\newblock In David~E. Rumelhart, James~L. McClelland, and PDP~Research Group,
  editors, {\em Parallel Distributed Processing: Explorations in the
  Microstructure of Cognition, Vol. 1}, pages 318--362. MIT Press, Cambridge,
  MA, USA, 1986.

\bibitem[ROAG17]{romero:autoencoding}
Jonathan Romero, Jonathan~P. Olson, and Alan Aspuru-Guzik.
\newblock \href{http://dx.doi.org/10.1088/2058-9565/aa8072}{Quantum
  autoencoders for efficient compression of quantum data}.
\newblock {\em Quantum Science and Technology}, 2(4):045001, 2017,  arXiv:
  \href{https://arxiv.org/abs/1612.02806}{{\ttfamily 1612.02806}}.

\bibitem[RSPL16]{rebentrost:quantumgradientdescent}
Patrick Rebentrost, Maria Schuld, Francesco Petruccione, and Seth Lloyd.
\newblock Quantum gradient descent and {N}ewton's method for constrained
  polynomial optimization.
\newblock 2016,  arXiv: \href{https://arxiv.org/abs/1612.01789}{{\ttfamily
  1612.01789}}.

\bibitem[Rud16]{ruder:blog}
Sebastian Ruder.
\newblock \href{http://sebastianruder.com/optimizing-gradient-descent/}{An
  overview of gradient descent optimization algorithms}.
\newblock 2016,  arXiv: \href{https://arxiv.org/abs/1609.04747}{{\ttfamily
  1609.04747}}.

\bibitem[Sho97]{shor:factoring}
Peter~W. Shor.
\newblock \href{http://dx.doi.org/10.1137/S0097539795293172}{Polynomial-time
  algorithms for prime factorization and discrete logarithms on a quantum
  computer}.
\newblock {\em SIAM Journal on Computing}, 26(5):1484--1509, 1997,  arXiv:
  \href{https://arxiv.org/abs/quant-ph/9508027}{{\ttfamily quant-ph/9508027}}.
\newblock Earlier version in FOCS'94.

\bibitem[TCR10]{tomamichelSomoothEntropies}
Marco Tomamichel, Roger Colbeck, and Renato Renner.
\newblock \href{http://dx.doi.org/10.1109/TIT.2010.2054130}{Duality between
  smooth min- and max-entropies}.
\newblock {\em IEEE Transactions on Information Theory}, 56(9):4674--4681,
  2010,  arXiv: \href{https://arxiv.org/abs/0907.5238}{{\ttfamily 0907.5238}}.

\bibitem[Wat16]{WatrousTQI}
John Watrous.
\newblock \href{https://cs.uwaterloo.ca/~watrous/TQI/}{{\em Theory of Quantum
  Information}}.
\newblock 2016.

\bibitem[WBL12]{wiebe2012quantum}
Nathan Wiebe, Daniel Braun, and Seth Lloyd.
\newblock \href{http://dx.doi.org/10.1103/PhysRevLett.109.050505}{Quantum
  algorithm for data fitting}.
\newblock {\em Physical Review Letters}, 109(5):050505, 2012,  arXiv:
  \href{https://arxiv.org/abs/1204.5242}{{\ttfamily 1204.5242}}.

\bibitem[WHT15]{wecker2015progress}
Dave Wecker, Matthew~B. Hastings, and Matthias Troyer.
\newblock \href{http://dx.doi.org/10.1103/PhysRevA.92.042303}{Progress towards
  practical quantum variational algorithms}.
\newblock {\em Physical Review A}, 92(4):042303, 2015,  arXiv:
  \href{https://arxiv.org/abs/1507.08969}{{\ttfamily 1507.08969}}.

\bibitem[WKGK16]{wankwok:quantumneuralnetworks}
Oscar~Dahlsten Wan, Kwok~Ho, Hlér Kristjánsson, Robert Gardner, and M.~S.
  Kim.
\newblock Quantum generalisation of feedforward neural networks.
\newblock 2016,  arXiv: \href{https://arxiv.org/abs/1612.01045}{{\ttfamily
  1612.01045}}.

\bibitem[WKS15]{wiebe2015quantum}
Nathan Wiebe, Ashish Kapoor, and Krysta~M. Svore.
\newblock
  \href{http://www.rintonpress.com/journals/qiconline.html#v15n34}{Quantum
  nearest-neighbor algorithms for machine learning}.
\newblock {\em Quantum Information and Computation}, 15(3\&4):318--358, 2015,
  arXiv: \href{https://arxiv.org/abs/1401.2142}{{\ttfamily 1401.2142}}.

\bibitem[WKS16]{wiebe:quantumperceptronmodels}
Nathan Wiebe, Ashish Kapoor, and Krysta~M. Svore.
\newblock
  \href{http://papers.nips.cc/paper/6401-quantum-perceptron-models}{Quantum
  perceptron models}.
\newblock In {\em \nips{29}}, pages 3999--4007, 2016,  arXiv:
  \href{https://arxiv.org/abs/1602.04799}{{\ttfamily 1602.04799}}.

\end{thebibliography}

\appendix
	
	\section{Error bounds on central difference formulas}\label{apx:centralErrorBounds}

	In this appendix we develop error bounds on finite difference formulas using some higher-dimensional calculus.
	The goal is to give upper bounds on the query complexity of gradient computation of $f$ under some smoothness conditions that $f$ satisfies. The main idea is to use Algorithm~\ref{alg:generic} in combination with central difference formulas and analyze the query complexity using some technical lemmas involving higher-dimensional calculus. We first prove Theorem~\ref{thm:lagrangeAlg}, which gives rise to a quantum algorithm that yields potentially exponential speedups for low-degree polynomial functions. 
	The query complexity bound that we can derive for smooth functions using Theorem~\ref{thm:lagrangeAlg}  scales as $\widetilde{O}(d/\varepsilon)$ (which is already an improvement in  $1/\varepsilon$, but worse in $d$ compared to Jordan's algorithm), which we later improve to $\widetilde{O}(\sqrt{d}/\varepsilon)$ via Theorem~\ref{thm:analyticBound}.

	In the proof of the following lemma we will use Stirling's approximation of the factorial:
	\begin{fact}[Stirling's approximation]
	\label{fact:stirling}
	For every $j\in \N_+$, we have
	$$
	\sqrt{2\pi j} \Big(\frac{j}{e}\Big)^j\leq j! \leq e\sqrt{j}\Big(\frac{j}{e}\Big)^j.
	$$
	\end{fact}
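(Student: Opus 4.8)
The plan is to show that the sequence $a_j := j!\big/\big(\sqrt{j}\,(j/e)^j\big)$ is strictly decreasing and converges to $\sqrt{2\pi}$; the two claimed inequalities then follow by bracketing each $a_j$ between its limit $\sqrt{2\pi}$ and its first term $a_1 = e$.

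First I would establish monotonicity by examining $\ln a_j - \ln a_{j+1}$. Writing $\ln a_j = \ln(j!) - (j+\tfrac12)\ln j + j$, a direct computation collapses to
$$\ln a_j - \ln a_{j+1} = \Big(j+\tfrac12\Big)\ln\frac{j+1}{j} - 1.$$
Substituting $t = 1/(2j+1)$, so that $\frac{j+1}{j} = \frac{1+t}{1-t}$ and $j+\tfrac12 = \frac{1}{2t}$, and using the power series $\ln\frac{1+t}{1-t} = 2\sum_{k\ge 0}\frac{t^{2k+1}}{2k+1}$, I obtain
$$\ln a_j - \ln a_{j+1} = \sum_{k\ge 1}\frac{t^{2k}}{2k+1} > 0,$$
so $a_j$ is strictly decreasing. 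Bounding this same series termwise by a geometric one gives the telescoping estimate $\ln a_j - \ln a_{j+1} < \tfrac{1}{12}\big(\tfrac1j - \tfrac1{j+1}\big)$, which shows $(\ln a_j)$ is Cauchy and hence that $L := \lim_{j\to\infty} a_j$ exists and is positive (and would sharpen constants if a tighter statement were ever needed).

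The upper bound is then immediate: since $a_1 = 1!/(1\cdot e^{-1}) = e$ and the sequence decreases, $a_j \le e$ for all $j \ge 1$, i.e. $j! \le e\sqrt{j}\,(j/e)^j$.

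For the lower bound I would identify the limit $L$ explicitly; this is the one genuinely nontrivial ingredient, and the main obstacle, since monotonicity alone only pins $L$ down qualitatively. I would supply the value via Wallis's product: expressing the central binomial ratio through $a_n$ and $a_{2n}$, the identity $\frac{\pi}{2} = \lim_{n\to\infty}\frac{2^{4n}(n!)^4}{((2n)!)^2(2n+1)}$ rearranges to $L = \sqrt{2\pi}$. Since $a_j$ strictly decreases to $\sqrt{2\pi}$, every term satisfies $a_j > \sqrt{2\pi}$, which is exactly the claimed lower bound $j! \ge \sqrt{2\pi j}\,(j/e)^j$.
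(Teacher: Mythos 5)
Your proof is correct and complete: the monotonicity computation via the series for $\ln\frac{1+t}{1-t}$, the evaluation $a_1=e$ for the upper bound, and the identification of the limit $\sqrt{2\pi}$ through Wallis's product are all carried out accurately. The paper states this Stirling bound as a known Fact and offers no proof of its own, so there is nothing to compare against; your argument is the standard self-contained derivation and fully justifies both inequalities as stated.
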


	As a first step towards proving Theorem~\ref{thm:lagrangeAlg} and Theorem~\ref{thm:analyticBound} we derive a bound on the following sum of coefficients which appears in finite difference formulas:

	\begin{lemma}\label{lemma:coeffBoundSimple}
		For all $m\in\mathbb{N}_+$ and $k \geq 2m$ we have\footnote{		
			Sometimes we will be interested in bounding $\left|\sum_{\ell=-m}^{m} a_{\ell}^{(2m)}\ell^{k+1}\right|$ rather than the left-hand side of \eqref{eq:allTermBoundSimple}.
			One could ask how good this bound is, do not we lose too much by dismissing the $(-1)^\ell$ cancellations?
			It turns out that the most important case for us is when $k=2m$. In this case our numerical experiments showed that
			the quantity $\left|\sum_{\ell=-m}^{m} a_{\ell}^{(2m)}\ell^{k+1}\right|$ is lower bounded by $(m/e)^{2m}$, showing that the proven upper bound is still qualitatively right.
		}
		\begin{align}\label{eq:allTermBoundSimple}
		\sum_{\ell=-m}^{m} \left|a_{\ell}^{(2m)}\ell^{k+1}\right|
		\leq 6 e^{-\frac{7m}{6}}m^{k+3/2},
        \end{align}
        where $a_{\ell}^{(2m)}$ is defined in Definition~\ref{def:centralDiffernce}
        		\begin{equation*}
			a_{\ell}^{(2m)}= \frac{ (-1)^{\ell-1}}{\ell}\frac{\binom{m}{|\ell|}}{\binom{m+|\ell|}{|\ell|}} \quad \text{ and } a_{0}^{(2m)}=0.
		\end{equation*}
	\end{lemma}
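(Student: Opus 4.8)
The plan is to reduce the claim to a one-variable convexity estimate. First I strip the signs: taking absolute values removes the factor $(-1)^{\ell-1}$, the $\ell=0$ term vanishes, and symmetry under $\ell\mapsto-\ell$ gives
\[
\sum_{\ell=-m}^{m}\left|a_\ell^{(2m)}\ell^{k+1}\right| = 2\sum_{\ell=1}^{m}\ell^k\,\frac{\binom{m}{\ell}}{\binom{m+\ell}{\ell}} = 2\sum_{\ell=1}^{m}\ell^k\,\frac{(m!)^2}{(m-\ell)!\,(m+\ell)!},
\]
the last equality being a direct cancellation of factorials. Since $\ell\le m$ and $k\ge 2m$, I bound $\ell^k\le m^{k-2m}\ell^{2m}$, which eliminates the dependence on $k$: it then suffices to prove $U\le 3\,e^{-7m/6}m^{2m+3/2}$ for $U:=\sum_{\ell=1}^{m}\ell^{2m}\frac{(m!)^2}{(m-\ell)!(m+\ell)!}$.

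The heart of the argument is a Stirling estimate of the summand $u_\ell$. Applying Fact~\ref{fact:stirling} (the upper bound to $(m!)^2$ and the lower bounds to $(m\pm\ell)!$), the powers of $e$ cancel and, writing $\alpha:=\ell/m$, I get for $1\le\ell\le m-1$
\[
u_\ell \le \frac{e^2}{2\pi}\cdot\frac{1}{\sqrt{1-\alpha^2}}\, m^{2m}\, e^{-m\,h(\alpha)}, \qquad h(\alpha):=(1-\alpha)\ln(1-\alpha)+(1+\alpha)\ln(1+\alpha)-2\ln\alpha,
\]
using $\ell^{2m}=m^{2m}\alpha^{2m}$ and the identity $\frac{m^{2m}}{(m-\ell)^{m-\ell}(m+\ell)^{m+\ell}}=e^{-m[(1-\alpha)\ln(1-\alpha)+(1+\alpha)\ln(1+\alpha)]}$. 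The crux is the purely one-dimensional claim that $h(\alpha)\ge 7/6$ on $(0,1]$. This is rigorous because $h$ is strictly convex, $h''(\alpha)=\frac{2}{1-\alpha^2}+\frac{2}{\alpha^2}>0$, so it has a unique stationary point $\alpha^\ast$ solving $\ln\frac{1+\alpha^\ast}{1-\alpha^\ast}=2/\alpha^\ast$; substituting this relation simplifies $h(\alpha^\ast)$ to $2\ln\frac{1-\alpha^\ast}{\alpha^\ast}+\frac{2}{\alpha^\ast}+2$, and bracketing $\alpha^\ast\in(0.83,0.84)$ from the signs of $h'$ at the endpoints gives $h(\alpha^\ast)>7/6$ (numerically $\approx 1.178$).

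It remains to sum and to control the constant. For $\ell\le m-1$ one has $\frac{1}{\sqrt{1-\alpha^2}}=\frac{m}{\sqrt{(m-\ell)(m+\ell)}}\le\sqrt{m}$, and the boundary term is handled separately via $u_m=m^{2m}/\binom{2m}{m}\le\frac{e^2}{2\sqrt\pi}\sqrt m\,m^{2m}4^{-m}$ together with $4^{-m}\le e^{-7m/6}$. To keep the leading constant below $3$ I would split the sum at $\alpha=0.9$: for $\alpha\le 0.9$ the correction $\frac{1}{\sqrt{1-\alpha^2}}$ is an absolute constant and $h\ge 7/6$ gives an $O\!\left(m\,e^{-7m/6}\right)$ contribution, whereas for $\alpha>0.9$ there are at most $m/10$ terms and $h(\alpha)\ge h(0.9)>1.19$ more than compensates the $\sqrt m$ factor; adding $u_m$ and checking the finitely many small values of $m$ directly then yields $U\le 3\,e^{-7m/6}m^{2m+3/2}$, hence \eqref{eq:allTermBoundSimple}. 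I expect the main obstacle to be exactly this numerically thin margin: the minimum of $h$ is only about $1.178$ against the target exponent $7/6\approx 1.167$, so both the convexity inequality $h\ge 7/6$ and the constant bookkeeping in the summation must be carried out carefully rather than by crude bounding.
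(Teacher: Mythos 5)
Your proposal is correct and is essentially the paper's own argument: the same reduction to $2\sum_{\ell=1}^{m}\ell^{k}\,(m!)^2/\bigl((m-\ell)!(m+\ell)!\bigr)$, the same use of $k\geq 2m$ to trade $\ell^k$ for $m^{k-2m}\ell^{2m}$ (the paper writes this as $y^z\leq y^2$ for $y=\ell/m$, $z=k/m$), the same Stirling estimate, and the same key one-variable inequality $h(\alpha):=(1+\alpha)\ln(1+\alpha)+(1-\alpha)\ln(1-\alpha)-2\ln\alpha\geq 7/6$ on $(0,1]$, which the paper dismisses with ``by elementary calculus'' and which you actually prove. The only cosmetic differences are that the paper bounds the sum by $2m$ times the maximal term instead of summing term by term, and that it avoids your separate treatment of the boundary term $\ell=m$ by invoking the Stirling lower bound in the weaker form $(m-\ell)!\geq\bigl((m-\ell)/e\bigr)^{m-\ell}$, which remains valid at $m-\ell=0$ under the convention $0^0=1$.

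One numerical detail in your treatment of the minimum of $h$ needs tightening. You bracket $\alpha^\ast\in(0.83,0.84)$ and then evaluate the substituted expression $g(\alpha):=2\ln\frac{1-\alpha}{\alpha}+\frac{2}{\alpha}+2$ to conclude $h(\alpha^\ast)=g(\alpha^\ast)>7/6$. But $g$ is strictly decreasing and ranges from $g(0.83)\approx 1.238$ down to $g(0.84)\approx 1.064$ across that bracket, so it straddles $7/6\approx 1.167$ and the bracket alone yields only $h(\alpha^\ast)\geq g(0.84)\approx 1.064$, which is not enough. The fix is immediate: either narrow the bracket to width about $10^{-3}$ (e.g.\ $\alpha^\ast\in(0.833,0.834)$, where $g\geq g(0.834)\approx 1.1696>7/6$), or use the convexity you already established via a tangent-line bound, $h(\alpha^\ast)\geq h(a)-|h'(a)|\,(b-a)$ for $a=0.83$, $b=0.84$, which gives $h(\alpha^\ast)\geq 1.1773-0.0335\cdot 0.01>7/6$. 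As you anticipate, the margin really is thin ($h(\alpha^\ast)\approx 1.177$ against $7/6\approx 1.167$), so this finishing step cannot be waved through; with that repair the proof is complete.
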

	\begin{proof}
		First we bound the left-hand side of \eqref{eq:allTermBoundSimple} as follows,
		\begin{align}
			\sum_{\ell=-m}^{m} \left|a_{\ell}^{(2m)}\ell^{k+1}\right|
			&= 2\sum_{\ell=1}^{m} \frac{\binom{m}{\ell}}{\binom{m+\ell}{\ell}}\ell^{k} \leq 2m \cdot \max_{\ell\in[m]} \frac{\binom{m}{\ell}}{\binom{m+\ell}{\ell}}\ell^{k}.  \label{eq:coeffMaxSimple}
		\end{align}
			We now upper bound the binomial quantity on the right as follows. For every $\ell\in[m]$, we have
		\begin{align}
			\frac{\binom{m}{\ell}}{\binom{m+\ell}{\ell}}\ell^{k}
			&=\frac{(m!)^2}{(m+\ell)!(m-\ell)!}\ell^{k} \nonumber\\
			&\leq \frac{e^2 m \left(\frac{m}{e}\right)^{\!2m}}{\sqrt{2\pi(m+\ell)}\left(\frac{m+\ell}{e}\right)^{m+\ell}\left(\frac{m-\ell}{e}\right)^{m-\ell}}\ell^{k}	\tag{using\footnote{See below} Fact~\ref{fact:stirling}}\\
			&\leq 3\sqrt{m} \frac{\left(\frac{m}{e}\right)^{\!2m}}{\left(\frac{m+\ell}{e}\right)^{\!\!m+\ell}\left(\frac{m-\ell}{e}\right)^{\!\!m-\ell}}\ell^k \tag{since $e^2/\sqrt{2\pi}\leq 3$}\\
			&= 3\sqrt{m} \frac{\left(\frac{m}{e}\right)^{\!2m}}{\left(\frac{(1+y)m}{e}\right)^{\!\!(1+y)m}\left(\frac{(1-y)m}{e}\right)^{\!\!(1-y)m}} \left(ym\right)^{k} \tag{substitute $y:=\ell/m$}\\							
			&= 3\sqrt{m} \left(\frac{1}{\left(1+y\right)^{1+y}\left(1-y\right)^{1-y}}\right)^{\!\!m} \left(ym\right)^{k} \nonumber\\
			&= 3\sqrt{m} \left(\frac{y^z}{\left(1+y\right)^{1+y}\left(1-y\right)^{1-y}}\right)^{\!\!m} m^k \tag{substitute $z:=k/m$} \\					
			&\leq 3\sqrt{m} \left(\frac{y^2}{\left(1+y\right)^{1+y}\left(1-y\right)^{1-y}}\right)^{\!\!m} m^k \tag{$y\leq 1$ and $z\geq 2$}\\		
			&\leq 3\sqrt{m} \left(e^{-\frac{7}{6}}\right)^{\!\!m} m^k. \tag{by elementary calculs}
		\end{align}	
		\kern200mm\addtocounter{footnote}{-1}
		\footnote{Additionally to Stirling's approximation we also used that $\left(\frac{m-\ell}{e}\right)^{\!m-\ell}\leq (m-\ell)!$, which is true even for $m-\ell=0$.}
		\vskip-6mm
	\end{proof}	    

	Now we are ready to prove Lemma~\ref{lemma:lagrangeBound} from Section~\ref{subsec:finiteDiff}, which we restate here.\oneDCentralDiff*	

\begin{proof} We prove this lemma as follows: first we approximate $f(x)$ using order-$(2m)$ Taylor expansion, and bound the error using Lagrange remainder term. Then we use Lagrange interpolation polynomials to re-express the Taylor polynomial, and use this interpolation formula to approximately compute the derivative of $f$ at $0$ yielding the $(2m)$-th central difference formula. Finally, we use Lemma~\ref{lemma:coeffBoundSimple} to upper bound the difference between $f'(0)\delta$ and the $(2m)$-th central difference formula $f_{(2m)}(\delta)$.

Recall that Taylor's theorem with Lagrange remainder term says that for all $y\in\mathbb{R}$,
		\begin{equation} \label{eq:lagrange}
			f(y) = \underset{:=p(y/\delta)}{\underbrace{\sum_{j=0}^{2m}\frac{f^{(j)}(0)}{j!}y^j}} + \frac{f^{(2m+1)}(\xi)}{(2m+1)!}y^{2m+1}
		\end{equation}			
		for some $\xi \in [0,y]$ (in case $y<0$, then $\xi\in [y,0]$).
		Now let $z:= y/\delta$, we introduce a $\delta$-scaled version of the $(2m)$-th order Taylor polynomial at $0$, which we will use for re-expressing $f'(0)$:
		\begin{equation}\label{eq:taylorLagrangeApx}
			p(z):=-\sum_{j=0}^{2m}\frac{f^{(j)}(0)}{j!}(z\delta)^j=f(z\delta)-\frac{f^{(2m+1)}(\xi)}{(2m+1)!}(z\delta)^{2m+1}.
		\end{equation}
		Because $\deg(p)\leq 2m$ we can use the following Lagrange interpolation formula to represent it~as:
		$$p(z)=\sum_{\ell=-m}^{m}p(\ell)\prod_{\underset{i\neq \ell}{i=-m}}^{m}\frac{z-i}{\ell-i}.$$
		Using the above identity, it is not hard to see that 
		\begin{equation}\label{eq:lagrangeDeriv}
			p'(0)=\sum_{\underset{\ell\neq 0}{\ell=-m}}^{m}p(\ell)\frac{(m!)^2}{-\ell} \frac{(-1)^{\ell}}{(m+|\ell|)!(m-|\ell|)!}
			=\sum_{\underset{\ell\neq 0}{\ell=-m}}^{m}(-1)^{\ell-1} \frac{\binom{m}{|\ell|}}{\binom{m+|\ell|}{|\ell|}}\frac{p(\ell)}{\ell}	
			=\sum_{\ell=-m}^{m}a_{\ell}^{(2m)}p(\ell).
		\end{equation}		
		Observe that by definition $p'(0)=f'(0)\delta$, therefore
		\begin{align*}
			f'(0)\delta
			&=p'(0)\\
			&\overset{\eqref{eq:lagrangeDeriv}}{=}\sum_{\ell=-m}^{m}a_{\ell}^{(2m)}p(\ell)\\
			&\overset{\eqref{eq:taylorLagrangeApx}}{=}\sum_{\ell=-m}^{m}a_{\ell}^{(2m)}\left(f(\ell\delta)-\frac{f^{(2m+1)}(\xi_\ell)}{(2m+1)!}\ell^{2m+1}\delta^{2m+1}\right).
		\end{align*}
		Now we bound the left-hand side of \eqref{eq:nthFiniteDiff} using the above equality:
		\begin{align*}
			\kern-2mm\left|\sum_{\ell=-m}^{m}a_{\ell}^{(2m)} \frac{f^{(2m+1)}(\xi_\ell)}{(2m+1)!}\ell^{2m+1}\delta^{2m+1}\right|
			&\leq \sum_{\ell=-m}^{m}\left|a_{\ell}^{(2m)}\ell^{2m+1}\right| \frac{\nrm{f^{(2m+1)}}_\infty}{(2m+1)!}|\delta|^{2m+1}\\
			&\leq 6 e^{-\frac{7m}{6}}m^{2m+3/2}	\frac{\nrm{f^{(2m+1)}}_\infty}{(2m+1)!} |\delta|^{2m+1} \tag*{(Lemma~\ref{lemma:coeffBoundSimple} with $\!k\!:=\!2m$)\kern-2mm}\\
			&\leq 3 e^{-\frac{7m}{6}}m^{2m+1/2}	\frac{\nrm{f^{(2m+1)}}_\infty}{(2m)!} |\delta|^{2m+1} \\			
			&\leq 3 e^{-\frac{7m}{6}}m^{2m+1/2}	\frac{\nrm{f^{(2m+1)}}_\infty}{\sqrt{4\pi m}(2m/e)^{2m}} |\delta|^{2m+1} \tag{using Fact~\eqref{eq:StirlingBounds}}\\				&\leq e^{-\frac{7m}{6}}\left(\frac{e}{2}\right)^{\!\!2m}\nrm{f^{(2m+1)}}_\infty|\delta|^{2m+1} \tag*{(since $\frac{3}{\sqrt{4\pi}}\leq 1$)}\\
			&\leq e^{-\frac{m}{2}}\nrm{f^{(2m+1)}}_\infty|\delta|^{2m+1} \tag*{(since $e^{-\frac{7m}{6}}\left(\frac{e}{2}\right)^{\!\!2m}\leq e^{-m/2}$)}. 
		\end{align*}
		
		Finally, the first inequality\footnote{We conjecture that the first inequality of \eqref{eq:diffCoeffs} becomes an equality if we take half of the middle term.} in \eqref{eq:diffCoeffs} holds element-wise and the second inequality is standard from elementary calculus, and can be proven using the integral of $1/x$.
	\end{proof}

We now prove a version of Lemma~\ref{lemma:lagrangeBound} but for higher dimensional functions, by making the assumption that all the higher derivatives are bounded. 
	\begin{corollary}\label{cor:lagrangeBound}
		Let $m\in \N$, $B>0$, $\pmb{x}\in\mathbb{R}^d$ and $\pmb{r}:=\pmb{x}/\nrm{\pmb{x}}$. Suppose $f:[-m\nrm{\pmb{x}}_\infty,m\nrm{\pmb{x}}_\infty]^d\rightarrow \mathbb{R}$ is $(2m+1)$-times differentiable and
		$$
		|\partial_{\pmb{r}}^{2m+1} f(\tau\pmb{x})|\leq B \quad \text{ for all  }\tau\in [-m,m],
		$$
		then
		$$\left|f_{(2m)}(\pmb{x})-\nabla f(\pmb{0})\cdot\pmb{x}\right|\leq B e^{-\frac{m}{2}}\nrm{\pmb{x}}^{2m+1}.$$
	\end{corollary}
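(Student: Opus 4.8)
The plan is to reduce the $d$-dimensional statement to the one-dimensional Lemma~\ref{lemma:lagrangeBound} by restricting $f$ to the line through the origin in the direction $\pmb{r}=\pmb{x}/\nrm{\pmb{x}}$. Concretely, I would introduce the one-parameter function $h(t):=f(t\pmb{r})$ and apply Lemma~\ref{lemma:lagrangeBound} to $h$ with the choice $\delta:=\nrm{\pmb{x}}$, then translate each object back into the $d$-dimensional language.

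First I would verify that this choice makes the central-difference formula for $h$ coincide with the one for $f$. Since $\delta\pmb{r}=\nrm{\pmb{x}}\cdot\pmb{x}/\nrm{\pmb{x}}=\pmb{x}$, we have $h(\ell\delta)=f(\ell\delta\pmb{r})=f(\ell\pmb{x})$ for every $\ell$, so that $h_{(2m)}(\delta)=\sum_{\ell=-m}^{m} a_\ell^{(2m)}h(\ell\delta)=\sum_{\ell=-m}^{m} a_\ell^{(2m)}f(\ell\pmb{x})=f_{(2m)}(\pmb{x})$ by Definition~\ref{def:centralDiffernce}. Next, by the chain rule $h'(t)=\nabla f(t\pmb{r})\cdot\pmb{r}$, hence $h'(0)\delta=\nrm{\pmb{x}}\,\nabla f(\pmb{0})\cdot\pmb{r}=\nabla f(\pmb{0})\cdot\pmb{x}$. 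Thus the left-hand side $|h'(0)\delta-h_{(2m)}(\delta)|$ of Lemma~\ref{lemma:lagrangeBound} is exactly the quantity $|\nabla f(\pmb{0})\cdot\pmb{x}-f_{(2m)}(\pmb{x})|$ we wish to bound.

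It then remains to control the error term, which needs a bound on $\nrm{h^{(2m+1)}}_\infty$ over $[-m\delta,m\delta]$. Iterating the chain rule and matching it to Definition~\ref{def:dirDer} gives $h^{(2m+1)}(t)=\partial_{\pmb{r}}^{2m+1}f(t\pmb{r})$. Writing $t=\tau\nrm{\pmb{x}}$ with $\tau\in[-m,m]$ we get $t\pmb{r}=\tau\pmb{x}$, so $h^{(2m+1)}(t)=\partial_{\pmb{r}}^{2m+1}f(\tau\pmb{x})$, which the hypothesis bounds by $B$; hence $\nrm{h^{(2m+1)}}_\infty\le B$. Before invoking the lemma I would also check its domain hypothesis: for $t\in[-m\nrm{\pmb{x}},m\nrm{\pmb{x}}]$ every coordinate of $t\pmb{r}$ has magnitude $|t|\,\nrm{\pmb{x}}_\infty/\nrm{\pmb{x}}\le m\nrm{\pmb{x}}_\infty$, so the segment $\{t\pmb{r}\}$ lies inside $[-m\nrm{\pmb{x}}_\infty,m\nrm{\pmb{x}}_\infty]^d$, where $f$ is assumed $(2m+1)$-times differentiable, and $h$ inherits this differentiability. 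Plugging $\delta=\nrm{\pmb{x}}$ and $\nrm{h^{(2m+1)}}_\infty\le B$ into Lemma~\ref{lemma:lagrangeBound} then yields $|\nabla f(\pmb{0})\cdot\pmb{x}-f_{(2m)}(\pmb{x})|\le B e^{-m/2}\nrm{\pmb{x}}^{2m+1}$, as claimed.

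Since this is essentially a direct corollary of the one-dimensional estimate, I do not expect a serious obstacle; the only points requiring care are the bookkeeping of the scaling $\delta=\nrm{\pmb{x}}$ (so that the grid points $\ell\pmb{x}$ match $\ell\delta\pmb{r}$ and so that $h'(0)\delta$ reproduces $\nabla f(\pmb{0})\cdot\pmb{x}$), and confirming that the relevant line segment stays in the region on which both the differentiability and the directional-derivative bound are assumed.
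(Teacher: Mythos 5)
Your proposal is correct and follows essentially the same route as the paper: restrict $f$ to the line through the origin in the direction of $\pmb{x}$ and invoke the one-dimensional Lemma~\ref{lemma:lagrangeBound}. The only cosmetic difference is the parametrization --- the paper sets $h(\tau):=f(\tau\pmb{x})$ and applies the lemma with $\delta=1$, recovering the factor $\nrm{\pmb{x}}^{2m+1}$ from the identity $\partial_{\pmb{x}}^{2m+1}=\nrm{\pmb{x}}^{2m+1}\partial_{\pmb{r}}^{2m+1}$, whereas you use the unit-speed parametrization $h(t):=f(t\pmb{r})$ with $\delta=\nrm{\pmb{x}}$ and obtain the same factor from the $|\delta|^{2m+1}$ term of the lemma.
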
	
   	\begin{proof}
		Consider the function $h(\tau):=f(\tau \pmb{x})$, then
		\begin{align*}
			\left|f_{(2m)}(\pmb{x})-\nabla f(\pmb{0})\cdot\pmb{x}\right|
			&=\left|h_{(2m)}(1)-h'(0)\right|\\
			&\leq e^{-\frac{m}{2}}\sup_{\tau\in [-m,m]}\left|h^{(2m+1)}(\tau)\right|\tag{by Lemma~\ref{lemma:lagrangeBound}} \\
			&= e^{-\frac{m}{2}}\sup_{\tau\in [-m,m]}\left|\partial^{2m+1}_{\pmb{x}}f(\tau \pmb{x})\right|\\
			&= e^{-\frac{m}{2}}\sup_{\tau\in [-m,m]}\left|\partial^{2m+1}_{\pmb{r}}f(\tau \pmb{x})\right|\nrm{\pmb{x}}^{2m+1}\\
			&\leq B e^{-\frac{m}{2}}\nrm{\pmb{x}}^{2m+1}.
		\end{align*}
	\end{proof}

With this corollary in hand, we now show how to calculate the gradient of $f:\R^d\rightarrow \R$ under a bounded higher derivative~condition.

\lagrangeAlg*
\begin{proof}
Let $r_{\text{opt}}:=\frac{2}{\sqrt{d}}\sqrt[2m]{\frac{\eps e^{\frac{m}{2}}}{B \sqrt{d}\cdot 4\cdot 42\cdot \pi}}$, and let $r:=\min\left(r_{\text{opt}},\frac{2R}{m}\right)$. 
By Corollary~\ref{cor:lagrangeBound} we get that whenever $\nrm{\pmb{x}}_\infty\leq r/2$ we have
\begin{align*}
\left|f_{(2m)}(\pmb{x})-\nabla f(\pmb{0})\cdot\pmb{x}\right|
&\leq B e^{-\frac{m}{2}}\nrm{\pmb{x}}^{2m+1}\nonumber\\
&\leq B e^{-\frac{m}{2}}\left(r\frac{\sqrt{d}}{2}\right)^{2m+1}\\
&= B e^{-\frac{m}{2}}\left(r_{\text{opt}}\frac{\sqrt{d}}{2}\right)^{2m+1}\left(\frac{r}{r_{\text{opt}}}\right)^{2m+1}\\
&= \frac{\eps r_{\text{opt}}}{8\cdot 42\cdot \pi}\left(\frac{r}{r_{\text{opt}}}\right)^{2m+1}\\
&\leq \frac{\eps r}{8\cdot 42\cdot \pi}.
\end{align*}

Assume without loss of generality that $\frac{1}{\eps r}=2^n$ for some $n\in\N$.
In Theorem~\ref{thm:genericJordan}, we showed that $\bigOt{\log\left(\frac{d}{\rho}\right)}$ queries to the phase oracle $\mathrm{O}_{\!f_{(2m)}}^{2^{n+1}\pi}$ suffice to compute an $\eps$-precise approximation of the gradient with probability $\geq 1-\rho$. Now observe that the phase oracle $$\mathrm{O}_{\!f_{(2m)}}^{2^{n+1}\pi}(\pmb{x})
=\prod_{\ell=-m}^{m}\mathrm{O}_{\!f}^{2^{n+1}\pi a^{\!(2m)}_\ell}(\ell\pmb{x})
=\prod_{\ell=-m}^{m}\mathrm{O}_{\!f}^{a^{\!(2m)}_\ell\frac{2\pi}{\eps r}}(\ell\pmb{x}),$$
can be implemented using 

\begin{align*}
\sum_{\ell=-m}^{m}\left\lceil a^{\!(2m)}_\ell \frac{2\pi}{\eps r}\right\rceil
&\leq 2m+ \frac{2\pi}{\eps r}\sum_{\ell=-m}^{m} a^{\!(2m)}_\ell \\
&\overset{\eqref{eq:diffCoeffs}}{\leq} 2m+ \frac{2\pi}{\eps r}\left(2\log(m)+2\right)\\
& = \bigO{m+\max\left(\frac{\sqrt{d}}{\eps}\sqrt[2m]{\frac{B\sqrt{d}}{\eps}},\frac{m}{\eps R}\right)\log(2m)}
\end{align*}
 fractional phase queries to $\mathrm{O}_f$.
\end{proof}

	Let us elaborate on the above cost by making some strong regularity assumptions on $f$. Suppose that for every $k\in [2m+1]$, index-sequence $\alpha\in[d]^k$ and $\pmb{x}\in\R^d$, we have $|\partial_\alpha f(\pmb{x})|\leq 1$ (implying also $B=1$). What can we say by using the above corollary?
	
	Well, it could happen\footnote{An example for such a function is $f(\pmb{x}):=\sin(x_1+x_2+\ldots+x_d)$.} that for every $\beta\in[d]^{2m+1}$, we have $\partial_\beta f (\pmb{0})=1$. 
	Then by Eq.~\ref{eq:rthderivateoff}, by picking $\pmb{r}:=\pmb{1}/\sqrt{d}$ we have $\partial^{(2m+1)}_{\pmb{r}}f(\pmb{0})=d^{\frac{2m+1}{2}}$. This is actually the worst possible case under our assumptions,
	it is easy to show that whenever $\nrm{\pmb{r}}\leq 1$, we must have $|\partial^{(2m+1)}_{\pmb{r}}f(\pmb{x})|\leq B=d^{\frac{2m+1}{2}}$ for all $\pmb{x}\in\R^d$. 
	In this case the best complexity we can get from Theorem~\ref{thm:lagrangeAlg} is 
	by choosing $m=\log(d/\eps)$ which yields an overall query complexity upper bound of

	$$
	\bigO{\frac{d}{\eps}\log(d/\rho)\log\log(d/\eps) }.
	$$
		
	This bound achieves the desired $O(1/\eps)$-scaling precision parameter $\eps$, but fails to grasp the $\sqrt{d}$ scaling. 	This failure is mainly due to the loose upper bound bound on $B$. Also as we discussed in Section~\ref{sec:regularFunctionsDiscussion}, we cannot really hope to achieve a $\sqrt{d}$ scaling with an algorithm that implements a phase oracle for an approximate affine function that uniformly approximates an affine function for all points of the hypergrid. 
	But fortunately as we showed in Theorem~\ref{thm:genericJordan}, it is sufficient if the approximation works for a constant fraction of the evaluation points.
 	
 	In order to rigorously prove $\sqrt{d}$ scaling with the dimension we assume that the function is analytic.
  The following lemma we will use for answering the question:
	Given a (complex) analytic function with its multi-dimensional Taylor series as in \eqref{eq:multidimensionalAnalytic}, 
	where do we need to truncate its Taylor series if we want to get a good approximation on the $d$-dimensional hypercube $[-1,1]^d$?
	
	\begin{lemma}\label{lemma:tensorChebyshev}
		Let $d,k \in\mathbb{N}_+$, and suppose $H\in\left(\mathbb{R}^{d}\right)^{\!\otimes k}$ is an order $k$ tensor of dimension $d$, having all elements bounded by $1$ in absolute value, i.e., $\nrm{H}_\infty\leq 1$. Suppose $\{x_1,\ldots,x_d\}$ are i.i.d. symmetric random variable bounded in $[-1/2,1/2]$ and satisfying $\mathbb{E}[(x_i)^{2k-1}]=0$ for every $k\in\mathbb{N}_+$. Then $\mathbb{P}\left[\left|\sum_{\alpha\in[d]^k} H_\alpha x^\alpha  \right|\geq \sqrt{2}\left(r\sqrt{\frac{dk}{2}}\right)^{\!\!k}\right]\leq \frac{1}{r^{2k}}$. 
	\end{lemma}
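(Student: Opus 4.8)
The plan is to prove the tail bound by a second-moment argument. Write $Z:=\sum_{\alpha\in[d]^k}H_\alpha x^\alpha$, where $x^\alpha=\prod_{j\in[k]}x_{\alpha_j}$, and set $t:=\sqrt{2}\bigl(r\sqrt{dk/2}\bigr)^{k}$. Since $Z^2$ is nonnegative, Markov's inequality gives
$$\mathbb{P}\bigl[|Z|\geq t\bigr]=\mathbb{P}\bigl[Z^2\geq t^2\bigr]\leq \frac{\mathbb{E}[Z^2]}{t^2}.$$
As $t^2=2r^{2k}(dk/2)^{k}$, it suffices to show $\mathbb{E}[Z^2]\leq (dk/2)^{k}$, which immediately yields $\mathbb{E}[Z^2]/t^2\leq 1/(2r^{2k})\leq 1/r^{2k}$. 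Thus the whole problem reduces to a clean bound on the second moment of $Z$.

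To estimate $\mathbb{E}[Z^2]$, expand it and factor over the independent coordinates:
$$\mathbb{E}[Z^2]=\sum_{\alpha,\beta\in[d]^k}H_\alpha H_\beta\,\mathbb{E}[x^\alpha x^\beta],\qquad \mathbb{E}[x^\alpha x^\beta]=\prod_{i=1}^{d}\mathbb{E}\bigl[x_i^{\,c_i(\alpha)+c_i(\beta)}\bigr],$$
where $c_i(\gamma):=|\{j\in[k]:\gamma_j=i\}|$. Since all odd moments of $x_i$ vanish, a factor (hence the whole product) is zero unless $c_i(\alpha)+c_i(\beta)$ is even for every $i$; equivalently, unless every symbol occurs an even number of times in the concatenated sequence $(\alpha,\beta)\in[d]^{2k}$. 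When the product is nonzero, each surviving factor obeys $|\mathbb{E}[x_i^{m}]|\leq (1/2)^{m}$ because $|x_i|\leq 1/2$, and the exponents sum to $2k$, so $|\mathbb{E}[x^\alpha x^\beta]|\leq (1/2)^{2k}$. Combining this with $|H_\alpha|\leq 1$ leaves a pure counting bound
$$\mathbb{E}[Z^2]\leq \Bigl(\tfrac{1}{2}\Bigr)^{2k} N,\qquad N:=\#\bigl\{\gamma\in[d]^{2k}:\text{every symbol has even multiplicity in }\gamma\bigr\}.$$

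The crux — and the step I expect to be the main obstacle — is to bound $N$ sharply enough. The key observation is a covering argument: each even-multiplicity sequence $\gamma$ determines a perfect matching of the $2k$ positions obtained by canonically pairing positions that carry equal symbols, and the fiber of this map has size at most $d^{k}$, since $\gamma$ is recovered from the matching by assigning one symbol to each of the $k$ pairs. As there are $(2k-1)!!$ perfect matchings of $[2k]$, this gives $N\leq (2k-1)!!\,d^{k}$. (Equivalently, $N=\mathbb{E}\bigl[(\sum_{i=1}^d\zeta_i)^{2k}\bigr]$ is the $2k$-th moment of a Rademacher sum, for which the same estimate is standard.) Using $(2k-1)!!\leq 2^k k!\leq 2^k k^{k}=(2k)^{k}$, we conclude $N\leq (2dk)^{k}$.

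Putting the pieces together yields
$$\mathbb{E}[Z^2]\leq \Bigl(\tfrac14\Bigr)^{k}(2dk)^{k}=\Bigl(\tfrac{dk}{2}\Bigr)^{k},$$
which is exactly the bound required in the first paragraph, completing the proof. The only genuinely nontrivial ingredients are the vanishing of the odd-parity cross terms (immediate from symmetry and independence) and the counting bound on $N$; everything else is bookkeeping with explicit constants, organized so that the factor $\sqrt2$ in the statement appears as slack (indeed the argument even yields the stronger $1/(2r^{2k})$).
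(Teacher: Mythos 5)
Your proof is correct, and it follows the same overall skeleton as the paper's: reduce the tail bound to $\mathbb{E}[Z^2]$ via Markov/Chebyshev, expand the square, and use the vanishing of odd moments to observe that only pairs $(\alpha,\beta)$ whose concatenation has all-even multiplicities survive. Where you diverge is in how the resulting quantity is estimated. The paper bounds $\mathbb{E}[Z^2]\leq \mathbb{E}\left[(x_1+\cdots+x_d)^{2k}\right]$ and evaluates this $2k$-th moment analytically, by writing it as a tail integral, invoking the Hoeffding bound $\Pr\left[|\sum_i x_i|\geq s\right]\leq 2e^{-2s^2/d}$, and computing the resulting Gamma-function integral to get $2(d/2)^k k!<2(dk/2)^k$. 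You instead keep the factor $(1/2)^{2k}$ from the moment bounds $|\mathbb{E}[x_i^m]|\leq (1/2)^m$ and reduce everything to counting even-multiplicity sequences in $[d]^{2k}$, which you bound by $(2k-1)!!\,d^k$ via the perfect-matching covering argument (a Khintchine-type moment estimate); this is entirely combinatorial and elementary, avoids Hoeffding and the integral computation, and in fact lands on the slightly sharper intermediate bound $k!(d/2)^k$ versus the paper's $2k!(d/2)^k$, which is why you end up with $1/(2r^{2k})$ rather than just $1/r^{2k}$. Both routes are sound; yours trades the probabilistic machinery for a counting argument and gains a factor of $2$, while the paper's tail-integration approach is perhaps more readily adaptable to non-bounded or non-identically-distributed coordinates where a sub-Gaussian tail is the natural hypothesis.
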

	\begin{proof}
		\begin{align*}
		\mathbb{E}\left[\left(\sum_{\alpha\in[d]^k} H_\alpha x^\alpha \right)^{\!\!2\,}\right]&=
		\mathbb{E}\left[\sum_{(\alpha,\beta)\in[d]^{2k}} H_\alpha H_\beta x^{(\alpha,\beta)} \right]\\
		&\leq \mathbb{E}\left[\sum_{(\alpha,\beta)\in[d]^{2k}} x^{(\alpha,\beta)} \right] \tag{$x_i$ is symmetric i.i.d. and $\nrm{H}_\infty\leq 1$}\\
		&= \mathbb{E}\left[ \left(x_1+x_2+\ldots+x_d\right)^{2k} \right]\\
		&= \int_{0}^{\infty}\mathbb{P}\left(\left(x_1+x_2+\ldots+x_d\right)^{2k} \geq t \right) dt\\			
		&= \int_{0}^{\infty}\mathbb{P}\left(\left|x_1+x_2+\ldots+x_d\right| \geq t^{1/2k} \right) dt\\			
		&\leq \int_{0}^{\infty}2 e^{-\left(\frac{2}{d} t^{\frac{1}{k}}\right)}  dt \tag{by the Hoeffding bound}\\	
		&=\int_{0}^{\infty}2 \left(\frac{d}{2}\right)^{\!\!k} k y^{k-1} e^{-y}  dy \tag*{$\left(\text{by change of variables }y:=\left(\left(\frac{2}{d}\right)^{\!k}t\right)^{\!\!\frac{1}{k}}\right)$}\\
		&=2 \left(\frac{d}{2}\right)^{\!\!k} k\Gamma(k) \tag{by definition of $\Gamma(x)$}\\				
		&=2 \left(\frac{d}{2}\right)^{\!\!k} k! \tag{main property of $\Gamma(x)$}\\									
		&\leq 2 e\sqrt{k} \left(\frac{dk}{2e}\right)^{\!\!k} \tag{Stirling's approximation}\\
		&< 2 \left(\frac{dk}{2}\right)^{\!\!k}. \tag{for all $k\geq 1: \sqrt{k}e^{1-k}\leq k e^{1-k}\leq 1$}
		\end{align*}
		Now use Chebyshev inequality to conclude.
			\end{proof}

\begin{remark}
	The dependence on $d$ in Lemma~\ref{lemma:tensorChebyshev} cannot be improved, as can be seen using the central limit theorem:
	by choosing $H\equiv 1$ (the all $1$ tensor) it is not hard to see, that for a fixed $k$ the typical value of $\left|\sum_{\alpha\in[d]^k} H_\alpha x^\alpha \right|=\left|\left(x_1+x_2+\ldots+x_d\right)^{k} \right| =\Theta( \sqrt{d}^k)$.
	A natural follow-up question is, if we can improve the $k$-dependence, in particular the $k^{k/2}$ factor? While it is possible that one can improve the above result we show in the next paragraph that the typical value eventually becomes much larger than $\sim d^{\frac{k}{2}}$. (An interesting regime, where our discussion does not imply a lower bound, is when $k\sim\sqrt{d}$.)
			
	Counterexample to the $\sim d^{\frac{k}{2}}$ scaling: suppose $\mathbb{N}\ni a\geq 5$, $d\geq 2^{a}$ and $k=ad$, then let $H$ be the tensor which is $1$ for index-sequences containing each index with even multiplicity, and $0$ otherwise. There are at least $d^{(a-1)d}$ such index sequences since there are $d^{(a-1)d}$ index-sequences of length $(a-1)d$ and each such index-sequence can be extended to an even multiplicity index-sequence of length $ad$. Also suppose that $P(|X_i|\geq 1/4)\geq 1/2$, then this tensor evaluated at every possible value of the random vector will yield at least $d^{(a-1)d}2^{-k}=d^{k-d}2^{-a(k/a)}\geq d^{\left(1-\frac{1}{a}\right)k}d^{-\frac{k}{a}} = d^{\left(1-\frac{2}{a}\right)k} \gg  d^{\frac{k}{2}}$. 
	\anote{Does our query lower bound say something about the asymptotics regarding $k$?}
    \end{remark}

	Now we are ready to prove Theorem~\ref{thm:analyticBound}. We restate the theorem for convenience.
	\analyticBound*
	\begin{proof} 
		Because $f$ is analytic it coincides with its $d$-dimensional Taylor series:
		\begin{align} \label{eq:d-dimensionaltaylor}
		f(\pmb{y}) &=\sum_{k=0}^{\infty}\sum_{\alpha\in[d]^k}\frac{ \pmb{y}^{\alpha} \cdot \partial_\alpha f(\pmb{0})}{k!}.
		\end{align}
		We are now going to use the central differences formula defined earlier in Definition~\ref{def:centralDiffernce}:
		\begin{align*}
		f_{\!(2m)}\!(\pmb{y})
		&=\sum_{\ell=-m}^{m} a_\ell^{(2m)}f\left(\ell \pmb{y} \right)\tag{using Definition~\ref{def:centralDiffernce} }\\
		&=\sum_{\ell=-m}^{m} a_\ell^{(2m)}\sum_{k=0}^{\infty}\frac{1}{k!}\sum_{\alpha\in[d]^k}(\ell \pmb{y})^{\alpha} \cdot \partial_\alpha f(\pmb{0}) \tag{using Eq.~\ref{eq:d-dimensionaltaylor}}\\
		&=\sum_{k=0}^{\infty}\frac{1}{k!}\sum_{\alpha\in[d]^k} \pmb{y}^{\alpha} \cdot \partial_\alpha f(\pmb{0}) 
		\underset{*}{\underbrace{\sum_{\ell=-m}^{m} a_\ell^{(2m)}\ell^k}}.
		\end{align*}
		
		Now, we apply Lemma~\ref{lemma:lagrangeBound} to the function $x^k$ with the choice $\delta=1$, to conclude that $(*)$ is $0$ if $k\leq 2m$ except for $k=1$, in which case it is $1$.
		This implies that 
		\begin{align*}
		\kern-2mm	\left|\nabla f(\pmb{0})\pmb{y}-f_{\!(2m)}\!(\pmb{y})\right|
		&=\left|\sum_{k=2m+1}^{\infty}\frac{1}{k!}\sum_{\alpha\in[d]^k}\!\pmb{y}^{\alpha}	\cdot \partial_\alpha f(\pmb{0}) \sum_{\ell=-m}^{m} a_\ell^{(2m)}\ell^k\right| \\
		&\leq\sum_{k=2m+1}^{\infty}\left(\frac{e}{k}\right)^{\!\!k}\!\frac{1}{\sqrt{4\pi m}}\left|\sum_{\alpha\in[d]^k}\!\pmb{y}^{\alpha} \cdot \partial_\alpha f(\pmb{0})\right|	\left|\sum_{\ell=-m}^{m} a_\ell^{(2m)}\ell^k\right| \tag{Stirling bound \eqref{eq:StirlingBounds}}\\		
		&\leq\sum_{k=2m+1}^{\infty}\left|\sum_{\alpha\in[d]^k}\!\pmb{y}^{\alpha} \cdot \partial_\alpha f(\pmb{0})\right|\left(\frac{e}{k}\right)^{\!\!k}\frac{3 e^{-\frac{7m}{6}}m^{k+\frac{1}{2}}}{\sqrt{\pi m}}	\tag{Lemma~\ref{lemma:coeffBoundSimple} with $k'\!\!:=\!k-\!1$}\\	
		&\leq\sum_{k=2m+1}^{\infty}\left|\sum_{\alpha\in[d]^k}\!\pmb{y}^{\alpha} \cdot \partial_\alpha f(\pmb{0})\right|\frac{1}{\sqrt{2}}\left(\frac{em}{k}\right)^{\!\!k}.	\tag{using $3\sqrt{2/\pi}e^{\!\!-\frac{7m}{6}}\leq 1$}	
		\end{align*}	
		If we take a uniformly random $\pmb{y}\in R\cdot G_d^{(n)}$, then $\pmb{y}$ has coordinates symmetrically distributed around zero, therefore by Lemma~\ref{lemma:tensorChebyshev} (choosing $r:=4$) we know that for all $k\in\mathbb{N}_+$ the ratio of $\pmb{y}$ vectors for which
		\begin{equation}\label{eq:typicalBound}
		\left|\sum_{\alpha\in[d]^k}\frac{\pmb{y}^{\alpha}}{R^k} \cdot \frac{\partial_\alpha f(\pmb{0})}{c^k k^{\frac{k}{2}}}\right|\geq \sqrt{2}\left(4\sqrt{\frac{dk}{2}}\right)^{\!\!k}
		\end{equation}
		is at most $4^{-2k}$. Since $\sum_{k=2m+1}^{\infty}4^{-2k}\leq \sum_{k=3}^{\infty}4^{-2k} < 1/1000$, it follows that apart from a $1/1000$-th fraction of the $\pmb{y}$ vectors, the other $\pmb{y}$s satisfy the following:
		\begin{align*}
		\left|\nabla f(\pmb{0})\pmb{y}-f_{\!(2m)}\!(\pmb{y})\right| &\leq\sum_{k=2m+1}^{\infty}\left|\sum_{\alpha\in[d]^k}\!\pmb{y}^{\alpha} \cdot \partial_\alpha f(\pmb{0})\right|\frac{1}{\sqrt{2}}\left(\frac{em}{k}\right)^{\!\!k} \\
		&\leq \!\! \sum_{k=2m+1}^{\infty} \sqrt{2}\left(\!4\sqrt{\frac{dk}{2}}\right)^{\!\!k} R^kc^k k^{\frac{k}{2}}\frac{1}{\sqrt{2}}\left(\frac{em}{k}\right)^{\!\!k}  \tag{using Eq.~\eqref{eq:typicalBound}}\\
		&= \sum_{k=2m+1}^{\infty}\left(\frac{4\sqrt{d}R c e m}{\sqrt{2}}\right)^{\!\!k} \\
		&< \sum_{k=2m+1}^{\infty}\left(8 R c m \sqrt{d}\right)^{\!\!k}. \tag{using $4e< 8\sqrt{2}$}
		\end{align*}
	\end{proof}		

	\section{Interconversion between phase and probability oracles}\label{apx:oracleConversions}
	In this appendix we show how to convert a phase oracle to a probability oracle efficiently.
	This is useful because it translates our lower bound proof (Theorem~\ref{thm:queryLowerBound}) to the original probability formalism (with a $\log(1/\eps)$ loss due to conversion).
	Also it shows how to implement generalized fractional queries using a phase oracle: We convert the phase oracle to a probability oracle, then
	we reduce the probability using an ancilla qubit, then we convert back the reduced probability oracle to a phase oracle. 
	
	For the conversion we are going to the following lemma from \cite[Lemma 37]{van2017quantum}:
	\begin{lemma}\label{lemma:LowWeightAPX}
	  Let $\delta,\eps\in\!(0,1)$ and $f:\mathbb{R}\rightarrow \mathbb{C}$ s.t. $\left|f(x)\!-\!\sum_{k=0}^K a_k x^k\right|\leq \eps/4$ for all $x\in\![-1+\delta,1-\delta]$.
	  Then $\exists\, c\in\mathbb{C}^{2M+1}$ such that 
	  $$
	  \left|f(x)-\sum_{m=-M}^M c_m e^{\frac{i\pi m}{2}x}\right|\leq \eps
	  $$ 
	  for all $x\in\![-1+\delta,1-\delta]$, where $M=\max\left(2\left\lceil \ln\left(\frac{4\nrm{a}_1}{\eps}\right)\frac{1}{\delta} \right\rceil,0\right)$ and $\nrm{c}_1\leq \nrm{a}_1$. Moreover $c$ can be efficiently calculated on a classical computer in time $\text{poly}(K,M,\log(1/\eps))$.
	\end{lemma}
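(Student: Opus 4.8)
The plan is to reduce the statement to a per-monomial Fourier approximation and to exploit the $\delta$-sized buffer at the two ends of the interval. First, since $\bigl|f(x)-\sum_{k=0}^K a_k x^k\bigr|\le\eps/4$ on $[-1+\delta,1-\delta]$, the triangle inequality shows that it suffices to approximate the polynomial $P(x):=\sum_{k=0}^K a_k x^k$ by a linear combination of the exponentials $e^{i\pi m x/2}$ to within $3\eps/4$ on the same interval; the error to $f$ then follows by adding the two contributions. The plan is to do this monomial by monomial: if for each $k$ I can find $c^{(k)}\in\mathbb{C}^{2M+1}$ with $\nrm{c^{(k)}}_1\le 1$ and $\bigl|x^k-\sum_{m=-M}^M c^{(k)}_m e^{i\pi m x/2}\bigr|\le \eta$ on the interval, then setting $c:=\sum_k a_k c^{(k)}$ yields $\nrm{c}_1\le\sum_k|a_k|\,\nrm{c^{(k)}}_1\le\nrm{a}_1$ together with a total error bounded by $\nrm{a}_1\eta$. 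Choosing $\eta\approx\eps/\nrm{a}_1$ then produces the claimed shape of $M$ and the bound $\nrm{c}_1\le\nrm{a}_1$ simultaneously; note that with this strategy $M$ must end up independent of the degree $K$, which is exactly the non-trivial feature of the statement.

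The per-monomial step splits into two regimes. For the high powers, observe that on the shrunk interval $|x^k|\le(1-\delta)^k\le e^{-\delta k}$, so as soon as $k\ge \ln(1/\eta)/\delta$ the monomial is already pointwise smaller than $\eta$ and I may simply take $c^{(k)}=0$, which trivially has $\nrm{c^{(k)}}_1=0$. This is the source of the $k$-independence of $M$: only the $\bigO{\ln(1/\eta)/\delta}$ lowest monomials require a genuine approximation, and all of them can be handled with the same frequency budget $M\approx 2\lceil\ln(1/\eta)/\delta\rceil$. For the remaining low powers I would approximate $x^k$ by a $4$-periodic trigonometric polynomial whose frequencies are the multiples $\pi m/2$ allowed by the statement. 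The key geometric fact is that the target interval $[-1+\delta,1-\delta]$ sits strictly inside a period of length $4$, leaving a buffer of length $\approx 1+\delta$ on each side in which the periodic extension is free; using this slack to interpolate $x^k$ by a Gevrey-smooth (indeed analytic-type) periodic function forces its Fourier coefficients to decay exponentially, so that truncating at $M\approx \ln(1/\eta)/\delta$ frequencies leaves error at most $\eta$. A convenient concrete realisation is to express the truncated approximant through central-difference combinations of the exponentials, where the coefficient bounds proved in Lemma~\ref{lemma:lagrangeBound} give quantitative control on the tail.

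The main obstacle is to obtain the two requirements on the low-power approximants \emph{at once}: the exponential error decay (which wants a smooth band-limited extension) and the sharp coefficient bound $\nrm{c^{(k)}}_1\le 1$ (which leaves essentially no room for amplification). The clean way I expect to enforce the $\ell_1$ bound is to build the elementary approximants out of binomial/cosine-type expansions: identities such as $\cos^n(\pi x/2)=2^{-n}\sum_{j}\binom{n}{j}\cos\bigl((n-2j)\tfrac{\pi x}{2}\bigr)$ have total coefficient weight exactly $1$ because $2^{-n}\sum_j\binom{n}{j}=1$, and combining such unit-weight blocks preserves the property $\nrm{c^{(k)}}_1\le 1$; the design problem is then to match $x^k$ on $[-1+\delta,1-\delta]$ out of these blocks without overshooting the weight budget, which is precisely where the buffer interval and the exponential tail bounds must be balanced against each other. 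Finally, for the algorithmic claim, all the ingredients (finite-difference coefficients, binomial weights, and the truncated Fourier data) are given either by explicit closed forms or by solving a fixed finite linear system / performing an FFT of size $\bigO{M}$, so $c$ can be assembled in time $\mathrm{poly}(K,M,\log(1/\eps))$ as asserted.
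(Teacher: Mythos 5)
The paper does not actually prove this lemma itself: it is imported verbatim from \cite[Lemma~37]{van2017quantum}, so your proposal must be judged as a stand-alone proof. Your outer reduction is correct and matches the structure of the cited argument: by the triangle inequality it suffices to Fourier-approximate $P(x)=\sum_{k}a_kx^k$; approximating each monomial by some $c^{(k)}$ with $\nrm{c^{(k)}}_1\le1$ and pointwise error $\eta$ gives $\nrm{c}_1\le\nrm{a}_1$ and total error $\nrm{a}_1\eta$; and all monomials of degree $k\gtrsim\ln(\nrm{a}_1/\eps)/\delta$ can be discarded outright because $|x|^k\le(1-\delta)^k\le e^{-\delta k}$ on the shrunk interval. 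This correctly isolates where the $1/\delta$ in $M$ comes from and why $M$ is independent of $K$.

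The gap is that the only genuinely nontrivial step --- producing, for each surviving monomial $x^k$, coefficients with $\nrm{c^{(k)}}_1\le 1$ \emph{and} pointwise error $O(\eps/\nrm{a}_1)$ using $O(\ln(\nrm{a}_1/\eps)/\delta)$ frequencies --- is not carried out; you explicitly leave it as a ``design problem'' of balancing the weight budget against the tail, and that balancing act is the entire content of the lemma. Moreover, the route you sketch for it does not work as stated. A $4$-periodic function agreeing with $x^k$ (for $k\ge1$) on an interval of positive length cannot be analytic, since analytic continuation would force it to equal $x^k$ everywhere, contradicting periodicity; so the step ``analytic-type periodic extension $\Rightarrow$ exponentially decaying Fourier coefficients'' fails, and a $C^\infty$/Gevrey extension yields only sub-exponential decay and hence a larger $M$ than claimed. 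Independently, nothing in a periodize-and-mollify construction controls the $\ell_1$-norm of the resulting Fourier coefficients, which is precisely the constraint that rules out generic approximation-theoretic arguments here; and the appeal to Lemma~\ref{lemma:lagrangeBound} is a red herring, as those are interpolation coefficients for derivatives, not Fourier data. What a correct proof does (and what the cited proof does) is expand each low-degree monomial through explicit trigonometric identities whose coefficients form a subprobability distribution --- making $\nrm{c^{(k)}}_1\le1$ automatic, exactly in the spirit of your $\cos^n$ observation --- and then bound the truncation error by concentration of that distribution, which needs only about $\sqrt{k\ln(\nrm{a}_1/\eps)}$ frequencies per monomial; combined with the degree cutoff $k\le O(\ln(\nrm{a}_1/\eps)/\delta)$ this reproduces the stated $M$ (including its factor of $2$, via $\sqrt{2/\delta}\le 2/\delta$). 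You would need to supply such an explicit construction and its tail estimate to complete the proof.
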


	Now we are ready to show how to convert a phase oracle to a probability oracle, and for convenience we restate the result here:
	\phaseToProbability*
	\begin{proof}
		First we prove the claim for $\delta=1/4$, because it is conceptually simpler and it avoids the use of phase estimation. Then we generalize the result for arbitrary $\delta\in\!(0,1/2)$.		
		For ease of notation we fix a vector $x$, and denote $p:=p(x)\in\left[1/4 ,3/4 \right]$.
		The basic idea is that we implement the function $\sqrt{p(x)}=\sqrt{p}$ in the amplitude using LCU techniques~\cite{childs2015quantum}.
		We will use the modified phase $p':=p-1/2$, since the corresponding phase oracle can be trivially implemented using $V_p$ and an additional phase gate $e^{-i/2}$.
		The basis of our method is the following Taylor series representation\footnote{
			For concise notation of the coefficients we use generalized binomial coefficients $\binom{p}{k}$, which for a natural number $k$ and a real $p$, is a shorthand for
			$\binom{p}{k}= \frac{p(p-1)\cdots (p-k+1)}{k!}$.	
		}, which is convergent for all $y\in[-1,1]$:
		$$\sqrt{1+y}
		=\sum_{k=0}^{\infty}\binom{1/2}{k}y^k=\sum_{k=0}^{\infty}a_k y^k, \text{ where  }a_k:=\binom{1/2}{k}.$$
		Note that $\sum_{k=1}^{\infty}-|a_k|=\sum_{k=1}^{\infty}a_k(-1)^k=-1$ since $a_0=1$ and $\sqrt{1-1}=0$.
		
		Now observe that for $p'\in [-1/2,1/2]$ and $\theta=10/\pi p'$ we have
		$$\sqrt{1/2+p'}
		=\sqrt{1/2}\sqrt{1+2p'}
		=\sqrt{1/2}\sum_{k=0}^{\infty}a_k(2p')^k=\sum_{k=0}^{\infty}\check{a}_k\theta^k, \text{ where  }\check{a}_k:=\sqrt{1/2}a_k\left(\frac{\pi}{5}\right)^{\!\!k}\!.$$		
		One can verify that $\nrm{\check{a}}_1\leq 1$	
		therefore by choosing $K=\Theta(\log(1/\eps))$ we can ensure that for all $\theta\in(-4/5,4/5)$:
		$$
		\left|\sqrt{1/2+p'}-\sum_{k=0}^{\infty}\check{a}_k \theta^k\right|\leq \eps/4.
		$$
		As Lemma~\ref{lemma:LowWeightAPX} shows there exists $c\in\mathbb{C}^{2M+1}$ such that 
		$$
		\left|\sqrt{1/2+p'}-\sum_{m=-M}^M c_m e^{\frac{i\pi m}{2}\theta}\right|\leq \eps
		$$ 
		for all $\theta\in(-4/5,4/5)$, where $M=\bigO{K}$ and $\nrm{c}_1\leq \nrm{\check{a}}_1\leq 1$.
		By substituting $p'=p-1/2$ and $\theta=10/\pi p'$ we get that for all $p'\in[1/4,3/4]$:
		$$
		\Bigg|\sqrt{p}-\underset{\sqrt{\tilde{p}}}{\underbrace{\sum_{m=-M}^M c_m \left(e^{i5p'}\right)^{\!\!m}}}\Bigg|
		\leq \eps.
		$$ 
		This approximation makes it possible to use the LCU Lemma \cite[Lemma 4]{BerryChildsSim15} in the special case when all unitary is a power of $e^{\pm i5(p(x-1/2))}I$, see e.g. \cite[Lemma 8]{childs2015quantum},
		to implement a unitary $U'_{p}$ such that for $\check{k}=\bigO{\log(K)}$ it performs the map
		$$
		U'_{p}:\ket{x}\ket{0}^{\!\otimes \check{k}}\rightarrow \ket{x}\otimes\left(\frac{\sqrt{\tilde{p}(x)}}{\nrm{c}_1}\ket{0}^{\!\otimes \check{k}}+\sqrt{1-\frac{\tilde{p}(x)}{\nrm{c}^2_1}}\ket{\Phi'^\perp}\right).
		$$
		Adding an extra qubit initialized to $\nrm{c}_1\ket{0}+\sqrt{1-\nrm{c}^2_1}\ket{0}$ and then calculating the OR function of all auxiliary qubits and storing its output in the last qubit amounts to a unitary
		implementing the amplitude $\sqrt{\tilde{p}(x)}$ which is an $\eps$ approximation of $\sqrt{p(x)}$.
		
		In order to generalize the above approach for arbitrary $\delta\in (0,1/2)$ we essentially use \cite[Corollary 42]{van2017quantum}.	
		
		Let $N=\lceil\frac{3\pi}{\delta}\rceil$ and suppose $y_0\in (\delta,1-\delta)$, $y\in [-\pi/(4N),\pi/(4N)]$ and let $\theta:=y2N/\pi$, then
		$$
		\sqrt{y_0-y}=\sqrt{y_0}\sqrt{1+\frac{y}{y_0}}
		=\sqrt{1+\frac{y}{y_0 2N/\pi}}
		=\sqrt{y_0}\sum_{k=0}^{\infty}\binom{1/2}{k}  \left(\frac{\pi}{y_0 2N}\right)^{\!\!k} \theta^k
		=\sum_{k=0}^{\infty}a'_k \theta^k,
		$$	
		where $a'_k=\sqrt{y_0}\binom{1/2}{k}\left(\frac{\pi}{y_0 2N}\right)^{\!\!k} $. 
		Observe that 
		\begin{align*}
		\nrm{a'}_1
		&\leq\sqrt{y_0}+\sum_{k=1}^{\infty}\left|\binom{1/2}{k}\right|\frac{\pi}{y_0 2N} \tag*{$\left(\frac{\pi}{y_0 2N}\leq 1\right)$}\\
		&=\sqrt{y_0}+ \frac{\pi}{\sqrt{y_0} 2N}\tag*{$\left(\sum_{k=1}^{\infty}\left|\binom{1/2}{k}\right|=1\right)$}\\
		&\leq 1. \tag*{$\left(y_0\in[\delta,1-\delta]\text{ and } N\geq 3\pi/\delta\right)$}
		\end{align*}
		Since $|a'_k|\leq (1/2)^k$ for $K'=\Theta(\log(1/\eps))$ we have that $\left|\sum_{k=K'}^{\infty}a'_k \theta^k\right|\leq \eps/4$ for all $\theta\in[-1,1]$. 
		Thus Lemma~\ref{lemma:LowWeightAPX} shows that there exists $\gamma\in\mathbb{C}^{2M'+1}$ such that 
		$$
		\left|\sum_{k=0}^{\infty}a'_k \theta^k-\sum_{m=-M'}^{M'} \gamma_m e^{\frac{i\pi m}{2}\theta}\right|\leq \eps
		$$ 		
		for all $\theta\in\![-1/2,1/2]$, where $M'=\bigO{K}$ and $\nrm{\gamma}_1\leq \nrm{a'}_1\leq 1$.
		This implies that 
		$$
		\left|\sqrt{y_0-y}-\sum_{m=-M'}^{M'} \gamma_m e^{i m N y}\right|\leq \eps/2
		$$ 		
		for all $y\in [-\pi/(4N),\pi/(4N)]$. 
		Supposing that $p(x)\in[y_0-\pi/(4N),y_0+\pi/(4N)]$, setting $p'=p(x)-y_0$ 
		and using the LCU Lemma \cite[Lemma 4]{BerryChildsSim15} in the special case when all unitary is a power of $e^{\pm\frac{i\pi m}{2}p}I$, see e.g. \cite[Lemma 8]{childs2015quantum},
		we can implement a unitary $U'_{p}$ such that for $k'=\bigO{\log(M'N)}$ it performs the map
		$$
		U'_{p}:\ket{x}\ket{0}^{\!\otimes k'}\rightarrow \ket{x}\otimes\left(\sqrt{p}\ket{0}^{\!\otimes k'}+\sqrt{1-\sqrt{p}}\ket{\Phi'^\perp}\right).
		$$
		with $\bigO{M'N}$ uses of the phase oracle $P_{p'}=e^{-iy_0}P_p$.
		
		At this point we know how to implement the amplitude oracle piecewise. One can apply these piecewise 
		implementations in superposition to implement the amplitude for all $p\in[\delta,1-\delta]$ with the help of phase estimation
		as shown in the proof of \cite[Corollary 42]{van2017quantum},
		without increasing the query complexity and using $\bigOt{1/\delta}$ additional gates.
		%
	\end{proof}	

	We have shown how to convert phase oracles and probability oracles back and forth
	with logarithmic overhead in the precision, given that the probabilities are bounded away from $0$ and~$1$.

	Note that when $p$ is close to $0$ or $1$ we actually lose some information when we convert from probability oracles to phase oracles, 
	since the probability is the amplitude squared. (One could also convert the amplitude to phase preventing this loss, 
	but then one needs to be careful because the amplitude can be complex, and the absolute value function is ugly. But one can implement inner product oracles using the Hadamard test as in Section~\ref{sec:variational}, which is sort of an amplitude oracle.)
		
	\section{Robust oblivious amplitude amplification} \label{apx:oblivious}
	In this appendix we introduce a slightly generalized version of robust oblivious amplitude amplification~\cite{berry:simHamTaylor}, and give an analysis with improved constants in the error bounds.
	First we invoke Jordan's result on orthogonal projectors: (In the statement all ket vectors are normalized.)
	\begin{theorem}[Jordan's theorem]\label{thm:JordanSubsp}
		Let $\mathcal{H}$ be a finite dimensional complex Euclidian (i.e., Hilbert) space. If $\Pi_1,\Pi_2$ are orthogonal projectors acting on this space,
		then $\mathcal{H}$ can be decomposed to a direct sum of orthogonal subspaces
		$$
			\mathcal{H}=\bigoplus_{j\in[J]}\mathcal{H}_j^{(1)}\oplus\bigoplus_{k\in[K]}\mathcal{H}_k^{(2)},
		$$ 
		such that for all $j\in[J]: \mathcal{H}_j^{(1)}=\text{Span}(\ket{\varphi_j})$ is a $1$-dimensional subspace satisfying $\nrm{\Pi_1\ket{\varphi_j}}\in \{0,1\}$, $\nrm{\Pi_2\ket{\varphi_j}}\in \{0,1\}$.
		Moreover, for all $k\in[K]:\mathcal{H}_k^{(2)}=\text{Span}(\ket{\psi_k},\ket{\psi^\perp_k})=\text{Span}(\ket{\phi_k},\ket{\phi^\perp_k})$, is a $2$-dimensional subspace satisfying
		$\Pi_1\ket{\psi_k}=\ket{\psi_k}$, $\Pi_2\ket{\phi_k}=\ket{\phi_k}$, $\Pi_1\ket{\psi^\perp_k}=0=\Pi_2\ket{\phi^\perp_k}$,
		moreover $|\braket{\psi_k}{\phi_k}|\notin  \{0,1\}$.
	\end{theorem}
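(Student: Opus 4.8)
The plan is to obtain the decomposition by diagonalizing a single Hermitian operator built from the two projectors, namely $M:=\Pi_1\Pi_2\Pi_1$. First I would record that $M$ is Hermitian and positive semidefinite, and that for any unit vector $\ket{v}$ one has $\braketbra{v}{M}{v}=\nrm{\Pi_2\Pi_1\ket{v}}^2\leq \nrm{\Pi_1\ket{v}}^2\leq 1$, so every eigenvalue of $M$ lies in $[0,1]$. The generic two-dimensional blocks will come from eigenvectors whose eigenvalue is strictly between $0$ and $1$; the extreme eigenvalues $0$ and $1$, together with $\ker(\Pi_1)$, will account for the one-dimensional blocks.

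For the two-dimensional part, let $\ket{\psi_k}\in\mathrm{im}(\Pi_1)$ be an orthonormal family of eigenvectors of $M$ with eigenvalues $\lambda_k\in(0,1)$, and set $\ket{\phi_k}:=\Pi_2\ket{\psi_k}/\sqrt{\lambda_k}$. A short computation, using $\Pi_1\ket{\psi_k}=\ket{\psi_k}$ and $\Pi_1\Pi_2\ket{\psi_k}=M\ket{\psi_k}=\lambda_k\ket{\psi_k}$, shows that the four vectors $\Pi_1\ket{\psi_k},\Pi_2\ket{\psi_k},\Pi_1\ket{\phi_k},\Pi_2\ket{\phi_k}$ all stay inside $\mathcal{H}_k:=\mathrm{Span}(\ket{\psi_k},\ket{\phi_k})$, so $\mathcal{H}_k$ is invariant under both $\Pi_1$ and $\Pi_2$. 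On $\mathcal{H}_k$ the projector $\Pi_1$ projects onto $\ket{\psi_k}$ and $\Pi_2$ onto $\ket{\phi_k}$, with $\braket{\psi_k}{\phi_k}=\sqrt{\lambda_k}\in(0,1)$; taking $\ket{\psi_k^\perp},\ket{\phi_k^\perp}$ to be the unit vectors of $\mathcal{H}_k$ orthogonal to $\ket{\psi_k},\ket{\phi_k}$ respectively yields exactly the properties claimed for $\mathcal{H}_k^{(2)}$. I would then check mutual orthogonality of distinct blocks: for $k\neq k'$ the identities $\braket{\phi_k}{\phi_{k'}}=\lambda_{k'}\braket{\psi_k}{\psi_{k'}}/\sqrt{\lambda_k\lambda_{k'}}$ and $\braket{\psi_k}{\phi_{k'}}=\sqrt{\lambda_{k'}}\braket{\psi_k}{\psi_{k'}}$ (both following from $\Pi_1\Pi_2\Pi_1=M$ sandwiched between eigenvectors) force all cross terms to vanish once the $\ket{\psi_k}$ are chosen orthonormal.

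It remains to identify the one-dimensional blocks and, crucially, to establish completeness. The eigenvectors of $M$ with eigenvalue $1$ satisfy $\Pi_2\ket{\psi}=\ket{\psi}$ (equality in the norm bound), hence lie in $\mathrm{im}(\Pi_1)\cap\mathrm{im}(\Pi_2)$, while those with eigenvalue $0$ satisfy $\Pi_2\ket{\psi}=0$, hence lie in $\mathrm{im}(\Pi_1)\cap\ker(\Pi_2)$; together with $\ker(\Pi_1)$ these will yield the one-dimensional pieces. Writing $W:=\bigoplus_k\mathcal{H}_k$, the space $W^\perp$ is again invariant under both projectors, and the main obstacle is to rule out any leftover mixing inside $W^\perp$, since a priori $W^\perp$ could contain vectors lying in neither image. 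The observation that resolves this is that the roles of $\Pi_1$ and $\Pi_2$ are symmetric: each $\ket{\phi_k}$ is an eigenvector of $N:=\Pi_2\Pi_1\Pi_2$ with the same eigenvalue, since $N\ket{\phi_k}=\lambda_k\ket{\phi_k}$, so the two-dimensional blocks produced by diagonalizing $N$ on $\mathrm{im}(\Pi_2)$ coincide with the $\mathcal{H}_k$. Consequently $M$ restricted to $\mathrm{im}(\Pi_1)\cap W^\perp$ and $N$ restricted to $\mathrm{im}(\Pi_2)\cap W^\perp$ both have spectrum contained in $\{0,1\}$, which lets me split $W^\perp$ into the four simultaneous $0/1$ eigenspaces $V_{11},V_{10},V_{01},V_{00}$ of $\Pi_1,\Pi_2$. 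Choosing orthonormal bases of these four subspaces produces the one-dimensional blocks $\mathcal{H}_j^{(1)}$ with $\nrm{\Pi_1\ket{\varphi_j}},\nrm{\Pi_2\ket{\varphi_j}}\in\{0,1\}$, which together with the $\mathcal{H}_k^{(2)}$ exhausts $\mathcal{H}$. The delicate part throughout is exactly this bookkeeping ensuring that the $\Pi_1$-side and $\Pi_2$-side blocks match, so that no subspace is omitted or double-counted.
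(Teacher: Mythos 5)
The paper does not actually prove this statement: it is invoked as a known classical result (``First we invoke Jordan's result on orthogonal projectors''), so there is no in-paper proof to compare against. Your argument is the standard proof of Jordan's two-projections lemma via diagonalizing $M=\Pi_1\Pi_2\Pi_1$, and it is essentially correct: the computations showing that $\mathcal{H}_k=\mathrm{Span}(\ket{\psi_k},\ket{\phi_k})$ is invariant under both projectors, that $\Pi_1,\Pi_2$ restrict to rank-one projectors there with $\braket{\psi_k}{\phi_k}=\sqrt{\lambda_k}\in(0,1)$, and that distinct blocks are mutually orthogonal, all check out, as does the identification of the extreme eigenvalues with the one-dimensional pieces. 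The one place where you should add a sentence is the claim that the blocks obtained from $N=\Pi_2\Pi_1\Pi_2$ coincide with the $\mathcal{H}_k$, i.e.\ that $N$ has no eigenvectors with eigenvalue in $(0,1)$ inside $\mathrm{im}(\Pi_2)\cap W^\perp$. Showing that each $\ket{\phi_k}$ is a $\lambda_k$-eigenvector of $N$ is not by itself enough; you need that the $\ket{\phi_k}$ \emph{exhaust} the $(0,1)$-spectral part of $N$. This follows either from the standard fact that $A^\dagger A$ and $AA^\dagger$ (with $A=\Pi_2\Pi_1$) have identical nonzero spectra with multiplicities, the intertwining map being $\ket{v}\mapsto A\ket{v}$, or more directly by contradiction: any unit $\ket{w}\in\mathrm{im}(\Pi_2)\cap W^\perp$ with $N\ket{w}=\mu\ket{w}$, $\mu\in(0,1)$, would produce $\Pi_1\ket{w}/\sqrt{\mu}$, a $\mu$-eigenvector of $M$ lying in $\mathrm{im}(\Pi_1)\cap W^\perp$, contradicting that all such eigenvectors were placed in $W$. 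With that sentence added, the completeness bookkeeping for $V_{11},V_{10},V_{01},V_{00}$ goes through exactly as you describe.
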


	Inspired by Kothari's ``2D subspace lemma''\cite[Lemma 2.3]{kothariPhDThesis}, we prove a generalization of amplitude amplification using terminology related to Jordan's result.
	
	\begin{lemma}[2D subspace lemma]\label{lemma:2D}
		Let $W$ be a unitary such that $W\ket{\psi}=\sin(\theta)\ket{\phi}+\cos(\theta)\ket{\phi^\perp}$, where $\braket{\phi}{\phi^\perp}=0$.
		Suppose $\Pi_1,\Pi_2$ are orthogonal projectors, such that $\Pi_2\ket{\phi}=\ket{\phi}$ and $\Pi_2\ket{\phi^\perp}=0$
		and similarly $\Pi_1\ket{\psi}=\ket{\psi}$ and	$\Pi_1 W^\dagger \left(\cos(\theta)\ket{\phi}-\sin(\theta)\ket{\phi^\perp}\right)=0$.
		(Note that the requirements on $\Pi_1$ are not contradicting, because $\bra{\psi}W^\dagger\left({\cos(\theta)\ket{\phi}-\sin(\theta)\ket{\phi^\perp}}\right)=0$.\footnote{
			This observation shows that this last condition is trivially satisfied when the rank of $\Pi_1$ is $1$, which is the case for Grover search and amplitude amplification.
		})
		Let $G:=W(2\Pi_1-I)W^\dagger(2\Pi_2-I)$, then 
		$$G^k W \ket{\psi}=\sin((2k+1)\theta)\ket{\phi}+\cos((2k+1)\theta)\ket{\phi^\perp}.$$
	\end{lemma}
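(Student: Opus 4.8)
The plan is to restrict everything to the two-dimensional subspace $V:=\mathrm{Span}(\ket{\phi},\ket{\phi^\perp})$ and to show that $G$ acts on $V$ as a planar rotation, after which the claimed identity drops out of a one-line induction on $k$. First I would record the action of the two reflections comprising $G$. The reflection $2\Pi_2-I$ preserves $V$ and acts there by fixing $\ket{\phi}$ and negating $\ket{\phi^\perp}$, which is immediate from the hypotheses $\Pi_2\ket{\phi}=\ket{\phi}$ and $\Pi_2\ket{\phi^\perp}=0$. For the second reflection I would conjugate: $W(2\Pi_1-I)W^\dagger=2\tilde\Pi_1-I$, where $\tilde\Pi_1:=W\Pi_1W^\dagger$ is again an orthogonal projector. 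Setting $\ket{\tilde\psi}:=W\ket{\psi}=\sin(\theta)\ket{\phi}+\cos(\theta)\ket{\phi^\perp}$ and $\ket{\tilde\psi^\perp}:=\cos(\theta)\ket{\phi}-\sin(\theta)\ket{\phi^\perp}$, the hypothesis $\Pi_1\ket{\psi}=\ket{\psi}$ gives $\tilde\Pi_1\ket{\tilde\psi}=W\Pi_1\ket{\psi}=\ket{\tilde\psi}$, whereas the hypothesis $\Pi_1W^\dagger(\cos(\theta)\ket{\phi}-\sin(\theta)\ket{\phi^\perp})=0$ gives $\tilde\Pi_1\ket{\tilde\psi^\perp}=0$. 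Hence $2\tilde\Pi_1-I$ also preserves $V$ and acts there as the reflection fixing $\ket{\tilde\psi}$ and negating $\ket{\tilde\psi^\perp}$.

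With both factors acting as reflections of the plane $V$, their composition $G|_V$ is necessarily a rotation of $V$. I would pin down its angle by evaluating $G\ket{\tilde\psi}$ directly in the orthonormal basis $\{\ket{\phi},\ket{\phi^\perp}\}$: apply $2\Pi_2-I$, re-expand the result in the $\{\ket{\tilde\psi},\ket{\tilde\psi^\perp}\}$ basis, apply the second reflection (fix the first coordinate, negate the second), and collect terms using the angle-addition identities for $\sin$ and $\cos$. This computation identifies $G|_V$ as the rotation advancing the polar angle (measured from $\ket{\phi^\perp}$) by $2\theta$, matching the geometric picture in Figure~\ref{fig:GroverOperator}; the base cases $k=0$ (trivial) and $k=1$ (this evaluation) fix the orientation convention so that $G$ carries $W\ket{\psi}$ to $\sin(3\theta)\ket{\phi}+\cos(3\theta)\ket{\phi^\perp}$.

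The induction is then immediate: assuming $G^{k}W\ket{\psi}=\sin((2k+1)\theta)\ket{\phi}+\cos((2k+1)\theta)\ket{\phi^\perp}$, this vector lies in $V$, so applying the rotation $G|_V$ once more advances its polar angle by a further $2\theta$, yielding $\sin((2k+3)\theta)\ket{\phi}+\cos((2k+3)\theta)\ket{\phi^\perp}$, the $k{+}1$ case. The step I expect to be the real content is the invariance of $V$ under $W(2\Pi_1-I)W^\dagger$: because $\Pi_1$ may have rank larger than one, this is exactly where the otherwise opaque hypothesis $\Pi_1W^\dagger(\cos(\theta)\ket{\phi}-\sin(\theta)\ket{\phi^\perp})=0$ is indispensable, since it forces the conjugated projector $\tilde\Pi_1$ to annihilate $\ket{\tilde\psi^\perp}$ and hence keeps $V$ invariant. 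Everything else is routine trigonometric bookkeeping, and the only subtlety is tracking the orientation of the rotation consistently, which I would nail down once and for all via the explicit $k=1$ evaluation above.
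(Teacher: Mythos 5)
Your strategy is the same as the paper's: restrict to $V=\mathrm{Span}(\ket{\phi},\ket{\phi^\perp})$, observe that $2\Pi_2-I$ and $W(2\Pi_1-I)W^\dagger=2\tilde\Pi_1-I$ both preserve $V$ and act there as reflections, and conclude that $G|_V$ is a rotation; the paper's proof is exactly this argument, stated in three lines. You also correctly identify and supply the one piece of content the paper leaves implicit, namely that the hypothesis $\Pi_1 W^\dagger\left(\cos(\theta)\ket{\phi}-\sin(\theta)\ket{\phi^\perp}\right)=0$ is what forces $\tilde\Pi_1\ket{\tilde\psi^\perp}=0$ and hence the invariance of $V$ under the conjugated reflection when $\Pi_1$ has rank larger than one.

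However, the step you defer to ``routine trigonometric bookkeeping'' does not come out the way you assert. Carrying out your own $k=1$ evaluation: $(2\Pi_2-I)W\ket{\psi}=\sin(\theta)\ket{\phi}-\cos(\theta)\ket{\phi^\perp}=-\cos(2\theta)\ket{\tilde\psi}+\sin(2\theta)\ket{\tilde\psi^\perp}$, and applying $2\tilde\Pi_1-I$ then gives $-\cos(2\theta)\ket{\tilde\psi}-\sin(2\theta)\ket{\tilde\psi^\perp}=-\sin(3\theta)\ket{\phi}-\cos(3\theta)\ket{\phi^\perp}$. Equivalently, measuring angles from $\ket{\phi^\perp}$, the two reflection axes sit at angles $\pi/2$ and $\theta$, so their composite, in the order in which $G$ applies them, is the rotation by $2\theta-\pi$, i.e.\ minus the rotation by $2\theta$. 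Your induction therefore yields $G^kW\ket{\psi}=(-1)^k\left(\sin((2k+1)\theta)\ket{\phi}+\cos((2k+1)\theta)\ket{\phi^\perp}\right)$ rather than the stated identity; a sanity check with $W=H$, $\ket{\psi}=\ket{0}$, $\Pi_1=\ketbra{0}{0}$, $\Pi_2=\ketbra{1}{1}$, $\theta=\pi/4$ confirms the sign for $k=1$. To be fair, the paper's own proof asserts that ``the angle of rotation is indeed $2\theta$'' without checking orientation and so has the same blind spot (this is precisely the sign that the textbook amplitude-amplification operator absorbs by an overall minus in its definition), so the discrepancy originates in the lemma's convention rather than in your strategy. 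But as written, your claim that the explicit evaluation ``fixes the orientation so that $G$ carries $W\ket{\psi}$ to $\sin(3\theta)\ket{\phi}+\cos(3\theta)\ket{\phi^\perp}$'' is the concrete step that fails: you should either record the $(-1)^k$ or modify the definition of $G$ (or the phases of $\ket{\phi},\ket{\phi^\perp}$) to absorb it.
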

	\begin{proof}
		Observe that the subspace $V=\text{Span}(\ket{\phi},\ket{\phi^\perp})$ is invariant under both $(2\Pi_2-I)$ and $W(2\Pi_1-I)W^\dagger$, therefore it is also invariant under $G$.
		Moreover on this subspace $G$ acts a product of two reflections, therefore it is a rotation. It is easy to verify, that its angle of rotation is indeed $2\theta$. Finally note that by definition $W\ket{\psi}\in V$.
	\end{proof}

	\begin{lemma}[Generic robust oblivious amplitude amplification]\label{lemma:oblivious}
		Let $\theta\in (0,\frac{\pi}{6}]$, $\eps\in[0,\frac{1}{2}]$, $\Pi_1,\Pi_2$ orthogonal projectors and $W$ a unitary such that for all $\ket{\psi}\in \mathrm{Im}(\Pi_1)$ it satisfies
		\begin{equation}\label{eq:apxUnifNorm}
			\nrm{\Pi_2 W \ket{\psi}}\in[\sin(\theta)(1-\eps),\sin(\theta)(1+\eps)].
		\end{equation} 
		Let $G:=W(2\Pi_1-I)W^\dagger(2\Pi_2-I)$, then for all $k\in\mathbb{Z}_0^+$ and $\ket{\psi}\in \mathrm{Im}(\Pi_1)$ we have that
		$$\nrm{G^k W \ket{\psi}-\left(\!\frac{\sin((2k\!+\!1)\theta)}{\sin(\theta)}\Pi_2 W \ket{\psi}\!+\!\frac{\cos((2k\!+\!1)\theta)}{\cos(\theta)}(I\!-\!\Pi_2) W \ket{\psi}\!\right)}\leq \eps\left(\!1\!+\!\theta\left(\frac{5}{2}\!+4(2k\!+\!1)\!\right)\!\right).$$
		\anote{Maybe simplify formula.}
	\end{lemma}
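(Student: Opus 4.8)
The plan is to diagonalize the geometry with Jordan's theorem, reduce the claim to a one-parameter rotation estimate on each invariant two-dimensional block, control that estimate by a trigonometric perturbation argument, and recombine the blocks by orthogonality. First I would absorb $W$ into the first reflection. Writing $\tilde\Pi_1:=W\Pi_1 W^\dagger$, we have $G=(2\tilde\Pi_1-I)(2\Pi_2-I)$, a product of two reflections, and $W\ket\psi\in\mathrm{Im}\,\tilde\Pi_1$ whenever $\ket\psi\in\mathrm{Im}\,\Pi_1$. Applying Jordan's theorem (Theorem~\ref{thm:JordanSubsp}) to the pair $(\tilde\Pi_1,\Pi_2)$ decomposes $\mathcal H=\bigoplus_j\mathcal H_j^{(1)}\oplus\bigoplus_k\mathcal H_k^{(2)}$ into $G$-invariant subspaces. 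The uniform hypothesis \eqref{eq:apxUnifNorm} rules out any one-dimensional block lying inside $\mathrm{Im}\,\tilde\Pi_1$: such a $\ket{\varphi_j}$ satisfies $\nrm{\Pi_2\ket{\varphi_j}}\in\{0,1\}$ by Jordan's theorem, yet $W^\dagger\ket{\varphi_j}\in\mathrm{Im}\,\Pi_1$ forces $\nrm{\Pi_2\ket{\varphi_j}}\in[\sin\theta(1-\eps),\sin\theta(1+\eps)]\subseteq(0,3/4]$ for $\theta\le\pi/6$ and $\eps\le 1/2$, a contradiction. Hence $W\ket\psi=\sum_k c_k\ket{\psi_k}$ with $\sum_k|c_k|^2=1$, where $\ket{\psi_k}$ spans the line $\mathrm{Im}\,\tilde\Pi_1\cap\mathcal H_k^{(2)}$.

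On each block I would carry out the exact computation underlying Lemma~\ref{lemma:2D}. Choosing unit vectors $\ket{\phi_k}\in\mathrm{Im}\,\Pi_2$ and $\ket{\phi_k^\perp}\in\ker\Pi_2$ spanning $\mathcal H_k^{(2)}$, set $\sin\theta_k:=\nrm{\Pi_2\ket{\psi_k}}$, so that $\ket{\psi_k}=\sin\theta_k\ket{\phi_k}+\cos\theta_k\ket{\phi_k^\perp}$, and $G^k$ acts on $\mathcal H_k^{(2)}$ as the rotation determined by the principal angle, giving $G^k\ket{\psi_k}=\pm\big(\sin((2k+1)\theta_k)\ket{\phi_k}+\cos((2k+1)\theta_k)\ket{\phi_k^\perp}\big)$. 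Since $\Pi_2\ket{\psi_k}=\sin\theta_k\ket{\phi_k}$ and $(I-\Pi_2)\ket{\psi_k}=\cos\theta_k\ket{\phi_k^\perp}$, the restriction $t_k$ of the target vector to this block is $\frac{\sin((2k+1)\theta)}{\sin\theta}\sin\theta_k\ket{\phi_k}+\frac{\cos((2k+1)\theta)}{\cos\theta}\cos\theta_k\ket{\phi_k^\perp}$. Applying \eqref{eq:apxUnifNorm} to $\ket{\psi_k}$ itself yields $|\sin\theta_k-\sin\theta|\le\eps\sin\theta$, and since all angles satisfy $\theta_k\le\arcsin(3/4)<\pi/3$ the mean value theorem converts this into $|\theta_k-\theta|\le 2\eps\sin\theta\lesssim\eps\theta$, together with matching bounds on $|\cos\theta_k-\cos\theta|$.

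The heart of the proof is the per-block estimate: bounding each of the two coordinate differences $\big|\sin((2k+1)\theta_k)-\frac{\sin((2k+1)\theta)}{\sin\theta}\sin\theta_k\big|$ and $\big|\cos((2k+1)\theta_k)-\frac{\cos((2k+1)\theta)}{\cos\theta}\cos\theta_k\big|$ by $\eps\big(1+\theta(\tfrac52+4(2k+1))\big)$. This is where the explicit constants appear: the factor $(2k+1)$ comes from the mean-value estimates $|\sin((2k+1)\theta_k)-\sin((2k+1)\theta)|\le(2k+1)|\theta_k-\theta|$ and its cosine analogue, while the $\tfrac52$ collects the lower-order contributions of the rescaling factors $\frac{\sin((2k+1)\theta)}{\sin\theta}$, $\frac{\cos((2k+1)\theta)}{\cos\theta}$ and of converting $|\sin\theta_k-\sin\theta|\le\eps\sin\theta$ into usable bounds on $|\theta_k-\theta|$, $|\sin\theta_k-\sin\theta|$ and $|\cos\theta_k-\cos\theta|$ valid on $(0,\pi/6]$. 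Because the blocks $\mathcal H_k^{(2)}$ are mutually orthogonal and $G$-invariant, the global error recombines by Pythagoras: $\nrm{G^kW\ket\psi-t}^2=\sum_k|c_k|^2\,\nrm{G^k\ket{\psi_k}-t_k}^2\le\max_k\nrm{G^k\ket{\psi_k}-t_k}^2$, so a uniform per-block bound gives the stated inequality directly.

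I expect the main obstacle to be precisely this constant-tracking trigonometric perturbation bound, since one must simultaneously control the angle perturbation $\theta_k-\theta$, the two amplitude rescalings, and their growth with $k$ while keeping the constants $\tfrac52$ and $4$ honest on the full range $\theta\in(0,\pi/6]$, $\eps\in[0,\tfrac12]$. A secondary but necessary piece of bookkeeping is pinning down the orientation and the overall $\pm$ sign of the rotation in each Jordan block so that $G^k\ket{\psi_k}$ aligns with $t_k$ (and in particular matches the sign convention used in Corollary~\ref{cor:oblivious}), rather than merely agreeing with it up to a global phase.
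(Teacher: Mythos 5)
Your proposal follows the paper's proof essentially verbatim: the same conjugation $\tilde\Pi_1 = W\Pi_1W^\dagger$, the same use of Jordan's theorem to exclude one-dimensional blocks from $\mathrm{Im}(W\Pi_1W^\dagger)$, the same per-block rotation analysis via Lemma~\ref{lemma:2D} with angle $\theta_k=\arcsin\nrm{\Pi_2\ket{\psi_k}}$, and the same orthogonal recombination across blocks. The one slip is that the budget $\eps\left(1+\theta\left(\frac{5}{2}+4(2k+1)\right)\right)$ must bound the \emph{sum} of the two coordinate errors in each block (which is how the paper allocates it, via the triangle inequality), not each coordinate separately as you state, since granting each coordinate the full budget would leave you with an extra factor of $\sqrt{2}$ on the block norm.
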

	\begin{proof}
		Let $\widetilde{\Pi}_1:=W \Pi_1 W^\dagger$. We apply Jordan's Theorem~\ref{thm:JordanSubsp} on the projectors $\widetilde{\Pi}_1,\Pi_2$ to decompose $\mathcal{H}$ to $1$ and $2$ dimensional subspaces. By \eqref{eq:apxUnifNorm} we know that for each $\ket{\varphi}\in\mathrm{Im}(W \Pi_1 W^\dagger)$ we have $\nrm{\Pi_2 \ket{\varphi}}\in[\sin(\theta)(1-\eps),\sin(\theta)(1+\eps)]\subseteq (0,1)$, and thus no $\ket{\varphi}\in \mathrm{Im}(W \Pi_1 W^\dagger)$ can satisfy $\nrm{\Pi_2\ket{\varphi}}\in \{0,1\}$. 
		Therefore using the notation of Theorem~\ref{thm:JordanSubsp}, we know that 
		\begin{equation}\label{eq:WPi1Span}
			\mathrm{Im}(W \Pi_1 W^\dagger)=\underset{k\in K}{\mathrm{Span}}\{\ket{\psi_k}\}.
		\end{equation}
		Let $\ket{\tilde{\psi}_k}:=W^\dagger\ket{\psi_k}$ and $\theta'=\arcsin(\lVert\Pi_2 W \ket{\tilde{\psi}_k}\rVert)$,
		then we can assume without loss of generality that 
		$W\ket{\tilde{\psi}_k}=\sin(\theta')\ket{\phi_k}+\cos(\theta')\ket{\phi^\perp_k}$, because in the definition of $\mathcal{H}_k^{(2)}=\text{Span}(\ket{\phi_k},\ket{\phi^\perp_k})$ we can multiply the vectors with a unit length complex number (i.e., phase).
		This shows that that $\ket{\tilde{\psi}_k},\theta'$ satisfy the requirements of Lemma~\ref{lemma:2D}, and so 
		$$G^k W \ket{\tilde{\psi}_k}=\sin((2k\!+\!1)\theta')\ket{\phi_k}+\cos((2k\!+\!1)\theta')\ket{\phi^\perp_k} .$$
		Let us define the sub-normalized state
		\begin{align*}
		&\ket{v_k}:=G^k W \ket{\tilde{\psi}_k}-\left(\frac{\sin((2k\!+\!1)\theta)}{\sin(\theta)}\Pi_2 W \ket{\tilde{\psi}_k}+\frac{\cos((2k\!+\!1)\theta)}{\cos(\theta)}(I-\Pi_2) W \ket{\tilde{\psi}_k}\right)\\
		&=\!\left(\!\frac{\sin((2k\!+\!1)\theta')}{\sin(\theta')}\!-\!\frac{\sin((2k\!+\!1)\theta)}{\sin(\theta)}\!\right)\Pi_2 W \ket{\tilde{\psi}_k}\!+\!\left(\!\frac{\cos((2k\!+\!1)\theta')}{\cos(\theta')}\!-\!\frac{\cos((2k\!+\!1)\theta)}{\cos(\theta)}\!\right)(I\!-\!\Pi_2) W \ket{\tilde{\psi}_k}.
		\end{align*}
		It is easy to see using the triangle inequality and the definition of
		$\theta'$, that \anote{Could use orthogonality for a slightly better bound.}
		\begin{align*}
			\nrm{\ket{v_k}}
			&\leq \left|\sin((2k\!+\!1)\theta')-\sin((2k\!+\!1)\theta)\frac{\sin(\theta')}{\sin(\theta)}\right|
			+\left|\cos((2k\!+\!1)\theta')-\cos((2k\!+\!1)\theta)\frac{\cos(\theta')}{\cos(\theta)}\right|\\
			&\leq \underset{\leq (2k\!+\!1)|\theta-\theta'|\leq (2k\!+\!1)2\eps\theta}{\underbrace{\left|\sin((2k\!+\!1)\theta')-\sin((2k\!+\!1)\theta)\right|}}
			+\underset{\leq \eps}{\underbrace{\left|\sin((2k\!+\!1)\theta)\eps\right|}}\\
			&+ \underset{\leq (2k\!+\!1)|\theta-\theta'|\leq (2k\!+\!1)2\eps\theta}{\underbrace{\left|\cos((2k\!+\!1)\theta')-\cos((2k\!+\!1)\theta)\right|}}
			+\underset{\leq \frac{2|\theta'-\theta|}{\sqrt{3}}\leq \frac{4\eps\theta}{\sqrt{3}}\leq \frac{5}{2}\eps\theta }{\underbrace{\left|\cos((2k\!+\!1)\theta)\frac{\cos(\theta')-\cos(\theta)}{\cos(\theta)}\right|}}\\
			&\leq \eps\left(\!1\!+\!\theta\left(\frac{5}{2}\!+4(2k\!+\!1)\!\right)\!\right).
		\end{align*}
		Finally we observe that due to \eqref{eq:WPi1Span} we have 
		$$\mathrm{Im}(\Pi_1 )=\underset{k\in K}{\mathrm{Span}}\{\ket{\tilde{\psi}_k}\},$$
		and that $\braket{v_k}{v_\ell}=0$ for $k\neq \ell$, since $\ket{v_k}\in \mathcal{H}_k^{(2)}$ and $\ket{v_\ell}\in \mathcal{H}_\ell^{(2)}$.
		Therefore we can conclude that for a general $\ket{\psi}\in \mathrm{Im}(\Pi_1 )$ we can write it as 
		$\ket{\psi}=\sum_{k=1}^{K}c_k\ket{\tilde{\psi}_k}$ therefore 
		\begin{align*}
		&\nrm{G^k W \ket{\psi}-\left(\frac{\sin((2k\!+\!1)\theta)}{\sin(\theta)}\Pi_2 W \ket{\psi}+\frac{\cos((2k\!+\!1)\theta)}{\cos(\theta)}(I-\Pi_2) W \ket{\psi}\right)}=\\
		&\qquad=\nrm{\sum_{k=1}^{K}c_k\ket{v_k}}
		=\sqrt{\sum_{k=1}^{K}c^2_k\nrm{\ket{v_k}}^2}
		\leq \eps\left(\!1\!+\!\theta\left(\frac{5}{2}\!+4(2k\!+\!1)\!\right)\!\right).
		\end{align*}
	\end{proof}
	\begin{corollary}[Robust oblivious amplitude amplification]\label{cor:oblivious}
		Let $k\in\mathbb{N}_+$, $\eps\in[0,\frac{1}{2}]$, $\Pi_1,\Pi_2$ orthogonal projectors and $W,U$ unitaries such that for all $\ket{\psi}\in \mathrm{Im}(\Pi_1)$ they satisfy
		\begin{equation}\label{eq:apxUnifNorm2}
		\nrm{\Pi_2 W \ket{\psi}-\sin\left(\frac{\pi}{2(2k+1)}\right)U\ket{\psi}}\leq \sin\left(\frac{\pi}{2(2k+1)}\right)\eps.
		\end{equation} 
		Let $G:=W(2\Pi_1-I)W^\dagger(2\Pi_2-I)$, then for all $\ket{\psi}\in \mathrm{Im}(\Pi_1)$ we have that
		$$\nrm{G^k W \ket{\psi}-U\ket{\psi}}\leq 10 \eps.$$
		\anote{Could improve the constant 10 to 6.}
	\end{corollary}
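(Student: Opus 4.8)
The plan is to derive this corollary directly from the generic robust oblivious amplitude amplification Lemma~\ref{lemma:oblivious} by specializing the rotation angle. First I would set $\theta:=\frac{\pi}{2(2k+1)}$; since $k\geq 1$ forces $2k+1\geq 3$, this gives $\theta\in(0,\pi/6]$, matching the angle range required by Lemma~\ref{lemma:oblivious}. The decisive feature of this choice is that $(2k+1)\theta=\pi/2$, so that $\sin((2k+1)\theta)=1$ and $\cos((2k+1)\theta)=0$, which collapses the target vector appearing in Lemma~\ref{lemma:oblivious} to simply $\frac{1}{\sin(\theta)}\Pi_2 W\ket{\psi}$.

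Next I would check that the hypothesis \eqref{eq:apxUnifNorm} of Lemma~\ref{lemma:oblivious} is implied by the corollary's assumption \eqref{eq:apxUnifNorm2}. For normalized $\ket{\psi}\in\mathrm{Im}(\Pi_1)$ we have $\nrm{\sin(\theta)U\ket{\psi}}=\sin(\theta)$ since $U$ is unitary, so the reverse triangle inequality applied to \eqref{eq:apxUnifNorm2} yields $\big|\nrm{\Pi_2 W\ket{\psi}}-\sin(\theta)\big|\leq\sin(\theta)\eps$, i.e.\ $\nrm{\Pi_2 W\ket{\psi}}\in[\sin(\theta)(1-\eps),\sin(\theta)(1+\eps)]$, which is exactly \eqref{eq:apxUnifNorm}. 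Feeding this into Lemma~\ref{lemma:oblivious} and using the collapse above gives
$$
\nrm{G^k W\ket{\psi}-\tfrac{1}{\sin(\theta)}\Pi_2 W\ket{\psi}}\leq \eps\left(1+\theta\left(\tfrac{5}{2}+4(2k+1)\right)\right).
$$

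I would then relate $\frac{1}{\sin(\theta)}\Pi_2 W\ket{\psi}$ back to $U\ket{\psi}$ by dividing \eqref{eq:apxUnifNorm2} through by $\sin(\theta)$, obtaining $\nrm{\frac{1}{\sin(\theta)}\Pi_2 W\ket{\psi}-U\ket{\psi}}\leq\eps$, and finish with a single triangle inequality to get
$$
\nrm{G^k W\ket{\psi}-U\ket{\psi}}\leq \eps\left(2+\theta\left(\tfrac{5}{2}+4(2k+1)\right)\right).
$$

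The only real obstacle is the final numerical estimate producing the clean constant $10$. Substituting $\theta=\frac{\pi}{2(2k+1)}$ makes the dominant contribution exact, $\theta\cdot 4(2k+1)=2\pi$, while the residual term satisfies $\theta\cdot\frac52=\frac{5\pi}{4(2k+1)}\leq\frac{5\pi}{12}$ because $2k+1\geq 3$. Hence the bracket is at most $2+2\pi+\frac{5\pi}{12}<10$, giving $\nrm{G^k W\ket{\psi}-U\ket{\psi}}\leq 10\eps$ for normalized $\ket{\psi}$; the extension to general $\ket{\psi}\in\mathrm{Im}(\Pi_1)$ is immediate from the orthogonal-summand decomposition already exploited at the end of the proof of Lemma~\ref{lemma:oblivious}. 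I note that replacing the two crude triangle inequalities by the orthogonality of the two error components (as flagged in the lemma's proof) would tighten the constant toward $6$.
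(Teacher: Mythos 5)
Your proof is correct and follows essentially the same route as the paper: choose $\theta=\frac{\pi}{2(2k+1)}$, note that \eqref{eq:apxUnifNorm2} implies \eqref{eq:apxUnifNorm} by the (reverse) triangle inequality, apply Lemma~\ref{lemma:oblivious} with $\sin((2k+1)\theta)=1$ and $\cos((2k+1)\theta)=0$, and combine with one more triangle inequality to get the bound $\eps\left(2+2\pi+\frac{5\pi}{4(2k+1)}\right)\leq 10\eps$. Your write-up is in fact slightly more explicit than the paper's (verifying $\theta\le\pi/6$ and spelling out the reverse-triangle step), and your closing remark about exploiting orthogonality to push the constant toward $6$ matches the authors' own annotation.
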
	
	\begin{proof}
		We apply Lemma~\ref{lemma:oblivious} with choosing $\theta:=\frac{\pi}{2(2k+1)}$, and noting that \eqref{eq:apxUnifNorm2} implies \eqref{eq:apxUnifNorm} as can be seen using the triangle inequality.
		Therefore 
		\begin{align*}
			\nrm{G^k W \ket{\psi}-U\ket{\psi}}
			&\leq \nrm{G^k W \ket{\psi}-\Pi_2 W \ket{\psi}/\sin\left(\theta\right)}
			+ \nrm{\Pi_2 W \ket{\psi}/\sin\left(\theta\right)-U\ket{\psi}}\\
			&\leq \eps\left(\!1\!+\!\theta\left(\frac{5}{2}\!+4(2k\!+\!1)\!\right)\!\right) + \eps
			=\eps \left(2+2\pi + \frac{5\pi}{4(2k+1)}\right)\leq 10 \eps.
		\end{align*}
	\end{proof}
\end{document}